\definecolor{darkgreen}{rgb}{0.0,0,0.9}
\newtcolorbox{wbox}
{
	colback  = white,
}
\newtheorem{theorem}{Theorem}
\newtheorem{lemma}{Lemma}
\newtheorem{corollary}{Corollary}
\newtheorem{definition}{Definition}
\newtheorem{proposition}[theorem]{Proposition}
\newtheorem{claim}[theorem]{Claim}
\newtheorem{question}[theorem]{Question}
\theoremstyle{definition}
\newtheorem{remark}[theorem]{Remark}
\DeclareMathOperator{\rot}{rot}
\DeclareMathOperator{\pre}{pre}
\title{Robust Stable Matchings: \\
Dealing with Changes in Preferences}
\author[1]{Rohith Reddy Gangam}
\author[2]{Tung Mai}
\author[3]{Nitya Raju\footnote{This work was done while the author was a student at the University of California, Irvine.}}
\author[1]{Vijay V.~Vazirani}
\affil[1]{University of California, Irvine}
\affil[2]{Adobe Research}
\affil[3]{University of Maryland, College Park}
\date{}
\begin{document}
\maketitle

\begin{abstract}

We study stable matchings that are robust to preference changes in the two-sided stable matching setting of Gale and Shapley~\cite{GaleS}. Given two instances $A$ and $B$ on the same set of agents, a matching is said to be robust if it is stable under both instances. This notion captures desirable robustness properties in matching markets where preferences may evolve, be misreported, or be subject to uncertainty. While the classical theory of stable matchings reveals rich lattice, algorithmic, and polyhedral structure for a single instance, it is unclear which of these properties persist when stability is required across multiple instances. Our work initiates a systematic study of the structural and computational behavior of robust stable matchings under increasingly general models of preference changes.

We analyze robustness under a hierarchy of perturbation models:
\begin{enumerate}
    \item a single upward shift in one agent’s preference list,
    \item an arbitrary permutation change by a single agent, and
    \item arbitrary preference changes by multiple agents on both sides.
\end{enumerate}

For each regime, we characterize when:
\begin{enumerate}
    \item the set of robust stable matchings forms a sublattice,
    \item the lattice of robust stable matchings admits a succinct Birkhoff partial order enabling efficient enumeration,
    \item worker-optimal and firm-optimal robust stable matchings can be computed efficiently, and
    \item the robust stable matching polytope is integral (by studying its LP formulation).
\end{enumerate}

We provide explicit counterexamples demonstrating where these structural and geometric properties break down, and complement these results with XP-time algorithms running in $O(n^{k})$ time, parameterized by $k$, the number of agents whose preferences change. Our results precisely delineate the boundary between tractable and intractable cases for robust stable matchings.

\end{abstract}

\section{Introduction}
\label{sec:introduction}

Matchings under preferences is a central topic in market design, originating with the seminal 1962 paper by Gale and Shapley~\cite{GaleS}, which introduced the stable matching problem and launched the broader study of matching-based market design. The classical stable matching model assumes that agents report fixed preference rankings, yet real markets rarely satisfy such invariance: preferences may shift as new information arrives, external conditions change, or strategic behavior—including misreports or collusion—arises. These factors naturally lead to the question of how matchings behave under perturbations of the input. In this work, we investigate matchings that remain stable across multiple preference profiles—specifically, matchings that are stable in both a base instance~$A$ and a perturbed instance~$B$, which we refer to as \emph{robust stable matchings}. Our study provides a systematic analysis of their structural, algorithmic, and geometric properties under increasingly general models of perturbations.

A central structural feature of the stable matching problem is that the set of stable matchings of an instance forms a finite distributive lattice, ordered by the preferences of one side~\cite{Knuth-book, GusfieldI}. This lattice admits a compact representation via rotations, which correspond to minimal transformations between stable matchings and underpin many fundamental algorithmic and structural results. We investigate how this lattice structure behaves under preference changes by comparing the lattices associated with two instances defined on the same set of agents. In particular, we study whether the intersection of their stable matchings retains any of the lattice structure that characterizes the classical setting.

We begin with the simplest nontrivial perturbation, namely a single shift in the preference list of one agent, and establish a complete structural characterization of the resulting space of stable matchings. We then consider arbitrary permutations by a single agent and subsequently extend the analysis to instances in which multiple workers and firms simultaneously modify their preferences. These models progressively weaken the lattice structure of the robust stable matchings and expose the precise conditions under which it is preserved or lost.

Alongside these structural results, we investigate the algorithmic and geometric properties of robust stable matchings. We address the complexity of deciding whether the intersection of the stable matchings of two instances is nonempty, the efficient computation of worker-optimal and firm-optimal matchings within this intersection, and the existence of succinct Birkhoff representations supporting efficient enumeration of all robust stable matchings. From a geometric perspective, we analyze the fractional robust stable matching polytope and determine exactly when it is integral. We show that the breakdown of integrality coincides with the loss of lattice structure, yielding a sharp and unifying threshold for both structural and geometric properties.

Preliminary versions of the results in this paper appeared in three earlier works~\cite{MV.robust, GMRV-nearby-instances, GMRV-general-instance}, each addressing robustness under progressively more general models of preference perturbations. The work in~\cite{MV.robust} studied robustness under a single upward shift in one agent’s preference list, establishing lattice structure and polynomial-time algorithms. This was extended in~\cite{GMRV-nearby-instances} to arbitrary permutations by a single agent, together with efficient computation of the associated Birkhoff partial order. The most general setting, in which multiple agents on both sides may arbitrarily change their preferences, was considered in~\cite{GMRV-general-instance}, which identified sharp thresholds for lattice structure and polytope integrality and developed parameterized algorithms for the general case. The present paper unifies, refines, and extends these results into a single coherent framework.

\subsection{Our contributions}
\label{sec:contributions}

Throughout, we consider stable matching instances on $n$ workers and $n$ firms, each with strict and complete preference lists over the opposite side. For an instance~$A$, let $\mathcal{M}_A$ denote the set of its stable matchings and $\mathcal{L}_A$ the finite distributive lattice of these matchings under the usual dominance order (see \Cref{sec:latticeOfSM} for details). We begin with the simplest change, in which a new instance~$B$ is obtained from~$A$ by a single upward shift in the preference list of one agent. In this setting, we give a complete description of how the shift operation splits the lattice of stable matchings: both $\mathcal{M}_A \cap \mathcal{M}_B$ and $\mathcal{M}_A \setminus \mathcal{M}_B$ form sublattices of $\mathcal{L}_A$.

We next extend our analysis to a more general model in which a single agent, whether a worker or a firm, may arbitrarily permute its preference list between instances $A$ and $B$. In this setting, while $\mathcal{M}_A \cap \mathcal{M}_B$ remains a sublattice, the set $\mathcal{M}_A \setminus \mathcal{M}_B$ is no longer guaranteed to form a sublattice of $\mathcal{L}_A$. Nevertheless, some structure persists: it forms a semi-sublattice of $\mathcal{L}_A$. This key property allows us to identify and succinctly characterize all matchings that remain stable across both instances by efficiently characterizing the Birkhoff partial order on rotations (see \Cref{sec:birkhoff-rotations} for definitions).

A central theme in these results is the structural relationship between the lattices of two instances. This leads naturally to the following question, whose importance is underscored by the fact that stable matching lattices are \emph{universal} among all finite distributive lattices: for every finite distributive lattice~$\mathcal{L}$, there exists a stable matching instance~$A$ whose lattice~$\mathcal{L}_A$ is isomorphic to~$\mathcal{L}$. Consequently, an affirmative answer to the question below would yield a fundamental new property for the entire class.

\begin{question}
\label{ques.one}
Is $\mathcal{M}_A \cap \mathcal{M}_B$ always a sublattice of $\mathcal{L}_A$ and $\mathcal{L}_B$ under arbitrary preference changes?
\end{question}

We show that, in general, the answer is ``no.'' Consider a robust stable matching instance $(A,B)$ in which $p$ of the $n$ workers and $q$ of the $n$ firms change their preferences from $A$ to $B$. In this regime, we identify a sharp structural frontier: $\mathcal{M}_A \cap \mathcal{M}_B$ forms a sublattice of both $\mathcal{L}_A$ and $\mathcal{L}_B$ precisely when $p \leq 1$ or $q \leq 1$, and we provide an example showing that this property fails beginning at $(p,q)=(2,2)$.

In addition to these structural results, we address several fundamental computational and geometric questions concerning robust stable matchings:
\begin{enumerate}
    \item Can we efficiently decide whether $\mathcal{M}_A \cap \mathcal{M}_B \neq \emptyset$?
    \item If $\mathcal{M}_A \cap \mathcal{M}_B$ forms a lattice, can the worker-optimal and firm-optimal robust stable matchings (see \Cref{sec:latticeOfSM} for definitions) be efficiently computed?
    \item Does $\mathcal{M}_A \cap \mathcal{M}_B$ admit a succinct Birkhoff partial order and support efficient enumeration?
    \item Is the fractional robust stable matching polytope integral, enabling linear programming (LP) based algorithms?
\end{enumerate}

For all settings in which the sublattice property holds, we also provide a Deferred Acceptance–based algorithm to compute the worker-optimal and firm-optimal matchings in the robust lattice. We show that the fractional robust stable matching polytope remains integral when $p \leq 1$ or $q \leq 1$. For $(p,q)=(2,2)$, we provide an example in which the polytope contains fractional vertices. This yields a sharp threshold that mirrors exactly the boundary at which lattice structure breaks down, and rules out LP-based approaches in the fully general case. To complement this, we give an XP algorithm running in $O(n^{p+q+2})$ time that decides whether a robust stable matching exists, constructs one when it does, and enumerates all such matchings with the same delay bound. A summary of our results is provided in \Cref{fig:summary_table}.

\begin{table}[ht]
\centering
\renewcommand{\arraystretch}{1.5}
\begin{tabular}{|c|c|c|c|}
\hline
\textbf{$(p, q)$} & \textbf{Computation} & \textbf{Structure} & \textbf{Geometry} \\ 
& \textbf{(P?)} & \textbf{(Lattice?)} & \textbf{(Integral?)} \\ \midrule
(0,0) & P~\cite{GaleS} & Yes~\cite{irving2} & Yes~\cite{Teo-Sethuraman-Lp} \\
\hline
(0,1) & P~[Thm.~\ref{thm:n_1_algo_works}] & Yes~[Thm.~\ref{thm:sublattice}] & Yes[Thm.~\ref{thm:n_1_integral_polytope}] \\
(0,n) & P~[Thm.~\ref{thm:n_1_algo_works}] & Yes~~[Thm.~\ref{thm:sublattice}] & Yes~[Thm.~\ref{thm:n_1_integral_polytope}] \\
(1,1) & P~[Thm.~\ref{thm:n_1_algo_works}] & \cellcolor{yellow!25}Yes~[Thm.~\ref{thm:sublattice_two_side}] & Yes~[Thm.~\ref{thm:n_1_integral_polytope}] \\
(1,n) & P~[Thm.~\ref{thm:n_1_algo_works}] & \cellcolor{yellow!25}Yes~[Thm.~\ref{thm:sublattice_two_side}] & Yes~[Thm.~\ref{thm:n_1_integral_polytope}] \\
(2,2) & P~[Thm.~\ref{thm:robust_xp}] & No~[Thm.~\ref{thm:both_sides_not_sublattice}] & No~[Thm.~\ref{thm:2_2_not_integral_polytope}] \\
\hline
($p$,$q$) & XP: $O(n^{p+q+2})$~[Thm.~\ref{thm:robust_xp}] & - & - \\
\hline
($n$,$n$) & NP-Complete~\cite{NP-Two-stable} & No~[Thm.~\ref{thm:both_sides_not_sublattice}] & No~[Thm.~\ref{thm:2_2_not_integral_polytope}] \\
\hline
\end{tabular}
\vspace{0.5em}
\caption{
Summary of our results on robust stable matchings. 
$p$ workers and $q$ firms permute their preference lists from instance $A$ to $B$.\\
The ``Computation'' column deals with the decision problem $\mathcal{M}_A \cap \mathcal{M}_B \neq \varnothing $. \\
The ``Structure'' column answers if $\mathcal{M}_A \cap \mathcal{M}_B$ is a sublattice of $\mathcal{L}_A$ and $\mathcal{L}_B$. \\
The ``Geometry'' column answers if the robust fractional stable matching polytope is integral. \\
Finding the Birkhoff partial order differs between the $(0,1)$ and $(0,n)$ cases, since $\mathcal{M}_A \setminus \mathcal{M}_B$ forms a semi-sublattice in the former but not in the latter. The problem of explicitly constructing the Birkhoff partial order for the $(1,1)$ and $(1,n)$ cases remains open; however, we show that polynomial-time solvability of one case implies polynomial-time solvability of the other.
}
\label{fig:summary_table}
\end{table}

\section{Related Work}
\label{sec:related_work}

The stable matching problem was introduced by Gale and Shapley~\cite{GaleS}, who also presented the Deferred Acceptance (DA) algorithm, which produces a stable matching that is optimal for one side of the market and pessimal for the other. Over the years, a rich structural, algorithmic, and game-theoretic theory has developed around this problem, leading to efficient algorithms and influential applications such as residency assignment and school choice systems~\cite{GusfieldI, Roth, NYC-school, Boston}. Knuth~\cite{knuth1976marriages} showed that the family of all stable matchings forms a finite distributive lattice, a foundational result that underlies many subsequent developments. Additional game-theoretic properties, including the Rural Hospitals Theorem and incentive compatibility of the DA algorithm, were established in~\cite{Roth-IC-1982economics, DubinsF}. On the geometry of stable matchings, Teo and Sethuraman~\cite{Teo-Sethuraman-Lp} proved integrality of the stable matching polytope, with alternative formulations given in~\cite{VANDEVATE, Roth85}. The foundational importance of this body of work was recognized through the 2012 Nobel Prize in Economics awarded to Roth and Shapley~\cite{RS}, and today the stable matching problem forms a cornerstone of algorithmic game theory; see \cite{Knuth, GusfieldI, Manlove-book, MM.Book-Online} for comprehensive treatments.

Several works have also examined robustness and stability under more general or adversarial preference changes. Chen et al.~\cite{Chen-Matchings-Under-preferences} studied a notion of robustness based on swap distance between preference profiles, defining $d$-robust matchings that remain stable under bounded perturbations and analyzing the trade-off between stability and social welfare. In a more general setting where agents may arbitrarily alter their preferences, Miyazaki and Okamoto~\cite{NP-Two-stable} considered the problem of finding a matching that is stable across two instances, which they term a jointly stable matching, and showed that deciding the existence of such a matching is NP-hard. These results highlight the computational challenges that arise when robustness is required under unrestricted preference changes.



When only one side of the market changes preferences, the setting becomes closely related to stable matching with weak or partially ordered preferences. Several refinements of stability---including weak, strong, and super-strong stability---have been studied in this context. Irving~\cite{IRVING-Indifferences} proved the existence of weakly stable matchings and provided polynomial-time algorithms to compute them. Spieker~\cite{SPIEKER-indifference-lattice} showed that the set of super-stable matchings forms a distributive lattice, and Manlove~\cite{MANLOVE-indifference-lattice} extended this to the strongly stable case. Kunysz et al.~\cite{Kunysz-ssm} gave a succinct partial-order representation of all strongly stable matchings.

Another related direction concerns uncertainty or partial information in preferences. Aziz et al.~\cite{aziz1, aziz2} proposed several uncertainty models and studied matchings that remain stable across all realizations. Genc et al.~\cite{genc2, genc1} introduced $(x,y)$-supermatches, which remain easy to repair after disruptions. Although not robust in the sense considered here, these notions capture complementary resilience properties.

Beyond robustness, numerous generalizations of the stable matching framework have been explored. Chen et al.~\cite{NP-Two-stable-Chen} studied multi-criterion (multi-modal) preferences, while Menon and Larson~\cite{Menon_Larson} examined minimizing the maximum number of blocking pairs across all completions of weak orders. Incremental preference changes have also been investigated: Bredereck et al.~\cite{BredereckCKLN20} and Boehmer et al.~\cite{boehmer_incremental_mfcs, boehmer_incremental_aaai} analyzed algorithms and structural properties for maintaining stability as preferences evolve over time.

Robustness has also been considered in the context of popular matchings. Introduced by G\"{a}rdenfors ~\cite{Gard75a}, popularity compares matchings via majority voting. Robust versions have recently been studied by Bullinger et al.~\cite{robust_pop_matchings_BGS}, who obtained polynomial-time and hardness results based on the extent of preference changes. Cs\'{a}ji~\cite{Csaj24a} further extended these ideas to settings with multi-modal and uncertain preferences.

Finally, robustness to input changes has a long history in computational social choice. A significant body of work studies how small perturbations in votes affect election outcomes~\cite{FaRo15a, SYE13a, BFK+21a}, often through swap-based distance measures~\cite{EFS09a}. Similar ideas have been applied to stable matchings: Boehmer et al.~\cite{BBHN21a} analyzed how strategic changes can enforce or prevent particular matchings, and B\'{e}rczi et al.~\cite{BERCZI} investigated how preference modifications can be used to guarantee the existence of stable matchings with desired structural properties.

\section{Model and Preliminaries}
\label{sec:prelim}

\subsection{The stable matching problem and the robust stable matching problem}
\label{sec:sm-and-rsm}

A stable matching problem instance consists of a set of $n$ workers,
$\mathcal{W} = \{w_1, w_2, \ldots, w_n\}$, and a set of $n$ firms,
$\mathcal{F} = \{f_1, f_2, \ldots, f_n\}$, collectively referred to as \emph{agents}.
Each agent $a \in \mathcal{W} \cup \mathcal{F}$ has a strict total order (preference list), denoted $>_a$,
over the agents of the opposite type.
For instance, $w_i <_f w_j$ indicates that firm $f$ strictly prefers $w_j$ to $w_i$,
and worker preferences are expressed analogously.

A matching $M$ is a one-to-one correspondence between $\mathcal{W}$ and $\mathcal{F}$.
For each pair $(w,f)\in M$, we write $M(w)=f$ and $M(f)=w$ and call them partners under $M$.
A pair $(w,f)\notin M$ is a \emph{blocking pair} for $M$ if $f >_w M(w)$ and $w >_f M(f)$.
A matching is \emph{stable} if it has no blocking pair.

A \emph{robust stable matching instance} consists of $k \ge 2$ stable matching problem instances
$A_1, A_2, \ldots, A_k$ defined on the same agent sets $\mathcal{W} \text{ and } \mathcal{F}$.
A matching $M$ is \emph{robust stable} under $(A_1,\ldots,A_k)$ if $M$ is stable under every $A_i$.
We focus primarily on $k=2$ and write $(A,B)$.
For any instance $I$, let $\mathcal{M}_I$ denote the set of stable matchings and $\mathcal{L}_I$ the lattice
of stable matchings under $I$ (defined formally in \Cref{sec:latticeOfSM}).
Our main object of study is the set of robust stable matchings, $\mathcal{M}_A \cap \mathcal{M}_B$.

We will also use the following notion of ``distance'' between instances.
We say that a pair $(A,B)$ is of type $(p,q)$ if the preferences in $B$ differ from those in $A$ for at most $p$ workers and $q$ firms, and are
identical for the remaining $2n-p-q$ agents.
(Results are symmetric in $p$ and $q$, so we look at the cases where $0\leq p\leq q \leq n$.)

\subsection{The lattice of stable matchings}
\label{sec:latticeOfSM}

Let $M$ and $M'$ be two stable matchings under the same instance.
We say that $M$ \emph{dominates} $M'$, denoted $M \preceq M'$, if every worker weakly prefers
its partner in $M$ to its partner in $M'$.
Equivalently, $M$ is a \emph{predecessor} of $M'$ in the dominance order.

A stable matching $M$ is a \emph{common predecessor} of $M_1$ and $M_2$ if $M \preceq M_1$ and
$M \preceq M_2$. It is a \emph{lowest common predecessor} if it is a common predecessor and no
other common predecessor strictly dominates it.
Analogously, one defines \emph{successor} and \emph{highest common successor}.

The dominance partial order on stable matchings forms a lattice~\cite{GusfieldI}.
For any two stable matchings $M_1, M_2$, their meet $M_1 \wedge M_2$ (the lowest common predecessor)
and join $M_1 \vee M_2$ (the highest common successor) are unique.
Moreover, $M_1 \wedge M_2$ is obtained by assigning each worker its more preferred partner
from $M_1$ and $M_2$, and $M_1 \vee M_2$ is obtained by assigning each worker its less preferred
partner from $M_1$ and $M_2$; both are stable.
These operations satisfy the distributive laws: for any stable $M, M', M''$,
$$
    M \vee (M' \wedge M'') = (M \vee M') \wedge (M \vee M'') 
$$
$$   M \wedge (M' \vee M'') = (M \wedge M') \vee (M \wedge M'').
$$

The lattice contains a unique matching $M_0$ that dominates all others and a unique matching $M_z$
that is dominated by all others.
$M_0$ is the \emph{worker-optimal} (firm-pessimal) stable matching and $M_z$ is the
\emph{firm-optimal} (worker-pessimal) stable matching.

\subsection{Birkhoff's Theorem and rotations}
\label{sec:birkhoff-rotations}

\begin{definition}
\label{def:closedset}
A \emph{closed set} (or \emph{lower set}) of a poset is a subset $S$ such that if $x \in S$ and $y \prec x$,
then $y \in S$.
\end{definition}

For a poset $\Pi$, the family of closed sets of $\Pi$ is closed under union and intersection and forms a
distributive lattice, denoted $L(\Pi)$, with join and meet given by union and intersection.
Birkhoff's theorem~\cite{Birkhoff,Stanley} states that for any finite distributive lattice $\mathcal{L}$, there exists a
poset $\Pi$ such that $L(\Pi) \cong \mathcal{L}$; we say that $\Pi$ \emph{generates} $\mathcal{L}$.

\begin{theorem}[Birkhoff~\cite{Birkhoff}]
\label{thm:birkhoff}
Every finite distributive lattice $\mathcal{L}$ is isomorphic to $L(\Pi)$ for some finite poset $\Pi$.
\end{theorem}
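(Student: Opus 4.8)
The plan is to take for $\Pi$ the subposet of \emph{join-irreducible} elements of $\mathcal{L}$ and to exhibit an explicit isomorphism onto $L(\Pi)$. Recall that $j \in \mathcal{L}$ is join-irreducible if it is not the bottom element and whenever $j = a \vee b$ one has $j = a$ or $j = b$; write $J(\mathcal{L})$ for the set of such elements, partially ordered by the restriction of the lattice order, and set $\Pi = J(\mathcal{L})$. I would define a candidate isomorphism $\phi \colon \mathcal{L} \to L(\Pi)$ by
$$
\phi(x) = \{\, j \in J(\mathcal{L}) : j \leq x \,\}.
$$
Each $\phi(x)$ is manifestly a lower set of $\Pi$, so $\phi$ indeed lands in $L(\Pi)$, and $\phi$ is clearly order-preserving.

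First I would establish the decomposition fact, valid in any finite lattice, that each $x$ equals the join of the join-irreducibles beneath it, namely $x = \bigvee \phi(x)$. This follows by induction upward from the bottom element (whose image is $\emptyset$, with empty join the bottom): if $x$ is join-irreducible it is the maximum of $\phi(x)$, and otherwise $x = a \vee b$ with $a, b < x$, to which the inductive hypothesis applies. The decomposition immediately gives injectivity, since $\phi(x) = \phi(y)$ forces $x = \bigvee \phi(x) = \bigvee \phi(y) = y$. It also shows that $\phi$ preserves meets, because $j \leq x \wedge y$ iff $j \leq x$ and $j \leq y$, so $\phi(x \wedge y) = \phi(x) \cap \phi(y)$, matching the meet of $L(\Pi)$ given by intersection.

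The crux of the argument, and the only place distributivity is needed, is the \emph{primality} of join-irreducibles: if $j \in J(\mathcal{L})$ and $j \leq a \vee b$, then $j \leq a$ or $j \leq b$. I would prove this by writing $j = j \wedge (a \vee b) = (j \wedge a) \vee (j \wedge b)$ using the distributive law, and then invoking join-irreducibility of $j$ to conclude $j = j \wedge a$ or $j = j \wedge b$, that is, $j \leq a$ or $j \leq b$; an easy induction extends this to arbitrary finite joins. The point to emphasize is that the decomposition and injectivity hold in \emph{every} finite lattice, so the entire force of the hypothesis is concentrated here.

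With primality in hand, surjectivity and join-preservation follow together. Given any lower set $S \in L(\Pi)$, set $x = \bigvee S$; then $S \subseteq \phi(x)$ trivially, while if $j \leq x = \bigvee S$ primality yields $j \leq s$ for some $s \in S$, and since $S$ is a lower set $j \in S$, so $\phi(x) = S$ and $\phi$ is onto. The same primality argument gives $\phi(x \vee y) = \phi(x) \cup \phi(y)$, matching the join of $L(\Pi)$ given by union. Hence $\phi$ is an order isomorphism preserving both lattice operations, establishing $\mathcal{L} \cong L(\Pi)$. I expect the main obstacle to be precisely isolating and proving the primality lemma, since the surjectivity step that completes the isomorphism is unlocked entirely by it and is where a non-distributive lattice would fail.
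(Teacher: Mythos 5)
Your proof is correct, and it is the classical argument for Birkhoff's representation theorem: take $\Pi$ to be the poset $J(\mathcal{L})$ of join-irreducibles, prove the decomposition $x=\bigvee\phi(x)$ by induction (valid in any finite lattice), and concentrate all use of distributivity in the primality lemma, which then delivers surjectivity and join-preservation. The comparison with the paper is somewhat unusual: the paper never proves \Cref{thm:birkhoff} itself --- it is imported as a citation to \cite{Birkhoff}, and restated again with a citation in \Cref{sec:generalization}. What the paper does prove, in Appendix~\ref{app:gen_proof}, is the different and stronger statement \Cref{thm:generalization}, a one-to-one correspondence between compressions of $\Pi$ and sublattices of $L(\Pi)$. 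That argument lives entirely in the stable matching world (meta-rotations, $\mathcal{L}'$-paths, the rotation sets $S_{M_1,M_2}$ read off Hasse-diagram paths) and is justified as ``w.l.o.g.'' by the universality of stable matching lattices among finite distributive lattices --- a fact which itself rests on Birkhoff's theorem, so the appendix presupposes the statement you were asked to prove rather than reproving it. In short: your route is self-contained and elementary, identifies the canonical poset (the join-irreducibles), and pinpoints exactly where distributivity enters; the paper's machinery buys the sublattice/compression correspondence its algorithms actually need, but it would be circular if offered as a proof of \Cref{thm:birkhoff}. Your proposal fills that gap correctly.
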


For stable matching lattices, the generating poset can be taken to be the poset of \emph{rotations}
introduced by Irving~\cite{irving,irving2}; see also~\cite{GusfieldI}.
A rotation is a minimal cyclic transformation that maps one stable matching to another.

Let $M$ be a stable matching.
For a worker $w$, let $s_M(w)$ denote the first firm $f$ on $w$'s preference list such that
$f$ strictly prefers $w$ to its partner in $M$.
Let $next_M(w)$ denote the $M$-partner of firm $s_M(w)$.
A \emph{rotation} $\rho$ \emph{exposed} in $M$ is an ordered list of pairs
$\rho = \{w_0f_0, w_1f_1, \ldots, w_{r-1}f_{r-1}\}$
such that for each $i$, $w_{i+1} = next_M(w_i)$, where indices are taken modulo $r$.
Define $M/\rho$ to be the matching obtained by keeping all pairs not in $\rho$ unchanged and
matching each $w_i$ to $f_{i+1} = s_M(w_i)$.
It is known that $M/\rho$ is also stable, and the transformation from $M$ to $M/\rho$ is called
the \emph{elimination} of $\rho$.

\begin{lemma}[\cite{GusfieldI}, Theorem~2.5.4]
\label{lem:seqElimination}
Every rotation appears exactly once in any sequence of eliminations from $M_0$ to $M_z$.
\end{lemma}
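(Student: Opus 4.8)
The plan is to prove the statement in two directions, ``at most once'' and ``at least once,'' after first checking that elimination sequences from $M_0$ are well defined and must terminate at $M_z$. First I would record that eliminating an exposed rotation $\rho = \{w_0 f_0, \ldots, w_{r-1} f_{r-1}\}$ in a stable matching $M$ makes every worker weakly worse off, at least one strictly: each $w_i$ is rematched from $f_i = M(w_i)$ to $f_{i+1} = s_M(w_i)$, which by the definition of $s_M$ lies strictly below $f_i$ on $w_i$'s list (a firm above $f_i$ that preferred $w_i$ would block $M$), while every other worker keeps its partner; hence $M \preceq M/\rho$ and $M \ne M/\rho$. Since $\mathcal{M}_A$ is finite and $M_z$ is the unique stable matching in which no rotation is exposed (no worker can be worsened while preserving stability), every maximal elimination sequence from $M_0$ is finite and terminates at $M_z$, which justifies the phrase ``a sequence of eliminations from $M_0$ to $M_z$.''

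For \emph{at most once}, fix a rotation $\rho = \{w_0 f_0, \ldots, w_{r-1} f_{r-1}\}$ and suppose it is eliminated at some matching $M$ in the sequence. Being exposed requires $M(w_i) = f_i$ for every $i$; after elimination $w_i$ is matched to $f_{i+1}$, strictly below $f_i$, and because every subsequent matching in the sequence is dominated by $M/\rho$, worker $w_i$ is thereafter matched to a firm no better than $f_{i+1}$, hence never again to $f_i$. Thus $\rho$ can never be exposed a second time and appears at most once.

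The crux is \emph{at least once}. Here I would track, for each worker $w$, the sequence of firms it is matched to along a full run $M_0 = N_0 \to \cdots \to N_t = M_z$. Two structural facts drive the argument: (a) whenever $w$ participates in an eliminated rotation while matched to firm $f$, it is rematched to its next stable partner immediately below $f$; and (b) the rotation containing a given pair $(w,f)$ is uniquely determined by $(w,f)$, independently of the matching that exposes it, since its remaining pairs are forced by following the displacement chain $w \mapsto next_M(w) \mapsto \cdots$ until it closes back on $w$. Granting (a), along the full sequence $w$ visits each of its stable partners exactly once, from most to least preferred, so each pair $(w,f)$ with $f$ a non-worst stable partner of $w$ is ``rotated out'' exactly once; granting (b), these eliminations partition the set of all such pairs into rotations in the only possible way, which is precisely the set of all rotations of the instance. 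Hence every rotation is realized, and together with the ``at most once'' bound, exactly once.

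The hard part will be establishing facts (a) and (b): that a participating worker always advances to its immediately next stable partner, and that any single pair determines its entire rotation regardless of the exposing matching. Both rest on the structural lemmas governing $s_M$ and the displacement chain $next_M$, and this is where genuine stability-based reasoning is needed; by contrast, termination and the ``at most once'' direction follow routinely from the monotonicity of eliminations.
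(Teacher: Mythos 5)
First, a framing point: the paper does not prove this lemma at all --- it is imported verbatim from Gusfield and Irving (Theorem~2.5.4), so there is no in-paper proof to compare against, and your proposal has to stand on its own. Within it, the termination remark and the ``at most once'' direction are correct and routine: elimination strictly demotes exactly the workers in the rotation, dominance persists down the sequence, so no pair $(w_i,f_i)$ of an eliminated rotation can ever recur. The genuine gap is precisely where you flag it: facts (a) and (b) are not deferred details but the entire mathematical content of the ``at least once'' direction, and the proposal neither proves them nor reduces them to anything simpler. Fact (a) --- that a rotation always moves $w$ from $f$ to $w$'s \emph{best} stable partner strictly below $f$, skipping none --- and fact (b) --- that the rotation through a given pair $(w,f)$ is unique, independent of the exposing matching, which is exactly \Cref{lem:uniqueMoveRotation} --- are substantial theorems. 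Worse, in the standard development (Gusfield--Irving) both are established \emph{after}, and using, the very statement you are trying to prove; so invoking them here risks circularity unless you supply independent arguments, of which the proposal gives no outline. (A smaller instance of the same problem: your parenthetical claim that $M_z$ is the unique stable matching exposing no rotation is also asserted rather than proved.)

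For comparison, the classical proof takes a different and more economical route that needs neither (a) nor (b). One establishes three facts: (i) every stable matching other than $M_z$ exposes at least one rotation; (ii) if two distinct rotations are exposed in the same stable matching $M$, they are disjoint, eliminating one leaves the other exposed, and the two eliminations commute (a local confluence or ``diamond'' property); (iii) every stable matching is reachable from $M_0$ by a sequence of eliminations. A Newman-style induction from (ii) shows that all elimination sequences from $M_0$ to $M_z$ eliminate the same set of rotations; by (iii), any given rotation --- exposed in some stable matching $M$ --- lies on at least one such sequence (go from $M_0$ to $M$, eliminate it, continue to $M_z$), hence on all of them; your ``at most once'' argument then finishes. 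In that development, your facts (a) and (b) emerge as corollaries of the theorem rather than inputs to it, which is the cleanest evidence that your proposed decomposition has the logical dependencies backwards. If you want a self-contained proof, proving (i)--(iii) is the path to take.
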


Let $\rho = \{w_0f_0, \ldots, w_{r-1}f_{r-1}\}$ be a rotation.
For $0 \le i \le r-1$, we say that $\rho$ \emph{moves} $w_i$ from $f_i$ to $f_{i+1}$ and
moves $f_i$ from $w_i$ to $w_{i-1}$.
If a firm $f$ is either $f_i$ or lies strictly between $f_i$ and $f_{i+1}$ in $w_i$'s list, we say that
$\rho$ \emph{moves $w_i$ below $f$}.
Similarly, $\rho$ \emph{moves $f_i$ above $w$} if $w$ is either $w_i$ or lies between $w_i$ and $w_{i-1}$
in $f_i$'s list.

\subsection{The rotation poset}
\label{sec:rotation-poset}

A rotation $\rho'$ is said to \emph{precede} another rotation $\rho$, denoted $\rho' \prec \rho$,
if $\rho'$ is eliminated in every sequence of eliminations from $M_0$ to any stable matching in which
$\rho$ is exposed.
This precedence relation defines a partial order on the set of rotations, called the \emph{rotation poset},
denoted $\Pi$.

\begin{lemma}[\cite{GusfieldI}, Lemma~3.2.1]
\label{lem:uniqueMoveRotation}
For any worker $w$ and firm $f$, there is at most one rotation that moves $w$ to $f$, moves $w$ below $f$,
or moves $f$ above $w$. Moreover, if $\rho_1$ moves $w$ to $f$ and $\rho_2$ moves $w$ from $f$ then
$\rho_1 \prec \rho_2$.
\end{lemma}

\begin{lemma}[\cite{GusfieldI}, Lemma~3.3.2]
\label{lem:computePoset}
The rotation poset $\Pi$ contains at most $O(n^2)$ rotations and can be computed in polynomial time.
\end{lemma}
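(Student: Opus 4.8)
The plan is to prove the two assertions separately: the cardinality bound $|\Pi| = O(n^2)$ follows from a counting argument built on \Cref{lem:uniqueMoveRotation}, while polynomial-time computability follows by combining the elimination procedure with an explicit, locally checkable description of the precedence relation.

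For the bound, I would fix a single elimination sequence from $M_0$ to $M_z$, which by \Cref{lem:seqElimination} contains every rotation exactly once. Consider a rotation $\rho = \{w_0 f_0, \ldots, w_{r-1} f_{r-1}\}$; by definition it moves each $w_i$ to the firm $f_{i+1} = s_M(w_i)$, so it is naturally associated with the $r$ worker-firm pairs $(w_i, f_{i+1})$. These pairs are distinct within $\rho$, and by \Cref{lem:uniqueMoveRotation} no pair $(w,f)$ is the target of a move by two different rotations. Hence $\sum_{\rho \in \Pi} |\rho|$ counts distinct pairs in $\mathcal{W} \times \mathcal{F}$ and is therefore at most $n^2$. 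Since every rotation is a genuine cyclic transformation with $|\rho| \ge 2$ (one checks $s_M(w) \neq M(w)$, so $next_M(w) \neq w$), we obtain $|\Pi| \le n^2/2 = O(n^2)$. Equivalently, each worker descends its preference list monotonically across the sequence and so belongs to at most $n-1$ rotations, again giving $\sum_\rho |\rho| \le n(n-1)$.

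For the algorithm, I would first compute the worker-optimal matching $M_0$ using the Gale-Shapley deferred acceptance procedure in $O(n^2)$ time. To enumerate all rotations, I would repeatedly locate the rotations exposed in the current matching $M$: form the functional digraph on workers given by $w \mapsto next_M(w)$ and read off its cycles, each of which is an exposed rotation; eliminate such a rotation to obtain the next matching, record it, and continue until reaching $M_z$. Each location-and-elimination step costs polynomial time, and since there are only $O(n^2)$ rotations the entire enumeration is polynomial.

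The substantive part is constructing the precedence order, and this is where I expect the main difficulty to lie, since the definition of $\rho' \prec \rho$ quantifies over all elimination sequences and is not directly checkable. The approach is to replace this global definition by a set of local precedence edges whose transitive closure equals $\Pi$. Two families of edges suffice. \textbf{Type 1:} whenever $\rho'$ moves a worker $w$ to a firm $f$ and $\rho$ subsequently moves $w$ from $f$, post the edge $\rho' \prec \rho$; this is precisely the precedence supplied by \Cref{lem:uniqueMoveRotation}, and it is read off directly from the sequence of partners each worker passes through. \textbf{Type 2:} whenever a rotation $\rho$ moves a worker $w$ below a firm $f$, stability forces the unique rotation $\rho''$ that moves $f$ above $w$ to be eliminated first, so post the edge $\rho'' \prec \rho$. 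The crux is to prove that these $O(n^2)$ edges are both sound (each asserted relation genuinely holds in every elimination sequence) and complete (their transitive closure recovers all of $\Pi$); soundness uses the blocking-pair analysis underlying \Cref{lem:uniqueMoveRotation}, while completeness requires an exchange argument showing that if neither $\rho' \prec \rho$ nor $\rho \prec \rho'$ is forced by a chain of these edges, then one elimination sequence exposes $\rho$ before $\rho'$ and another exposes $\rho'$ before $\rho$. Once this equivalence is established, both the rotation set and the edge set have size $O(n^2)$ and are produced by the passes above, so $\Pi$ is obtained in polynomial time.
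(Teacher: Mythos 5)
The paper does not actually prove this statement: it is imported verbatim as a citation (\cite{GusfieldI}, Lemma~3.3.2), so there is no in-paper argument to compare yours against; the relevant comparison is with the standard textbook proof, which is exactly the route you take. Your first two parts are complete and correct. The counting argument is sound: by \Cref{lem:seqElimination} every rotation occurs exactly once in a fixed elimination sequence, by \Cref{lem:uniqueMoveRotation} each pair $(w,f)$ is the target of a move of at most one rotation, and $|\rho|\ge 2$ since $s_M(w)\neq M(w)$, giving $|\Pi|\le n^2/2$. The enumeration procedure (follow one greedy elimination path from $M_0$ to $M_z$, reading exposed rotations off the cycles of $w \mapsto next_M(w)$) is the standard one and runs in polynomial time, granted the standard fact that every stable matching other than $M_z$ exposes at least one rotation.

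The gap is exactly where you flag it, and it is not a small one: the claim that the transitive closure of your Type~1 and Type~2 edges equals the precedence relation $\prec$ \emph{is} the content of Gusfield--Irving's Lemma~3.3.2, so asserting it essentially reduces the lemma to itself. Soundness of the two rules is indeed routine blocking-pair analysis, but your description of completeness is not the right statement of what must be shown. To refute $\rho'\prec\rho$ one does not need two sequences each exposing one rotation before the other; one needs a \emph{single} elimination sequence that reaches a matching in which $\rho$ is exposed without $\rho'$ ever having been eliminated. The standard construction produces it from the edge digraph itself: if no directed path of Type~1/Type~2 edges runs from $\rho'$ to $\rho$, take the set of rotations having a directed path to $\rho$, prove that this set is a closed set of $\Pi$ (this is the substantive step your sketch omits), note that it excludes $\rho'$, and eliminate it from $M_0$ to obtain the witness matching. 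Two smaller defects in your Type~2 rule as literally stated: the rotation $\rho''$ moving $f$ above $w$ need not exist (if $f$'s $M_0$-partner is already above $w$, no rotation ever crosses $w$ for $f$, and no edge should be posted), and $\rho''$ can coincide with $\rho$ itself (again no edge should be posted); without both caveats the rule posts vacuous edges or self-loops.
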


There is a one-to-one correspondence between stable matchings and closed subsets of $\Pi$:
given a closed set $S$, eliminating the rotations in $S$ starting from $M_0$ in any topological order
consistent with the order $\Pi$ yields a stable matching $M(S)$, and every stable matching arises in this way.
We say that $S$ \emph{generates} $M(S)$ and that $\Pi$ \emph{generates} the stable matching lattice.

The \emph{Hasse diagram} of a poset is a directed graph with a vertex for each element and an edge
$x \to y$ if $x \prec y$ and there is no $z$ with $x \prec z \prec y$.

\subsection{Sublattice and semi-sublattice}
\label{sec:sublattice_defn}

Let $\mathcal{L}$ be a distributive lattice with join and meet operations $\vee$ and $\wedge$.

A \emph{sublattice} $\mathcal{L}' \subseteq \mathcal{L}$ is a subset such that for any $x,y \in \mathcal{L}'$, both
$x \vee y \in \mathcal{L}'$ and $x \wedge y \in \mathcal{L}'$.
A \emph{join semi-sublattice} $\mathcal{L}_j \subseteq \mathcal{L}$ is a subset such that for any $x,y \in \mathcal{L}_j$,
$x \vee y \in \mathcal{L}_j$.
Similarly, a \emph{meet semi-sublattice} $\mathcal{L}_m \subseteq \mathcal{L}$ is a subset such that for any
$x,y \in \mathcal{L}_m$, $x \wedge y \in \mathcal{L}_m$.
Note that $\mathcal{L}'$ is a sublattice if and only if it is both a join and a meet semi-sublattice.

\subsection{Linear programming formulation}
\label{sec:lp_intro}

\begin{figure}[htp]
    \centering
    \includegraphics[width=8cm]{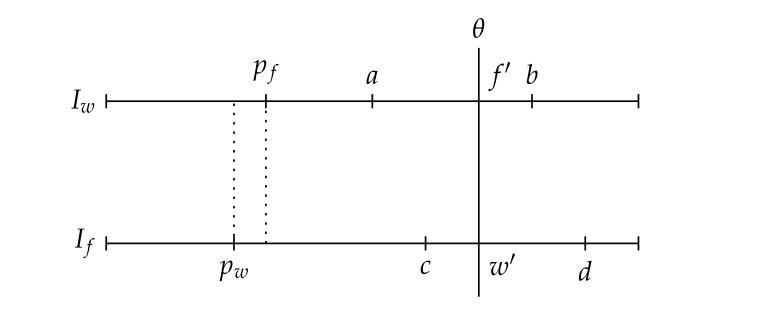}
    \caption{Finding an integral stable matching from a fractional solution (Image from~\cite{MM.Book-Online})}
    \label{fig:1}
\end{figure} 

A well-known result about the stable matching problem is that it admits a linear programming formulation in which all vertices are integral. This yields an efficient method to compute stable matchings. The LP formulation is given below as LP~\eqref{eq:lp_single}. 

The third constraint disallows blocking pairs: for any worker--firm pair $(w,f)$, it ensures that if $w$ is matched (fractionally) to firms less preferred than $f$, then $f$ must be matched to workers more preferred than $w$. The remaining constraints ensure that $x$ defines a fractional perfect matching.

\begin{equation}
\label{eq:lp_single}
    \begin{aligned}[b]
        \text{maximize} \quad & 0 \\
        \text{subject to} \quad 
        & \sum_w x_{wf} = 1 \quad && \forall f \in \mathcal{F}, \\
        & \sum_f x_{wf} = 1 \quad && \forall w \in \mathcal{W}, \\
        & \sum_{f >^A_w f'} x_{wf'} - \sum_{w' >^A_f w} x_{w'f} \leq 0 \quad && \forall w \in \mathcal{W},\ \forall f \in \mathcal{F}, \\
        & x_{wf} \geq 0 \quad && \forall w \in \mathcal{W},\ \forall f \in \mathcal{F}.    
    \end{aligned}
\end{equation}

A solution $x$ to LP~\eqref{eq:lp_single} can be converted into an integral perfect matching as follows. Construct $2n$ unit intervals, each corresponding to an agent. For each worker $w$, divide its interval $I_w$ into $n$ subintervals of lengths $x_{wf}$ (note that $\sum_f x_{wf} = 1$), ordered from $w$'s most preferred firm to its least. Apply the same process to each firm $f$'s interval $I_f$, but order the subintervals from $f$'s least to most preferred worker.

Choose a value $\theta$ uniformly at random from $[0,1]$, and identify the subinterval in each agent’s interval that contains $\theta$. The probability that $\theta$ lies exactly on a boundary is zero, so we may ignore this case.

Let $\mu_{\theta} : \mathcal{W} \rightarrow \mathcal{F}$ denote the firm corresponding to the subinterval containing $\theta$ in each worker's interval, and let $\mu'_{\theta} : \mathcal{F} \rightarrow \mathcal{W}$ denote the analogous mapping for firms. These functions define a perfect matching and are inverses of each other. Moreover, they are stable for any $\theta \in [0,1]$.

\begin{lemma}[\cite{Teo-Sethuraman-Lp}]
\label{lem:lp_perfect} 
If $\mu_{\theta}(w) = f$ then $\mu'_{\theta}(f) = w$.
\end{lemma}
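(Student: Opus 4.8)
The plan is to reduce the claim to a statement about the endpoints of the subintervals and then to show that, for every pair $(w,f)$ that is fractionally matched, the subinterval assigned to $f$ inside $I_w$ coincides \emph{exactly} with the subinterval assigned to $w$ inside $I_f$. Concretely, for a worker $w$ write the subinterval of firm $f$ in $I_w$ as $[\,L_w(f), R_w(f)\,)$ with $L_w(f)=\sum_{f'>_w f} x_{wf'}$ and $R_w(f)=\sum_{f'\ge_w f} x_{wf'}$, and for a firm $f$ write the subinterval of worker $w$ in $I_f$ as $[\,L_f(w), R_f(w)\,)$ with $L_f(w)=\sum_{w'<_f w} x_{w'f}$ and $R_f(w)=\sum_{w'\le_f w} x_{w'f}$. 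Both subintervals have the same length $x_{wf}$, so the desired implication $\mu_\theta(w)=f \Rightarrow \mu'_\theta(f)=w$ is equivalent to these two subintervals being equal, i.e. $R_w(f)=R_f(w)$ (whence also $L_w(f)=L_f(w)$).

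First I would rewrite the third LP constraint in these terms. Since $\sum_{f'<_w f} x_{wf'} = 1-R_w(f)$ and $\sum_{w'>_f w} x_{w'f} = 1-R_f(w)$, the blocking-pair constraint is exactly $R_f(w)\le R_w(f)$ for every pair $(w,f)$. Together with the equal lengths this shows $L_f(w)\le L_w(f)$ as well, so the $I_f$-subinterval of $w$ sits weakly to the \emph{left} of the $I_w$-subinterval of $f$. By itself this one-sided bound does \emph{not} close the gap $[\,R_f(w),R_w(f)\,)$, and overcoming it is the main obstacle: I must upgrade these inequalities to equalities on all fractionally matched pairs.

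The key step is the identity $\sum_{w,f} x_{wf}\,s_{wf}=0$, where $s_{wf}=R_w(f)-R_f(w)=\sum_{w'>_f w} x_{w'f}-\sum_{f'<_w f} x_{wf'}\ge 0$ is the slack of the constraint. Expanding and splitting into a ``firm'' double sum and a ``worker'' double sum, each collapses via the elementary identity $\sum_{a\ne b} y_a y_b = (\sum_a y_a)^2-\sum_a y_a^2$ applied to the rows and columns of $x$. Using the perfect-matching equalities $\sum_w x_{wf}=1$ and $\sum_f x_{wf}=1$, the firm sum equals $\tfrac12\sum_f\big(1-\sum_w x_{wf}^2\big)$ and the worker sum equals $\tfrac12\sum_w\big(1-\sum_f x_{wf}^2\big)$. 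These agree because $\sum_f\sum_w x_{wf}^2=\sum_w\sum_f x_{wf}^2$ and there are equally many workers and firms, so the two sums cancel and the total is $0$.

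Since every term $x_{wf}\,s_{wf}$ is a product of nonnegative quantities ($x_{wf}\ge 0$ and $s_{wf}\ge 0$ by the LP constraints), the vanishing of the total forces $x_{wf}\,s_{wf}=0$ for every pair; in particular $x_{wf}>0$ implies $s_{wf}=0$, i.e. $R_w(f)=R_f(w)$ and hence $L_w(f)=L_f(w)$, so the two subintervals coincide. Finally, if $\mu_\theta(w)=f$ then $\theta$ lies in a nonempty subinterval of $I_w$, which forces $x_{wf}>0$; by the coincidence of subintervals $\theta$ also lies in $w$'s subinterval of $I_f$, giving $\mu'_\theta(f)=w$ and completing the argument.
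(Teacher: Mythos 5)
Your proof is correct. The paper itself does not prove this lemma---it imports it from the cited reference \cite{Teo-Sethuraman-Lp}---and your argument is essentially the standard one from that source: rewrite the stability constraint as the interval inequality $R_f(w)\le R_w(f)$, aggregate the slacks via $\sum_{w,f} x_{wf}\,s_{wf}=0$ (using the row/column sums and the pair-counting identity), and conclude tightness, hence coincidence of subintervals, for every pair with $x_{wf}>0$. The only point worth flagging is that the cancellation of the two aggregated sums uses $|\mathcal{W}|=|\mathcal{F}|=n$ together with the perfect-matching equalities; both hold in this paper's setting of complete lists and perfect matchings, so the argument goes through as written.
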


\begin{lemma}[\cite{Teo-Sethuraman-Lp}]
\label{lem:lp_stable} 
For each $\theta \in [0,1]$, the matching $\mu_{\theta}$ is stable.
\end{lemma}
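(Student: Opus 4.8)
The plan is to argue by contradiction: assume that for some $\theta \in [0,1]$ the matching $\mu_\theta$ has a blocking pair $(w,f)$, meaning $f >_w \mu_\theta(w)$ and $w >_f \mu'_\theta(f)$, and derive a violation of the third (stability) constraint of LP~\eqref{eq:lp_single}. The entire argument rests on translating the two blocking inequalities into statements about \emph{where} the common sample point $\theta$ sits inside the worker's interval $I_w$ and the firm's interval $I_f$, and then reading off the cumulative subinterval lengths, which are precisely the two sums appearing in the LP constraint.

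First I would analyze the worker side. Because the subintervals of $I_w$ are laid out from $w$'s most preferred firm to its least preferred, the portion of $I_w$ occupied by firms weakly preferred to $f$ has total length $\sum_{f' \geq_w f} x_{wf'}$. Since $w$ strictly prefers $f$ to its assigned partner $\mu_\theta(w)$, the point $\theta$ lies beyond this entire block, so $\theta \geq \sum_{f' \geq_w f} x_{wf'}$. Using $\sum_{f'} x_{wf'} = 1$ (from \Cref{lem:lp_perfect} together with the matching constraints), this rearranges to $\sum_{f >_w f'} x_{wf'} \geq 1 - \theta$; that is, the amount by which $w$ is fractionally matched to firms \emph{worse} than $f$ is at least $1-\theta$.

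Next I would run the symmetric argument on the firm side, being careful that $I_f$ is ordered in the \emph{opposite} direction, from $f$'s least preferred worker to its most preferred. Hence the portion of $I_f$ occupied by workers strictly less preferred than $w$ has length $\sum_{w' <_f w} x_{w'f}$, and this block sits before $w$'s own subinterval. Since $f$ strictly prefers $w$ to its assigned partner $\mu'_\theta(f)$, the point $\theta$ lands strictly inside this earlier block, so $\theta < \sum_{w' <_f w} x_{w'f}$, which rearranges to the strict bound $\sum_{w' >_f w} x_{w'f} < 1 - \theta$: the amount by which $f$ is fractionally matched to workers \emph{better} than $w$ is strictly less than $1-\theta$.

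Combining the two bounds yields $\sum_{f >_w f'} x_{wf'} \geq 1 - \theta > \sum_{w' >_f w} x_{w'f}$, hence $\sum_{f >_w f'} x_{wf'} - \sum_{w' >_f w} x_{w'f} > 0$, which directly contradicts the third constraint of LP~\eqref{eq:lp_single} for the pair $(w,f)$. Therefore no blocking pair can exist and $\mu_\theta$ is stable. The only real obstacle is the bookkeeping of the two opposite interval orderings together with the interplay of strict versus weak inequalities: the contradiction needs exactly one strict inequality (supplied by the firm side) to push the constraint's left-hand side strictly above zero, so the directions must be tracked carefully. The measure-zero event that $\theta$ falls on a subinterval boundary has already been discarded in the construction, so it does not affect the argument.
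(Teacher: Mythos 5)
Your proof is correct: the paper states this lemma as a citation to Teo--Sethuraman without reproducing a proof, and your argument is precisely the standard one implied by the interval construction — translating the two blocking inequalities into the position of $\theta$ within $I_w$ and $I_f$ (with the opposite orderings tracked correctly) and deriving $\sum_{f >_w f'} x_{wf'} \geq 1-\theta > \sum_{w' >_f w} x_{w'f}$, contradicting the stability constraint of LP~\eqref{eq:lp_single}. The bookkeeping of weak versus strict inequalities and the exclusion of boundary values of $\theta$ are handled properly, so the argument goes through.
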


\begin{theorem}[\cite{VANDEVATE}]
\label{thm:lp_polytope}
The stable matching polytope defined by LP~\eqref{eq:lp_single} has integral vertices; that is, it is the convex hull of stable matchings.
\end{theorem}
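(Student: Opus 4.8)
The plan is to prove both inclusions showing that the polytope defined by LP~\eqref{eq:lp_single} equals the convex hull of the integral stable matchings $\mathcal{M}_A$, from which integrality of the vertices is immediate. The easy inclusion is that every stable matching $M$ is feasible: its $0/1$ indicator vector trivially satisfies the two degree constraints, and the third (blocking-pair) constraint, read on an integral point, asserts precisely that $(w,f)$ is not a blocking pair of $M$, which holds because $M$ is stable. Hence $\mathrm{conv}(\mathcal{M}_A)$ is contained in the polytope.

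The substantive direction is the reverse inclusion: every feasible fractional $x$ lies in $\mathrm{conv}(\mathcal{M}_A)$. Here I would invoke the Teo--Sethuraman rounding construction described above. For a threshold $\theta \in [0,1]$ it produces maps $\mu_\theta$ and $\mu'_\theta$; by \Cref{lem:lp_perfect} these are mutually inverse and hence define a single perfect matching, and by \Cref{lem:lp_stable} this matching is stable, so $\mu_\theta \in \mathcal{M}_A$ for every $\theta$. The key quantitative observation is that the construction is measure-preserving in each coordinate: by definition the subinterval of $I_w$ assigned to firm $f$ has length exactly $x_{wf}$, so for $\theta$ drawn uniformly from $[0,1]$,
\[
\Pr_\theta\!\big[\mu_\theta(w)=f\big] = \int_0^1 \mathbbm{1}\big[\mu_\theta(w)=f\big]\, d\theta = x_{wf}.
\]

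To turn this into a finite convex combination, I would note that across all $2n$ agents there are at most $O(n^2)$ subinterval endpoints, so these breakpoints partition $[0,1]$ into finitely many half-open intervals on each of which $\theta \mapsto \mu_\theta$ is constant. Let $\mathcal{M}^\ast \subseteq \mathcal{M}_A$ be the resulting finite set of distinct stable matchings, and for $M \in \mathcal{M}^\ast$ let $\lambda_M$ be the total length of the $\theta$-set producing $M$. Then $\lambda_M \ge 0$, $\sum_M \lambda_M = 1$, and summing the displayed identity over the matchings pairing $w$ with $f$ gives $x_{wf} = \sum_{M \in \mathcal{M}^\ast} \lambda_M\,\mathbbm{1}[M(w)=f]$ for every pair, i.e. $x = \sum_{M} \lambda_M \chi_M$ is a convex combination of stable matchings. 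Combining the two inclusions identifies the polytope with $\mathrm{conv}(\mathcal{M}_A)$, so every vertex is the indicator vector of a stable matching.

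I expect the main obstacle to be bookkeeping rather than conceptual: verifying the measure-preserving identity requires care that the subinterval lengths match the coordinates $x_{wf}$ simultaneously on the worker and firm sides, and that the consistency of the two orderings (workers best-to-worst versus firms worst-to-best) is exactly what \Cref{lem:lp_perfect} guarantees. Once the per-coordinate marginal identity and the finiteness of $\mathcal{M}^\ast$ are in hand, the convex-combination conclusion and hence integrality follow immediately; the stability of $\mu_\theta$ for every $\theta$ is already supplied by \Cref{lem:lp_stable}, so no further stability argument is needed.
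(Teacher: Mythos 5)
Your proof is correct and takes essentially the same route the paper intends: the paper states this theorem as a citation (to Vande Vate, with the supporting lemmas from Teo--Sethuraman) and presents the interval-rounding construction together with \Cref{lem:lp_perfect} and \Cref{lem:lp_stable} precisely as its ingredients, and your argument is the standard completion of that construction. The marginal identity $\Pr_\theta[\mu_\theta(w)=f]=x_{wf}$ and the piecewise constancy of $\theta\mapsto\mu_\theta$ are exactly what convert the rounding into a convex-combination decomposition, so nothing is missing.
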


\section{Results}
\label{sec:results}

In this section, we present our results. We use $\mathcal{M}_X$ to denote the set of stable matchings of instance $X$ and $\mathcal{L}_X$ to denote the corresponding lattice. We primarily study robust stable matchings under two stable matching instances, although the results extend to any $k \geq 2$ instances.

Consider two stable matching instances $A$ and $B$ defined on a set of $n$ workers, $\mathcal{W} = \{w_1, w_2, \ldots, w_n\}$, and a set of $n$ firms, $\mathcal{F} = \{f_1, f_2, \ldots, f_n\}$. We first consider the setting of \Cref{thm:sublattice}, which explains the structure of robust stable matchings when $A$ and $B$ \emph{differ only at agents on one side}.

\begin{theorem}
\label{thm:sublattice}
Let $A$ be an instance of stable matching and let $B$ be another instance obtained from $A$ by changing the preference lists of agents on only one side, either workers or firms, but not both. Then the matchings in $\mathcal{M}_A \cap \mathcal{M}_B$ form a sublattice of each of the two lattices.
\end{theorem}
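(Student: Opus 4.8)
The plan is to exploit the fact that the meet and join operations in the stable matching lattice depend on the preferences of only one side. Recall from \Cref{sec:latticeOfSM} that $M_1 \wedge M_2$ is obtained by assigning each worker its more preferred partner and $M_1 \vee M_2$ by assigning each worker its less preferred partner. Consequently these operations are functions of the workers' preferences alone, and (by the lattice theorem) they send pairs of stable matchings of a given instance to stable matchings of that same instance.

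First I would reduce to a single case using symmetry. By hypothesis $B$ differs from $A$ on only one side; swapping the roles of workers and firms if necessary, I would assume the firms are the side that changes, so that every worker has identical preferences in $A$ and $B$. This reduction is justified because being a sublattice is self-dual: a subset $S$ is closed under both $\wedge$ and $\vee$ if and only if it is closed under both operations in the order-reversed lattice (where $\wedge$ and $\vee$ are interchanged), and the firm-dominance lattice is exactly the order-reversal of the worker-dominance lattice. Hence proving the sublattice property with respect to whichever dominance order corresponds to the unchanging side suffices.

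With workers unchanged, the key observation is that the meet and join maps are literally the same partial functions on pairs of matchings in both instances, since they are computed from the common worker preferences. Now take any $M_1, M_2 \in \mathcal{M}_A \cap \mathcal{M}_B$. Applying the lattice theorem to instance $A$, the matching $M_1 \wedge M_2$, namely each worker's more preferred partner, is stable in $A$; applying it to instance $B$, the very same matching is stable in $B$. Therefore $M_1 \wedge M_2 \in \mathcal{M}_A \cap \mathcal{M}_B$. The identical argument with ``more preferred'' replaced by ``less preferred'' gives $M_1 \vee M_2 \in \mathcal{M}_A \cap \mathcal{M}_B$. Closure under both operations establishes that $\mathcal{M}_A \cap \mathcal{M}_B$ is a sublattice of both $\mathcal{L}_A$ and $\mathcal{L}_B$.

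I do not expect a serious obstacle here; the result is essentially a one-line consequence of the side-dependence of the lattice operations once the symmetry reduction is in place. The only points requiring care are (i) verifying that the meet and join are genuinely computed from a single side's preferences, so that they coincide across $A$ and $B$ as matchings, and (ii) justifying the symmetry reduction via the self-duality of the sublattice property together with the worker--firm duality of the dominance order. Notably, neither step invokes rotations or the Birkhoff machinery, which is precisely why this uniform structural statement stands in contrast to the more delicate two-sided regimes analyzed later.
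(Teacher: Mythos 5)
Your proposal is correct and follows essentially the same argument as the paper: reduce by symmetry to the case where only firms change, observe that the join and meet are computed solely from the (unchanged) worker preferences so that $\vee_A=\vee_B$ and $\wedge_A=\wedge_B$ on $\mathcal{M}_A\cap\mathcal{M}_B$, and conclude closure under both operations in both lattices. Your explicit justification of the worker--firm symmetry via self-duality of the sublattice property is a slightly more careful spelling-out of the paper's ``without loss of generality'' step, but it is not a different route.
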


\begin{proof}
It suffices to show that $\mathcal{M}_A \cap \mathcal{M}_B$ is a sublattice of $\mathcal{L}_A$. Assume $|\mathcal{M}_A \cap \mathcal{M}_B| > 1$ and let $M_1$ and $M_2$ be two distinct matchings in $\mathcal{M}_A \cap \mathcal{M}_B$.

Let $\vee_A$ and $\vee_B$ denote the join operations under instances $A$ and $B$, respectively. Likewise, let $\wedge_A$ and $\wedge_B$ denote the meet operations under $A$ and $B$.

By definition of the join operation (see \Cref{sec:latticeOfSM}), $M_1 \vee_A M_2$ is the matching obtained by assigning each worker its less preferred partner from $M_1$ and $M_2$ according to instance $A$.
Without loss of generality, assume that $B$ is obtained from $A$ by changing only the firms’ preference lists. Since the preference list of each worker is identical in $A$ and $B$, its less preferred partner from $M_1$ and $M_2$ is the same in both instances.
Therefore, $M_1 \vee_A M_2 = M_1 \vee_B M_2$. A similar argument shows that $M_1 \wedge_A M_2 = M_1 \wedge_B M_2$.

Hence, both $M_1 \vee_A M_2$ and $M_1 \wedge_A M_2$ belong to $\mathcal{M}_A \cap \mathcal{M}_B$, as required.
\end{proof}

\begin{corollary}
\label{cor.sublatticeIntersection}
Let $A$ be an instance of stable matching and let $B_1, \ldots, B_k$ be instances obtained from $A$, each by changing the preference lists of agents on only one side, either workers or firms, but not both. Then the matchings in $\mathcal{M}_A \cap \mathcal{M}_{B_1} \cap \cdots \cap \mathcal{M}_{B_k}$ form a sublattice of $\mathcal{L}_A$.
\end{corollary}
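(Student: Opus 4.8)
The plan is to reduce the multi-instance statement to the two-instance result already established in \Cref{thm:sublattice}. The key observation is that each $B_i$ differs from $A$ only on one side, so \Cref{thm:sublattice} tells us that $\mathcal{M}_A \cap \mathcal{M}_{B_i}$ is a sublattice of $\mathcal{L}_A$ for every $i$. Since $\mathcal{M}_A \cap \mathcal{M}_{B_1} \cap \cdots \cap \mathcal{M}_{B_k} = \bigcap_{i=1}^{k} \left( \mathcal{M}_A \cap \mathcal{M}_{B_i} \right)$, the target set is an intersection of sublattices of the single fixed lattice $\mathcal{L}_A$. The whole argument then rests on the elementary fact that an arbitrary intersection of sublattices of a fixed lattice is again a sublattice.

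First I would spell out why the intersection of sublattices is a sublattice, working throughout with the meet $\wedge_A$ and join $\vee_A$ of the single lattice $\mathcal{L}_A$ (this is the crucial point that makes the intersection well-behaved: all the lattice operations are taken in one common ambient lattice, not in the various $\mathcal{L}_{B_i}$). Take any two matchings $M_1, M_2$ lying in the full intersection. Then $M_1, M_2 \in \mathcal{M}_A \cap \mathcal{M}_{B_i}$ for each $i$, and since each $\mathcal{M}_A \cap \mathcal{M}_{B_i}$ is a sublattice of $\mathcal{L}_A$, both $M_1 \vee_A M_2$ and $M_1 \wedge_A M_2$ lie in $\mathcal{M}_A \cap \mathcal{M}_{B_i}$. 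As this holds for every $i$, both $M_1 \vee_A M_2$ and $M_1 \wedge_A M_2$ lie in the intersection $\bigcap_{i=1}^{k} \left( \mathcal{M}_A \cap \mathcal{M}_{B_i} \right)$, which is exactly $\mathcal{M}_A \cap \mathcal{M}_{B_1} \cap \cdots \cap \mathcal{M}_{B_k}$. This establishes closure under both operations and hence the sublattice property.

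There is essentially no hard step here; the proof is a routine induction (or direct argument) once one fixes the ambient lattice as $\mathcal{L}_A$. The only subtlety worth flagging explicitly is that \Cref{thm:sublattice} is stated for a two-instance pair where the single perturbed instance changes agents on \emph{one} side; here the different $B_i$ are allowed to perturb \emph{different} sides from each other, but this causes no difficulty, because each individual pair $(A, B_i)$ still satisfies the hypothesis of \Cref{thm:sublattice} on its own, and we never need the $B_i$ to change the same side. For safety I would also note the degenerate cases: if the intersection is empty or a singleton it is trivially a sublattice, so we may assume it has at least two elements as above.
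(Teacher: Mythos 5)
Your proof is correct and follows essentially the same route as the paper's: apply \Cref{thm:sublattice} to each pair $(A,B_i)$ to get that each $\mathcal{M}_A \cap \mathcal{M}_{B_i}$ is a sublattice of $\mathcal{L}_A$, then observe that the intersection is closed under $\vee_A$ and $\wedge_A$ because each constituent set is. Your added remarks (fixing the ambient lattice, the $B_i$ possibly perturbing different sides, degenerate cases) are harmless elaborations of the same argument.
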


\begin{proof}
Assume $|\mathcal{M}_A \cap \mathcal{M}_{B_1} \cap \cdots \cap \mathcal{M}_{B_k}| > 1$ and let $M_1$ and $M_2$ be two distinct matchings in this intersection. Then $M_1$ and $M_2$ belong to $\mathcal{M}_A \cap \mathcal{M}_{B_i}$ for each $1 \le i \le k$. By \Cref{thm:sublattice}, each $\mathcal{M}_A \cap \mathcal{M}_{B_i}$ is a sublattice of $\mathcal{L}_A$. Hence, both $M_1 \vee_A M_2$ and $M_1 \wedge_A M_2$ belong to $\mathcal{M}_A \cap \mathcal{M}_{B_i}$ for every $i$, and therefore to their intersection.
\end{proof}

These results show that, in certain cases, the set of robust stable matchings forms a sublattice of the lattice of stable matchings of the original instance. This motivates us to further characterize sublattices of the stable matching lattice. In \Cref{subsection:semisublatticeNecessarySufficient}, we show that for any instance $B$ obtained by permuting the preference list of a single worker or a single firm, the set $\mathcal{M}_A \setminus \mathcal{M}_B$ forms a semi-sublattice of $\mathcal{L}_A$ (see \Cref{lem:MABsemi}). In particular, if a worker’s preference list is permuted, $\mathcal{M}_A \setminus \mathcal{M}_B$ forms a join semi-sublattice of $\mathcal{L}_A$, and if a firm’s list is permuted, it forms a meet semi-sublattice of $\mathcal{L}_A$. In both cases, $\mathcal{M}_A \cap \mathcal{M}_B$ is a sublattice of both $\mathcal{L}_A$ and $\mathcal{L}_B$, as shown in \Cref{thm:sublattice}.

With this motivation, we first study Birkhoff’s Theorem on sublattices in \Cref{sec:generalization}. We establish properties of distributive lattices, their underlying partial orders, and how sublattices can be generated via \emph{compressions} of partial orders. When the lattice can be decomposed into two sublattices, or into a sublattice and a semi-sublattice, we design algorithms to efficiently describe these structures. We then analyze the structural properties of robust stable matchings under increasingly complex models in \Cref{sec:structure}. \Cref{sec:optimal_matchings} presents algorithms for computing worker-optimal and firm-optimal stable matchings when the robust stable matchings form a sublattice. \Cref{sec:polytope} studies the geometric properties of the robust stable matching polytope. Finally, \Cref{sec:robust_general_algo} provides an XP-time algorithm to compute and enumerate robust stable matchings in the most general setting, where arbitrarily many agents may change their preferences via arbitrary permutations between $A$ and $B$.

\subsection{Birkhoff's Theorem on Sublattices}
\label{sec:generalization}

Let $\Pi$ be a finite poset. For simplicity of notation, in this paper we will assume that $\Pi$ must
have {\em two dummy elements} $s$ and $t$; the remaining elements will be called {\em proper
elements} and the term {\em element} will refer to proper as well as dummy elements. The element
$s$ precedes all other elements and $t$ succeeds all other elements in $\Pi$. A {\em proper
closed set} of $\Pi$ is any closed set that contains $s$ and does not contain $t$.
It is easy to see that the set of all proper closed sets of $\Pi$ form a distributive 
lattice under the operations of set intersection and union. We will denote this lattice by 
$L(\Pi)$. The following has also been called {\em the fundamental theorem for finite distributive lattices}.

\begin{theorem}
(Birkhoff \cite{Birkhoff})	Every finite distributive lattice $\mathcal{L}$ is isomorphic to $L(\Pi)$, 
for some finite poset $\Pi$.
\end{theorem}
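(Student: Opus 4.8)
The plan is to realize $\Pi$ as the poset of join-irreducible elements of $\mathcal{L}$, augmented with the two dummy elements $s$ and $t$, and to exhibit an explicit isomorphism $\phi : \mathcal{L} \to L(\Pi)$. Recall that an element $p$ is \emph{join-irreducible} if it is not the bottom element and whenever $p = a \vee b$ one has $p = a$ or $p = b$; let $J$ denote the subposet of join-irreducibles under the lattice order. Since $\mathcal{L}$ is finite, $J$ is finite, so I would set $\Pi = J \cup \{s,t\}$ with $s$ below and $t$ above every element of $J$. The proper closed sets of $\Pi$ are then exactly the order ideals (down-sets) of $J$ with $s$ adjoined, and I define $\phi(x) = \{s\} \cup \{p \in J : p \le x\}$.

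First I would check that $\phi$ is well defined, i.e. that $\phi(x)$ is always a proper closed set: it is a down-set because $\le$ is transitive, it contains $s$ by construction, and it omits $t$ since $t \notin J$. Next I would record the structural fact underpinning everything, namely that in a finite lattice every element is the join of the join-irreducibles beneath it, $x = \bigvee\{p \in J : p \le x\}$; this follows by a routine induction on the height of $x$, splitting any reducible element as a join of two strictly smaller elements.

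With this in hand, $\phi$ is order-preserving by transitivity, and order-reflecting because $\phi(x) \subseteq \phi(y)$ forces $x = \bigvee\{p \in J : p \le x\} \le y$ (as $y$ bounds each such $p$ above); hence $\phi$ is injective. That $\phi$ preserves meets is immediate from the equivalence $p \le x \wedge y \iff (p \le x \text{ and } p \le y)$. Preservation of joins, $\phi(x \vee y) = \phi(x) \cup \phi(y)$, reduces to showing that a join-irreducible $p \le x \vee y$ already satisfies $p \le x$ or $p \le y$, which is exactly where I would invoke distributivity: $p = p \wedge (x \vee y) = (p \wedge x) \vee (p \wedge y)$, and irreducibility forces $p = p \wedge x$ or $p = p \wedge y$.

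The step I expect to be the crux is surjectivity of $\phi$ onto $L(\Pi)$, and it is again distributivity that does the work. Given a proper closed set $S$, equivalently an ideal $I = S \setminus \{s\}$ of $J$, I would take $x = \bigvee I$ and argue $\phi(x) = \{s\} \cup I$. The inclusion $I \subseteq \phi(x)$ is clear; for the reverse, if $p \in J$ satisfies $p \le \bigvee I$, then distributivity gives $p = p \wedge \bigvee_{q \in I} q = \bigvee_{q \in I}(p \wedge q)$, and join-irreducibility of $p$ forces $p = p \wedge q$ for some $q \in I$, so $p \le q$ and hence $p \in I$ because $I$ is a down-set. This shows every proper closed set is realized, completing the argument that $\phi$ is a lattice isomorphism.
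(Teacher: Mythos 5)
Your proof is correct, and it is the classical argument for Birkhoff's representation theorem: take $\Pi$ to be the poset $J$ of join-irreducibles of $\mathcal{L}$ (plus the dummies $s,t$), map $x$ to the set of join-irreducibles below $x$, and invoke distributivity in exactly the two places it is genuinely needed — preservation of joins and surjectivity. The steps you leave implicit are routine and not gaps: the $n$-ary forms of distributivity and of join-irreducibility follow from the binary forms by induction, and the degenerate ideal $I=\emptyset$ is handled by the convention $\bigvee\emptyset=\bot$, which matches $\phi(\bot)=\{s\}$ since no join-irreducible lies below $\bot$. The comparison with the paper is necessarily one-sided, because the paper never proves this statement: it imports it as a classical result with a citation, and the appendix whose title mentions Birkhoff's theorem in fact proves a different statement — the one-to-one correspondence between compressions of $\Pi$ and sublattices of $L(\Pi)$ (\Cref{thm:generalization}) — taking the representation theorem itself for granted. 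Where the paper does need a concrete generating poset, namely for stable matching lattices, it uses Irving's rotation poset rather than the abstract join-irreducible poset, because rotations come with an efficient algorithmic handle (\Cref{lem:computePoset}). So your argument supplies the self-contained, fully general proof that the paper omits; what the paper's route buys is computational concreteness in the stable-matching setting, while yours buys generality and independence from that machinery.
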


Our application of Birkhoff's Theorem deals with the sublattices of a finite distributive 
lattice. First, in \Cref{def:compression} we state the critical operation of 
\emph{compression of a poset}.  

\begin{definition}
	\label{def:compression}
Given a finite poset $\Pi$, first partition its elements; each subset will be called
a \emph{meta-element}. Define the following precedence relations among the meta-elements: 
if $x,y$ are elements of $\Pi$ such that $x$ is in meta-element $X$, $y$ is in meta-element $Y$ 
and $x$ precedes $y$, then $X$ precedes $Y$. Assume that these precedence relations yield a
partial order, say $Q$, on the meta-elements (if not, this particular partition is not useful 
for our purpose). Let $\Pi'$ be any partial order on the meta-elements such that the precedence
relations of $Q$ are a subset of the precedence relations of $\Pi'$. Then $\Pi'$ will be called
a {\em compression} of $\Pi$. Let $A_s$ and $A_t$ denote the meta-elements of $\Pi'$ containing $s$ 
and $t$, respectively.
\end{definition}

\begin{figure}[ht]
	\begin{wbox}
		\begin{minipage}[c]{0.49\textwidth}
			\centering
			\def\svgscale{0.4}
			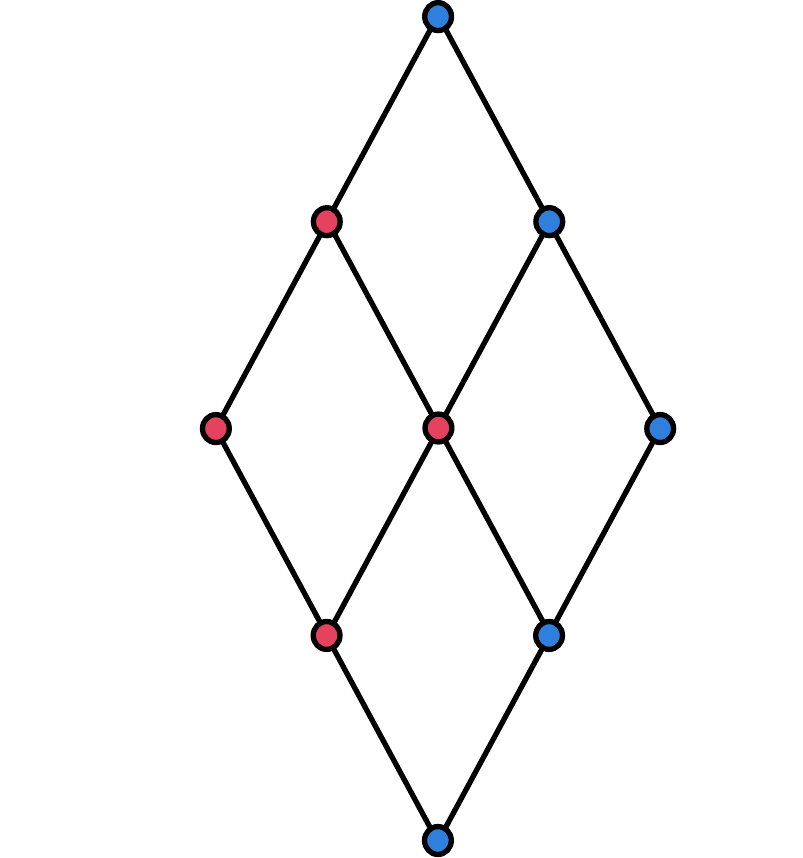
			
			~~~~$\mathcal{L}$
		\end{minipage}
		\begin{minipage}[c]{0.49\textwidth}
			\centering
			\def\svgscale{0.4}
			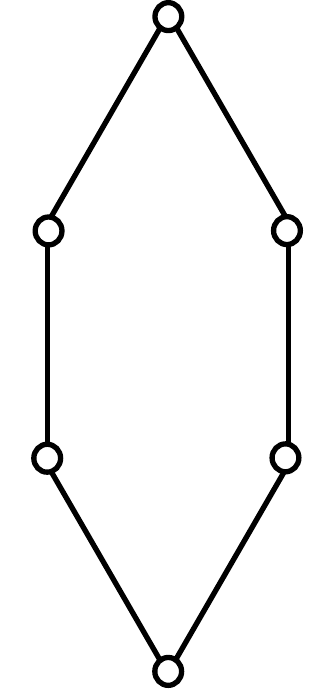
			
			$P$	
		\end{minipage}
		\par  
		\begin{minipage}[c]{0.54\textwidth}
			\centering
			\def\svgscale{0.4}
			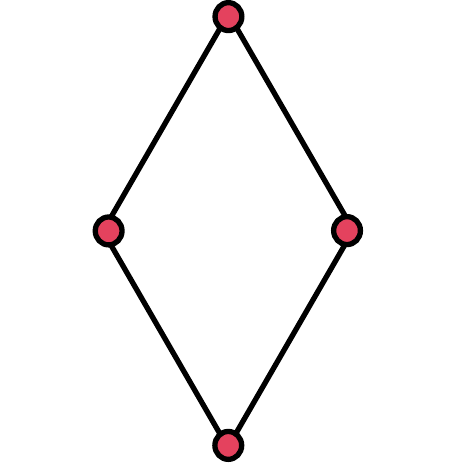
			
			$P_1$	
		\end{minipage}
		\begin{minipage}[c]{0.44\textwidth}
			\centering
			\def\svgscale{0.4}
			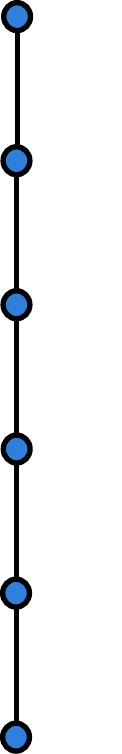
			$P_2$	
		\end{minipage}
	\end{wbox}
	\caption{Two examples of compressions. Lattice $\mathcal{L} = L(P)$.
$P_1$ and $P_2$ are compressions of $P$, and they generate the sublattices in $\mathcal{L}$, of red 
and blue elements, respectively.
	}
	\label{ex:compression} 
\end{figure} 

For examples of compressions see \Cref{ex:compression}.
Clearly, $A_s$ precedes all other meta-elements in $\Pi'$ and $A_t$ succeeds all other meta-elements 
in $\Pi'$. Once again, by a {\em proper closed set of $\Pi'$} we mean a closed set of $\Pi'$ that
contains $A_s$ and does not contain $A_t$. Then the lattice formed by the set of all proper closed 
sets of $\Pi'$ will be denoted by $L(\Pi')$. In the context of sublattices, compressions are very useful due to the \Cref{thm:generalization}. A proof of this theorem, in the context of stable matchings, is provided in Appendix~\ref{app:gen_proof}.

\begin{theorem}
	\label{thm:generalization} 
	There is a one-to-one correspondence between the compressions of $\Pi$ and the sublattices of 
	$L(\Pi)$ such that if sublattice $\mathcal{L}'$ of $L(\Pi)$ corresponds to compression $\Pi'$, 
	then $\mathcal{L}'$ is generated by $\Pi'$.
\end{theorem}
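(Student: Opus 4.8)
The plan is to exhibit the claimed bijection explicitly, by constructing maps in both directions and showing they are mutually inverse; the ``generation'' clause of the statement will then fall out for free from the forward map being a lattice isomorphism onto its image. Throughout I would work with the abstract formulation in terms of $\Pi$ and $L(\Pi)$, specializing to stable matchings only at the end if needed.

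First I would treat the forward direction (compression $\to$ sublattice), which also settles the generation claim. Given a compression $\Pi'$ whose meta-elements partition the elements of $\Pi$, define a map $\phi$ sending a proper closed set $C'$ of $\Pi'$ to the union $\bigcup_{X \in C'} X \subseteq \Pi$ of its constituent meta-elements. I would check that this union is a proper closed set of $\Pi$: downward closure under $\prec$ in $\Pi$ follows because every relation $x \prec y$ in $\Pi$ lifts to a relation in $Q$ and hence, since $\Pi'$ extends $Q$, to a relation between the containing meta-elements, so closure of $C'$ in $\Pi'$ forces closure of the union in $\Pi$; the proper condition (contains $s$, excludes $t$) is inherited from $A_s \in C'$ and $A_t \notin C'$. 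Because the meta-elements are pairwise disjoint, $\phi$ is injective and commutes with both union and intersection, hence is a lattice isomorphism onto its image. Thus the image $\Phi(\Pi')$ is a sublattice of $L(\Pi)$ generated by $\Pi'$, which already proves the second clause.

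Next I would build the reverse map (sublattice $\to$ compression). Given a sublattice $\mathcal{L}'$, define an equivalence on the elements of $\Pi$ by declaring $x \sim y$ iff every member of $\mathcal{L}'$ contains both or neither; its classes are the proposed meta-elements, and every member of $\mathcal{L}'$ is automatically a union of classes. Define an order by $X \preceq Y$ iff every member of $\mathcal{L}'$ containing $Y$ also contains $X$. I would then verify that (i) this is a partial order, with antisymmetry coming from the fact that two distinct classes are always separated by some member (take an appropriate down-set); and (ii) it extends the induced order $Q$: if $x \prec y$ in $\Pi$ with $x \in X$, $y \in Y$, then any member of $\mathcal{L}'$ containing $Y$ is a closed set of $\Pi$ containing $y$, hence contains $x$ and therefore $X$; this simultaneously shows $Q$ is a genuine partial order, so the partition is admissible. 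Hence $\Psi(\mathcal{L}')$, defined by these meta-elements and this order, is a valid compression. A key bookkeeping point is that $A_s$ equals the bottom $\hat{0} = \bigcap \mathcal{L}'$ (the elements lying in every member) and $A_t$ the complement of the top $\hat{1} = \bigcup \mathcal{L}'$ (the elements lying in no member), rather than the singletons $\{s\}$ and $\{t\}$.

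Finally I would show the two maps are mutually inverse. For $\Phi(\Psi(\mathcal{L}')) = \mathcal{L}'$ I would invoke Birkhoff (\Cref{thm:birkhoff}) applied to $\mathcal{L}'$: the order just defined makes the members of $\mathcal{L}'$ correspond exactly to the proper closed sets of $\Psi(\mathcal{L}')$, with $\phi$ realizing the correspondence, the dummy identification $A_s = \hat{0}$, $A_t = \Pi \setminus \hat{1}$ guaranteeing that ``proper closed set of $\Pi'$'' matches ``member of $\mathcal{L}'$.'' For $\Psi(\Phi(\Pi')) = \Pi'$ I would recover the data of $\Pi'$ from the sublattice $\Phi(\Pi')$: distinct meta-elements are separated by a down-set so $\sim$ reproduces the partition, and $X \preceq Y$ iff every proper closed set containing $Y$ contains $X$ reproduces the order. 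The hard part will be the reverse direction: establishing that the forcing relation on $\sim$-classes is a partial order that simultaneously extends $Q$ (so as to be a legitimate compression) and has $\mathcal{L}'$ as its family of proper closed sets under $\phi$. The delicate issues are the dummy-element bookkeeping and confirming the order on meta-elements is fully forced by $\mathcal{L}'$, since that uniqueness is exactly what makes the correspondence one-to-one rather than one-to-many.
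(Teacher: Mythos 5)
Your proposal takes a genuinely different route from the paper's. The paper proves the theorem inside the Hasse diagram of $\mathcal{L}=L(\Pi)$ (specialized, w.l.o.g., to stable matching lattices): it defines the meta-elements as the sets of rotations lying on $\mathcal{L}'$-paths between $\mathcal{L}'$-direct successors, together with the two boundary segments $S_{M_0,M_{0'}}$ and $S_{M_{z'},M_z}$, proves these form a partition of $\Pi$ (Lemma~\ref{lem:partition}), extracts the precedence relations from the bottom elements $M^A$ of the auxiliary sublattices $\mathcal{L}^A$ (Lemma~\ref{lem:preMetaPoset}), and then verifies the poset axioms, the compression property, and both inclusions $L(\Pi')\subseteq\mathcal{L}'$ and $\mathcal{L}'\subseteq L(\Pi')$. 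You instead quotient $\Pi$ by the separation equivalence ($x\sim y$ iff no member of $\mathcal{L}'$ separates them) and order the classes by the forcing relation; this is the classical ``ring of sets'' viewpoint. It is more abstract, needs no stable-matching machinery, and has one concrete advantage: your $\Psi(\Phi(\Pi'))=\Pi'$ argument makes the injectivity half of the correspondence explicit, whereas the paper's appendix only proves that generation is well defined and surjective (its two bullet points), leaving the ``one-to-one'' clause implicit. Your forward direction ($\phi$ commutes with union and intersection on disjoint meta-elements) matches the paper's Section~\ref{sec.forward} in substance.

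There is, however, one step that does not go through as written: for $\Phi(\Psi(\mathcal{L}'))=\mathcal{L}'$ you propose to ``invoke Birkhoff (Theorem~\ref{thm:birkhoff}) applied to $\mathcal{L}'$.'' That theorem only asserts that \emph{some} poset represents $\mathcal{L}'$; it says nothing about the particular poset $\Psi(\mathcal{L}')$ you constructed, so it cannot deliver the needed inclusion $\Phi(\Psi(\mathcal{L}'))\subseteq\mathcal{L}'$, i.e., that every proper closed set of the forcing order has union lying in $\mathcal{L}'$. (The reverse inclusion is indeed immediate.) This is exactly the point where the sublattice hypothesis must be used, and the fix is short: for each class $X\neq A_t$ set $M_X:=\bigcap\{M\in\mathcal{L}': X\subseteq M\}$, which lies in $\mathcal{L}'$ by closure under meet, and check that $M_X=\bigcup\{Y: Y\preceq X\}=\phi(J_X)$, the image of the principal down-set of $X$. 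Then for any proper closed set $D$ of $\Psi(\mathcal{L}')$, down-closure gives $\phi(D)=\bigcup_{X\in D}\phi(J_X)$, a nonempty finite union of members of $\mathcal{L}'$, hence in $\mathcal{L}'$ by closure under join. With this lemma inserted in place of the appeal to Theorem~\ref{thm:birkhoff}, your plan is complete and correct (modulo the degenerate case $A_s=A_t$, corresponding to the empty sublattice, which neither you nor the paper treats explicitly but which is routine).
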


\subsubsection{An alternative view of compression}
\label{sec:alternative}

In this section we give an alternative definition of compression of a poset; this will be used
in the rest of the paper. The advantage of this definition is that it is much 
easier to work with for the applications presented later. Its drawback
is that several different sets of edges may yield the same compression. Therefore, this definition is not suitable for stating a one-to-one correspondence between sublattices of $\mathcal{L}$ and compressions of $\Pi$. Finally we show that any compression $\Pi'$ obtained using the first definition can also be obtained via the second definition and vice versa (\Cref{prop.eq}), hence showing that the two definitions are equivalent for our purposes.

We are given a poset $\Pi$ for a stable matching instance; let
$\mathcal{L}$ be the lattice it generates. Let $H(\Pi)$ denote the Hasse diagram of $\Pi$. 
Consider the following operations to derive a new poset $\Pi'$: Choose a set $E$ of directed edges
to add to $H(\Pi)$ and let $H_E$ be the resulting graph. Let $H'$ be the graph obtained by 
shrinking the strongly connected components of $H_E$; each strongly connected component will be
a meta-rotation of $\Pi'$. The edges which are not shrunk will
define a DAG, $H'$, on the strongly connected components. These edges give precedence relations
among meta-rotation for poset $\Pi'$.

Let $\mathcal{L}'$ be the sublattice of $\mathcal{L}$ generated by $\Pi'$. We will say that the
set of edges $E$ \emph{defines} $\mathcal{L}'$. 
It can be seen that each set $E$ uniquely defines a sublattice $L(\Pi')$; however,
there may be multiple sets that define the same sublattice. 
Observe that given a compression $\Pi'$ of $\Pi$, a set $E$ of edges defining $L(\Pi')$ can easily be obtained.
See~\Cref{ex:edgeSets} for examples of sets of edges which define sublattices.

\begin{proposition}
\label{prop.eq}
	The two definitions of compression of a poset are equivalent.
\end{proposition}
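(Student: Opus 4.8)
The plan is to establish the equivalence by proving two inclusions: every compression produced by the first definition (\Cref{def:compression}) is realizable by some edge set $E$ in the alternative definition, and conversely every edge set $E$ yields a partition together with a partial order that is a compression in the sense of the first definition. Throughout I would keep in mind that the base graph for the alternative definition is the Hasse diagram $H(\Pi)$, whose transitive closure recovers exactly the order of $\Pi$, so that adding edges to $H(\Pi)$ and contracting strongly connected components is the same as reasoning about reachability in the order $\Pi$ augmented by $E$.

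For the direction from the first definition to the second, suppose we are handed a partition of $\Pi$ into meta-elements together with a partial order $\Pi'$ on them extending the induced order $Q$. I would construct $E$ in two parts. First, inside each meta-element $X$, add a directed cycle through all of its elements, which forces every meta-element to lie inside a single strongly connected component of $H_E$. Second, for each covering relation $X \prec Y$ of $\Pi'$, add one edge from an arbitrary element of $X$ to an arbitrary element of $Y$. I would then verify two claims: that the strongly connected components of $H_E$ are exactly the meta-elements, and that the DAG obtained by contracting them has reachability order equal to $\Pi'$. The first holds because every edge of $H_E$ either stays inside a meta-element or projects to a relation of $\Pi'$ — the added cross-meta edges by construction, and the original Hasse edges because $Q \subseteq \Pi'$; since $\Pi'$ is acyclic, no cycle can pass through two distinct meta-elements, so none are merged. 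The second holds because the contracted edges generate precisely the covering relations of $\Pi'$ while projecting into no relation outside $\Pi'$.

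For the converse, suppose we are given an edge set $E$, form $H_E$, and take the meta-elements to be its strongly connected components. I would first check that the order $Q$ obtained by lifting the relations of $\Pi$ to this partition is acyclic, so the partition is admissible for the first definition. This is immediate: a cycle in $Q$ among distinct meta-elements would be witnessed by a cycle of $\Pi$-edges, all of which survive in $H_E$, which would force those meta-elements into a common strongly connected component, a contradiction. The partial order $\Pi'$ returned by the alternative definition is the reachability order of the contracted DAG; since every $\Pi$-edge lies in $H_E$, this order contains $Q$, so $\Pi'$ extends $Q$ and is a compression in the sense of the first definition.

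The main obstacle, and the step warranting the most care, is the verification in the first direction that adding the cross-meta edges does not inadvertently merge two meta-elements into one strongly connected component. This rests entirely on the acyclicity of $\Pi'$ together with the hypothesis $Q \subseteq \Pi'$, which guarantees that every original Hasse edge already respects $\Pi'$; I would isolate this as the key lemma and prove it by tracing any putative cross-meta cycle in $H_E$ back to a cycle in $\Pi'$. The treatment of the dummy elements $s$ and $t$ and their meta-elements $A_s, A_t$ is routine and folds directly into the same argument.
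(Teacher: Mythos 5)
Your proof is correct and follows essentially the same route as the paper's: realize each meta-element as a strongly connected component by adding internal edges and encode any extra precedence relations as cross-meta edges, and conversely read off the meta-elements from the strongly connected components of $H_E$. The only difference is one of detail: the paper dismisses with ``clearly'' the verification you rightly isolate as the key lemma, namely that adding the cross-meta edges cannot merge two distinct meta-elements, which follows from the acyclicity of $\Pi'$ together with $Q \subseteq \Pi'$.
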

\begin{proof}
	Let $\Pi'$ be a compression of $\Pi$ obtained using the first definition. Clearly, for each
meta-rotation in $\Pi'$, we can add edges to $\Pi$ so the strongly connected component created
is precisely this meta-rotation. Any additional precedence relations introduced among incomparable
meta-rotations can also be introduced by adding appropriate edges.

The other direction is even simpler, since each strongly connected component can be defined to 
be a meta-rotation and extra edges added can also be simulated by introducing new precedence
constraints. 
\end{proof}

\begin{figure}
	\begin{wbox}
		\begin{minipage}[c]{0.33\textwidth}
			\centering
			\def\svgscale{0.4}
			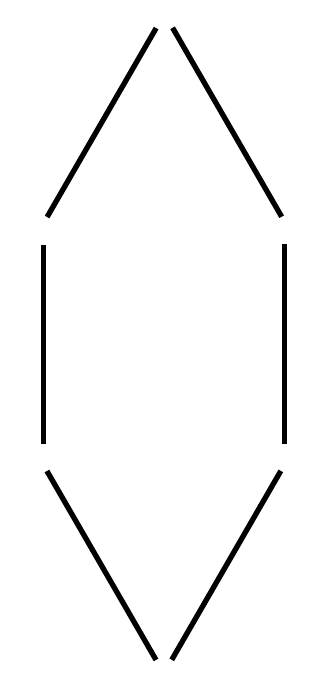
			
			$E_1$
		\end{minipage}
		\begin{minipage}[c]{0.32\textwidth}
			\centering
			\def\svgscale{0.4}
			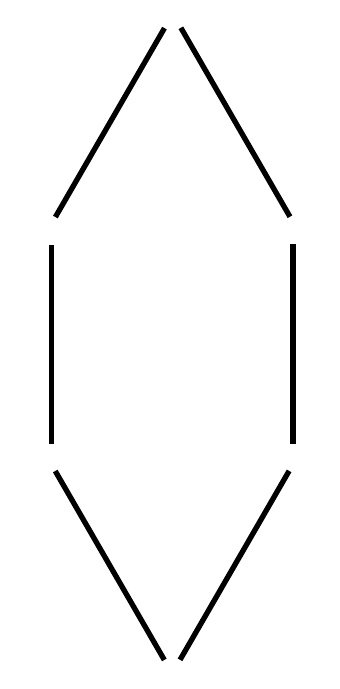
			
			$E_2$
		\end{minipage}
		\begin{minipage}[c]{0.3\textwidth}
			\centering
			\def\svgscale{0.4}
			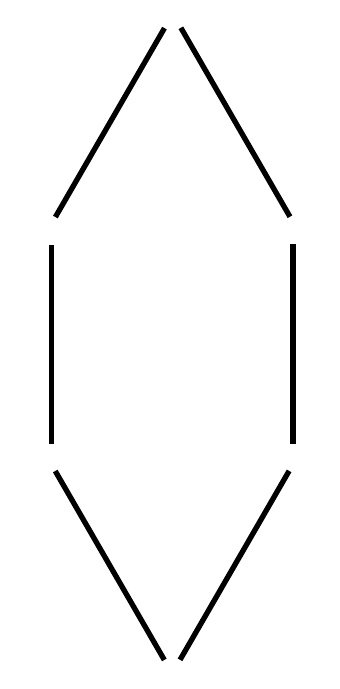
			
			$E_3$
		\end{minipage}
	\end{wbox}
	\caption{$E_1$ (red edges) and $E_2$ (blue edges) define the sublattices in 
~\Cref{ex:compression}, of red and blue elements, respectively. $E_2$ and $E_3$ define the same compression and represent the same sublattice. All black edges in $E_1, E_2$ and $E_3$ are directed from top to bottom (not shown in the figure). 	}
	\label{ex:edgeSets} 
\end{figure} 

For a (directed) edge $e = uv \in E$, $u$ is called the \emph{tail} and $v$ is 
called the \emph{head} of $e$.
Let $I$ be a closed set of $\Pi$. Then we say that:
\begin{itemize}
	\item $I$ \emph{separates} an edge $uv \in E$ if $v \in I$ and $u \not \in I$. 
	\item $I$ \emph{crosses} an edge $uv \in E$ if $u \in I$ and $v \not \in I$. 
\end{itemize}
If $I$ does not separate or cross any edge $uv \in E$, $I$ is called a \emph{splitting} set 
w.r.t. $E$.

\begin{lemma}
	\label{lem:separating}
	Let $\mathcal{L}'$ be a sublattice of $\mathcal{L}$ and $E$ be a set of edges defining $\mathcal{L}'$. 
	A matching $M$ is in $\mathcal{L}'$ iff the closed subset $I$ generating $M$ does not separate any edge $uv \in E$.
\end{lemma}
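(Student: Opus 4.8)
The plan is to reduce the claim to a purely order-theoretic statement about the compression $\Pi'$ that the edge set $E$ defines. Recall that, under the alternative view of \Cref{sec:alternative}, $\Pi'$ is the condensation obtained by shrinking the strongly connected components (the meta-rotations) of the graph $H_E = H(\Pi) \cup E$, and that $E$ defines $\mathcal{L}' = L(\Pi')$. Under the bijection between matchings in $\mathcal{L}$ and proper closed sets of $\Pi$, a matching $M$ lies in $\mathcal{L}'$ precisely when its generating closed set $I$, viewed through the quotient map onto the meta-rotations, is a proper closed set of $\Pi'$. Hence it suffices to prove the equivalence: $I$ is a closed set of $\Pi'$ if and only if $I$ separates no edge of $E$.

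First I would rewrite ``closed set of $\Pi'$'' in the language of the graph $H_E$. A set $I$ of elements is (the preimage of) a closed set of $\Pi'$ exactly when (a) $I$ is a union of strongly connected components of $H_E$, and (b) for every arc $u \to v$ of $H_E$ one has $v \in I \Rightarrow u \in I$ (heads drag in their tails), which is just the lower-set condition transported through the condensation. The key observation is that (b) splits cleanly according to the two kinds of arcs: on the Hasse arcs of $\Pi$ it holds automatically because $I$ is already a closed set of $\Pi$, while on the arcs of $E$ it is, by definition, exactly the statement that $I$ separates no edge of $E$ (an edge $uv$ is separated iff its head $v$ lies in $I$ and its tail $u$ does not).

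With this reformulation the forward direction is immediate: if $I$ is closed in $\Pi'$ then (b) holds on all of $H_E$, in particular on $E$, so no edge of $E$ is separated. The reverse direction is the crux, and the one step I expect to require genuine argument is recovering condition (a) from separation-freeness. Suppose $I$ is closed in $\Pi$ and separates no edge of $E$, yet some component $C$ of $H_E$ satisfies $C \cap I \neq \emptyset$ and $C \not\subseteq I$; pick $a \in C \cap I$ and $b \in C \setminus I$. Strong connectivity provides a directed path in $H_E$ from $b$ to $a$, and since it starts outside $I$ and ends inside $I$, it contains an arc $x \to y$ with $x \notin I$ and $y \in I$. If $x \to y$ is a Hasse arc of $\Pi$ this contradicts closedness of $I$ in $\Pi$; if it is an arc of $E$ it is separated by $I$, contradicting the hypothesis. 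Thus no such $C$ exists, $I$ is a union of components, and (a) holds.

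Finally I would note that properness transfers for free: since $s \in I$, $t \notin I$, and $I$ is a union of components, the meta-rotation $A_s$ is contained in $I$ and $A_t$ is disjoint from it, so $I$ is a proper closed set of $\Pi'$ and generates an element of $\mathcal{L}' = L(\Pi')$. Combining the two directions gives the stated equivalence. The only real obstacle is the extremal path argument showing that forbidding separation forces $I$ to respect the strongly connected components; once the alternative definition of compression is unpacked into these graph-theoretic terms, the rest is bookkeeping.
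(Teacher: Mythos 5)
Your proof is correct and follows essentially the same route as the paper's: both reduce the claim, via \Cref{thm:generalization}, to showing that $I$ is (the preimage of) a proper closed set of the compression $\Pi'$, prove the forward direction by noting that a separated edge of $E$ violates closedness in $\Pi'$, and prove the converse by the same strongly-connected-component argument — any SCC partially cut by $I$ yields an arc from outside $I$ to inside $I$, which must be either a Hasse arc (contradicting closedness of $I$ in $\Pi$) or an edge of $E$ (contradicting non-separation). Your explicit handling of properness ($A_s \subseteq I$, $A_t \cap I = \emptyset$) is a minor addition the paper leaves implicit.
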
  

\begin{proof}
	Let $\Pi'$ be a compression corresponding to $\mathcal{L}'$.
	By~\Cref{thm:generalization}, the matchings in $\mathcal{L}'$ are generated by eliminating rotations in closed subsets of $\Pi'$. 
	
	First, assume $I$ separates $uv \in E$. Moreover, assume $M \in \mathcal{L}'$ for the sake of contradiction, and let $I'$ be the closed subset of $\Pi'$ corresponding to $M$. Let $U$ and $V$ be the meta-rotations containing $u$ and $v$ respectively.
	Notice that the sets of rotations in $I$ and $I'$ are identical. Therefore, $V \in I'$ and $U \not \in I'$. 
	Since $uv \in E$, there is an edge from $U$ to $V$ in $H'$. Hence, $I'$ is not a closed subset of $\Pi'$.
	
	Next, assume that $I$ does not separate any $uv \in E$. We show that the rotations in $I$ can be partitioned into meta-rotations in a closed subset $I'$ of $\Pi'$. If $I$ cannot be partitioned into meta-rotations, there must exist a meta-rotation $A$ such that $A \cap I$ is a non-empty proper subset of $A$. Since $A$ consists of rotations in a strongly connected component of $H_E$, there must be an edge $uv$ from $A \setminus I$ to $A \cap I$ in $H_E$. Hence, $I$ separates $uv$. Since $I$ is a closed subset, $uv$ can not be an edge in $H$. Therefore, $uv \in E$, which is a contradiction. It remains to show that the set of meta-rotations partitioning $I$ is a closed subset of $\Pi'$. Assume otherwise, there exist meta-rotation $U \in I'$ and $V \not \in I'$ such that there exists an edge from $U$ to $V$ in $H'$. Therefore, there exists $u \in U$, $v \in V$ and $uv \in E$, which is a contradiction.
\end{proof}

\begin{remark}
	\label{remark:addedEdge}
We may assume w.l.o.g. that the set $E$ defining $\mathcal{L}'$ is {\em minimal} in the following sense:
There is no edge $uv \in E$ such that $uv$ is not separated by any closed set of $\Pi$. Observe that
if there is such an edge, then $E \setminus \{uv\}$ defines the same sublattice $\mathcal{L}'$. 
Similarly, there is no edge $uv \in E$ such that each closed set separating $uv$ also separates another edge in $E$.
\end{remark}

\begin{definition}
\label{def.poset}
W.r.t. an element $v$ in a poset $\Pi$, we define four useful subsets of $\Pi$:
\begin{align*}
I_v &= \{r \in \Pi: r \prec v \} \\
J_v & = \{r \in \Pi: r \preceq v\} = I_v \cup \{v\} \\
I'_v &= \{r \in \Pi: r \succ v \} \\
J'_v & = \{r \in \Pi: r \succeq v\} = I'_v \cup \{v\}
\end{align*} 
Notice that $I_v, J_v, \Pi \setminus I'_v, \Pi \setminus J'_v$ are all closed sets. 
\end{definition}

\begin{lemma}
	\label{lem:prime}
	Both $J_v$ and $\Pi \setminus J'_u$ separate $uv$ for each $uv \in E$.
\end{lemma}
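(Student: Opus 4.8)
The plan is to observe that the two separation claims are really a single statement about the relative position of $u$ and $v$ in $\Pi$, and then to derive that statement from the minimality of $E$.

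First I would unpack the definition of ``separates.'' Recall that a closed set $I$ separates $uv$ precisely when $v \in I$ and $u \notin I$, and that both $J_v$ and $\Pi \setminus J'_u$ are closed sets by \Cref{def.poset}. For $J_v = \{r : r \preceq v\}$ we have $v \in J_v$ automatically (as $v \preceq v$), so $J_v$ separates $uv$ if and only if $u \notin J_v$, i.e.\ $u \not\preceq v$. For $\Pi \setminus J'_u$, where $J'_u = \{r : r \succeq u\}$, we always have $u \in J'_u$ and hence $u \notin \Pi \setminus J'_u$; moreover $v \in \Pi \setminus J'_u$ exactly when $v \notin J'_u$, i.e.\ when $v \not\succeq u$, which is the same as $u \not\preceq v$. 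Thus both halves of the lemma reduce to the single claim that $u \not\preceq v$ for every $uv \in E$.

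Next I would prove this claim using the minimality of $E$ guaranteed by \Cref{remark:addedEdge}. Suppose toward a contradiction that $u \preceq v$; since an edge joins distinct elements we have $u \neq v$, so in fact $u \prec v$. Because every closed set is downward closed, any closed set $I$ containing $v$ must also contain $u$, and therefore no closed set of $\Pi$ can have $v \in I$ together with $u \notin I$. In other words, no closed set separates $uv$. But \Cref{remark:addedEdge} allows us to assume $E$ is minimal, meaning every edge of $E$ is separated by some closed set of $\Pi$; this contradicts the conclusion that $uv$ is separated by none. Hence $u \not\preceq v$, and the two separation statements follow immediately.

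The proof is short, and the only genuine subtlety is the appeal to minimality: without the assumption from \Cref{remark:addedEdge} one could include a redundant edge $uv$ with $u \prec v$, and then $J_v$ would contain $u$ so it would fail to separate $uv$. The remaining comparability cases ($v \prec u$ or $u,v$ incomparable) are immediate, since each directly yields $u \not\preceq v$. I expect no further obstacle: once the reduction to $u \not\preceq v$ is in place, the downward-closure of closed sets together with the minimality of $E$ closes the argument.
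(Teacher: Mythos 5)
Your proof is correct and follows essentially the same route as the paper's: both arguments hinge on the minimality assumption from \Cref{remark:addedEdge} together with downward closure of closed sets, deducing that $u \not\preceq v$ (equivalently $u \notin J_v$ and $v \notin J'_u$) for every $uv \in E$. Your preliminary reduction of the two separation claims to the single statement $u \not\preceq v$ is a minor streamlining of the paper's presentation, not a different argument.
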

\begin{proof}
Since $uv$ is in $E$, $u$ cannot be in $J_v$; otherwise, there is no closed subset separating 
$uv$, contradicting~\Cref{remark:addedEdge}. Hence, $J_v$ separates $uv$ for all $uv$ in $E$.
	
	Similarly, since $uv$ is in $E$, $v$ cannot be in $J'_u$. Therefore, $\Pi \setminus J'_u$ contains $v$ but not $u$, and thus separates $uv$.
\end{proof}

With these definitions and properties, we can now look into the properties of edge sets defining a sublattice $L_1$ of $L$, depending on how $L_1$ partitions $L$. In \Cref{sec:sublattice}, we consider at the case when $L$ is partitioned into two sublattices and in \Cref{sec:semi}, we consider the case when $L$ is partitioned into a sublattice and a semi-sublattice.

\subsubsection{Setting I} 
\label{sec:sublattice}

Under Setting I, the given lattice $\mathcal{L}$ has sublattices $\mathcal{L}_1$ and $\mathcal{L}_2$ such that 
$\mathcal{L}_1$ and $\mathcal{L}_2$ partition $\mathcal{L}$. The main structural fact for this setting is:

\begin{theorem}
	\label{thm:alternating}
	Let $\mathcal{L}_1$ and $\mathcal{L}_2$ be sublattices of $\mathcal{L}$ such that $\mathcal{L}_1$ and $\mathcal{L}_2$ partition $\mathcal{L}$. 
	Then there 
	exist sets of edges $E_1$ and $E_2$ defining $\mathcal{L}_1$ and $\mathcal{L}_2$ such that they
	form an alternating path from $t$ to $s$. 
\end{theorem}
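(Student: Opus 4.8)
The plan is to work entirely within the edge--set description of sublattices from \Cref{sec:alternative}. Fix minimal edge sets $E_1$ and $E_2$ defining $\mathcal{L}_1$ and $\mathcal{L}_2$, minimal in the sense of \Cref{remark:addedEdge}. By \Cref{lem:separating}, a closed set $I$ lies in $\mathcal{L}_i$ if and only if $I$ separates no edge of $E_i$. Since $\mathcal{L}_1$ and $\mathcal{L}_2$ partition $\mathcal{L}$, every proper closed set $I$ satisfies the \emph{complementarity relation}: $I$ separates some edge of $E_1$ if and only if $I$ separates no edge of $E_2$. This single combinatorial fact drives the whole argument, and the goal is to show that $E_1 \cup E_2$ can be chosen so as to form one directed path $t = x_0 \to x_1 \to \cdots \to x_m = s$ whose edges alternate between $E_2$ and $E_1$.

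First I would pin down the two endpoints. The top closed set $\Pi \setminus \{s\text{'s complement}\}$ --- more precisely $\Pi \setminus \{t\}$ --- separates an edge $uv$ exactly when $u = t$, and the bottom closed set $\{s\}$ separates $uv$ exactly when $v = s$. Applying complementarity to these two sets shows, after possibly swapping the names of $\mathcal{L}_1$ and $\mathcal{L}_2$, that no edge of $E_1$ has tail $t$ while some edge of $E_2$ does, and symmetrically that exactly one of $E_1,E_2$ contains an edge with head $s$. The edge $t \to x_1 \in E_2$ is then the first edge of the path, and an edge into $s$ will be its last; which of $E_1, E_2$ supplies the final edge is dictated by the side containing $\{s\}$, consistently with the alternation.

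The heart of the proof is a \emph{continuation/alternation lemma}: if a vertex $v \neq s$ is the head of an edge of $E_2$, then $v$ is the tail of an edge of $E_1$, and symmetrically with the roles of $E_1$ and $E_2$ interchanged. To establish this I would use the distinguished closed sets of \Cref{def.poset}. Given an $E_2$-edge $u \to v$, \Cref{lem:prime} gives that $J_v$ separates it, so $J_v \in \mathcal{L}_1$ and hence separates no edge of $E_1$; comparing $J_v$ against $I_v = J_v \setminus \{v\}$ and against $\Pi \setminus J'_v$ and invoking complementarity forces an $E_1$-edge incident to $v$ as a tail. Minimality (\Cref{remark:addedEdge}) is what guarantees that this edge attaches exactly at $v$ and is unique, so that every internal vertex has exactly one incoming and one outgoing edge, of opposite types. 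I expect this step to be the main obstacle: producing the next edge with tail (or head) \emph{exactly} at the current vertex rather than at some poset-ancestor of it, and doing so consistently so that no branching arises. It is precisely here that the freedom to re-choose the (non-unique) edge sets becomes essential, since the ``raw'' minimal sets may need to be rerouted so that their incidences chain up into a single path.

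Finally I would argue termination and correctness. A natural potential is the down-set $J_{x_i}$: along the constructed path the associated distinguished closed sets strictly decrease, because each step moves to a vertex strictly lower in a fixed linear extension of $\Pi$, so the path is simple and reaches $s$ after finitely many steps. For correctness I would verify, via an ascent/descent count, that the path edges recover exactly $\mathcal{L}_1$ and $\mathcal{L}_2$. For any closed set $I$, set $b_i = 1$ if $x_i \in I$ and $b_i = 0$ otherwise; then $b_0 = 0$ since $t \notin I$ and $b_m = 1$ since $s \in I$, so the sequence has at least one ascent $(b_{i-1}, b_i) = (0,1)$, that is, at least one separated edge. Complementarity forces all separated edges to be of the same type, placing $I$ in exactly one of $\mathcal{L}_1, \mathcal{L}_2$ according to that type. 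This confirms that the alternating path defines the two edge sets and reproduces the given partition, completing the argument.
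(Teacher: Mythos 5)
Your setup is sound and matches the paper in spirit: the complementarity relation is exactly what \Cref{lem:separating} plus the partition hypothesis gives, the endpoint analysis via $\Pi\setminus\{t\}$ and $\{s\}$ is correct, and the final ascent-counting check is a clean way to verify that a single alternating path reproduces the partition. However, there is a genuine gap at exactly the point you flag yourself: the continuation/alternation lemma is asserted but not proved, and the tools you name do not prove it. Given an $E_2$-edge $uv$ with $v\neq s$, the comparison of $J_v$ with $I_v$ works only when $I_v\in\mathcal{L}_2$; in that case the $E_1$-edge separated by $I_v$ but not by $J_v$ indeed has tail exactly $v$. But $I_v\in\mathcal{L}_1$ is possible (it happens whenever $E_2$ contains another edge whose head lies strictly below $v$, e.g.\ for $E_2=\{(c,a),(t,b)\}$ over the poset $a\prec b$ with $c$ incomparable), and in that case the comparison of $J_v\in\mathcal{L}_1$ with $\Pi\setminus J'_v$ only yields an $E_1$-edge whose tail lies in $J'_v$, i.e.\ weakly \emph{above} $v$, not at $v$. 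Forcing the incidence to occur exactly at $v$ --- the ``rerouting'' you defer --- is the entire content of the theorem, so the proposal establishes the framing but not the result.

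The paper avoids any local continuation claim by arguing globally: \Cref{lem:path} shows $s$ is reachable from $t$ using $E_1\cup E_2$ (if not, the closed set of unreachable elements would generate a matching lying in neither sublattice), then \Cref{lem:replace} collapses each maximal same-type subpath into a single edge of that type (checking that the family of closed sets separating some $E_1$-, resp.\ $E_2$-, edge is unchanged), and \Cref{lem:remove} deletes all off-path edges by the same invariance argument; alternation then follows from maximality of the subpaths. Two further flaws in your plan would need repair even if the continuation lemma were supplied. First, your termination potential is unsound: edges of a defining set may join incomparable elements of $\Pi$ (by \Cref{remark:addedEdge} they merely cannot go from a predecessor to a successor), so no fixed linear extension need decrease along the path. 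Second, your final verification applies complementarity of the \emph{original} $E_1,E_2$ to the \emph{new} path edges, which presupposes that the rerouted edges still define $\mathcal{L}_1$ and $\mathcal{L}_2$ --- the very statement being proved; the paper closes this circle precisely through the invariance clauses of \Cref{lem:replace} and \Cref{lem:remove}, and your argument needs an analogous statement for every rerouting step.
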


We will prove this theorem in the context of stable matchings. 
Let $E_1$ and $E_2$ be any two sets of edges defining $\mathcal{L}_1$ and $\mathcal{L}_2$,
respectively. 
We will show that $E_1$ and $E_2$ can be adjusted so that they form an alternating path from $t$ to $s$,
without changing the corresponding compressions. 

\begin{lemma}
	\label{lem:path}
	There must exist a path from $t$ to $s$ composed of edges in $E_1$ and $E_2$.
\end{lemma}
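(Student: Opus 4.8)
The plan is to argue by contradiction. I assume that \emph{no} directed path from $t$ to $s$ exists in the graph $G = (\Pi,\, E_1 \cup E_2)$, where each edge $uv$ is oriented from its tail $u$ to its head $v$, and from this I construct a proper closed set $I$ of $\Pi$ that \emph{separates no edge} of $E_1 \cup E_2$. Once such an $I$ is in hand, applying \Cref{lem:separating} to $\mathcal{L}_1$ (with $E_1$) and to $\mathcal{L}_2$ (with $E_2$) shows that the matching $M(I)$ generated by $I$ lies in both $\mathcal{L}_1$ and $\mathcal{L}_2$. Since $\mathcal{L}_1$ and $\mathcal{L}_2$ \emph{partition} $\mathcal{L}$, they are disjoint, so no matching can belong to both, and we obtain the desired contradiction.

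The natural candidate for $I$ comes from reachability. Let $R$ be the set of elements reachable from $t$ in $G$, following edges forward from tail to head, and set $I = \Pi \setminus R$. The non-separation property is then immediate: if some edge $uv \in E_1 \cup E_2$ were separated by $I$, we would have $v \in I$ and $u \notin I$, i.e.\ $u \in R$ and $v \notin R$; but $u \in R$ together with the edge $u \to v$ forces $v \in R$, a contradiction. Moreover $t \in R$ gives $t \notin I$, and the assumption that no $t$-$s$ path exists gives $s \notin R$, hence $s \in I$. Thus $I$ contains $s$, excludes $t$, and separates no edge of $E_1 \cup E_2$ --- exactly what we need, \emph{provided} $I$ is a genuine closed set of $\Pi$.

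The only delicate point is precisely this closedness: $R$ is defined by reachability along $E$-edges, which need not respect the precedence relations of $\Pi$, so $\Pi \setminus R$ may fail to be a lower set. This is the heart of the argument, and I expect it to be the main obstacle. I would resolve it by first augmenting $G$ with the covering edges of $\Pi$ (a directed edge $x \to y$ for each relation $x \prec y$): the reachable set $R'$ from $t$ in the augmented graph is automatically a $\Pi$-up-set, so $I = \Pi \setminus R'$ is closed, and the same one-line argument shows it still separates no $E$-edge; if $s \notin R'$ the contradiction goes through verbatim. It therefore remains to show that any covering edge appearing on a $t$-$s$ path in the augmented graph can be bypassed using only edges of $E_1 \cup E_2$, so that such a path yields one through $E_1 \cup E_2$ alone. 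Here I would use the minimality of $E_1, E_2$ (\Cref{remark:addedEdge}) together with the private separating closed sets $J_v$ and $\Pi \setminus J'_u$ supplied by \Cref{lem:prime}, and the fact that, along a maximal chain of $\mathcal{L}$, two consecutive closed sets switch sublattices only when the newly added element is simultaneously the head of a then-separated edge of one $E_i$ and the tail of a now-unseparated edge of the other; this switch structure furnishes the missing $E$-edges and is also what ultimately forces the path to \emph{alternate} between $E_1$ and $E_2$. Establishing this bypass is the crux, and it is where the assumption that \emph{both} parts are sublattices (rather than merely semi-sublattices) is essential.
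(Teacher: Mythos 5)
Your core contradiction argument coincides with the paper's own proof: take the set $R$ of elements reachable from $t$, note $t\notin\Pi\setminus R$ and (under the contradiction hypothesis) $s\in\Pi\setminus R$, and use \Cref{lem:separating} together with the fact that $\mathcal{L}_1,\mathcal{L}_2$ partition $\mathcal{L}$ to derive a contradiction from an edge of $E_1\cup E_2$ whose tail lies in $R$ and whose head does not. The paper runs this with $R$ defined by reachability along $E_1\cup E_2$ edges \emph{only}, and directly speaks of ``the matching generated by rotations in $\Pi\setminus R$,'' i.e., it implicitly treats $\Pi\setminus R$ as a closed set; you are right that this is the delicate point, since a set reachable along $E$-edges need not be an up-set of $\Pi$ (its complement need not be closed), so your concern about the naive definition is legitimate.

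The genuine gap is that your repair does not recover the statement being proved. Augmenting with the covering edges of $\Pi$ does make $\Pi\setminus R'$ closed, and your argument then validly shows that $s$ is reachable from $t$ in the graph $H(\Pi)\cup E_1\cup E_2$. But the lemma asserts a path composed of edges of $E_1\cup E_2$ \emph{alone}, and this is exactly what the subsequent development consumes: \Cref{lem:replace} and \Cref{lem:remove} partition the path $Q$ into subpaths lying in $E_1$ or in $E_2$, which makes no sense if $Q$ contains covering edges (a covering edge $x\to y$ with $x\prec y$ is separated by no closed set, so it belongs to neither defining set and is invisible to the entire separation bookkeeping). The ``bypass'' step you defer is therefore not a finishing touch but the actual content still owed. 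Moreover, the ingredients you list do not obviously supply it: the switch structure along a maximal chain of $\mathcal{L}$ produces, at each sublattice switch, one edge of $E_1$ into the switch rotation and one edge of $E_2$ out of it (or vice versa), but these are isolated local edges at different rotations; stitching them into a single directed walk from $t$ to $s$ is precisely what \Cref{lem:path} claims, so invoking it as ``the crux to be established'' leaves the proof circular in effect. As written, your proposal proves only the strictly weaker reachability statement in the augmented graph and does not establish the lemma.
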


\begin{proof}
	Let $R$ denote the set of vertices reachable from $t$ by a path of edges in $E_1$ and $E_2$. Assume by contradiction that $R$ does not contain $s$. Consider the matching $M$ generated by rotations in $\Pi \setminus R$. Without loss of generality, assume that $M \in \mathcal{L}_1$. 
	By \Cref{lem:separating}, $\Pi \setminus R$ separates an edge $uv \in E_2$. Therefore, $u \in R$ and $v \in \Pi \setminus R$. Since $uv \in E_2$, $v$ is also reachable from $t$ by a path of edges in $E_1$ and $E_2$.
\end{proof}

Let $Q$ be a path from $t$ to $s$ according to \Cref{lem:path}. 
Partition $Q$ into subpaths $Q_1, \ldots, Q_k$ such that each $Q_i$ consists of edges in either $E_1$ or $E_2$ and 
$E(Q_i) \cap E(Q_{i+1}) = \emptyset$ for all $1 \leq i \leq k-1$.
Let $r_i$ be the rotation at the end of $Q_i$ except for $i = 0$ where $r_0 = t$. Specifically, $t = r_0 \rightarrow r_1 \rightarrow \ldots \rightarrow r_k = s$ in $Q$.
We will show that each $Q_i$ can be replaced by a direct edge from $r_{i-1}$ to $r_i$, and furthermore,
all edges not in $Q$ can be removed.

\begin{lemma}
	\label{lem:replace}
	Let $Q_i$ consist of edges in $E_\alpha$ ($\alpha$ = 1 or 2).
	$Q_i$ can be replaced by an edge from $r_{i-1}$ to $r_i$ where $r_{i-1}r_i \in E_\alpha$.
\end{lemma}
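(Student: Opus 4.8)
The plan is to argue entirely through \Cref{lem:separating}, which characterizes membership in a sublattice by which defining edges a closed set separates. Write the subpath as $x_0 \to x_1 \to \cdots \to x_m$ with $x_0 = r_{i-1}$, $x_m = r_i$, and all edges $x_jx_{j+1} \in E_\alpha$. Forming $E_\alpha' = \bigl(E_\alpha \setminus \{x_0x_1,\dots,x_{m-1}x_m\}\bigr) \cup \{x_0x_m\}$, it suffices to show that the family of closed sets of $\Pi$ that separate \emph{some} edge of $E_\alpha$ is unchanged, since the edges of $E_\alpha$ outside $Q_i$ are untouched. By \Cref{thm:generalization} this means $E_\alpha'$ defines the same compression, hence the same sublattice $\mathcal{L}_\alpha$, which is exactly the claim.

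First I would verify that the chord $x_0x_m$ is a legitimate (minimal) edge in the sense of \Cref{remark:addedEdge}, i.e.\ that some closed set separates it: by \Cref{lem:prime} the down-set $J_{x_m}$ separates the last edge $x_{m-1}x_m$, and one checks $x_0 \notin J_{x_m}$ (equivalently $x_0 \not\preceq x_m$), so $J_{x_m}$ separates $x_0x_m$. The easy inclusion (no matching of $\mathcal{L}_\alpha$ is lost) is then a telescoping argument: if a closed set $I$ separates $x_0x_m$, then $x_m \in I$ and $x_0 \notin I$, so walking along the directed path $x_0,\dots,x_m$ there must be a consecutive pair with $x_j \notin I \ni x_{j+1}$; hence $I$ already separates an interior edge of $Q_i$. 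Consequently any $I$ with $M(I) \in \mathcal{L}_\alpha$ separates no old $E_\alpha$-edge, so it cannot separate the chord either, giving $\mathcal{L}_\alpha \subseteq \mathcal{L}_\alpha^{E_\alpha'}$.

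The crux is the reverse inclusion: any closed set $I$ separating some interior edge $x_jx_{j+1}$ must also separate the chord $x_0x_m$ or a retained edge of $E_\alpha$. Here I would invoke the partition hypothesis. Separating $x_jx_{j+1} \in E_\alpha$ forces $M(I) \notin \mathcal{L}_\alpha$, hence $M(I) \in \mathcal{L}_{\bar\alpha}$, so $I$ separates \emph{no} edge of $E_{\bar\alpha}$ — in particular none of the $\bar\alpha$-colored path edges incident to the junction vertices $r_{i-1} = x_0$ and $r_i = x_m$. Assuming, for contradiction, that $I$ does not separate the chord, combining $x_{j+1} \in I$, $x_j \notin I$ with ``not $(x_m \in I \wedge x_0 \notin I)$'' shows that $I$ must \emph{cross} another edge $x_\ell x_{\ell+1}$ of the subpath (tail in $I$, head outside). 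I would then use the canonical separating sets $J_{x_{\ell+1}}$ and $\Pi \setminus J'_{x_\ell}$ from \Cref{lem:prime} (both of which generate matchings in $\mathcal{L}_{\bar\alpha}$) together with closure of $\mathcal{L}_{\bar\alpha}$ under the union and intersection of closed sets to exhibit a closed set separating a $\bar\alpha$-edge, contradicting $M(I) \in \mathcal{L}_{\bar\alpha}$.

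I expect this reverse inclusion to be the main obstacle, precisely because crossing an $E_\alpha$-edge is by itself harmless for sublattice membership: the argument cannot rely on the separating/crossing combinatorics of $E_\alpha$ alone, but must genuinely exploit that $\mathcal{L}_\alpha$ and $\mathcal{L}_{\bar\alpha}$ \emph{partition} $\mathcal{L}$ (not merely that each is a sublattice) and the alternation of colors at the junctions $r_{i-1}, r_i$. Intuitively, the partition forces the membership pattern of any $I \in \mathcal{L}_{\bar\alpha}$ along the monochromatic subpath to be ``monotone'' — it cannot re-enter the subpath without separating a $\bar\alpha$-edge and thereby leaving $\mathcal{L}_{\bar\alpha}$ — which is what kills the spurious crossings. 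Once both inclusions are in hand, $E_\alpha'$ and $E_\alpha$ separate exactly the same closed sets, so $Q_i$ may be replaced by the single edge $r_{i-1}r_i \in E_\alpha$ without changing the compression.
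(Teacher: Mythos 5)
Your first inclusion is fine and matches the paper: a closed set separating the chord $r_{i-1}r_i$ must, by telescoping along $Q_i$, separate an interior edge, so no matching of $\mathcal{L}_\alpha$ is lost. The genuine gap is in the reverse inclusion, which you correctly identify as the crux. Your plan there is to assume $I$ separates an interior edge of $Q_i$ but not the chord, extract a crossed edge $x_\ell x_{\ell+1}$, and then use $J_{x_{\ell+1}}$, $\Pi \setminus J'_{x_\ell}$ and sublattice closure to ``exhibit a closed set separating a $\bar\alpha$-edge, contradicting $M(I) \in \mathcal{L}_{\bar\alpha}$.'' That target is a non sequitur: exhibiting \emph{some other} closed set that separates an $E_{\bar\alpha}$-edge contradicts nothing (every closed set generating a matching of $\mathcal{L}_\alpha$ does exactly that); a contradiction would require showing that $I$ \emph{itself} separates an $E_{\bar\alpha}$-edge, or producing a matching lying in neither part of the partition. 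More fundamentally, no argument that is local to $Q_i$ can close this gap, because the configuration you are trying to refute is realizable: a closed set $I$ with $M(I)\in\mathcal{L}_{\bar\alpha}$ can separate an interior edge of $Q_i$, cross another edge of $Q_i$, and fail to separate the chord, provided it separates some \emph{retained} $E_\alpha$-edge elsewhere on the path. Crossings are harmless (as you yourself note), so they cannot be the engine of a contradiction; likewise your ``monotonicity'' intuition fails, since nothing prevents a matching of $\mathcal{L}_{\bar\alpha}$ from separating several $E_\alpha$-edges.

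The missing ingredient is global, and it is exactly what the paper's proof uses: $I$ is a \emph{proper} closed set, so $t \notin I$ and $s \in I$, and therefore $I$ must separate at least one edge of the $t$--$s$ path obtained after the replacement (walking from $t$ to $s$, membership in $I$ must switch from ``out'' to ``in'' somewhere). Since separating an interior edge of $Q_i \subseteq E_\alpha$ forces $M(I)\in\mathcal{L}_{\bar\alpha}$ by the partition, \Cref{lem:separating} says $I$ separates no edge of any $E_{\bar\alpha}$-subpath; hence the separated path edge lies either in some retained $E_\alpha$-subpath $Q_m$ with $m\neq i$, or is the chord itself --- which is precisely the needed disjunction. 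This one-line use of properness plus the partition is the content of the paper's assertion that every closed subset separates one of $r_0r_1,\ldots,r_{k-1}r_k$, and your proposal never invokes it. (Two smaller points: the paper's ``exactly one'' should really be ``at least one'' --- a closed set can separate two direct edges of the same color --- but ``at least one'' suffices; and your side-claim that ``one checks $x_0\notin J_{x_m}$'' is asserted rather than proved --- the fact $r_{i-1}\not\preceq r_i$ is itself a consequence of this global argument, not something verifiable directly.)
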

\begin{proof}
	A closed subset separating $r_{i-1}r_i$ must separate an edge in $Q_i$.
	Moreover, any closed subset must separate exactly one of $r_{0}r_{1}, \ldots, r_{k-2}r_{k-1} , r_{k-1}r_k$.
	Therefore, the set of closed subsets separating an edge in $E_1$ (or $E_2$) remains unchanged.
\end{proof}

\begin{lemma}
	\label{lem:remove}
	Edges in $E_1\cup E_2$ but not in $Q$ can be removed.
\end{lemma}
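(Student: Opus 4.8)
The plan is to show that the two edge sets restricted to the path, namely $E_1' := E_1 \cap E(Q)$ and $E_2' := E_2 \cap E(Q)$ (where, after \Cref{lem:replace}, each maximal subpath $Q_i$ has been replaced by the single condensed edge $r_{i-1}r_i$), already define $\mathcal{L}_1$ and $\mathcal{L}_2$. By \Cref{lem:separating}, the notion ``$E$ defines $\mathcal{L}'$'' is a statement purely about which closed sets separate an edge, so it suffices to prove, for every closed set $I$ of $\Pi$ generating a matching $M$, that $I$ separates some edge of $E_1$ if and only if it separates some edge of $E_1'$, and symmetrically for $E_2$. Once this equivalence is in hand, deleting every edge of $E_1\cup E_2$ lying outside $Q$ changes neither compression, and together with \Cref{lem:replace} this leaves exactly the alternating path asserted by \Cref{thm:alternating}.

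The first step records the structural fact already isolated in the proof of \Cref{lem:replace}: every closed set $I$ separates \emph{exactly one} of the condensed path edges $r_0r_1, r_1r_2, \ldots, r_{k-1}r_k$, and, by the maximal-subpath construction, these edges alternate between $E_1$ and $E_2$ as one traverses $Q$ from $t=r_0$ to $s=r_k$. The second step translates membership through \Cref{lem:separating} and the partition hypothesis: since $\mathcal{L}_1$ and $\mathcal{L}_2$ partition $\mathcal{L}$, the matching $M$ lies in exactly one of them, so $I$ separates an edge of precisely one of $E_1, E_2$ and no edge of the other.

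These two facts combine to give the nontrivial (converse) direction. The inclusion $E_1'\subseteq E_1$ makes one direction immediate, so suppose $I$ separates no edge of $E_1'$. By the first step the unique separated path edge must then belong to $E_2'$, so $I$ separates an $E_2$-edge; by the second step this forces $M\in\mathcal{L}_1$, and hence $I$ separates no edge of $E_1$ whatsoever. Thus the set of matchings whose generating closed set separates no edge of $E_1'$ equals $\mathcal{L}_1$, i.e.\ $E_1'$ defines $\mathcal{L}_1$; the argument for $E_2'$ is symmetric. Consequently all edges of $E_1\cup E_2$ outside $Q$ are redundant and may be removed.

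I expect the delicate point to be the reliance on the ``exactly one separated path edge'' property of the first step: its content is that no closed set \emph{crosses} an interior path edge, which would otherwise produce a second out-to-in transition along $Q$; this in turn leans on the directedness of $Q$ from $t$ to $s$ together with the minimality of the edge sets (\Cref{remark:addedEdge}). Everything after that is bookkeeping, but the argument is genuinely sensitive to the partition hypothesis: it is precisely the exhaustiveness $\mathcal{L}_1\cup\mathcal{L}_2=\mathcal{L}$, used in the second step, that lets non-separation of the path's $E_1$-edges upgrade to non-separation of \emph{all} of $E_1$, and without it the non-path edges need not be removable.
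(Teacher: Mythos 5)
Your proof is correct and follows essentially the same route as the paper's: both arguments rest on the two facts that any proper closed set must separate at least one edge of $Q$ (since $Q$ runs from $t \notin I$ to $s \in I$) and that, by \Cref{lem:separating} together with the partition hypothesis, no closed set can separate edges of both $E_1$ and $E_2$ — the paper merely organizes this edge-by-edge (every closed set separating an off-path $E_1$-edge also separates an on-path $E_1$-edge, so removal is safe), while you argue the contrapositive globally (non-separation of $E_1' = E_1 \cap E(Q)$ forces non-separation of all of $E_1$). One small remark: the ``exactly one separated path edge'' property you flag as delicate is stronger than what your argument actually uses — you only need that \emph{some} separated path edge exists and lies outside $E_1'$, which follows immediately from the path's endpoints, so the concern about crossings is moot.
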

\begin{proof}
	Let $e$ be an edge in $E_1\cup E_2$ but not in $Q$. 
	Suppose that $e \in E_1$.
	Let $I$ be a closed subset separating $e$.
	By \Cref{lem:separating}, the matching generated by $I$ belongs to $\mathcal{L}_2$.
	Since $e$ is not in $Q$ and $Q$ is a path from $t$ to $s$, $I$ must separate another edge $e'$ in $Q$. 
	By \Cref{lem:separating}, $I$ can not separate edges in both $E_1$ and $E_2$. 
	Therefore, $e'$ must also be in $E_1$. 
	Hence, the matching generated by $I$ will still be in $\mathcal{L}_2$ after removing $e$ from $E_1$.
	The argument applies to all closed subsets separating $e$.
\end{proof}

By \Cref{lem:replace} and \Cref{lem:remove}, $r_{0}r_{1}, \ldots, r_{k-2}r_{k-1} , r_{k-1}r_k$ are all edges in $E_1$ and $E_2$ and they alternate between $E_1$ and $E_2$. Therefore, we have \Cref{thm:alternating}. An illustration of such a path is given in \Cref{ex:canonicalPathAndBouquet}(a).


\begin{proposition}
\label{prop.sub}
	There exists a sequence of rotations $r_0, r_1, \ldots , r_{2k}, r_{2k+1}$ such that 
	a closed subset generates a matching in  $\mathcal{L}_1$ iff it
	contains $r_{2i}$ but not $r_{2i+1}$ for some $0 \leq i \leq k$.
\end{proposition}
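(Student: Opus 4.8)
The plan is to read the required sequence directly off the alternating path supplied by \Cref{thm:alternating}, and—crucially—to detect membership in $\mathcal{L}_1$ through the \emph{complementary} edge set $E_2$ rather than through $E_1$. First I would invoke \Cref{thm:alternating} to obtain edge sets $E_1$ and $E_2$, defining $\mathcal{L}_1$ and $\mathcal{L}_2$ respectively, that together form a single alternating path from $t$ to $s$, say $t = v_0 \to v_1 \to \cdots \to v_m = s$, whose edges lie alternately in $E_1$ and $E_2$.

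The key observation is a complementation trick. By \Cref{lem:separating} applied to $E_2$ (which defines $\mathcal{L}_2$), a matching $M$ with generating closed set $I$ lies in $\mathcal{L}_2$ iff $I$ separates no edge of $E_2$. Negating this and using that $\mathcal{L}_1$ and $\mathcal{L}_2$ partition $\mathcal{L}$, I get $M \in \mathcal{L}_1$ iff $M \notin \mathcal{L}_2$ iff $I$ separates \emph{some} edge of $E_2$. This is already an existential condition of exactly the shape the proposition asks for, so what remains is purely bookkeeping: packaging the edges of $E_2$ as the pairs $(r_{2i},r_{2i+1})$. For the $i$-th edge $u \to v$ of $E_2$ along the path, I would set $r_{2i} = v$ (its head) and $r_{2i+1} = u$ (its tail). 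Since $I$ separates $u \to v$ precisely when $v \in I$ and $u \notin I$, the condition ``$I$ separates some edge of $E_2$'' translates verbatim into ``$r_{2i} \in I$ and $r_{2i+1} \notin I$ for some $i$''. The number of $E_2$-edges is then $k+1$, which fixes the index range $0 \le i \le k$. When the alternating path begins and ends with an $E_2$-edge (equivalently, when $M_0$ and $M_z$ both lie in $\mathcal{L}_1$), these pairs are consecutive vertices of the path, so the $r_i$ may be taken to be the path vertices traversed from $s$ to $t$, exactly as depicted in the accompanying figure.

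The step needing the most care is the orientation and indexing: one must characterize $\mathcal{L}_1$ via $E_2$ rather than $E_1$, and assign head and tail to $r_{2i}$ and $r_{2i+1}$ in the order that converts ``separation of an $E_2$-edge'' into ``contains $r_{2i}$ but not $r_{2i+1}$''. Getting the wrong endpoint in each pair, or the wrong edge set, flips the condition into its negation. I would also note that the stated equivalence does \emph{not} require each closed set to separate a \emph{unique} edge of $E_2$; the existential form follows from the complementation argument alone. The role of \Cref{thm:alternating} is thus not logical necessity for the iff, but rather to guarantee that the resulting pairs line up as a single clean chain, which is what makes this characterization usable for the succinct representation developed in the remainder of the section.
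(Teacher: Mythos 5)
Your proof is correct and matches the paper's intended derivation: the paper states this proposition without an explicit proof, as a direct packaging of \Cref{thm:alternating}, and your argument---characterizing $\mathcal{L}_1$ as the complement of $\mathcal{L}_2$ via \Cref{lem:separating} applied to $E_2$, and then reading the pairs $(r_{2i}, r_{2i+1})$ as the heads and tails of the $E_2$-edges on the alternating path---is exactly that packaging. Your head/tail orientation, the index range $0 \le i \le k$, and the observation that uniqueness of the separated edge is not needed are all consistent with the paper's definition of separation and its partition assumption on $\mathcal{L}_1$ and $\mathcal{L}_2$.
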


\begin{figure}
	\begin{wbox}
		\begin{minipage}[c]{0.49\textwidth}
			\centering
			\def\svgscale{0.4}
			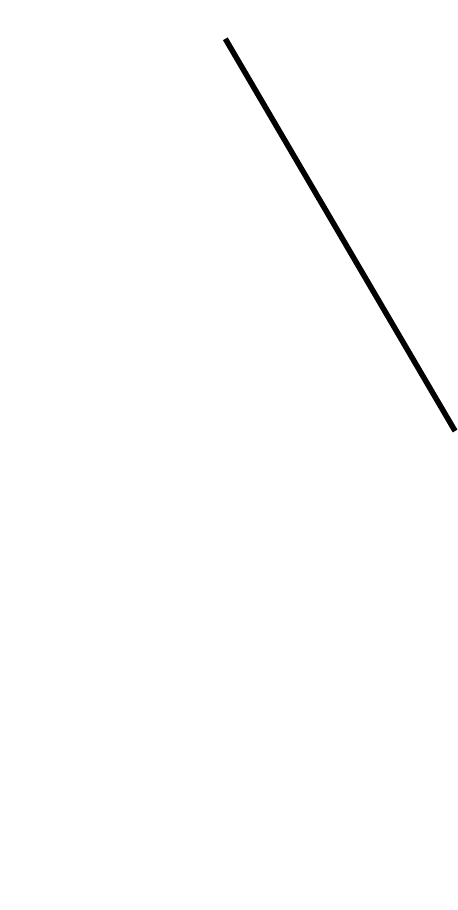
			
			(a)
		\end{minipage}
		\begin{minipage}[c]{0.49\textwidth}
			\centering
			\def\svgscale{0.4}
			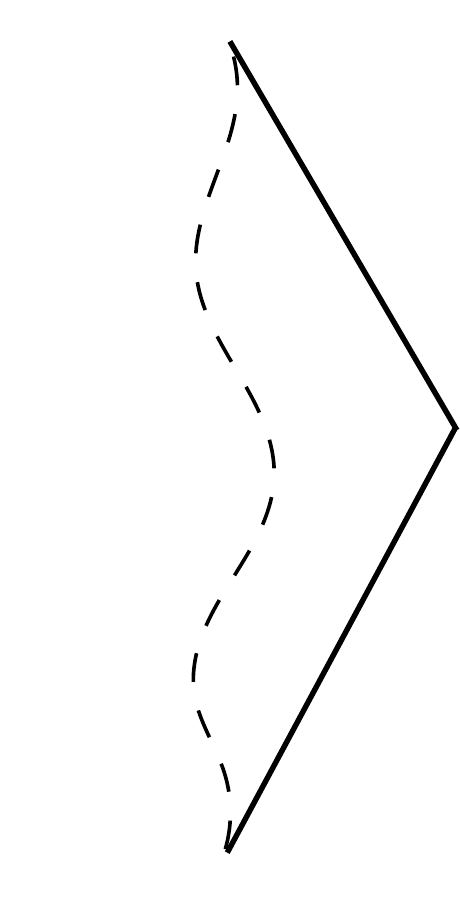
			
			(b)	
		\end{minipage}
	\end{wbox}
	\caption{Examples of: (a) canonical path, and (b) bouquet. 
	}
	\label{ex:canonicalPathAndBouquet} 
\end{figure}

\subsubsection{Setting II}
\label{sec:semi}

Under Setting II, the given lattice $\mathcal{L}$ can be partitioned into a sublattice $\mathcal{L}_1$ and a 
semi-sublattice $\mathcal{L}_2$. 
We assume that $\mathcal{L}_2$ is a join semi-sublattice. 
Clearly by reversing the order of $\mathcal{L}$, the case of meet semi-sublattice is also covered.
The next theorem, which generalizes \Cref{thm:alternating},
gives a sufficient characterization of a set of edges $E$ defining $\mathcal{L}_1$.

\begin{theorem}
\label{thm:semi}
There exists a set of edges $E$ defining sublattice $\mathcal{L}_1$ such that:
\begin{enumerate}
	\item The set of tails $T_E$ of edges in $E$ forms a chain in $\Pi$.
	\item There is no path of length two consisting of edges in $E$.
	\item For each $r \in T_E$, let
	\[F_r = \{ v \in \Pi: rv \in E \}.\]
	Then any two rotations in $F_r$ are incomparable.
	\item For any $r_i,r_j \in T_E$ where $r_i \prec r_j$, there exists a splitting set containing all rotations in $F_{r_i} \cup \{r_i\}$ and no rotations in $F_{r_j} \cup \{r_j\}$.  
\end{enumerate}
\end{theorem}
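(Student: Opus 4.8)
The plan is to work entirely in the closed-set picture supplied by Birkhoff's theorem and \Cref{lem:separating}. Under the correspondence between matchings and closed sets of $\Pi$, the join and meet of $\mathcal{L}$ are union and intersection of closed sets, so the hypothesis that $\mathcal{L}_1$ is a sublattice means its generating closed sets are closed under both $\cup$ and $\cap$, while $\mathcal{L}_2$ being a join semi-sublattice means its closed sets are closed under $\cup$. By \Cref{lem:separating} it suffices to exhibit one edge set $E$ defining $\mathcal{L}_1$ and then reshape it until the four conditions hold; I would start from a \emph{minimal} defining set in the sense of \Cref{remark:addedEdge} and modify it, always preserving the sublattice it defines.

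Condition~3 is essentially free. If $v \prec v'$ are both heads of a common tail $r$, then any closed set $I$ separating $rv'$ has $v' \in I$ and $r \notin I$; closedness forces $v \in I$, so $I$ also separates $rv$. Thus $rv'$ is redundant and may be discarded by \Cref{remark:addedEdge}; iterating this over all comparable pairs makes every $F_r$ an antichain. I would defer condition~2 and first attack the main structural claim, condition~1.

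The heart of the argument is that the tails form a chain, and here I would use the join-closedness of $\mathcal{L}_2$ to force a contradiction from incomparable tails. Suppose $a,b \in T_E$ are incomparable, with edges $a\alpha, b\beta \in E$. Using \Cref{def.poset}, set $I_a = J_\alpha \cup J_b$ and $I_b = J_\beta \cup J_a$. Each is a down-set, hence a closed set; by \Cref{lem:prime} $a \notin J_\alpha$, and $a \not\preceq b$ gives $a \notin J_b$, so $I_a$ separates $a\alpha$ and lies in $\mathcal{L}_2$, and symmetrically $M(I_b) \in \mathcal{L}_2$. Their union $I_a \cup I_b$ is again a down-set and contains both tails $a$ and $b$, so it separates neither $a\alpha$ nor $b\beta$. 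If one can guarantee that $I_a \cup I_b$ separates \emph{no} edge of $E$, then $M(I_a \cup I_b) \in \mathcal{L}_1$; but $M(I_a \cup I_b) = M(I_a) \vee M(I_b)$ is a join of two elements of $\mathcal{L}_2$, contradicting join-closedness. This is exactly where I expect the main difficulty: a union of principal down-sets can still separate some third edge with a different tail, so the sets $I_a,I_b$ must be chosen as tightly as possible and the minimality of $E$ (\Cref{remark:addedEdge}) invoked to show that any such third edge would itself have been redundant. Making this absorption step airtight is the crux of the whole theorem.

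With the chain $r_1 \prec \cdots \prec r_m$ in hand, the remaining two conditions are comparatively structural. For condition~4, given $r_i \prec r_j$ I would exhibit a closed set $I$ containing $J_{r_i}$ together with all of $F_{r_i}$ but excluding $r_j$ and $F_{r_j}$, and then check directly that it neither separates nor crosses any edge, using the antichain property (condition~3) to handle the heads of $r_i$, the chain property (condition~1) to separate the two bouquets, and join-closedness of $\mathcal{L}_2$ to guarantee the gap between bouquets is actually realizable; such an $I$ is splitting and so $M(I) \in \mathcal{L}_1$. Finally, condition~2 is arranged by a rerouting argument in the spirit of \Cref{lem:replace}: any vertex that is simultaneously a head and a tail can, thanks to the chain of tails and the antichain of heads, have its incident edge replaced by one terminating at a strictly lower head without changing which closed sets are separated. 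The subtle point throughout is that these edge modifications interact, so I would carry them out in the order condition~3, then~1, then~4, then~2, checking at each stage that the earlier conditions and the defined sublattice are preserved.
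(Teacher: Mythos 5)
Your setup (closed-set picture, minimality via \Cref{remark:addedEdge}, the principal down-sets of \Cref{def.poset}, \Cref{lem:prime}) and your treatment of condition~3 match the paper exactly. But the heart of the theorem is condition~1, and there your proof has a genuine gap --- one you candidly flag yourself. The pairwise argument (incomparable tails $a,b$, sets $I_a=J_\alpha\cup J_b$ and $I_b=J_\beta\cup J_a$, contradiction via join-closedness of $\mathcal{L}_2$) breaks precisely because $I_a\cup I_b$ may separate a third edge $c\gamma\in E$, and your proposed repair --- that minimality would make such an edge redundant --- is a non sequitur: redundancy in the sense of \Cref{remark:addedEdge} requires that \emph{every} closed set separating $c\gamma$ separate another edge of $E$, whereas here you only know that the one particular set $I_a\cup I_b$ separates it. Worse, the statement you are trying to prove is false as stated: a minimal defining edge set can genuinely have incomparable tails (this is exactly why the paper needs its rerouting lemmas, \Cref{lem:semi-replace} and \Cref{lem:semi-replace2}). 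The theorem only asserts that \emph{some} defining edge set has chain-ordered tails; no contradiction can be extracted from an arbitrary minimal one.

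The paper closes this gap in two moves that differ from yours. First, its key lemma (\Cref{lem:uniqueMaximal}) is relativized to a splitting set $S$ and takes \emph{all} maximal rotations $u_1,\ldots,u_k$ of $T_E\cap S$ simultaneously, forming $S_i=J_{u_i}\cup J_{v_j}$ and then $\bigcup_i S_i$. Because the union contains $J_{u_i}$ for every maximal tail, it contains every tail in $S$, so it cannot separate any edge with tail in $S$; and since $\bigcup_i S_i\subseteq S$ with $S$ splitting, edges with tails outside $S$ cannot be separated either --- this is exactly the ``absorption'' you could not force with only two tails. Second, even this lemma only gives a \emph{unique maximal} tail per splitting set, not a chain; the chain is then built constructively, by redirecting the edges of all tails inside the reachable set $R_r$ to emanate from $r$ itself (\Cref{lem:semi-replace}, \Cref{lem:semi-replace2}, which also yields condition~2 as a by-product rather than a separate rerouting step), peeling off the resulting flower with the splitting set $S\setminus X$ of \Cref{lem:flower-separating} (which is what delivers condition~4), and iterating. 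So the order of logic is the reverse of yours: the edge set is reshaped so that the tails form a chain, rather than the chain being deduced and the edges cleaned up afterwards.
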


A set $E$ satisfying \Cref{thm:semi} will be called a \emph{bouquet}. 
For each $r \in T_E$, let $L_r = \{rv \ | \ v \in F_r \}$. Then $L_r$ will be called a 
\emph{flower}. Observe that the bouquet $E$ is partitioned into flowers. 
These notions are illustrated in \Cref{ex:canonicalPathAndBouquet}(b). The black path,
directed from $s$ to $t$, is the chain mentioned in \Cref{thm:semi} and
the red edges constitute $E$. Observe that the tails of edges $E$ lie on the chain. For each such
tail, the edges of $E$ outgoing from it constitute a flower.

Let $E$ be an arbitrary set of edges defining $\mathcal{L}_1$. We will show that $E$ can be modified so 
that the conditions in \Cref{thm:semi} are satisfied.
Let $S$ be a splitting set of $\Pi$. In other words, $S$ is a closed subset such that for all $uv \in E$, either $u,v$ are both in $S$ or $u,v$ are both in $\Pi \setminus S$.

\begin{lemma}
	\label{lem:uniqueMaximal}
	There is a unique maximal rotation in $T_E \cap S$.
\end{lemma}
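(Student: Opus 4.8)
The plan is to prove uniqueness by a join-closure argument, treating existence as the easy part. Since $S$ is a splitting set it separates no edge of $E$, so by \Cref{lem:separating} the matching $M(S)$ lies in $\mathcal{L}_1$. As $T_E \cap S$ is a finite subposet of $\Pi$, it has at least one maximal element whenever it is nonempty, so the whole content of the lemma is that it cannot have two. I would therefore suppose, for contradiction, that $r_1 \ne r_2$ are distinct maximal elements of $T_E \cap S$; being both maximal and distinct, they are incomparable in $\Pi$.

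For each $i \in \{1,2\}$, I fix an edge $r_i v_i \in E$ (possible since $r_i \in T_E$) and set $S_i := S \cap (\Pi \setminus J'_{r_i})$, using the notation of \Cref{def.poset}. Each $S_i$ is an intersection of two proper closed sets, hence itself a proper closed set. I claim $M(S_i) \in \mathcal{L}_2$: by \Cref{lem:prime} the set $\Pi \setminus J'_{r_i}$ separates $r_i v_i$, so $v_i \notin J'_{r_i}$ while $r_i \in J'_{r_i}$; since $S$ is splitting and $r_i \in S$ we also get $v_i \in S$, whence $v_i \in S_i$ and $r_i \notin S_i$. Thus $S_i$ separates $r_i v_i$, and \Cref{lem:separating} places $M(S_i)$ in $\mathcal{L}_2$. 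Now I invoke the hypothesis that $\mathcal{L}_2$ is a join semi-sublattice: recalling that the join in $L(\Pi)$ is the union of closed sets, I obtain
\[
 M(S_1) \vee M(S_2) \;=\; M(S_1 \cup S_2) \;\in\; \mathcal{L}_2,
\]
and a direct computation gives $S_1 \cup S_2 = S \setminus (J'_{r_1} \cap J'_{r_2})$.

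The crux is to show that this union in fact separates no edge of $E$, which contradicts $M(S_1 \cup S_2)\in\mathcal{L}_2$. Suppose $S_1 \cup S_2$ separated some $uv \in E$, i.e.\ $v \in S_1 \cup S_2$ and $u \notin S_1 \cup S_2$. Then $v \in S$, so since $S$ is splitting both $u,v \in S$; and $u \notin S_1 \cup S_2$ forces $u \in J'_{r_1} \cap J'_{r_2}$, i.e.\ $u \succeq r_1$ and $u \succeq r_2$. But $u$ is the tail of an edge of $E$ lying in $S$, so $u \in T_E \cap S$, and maximality of $r_1$ and $r_2$ then yields $u = r_1$ and $u = r_2$, contradicting $r_1 \ne r_2$. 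Hence $S_1 \cup S_2$ separates no edge, so $M(S_1 \cup S_2) \in \mathcal{L}_1$ by \Cref{lem:separating}, contradicting the previous paragraph since $\mathcal{L}_1$ and $\mathcal{L}_2$ are disjoint. Therefore $T_E \cap S$ has a unique maximal rotation.

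I expect the main obstacle to be the bookkeeping that makes the join-closure step land precisely where it should: verifying that each $S_i$ genuinely separates its edge (where \Cref{lem:prime} and the minimality of $E$ from \Cref{remark:addedEdge} are essential), and then pinning down that the only elements removed in passing from $S$ to $S_1 \cup S_2$ are common upper bounds of $r_1$ and $r_2$, so that any newly separated edge would have to be tailed at a common successor of both maximal elements, which is impossible. The join semi-sublattice hypothesis is used exactly once, but it is load-bearing: the orientation (maximal elements, join/union) mirrors the join-semilattice assumption, and the order-reversed statement would instead concern a unique minimal rotation under a meet semi-sublattice, matching the symmetry noted around \Cref{thm:semi}.
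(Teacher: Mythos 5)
Your proof is correct and follows essentially the same strategy as the paper's: assume two (or more) incomparable maximal elements of $T_E \cap S$, build for each one a closed set that separates an edge of $E$ (hence generates a matching in $\mathcal{L}_2$ by \Cref{lem:separating} and \Cref{lem:prime}), then show their union separates no edge of $E$ (hence lands in $\mathcal{L}_1$), contradicting the join semi-sublattice property of $\mathcal{L}_2$. The only differences are cosmetic: you take $S_i = S \cap (\Pi \setminus J'_{r_i})$ where the paper takes $S_i = J_{u_i} \cup J_{v_j}$, and you argue with just two maximal elements rather than all $k$ at once, both of which are harmless variations on the same argument.
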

\begin{proof}
	Suppose there are at least two maximal rotations $u_1,u_2, \ldots u_k$ ($k \geq 2$) in $T_E \cap S$.
	Let $v_1, \ldots v_k$ be the heads of edges containing  $u_1,u_2, \ldots u_k$.
	For each $1 \leq i \leq k$, let $S_i = J_{u_i} \cup J_{v_j}$ where $j$ is any index such that $j \not = i$.
	Since $u_i$ and $u_j$ are incomparable, $u_j \not \in J_{u_i}$. 
	Moreover, $u_j \not \in J_{v_j}$ by 
    \Cref{lem:prime}.  
	Therefore, $u_j \not \in S_i$. 
	It follows that $S_i$ contains $u_i$ and separates $u_j v_j$.
	Since $S_i$ separates $u_jv_j \in E$, the matching generated by $S_i$ is in $\mathcal{L}_2$ according to \Cref{lem:separating}.
	
	Since $\bigcup_{i=1}^k S_i$ contains all maximal rotations in $T_E \cap S$ and $S$ does not separate any edge in $E$,
	$\bigcup_{i=1}^k S_i$ does not separate any edge in $E$ either. Therefore, the matching generated by $\bigcup_{i=1}^k S_i$ 
	is in $\mathcal{L}_1$, and hence not in $\mathcal{L}_2$. This contradicts the fact that $\mathcal{L}_2$ is a join semi-sublattice. 
\end{proof}

Denote by $r$ the unique maximal rotation in $T_E \cap S$. Let
\begin{align*}
R_r &= \{ v \in \Pi: \text{ there is a path from $r$ to $v$ using edges in $E$}\}, \\
E_r &= \{ uv \in E: u,v \in R_r \}, \\
G_r &= \{R_r, E_r \}.
\end{align*}
Note that $r \in R_r$.
For each $v \in R_r$ there exists a path from $r$ to $v$ and $r \in S$. 
Since $S$ does not cross any edge in the path,  $v$ must also be in $S$.
Therefore, $R_r \subseteq S$. 

\begin{lemma}
	\label{lem:semi-replace}
	Let $u \in (T_E \cap S) \setminus R_{r}$ such that $u \succ x$ for $x \in R_{r}$. Then we can replace each $uv \in E$ with $rv$.
\end{lemma}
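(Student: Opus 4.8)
The plan is to prove the claim by exhibiting the modified edge set explicitly and showing it defines the same sublattice. Write $F_u = \{v : uv \in E\}$ for the heads of the edges leaving $u$, and let $E' = (E \setminus \{uv : v \in F_u\}) \cup \{rv : v \in F_u\}$ be the set obtained after the replacement. By \Cref{lem:separating}, $E$ and $E'$ define the same sublattice exactly when, for every closed set $I$ of $\Pi$, $I$ separates an edge of $E$ if and only if it separates an edge of $E'$. So the entire argument reduces to matching up the families of separating closed sets.

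Before the two directions, I would record the single order relation that drives everything. Since $r$ is the \emph{unique} maximal rotation of $T_E \cap S$ (\Cref{lem:uniqueMaximal}) and a finite poset with a unique maximal element has that element as its greatest element, every rotation of $T_E \cap S$ lies weakly below $r$. As $u \in (T_E \cap S) \setminus R_r$ while $r \in R_r$, we have $u \neq r$ and hence $u \prec r$. Because closed sets are downward closed, this yields the implication $u \notin I \Rightarrow r \notin I$ for every closed $I$, which is precisely what lets me trade the tail $u$ for the tail $r$.

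For the forward direction, suppose $I$ separates some $e \in E$. If $e$ has a tail other than $u$, then $e \in E'$ and we are done; otherwise $e = uv$ with $v \in F_u$, so $v \in I$ and $u \notin I$, whence $r \notin I$ by the implication above, and $I$ separates $rv \in E'$. The reverse direction is where the real work lies, and it is the step I expect to be the main obstacle. Suppose $I$ separates some $e \in E'$. If $e$ already belongs to $E$, we are done; otherwise $e = rv$ is a genuinely new edge for some $v \in F_u$, so $v \in I$ and $r \notin I$ with $uv \in E$. If $u \notin I$, then $I$ separates $uv \in E$ and we are done. The delicate case is $u \in I$: here $rv$ is separated but $uv$ is not, so I must produce a \emph{different} edge of $E$ that $I$ separates. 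This is exactly where the hypothesis $u \succ x$ for some $x \in R_r$ enters. Since $x \prec u \in I$ and $I$ is closed, $x \in I$, while $r \notin I$. Taking a path $r = v_0 \to v_1 \to \cdots \to v_m = x$ of $E$-edges witnessing $x \in R_r$, the path starts outside $I$ and ends inside $I$, so some edge $v_i v_{i+1}$ on it has $v_i \notin I$ and $v_{i+1} \in I$; this edge lies in $E$ and is separated by $I$, completing the direction.

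Putting the two directions together shows $E$ and $E'$ have identical families of separating closed sets, so by \Cref{lem:separating} they define the same sublattice $\mathcal{L}_1$, which is precisely the asserted validity of the replacement. The only bookkeeping subtlety is that the replacement is performed simultaneously over all $v \in F_u$; but since every reachability-path vertex lies in $R_r$ whereas $u \notin R_r$, none of those vertices equals $u$, so the path edges survive in both $E$ and $E'$ and the per-edge argument above applies uniformly.
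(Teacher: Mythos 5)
Your proof is correct and follows essentially the same route as the paper's: both establish $u \prec r$ via the unique maximal rotation of $T_E \cap S$ (\Cref{lem:uniqueMaximal}) to get the forward direction, and both handle the reverse direction's delicate case $u \in I$ by using $u \succ x$ with $x \in R_r$ to find a separated edge on the $E$-path from $r$ to $x$. Your write-up is somewhat more explicit about the trivial cases (edges with tail other than $u$) and the simultaneous replacement over all of $F_u$, but the substance is identical.
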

\begin{proof}
	We will show that the set of closed subsets separating an edge in $E$ remains unchanged. 
	
	Let $I$ be a closed subset separating $uv$. Then $I$ must also separate $rv$ since $r \succ v$. 
	
	Now suppose $I$ is a closed subset separating $rv$. We consider two cases:
	\begin{itemize}
		\item If $u \in I$, $I$ must contain $x$ since $u \succ x$. Hence, $I$ separates an edge in the path from $r$ to $x$.
		\item If $u \not \in I$, $I$ separates $uv$. 
	\end{itemize}
\end{proof}

Keep replacing edges according to \Cref{lem:semi-replace} until there is no $u \in (T_E \cap S) \setminus R_{r}$ such that $u \succ x$ for some $x \in R_{r}$.

\begin{lemma}
	\label{lem:flower-separating}
	Let
	\[ X = \{v \in S: v \succeq x \text{ for some } x \in R_{r}\}. \]
	\begin{enumerate}
		\item $S \setminus X$ is a closed subset. 
		\item $S \setminus X$ contains $u$ for each $u \in (T_E \cap S) \setminus R_{r}$.
		\item $(S \setminus X) \cap R_{r} = \emptyset$.
		\item $S \setminus X$ is a splitting set.
	\end{enumerate}
\end{lemma}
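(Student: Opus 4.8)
The plan is to verify the four claims in turn, using throughout the stopping condition guaranteed by the replacement procedure (after \Cref{lem:semi-replace} is applied exhaustively, no $u \in (T_E\cap S)\setminus R_{r}$ satisfies $u \succ x$ for any $x \in R_{r}$) together with the fact, recorded just before the statement, that $R_{r}\subseteq S$. Claims (1)–(3) are essentially bookkeeping. By construction $X$ is the set of elements of $S$ lying weakly above some element of $R_{r}$, hence upward closed within $S$; removing an upward-closed subset from the lower set $S$ leaves a lower set, which gives (1) (if $y\in S\setminus X$ and $z\prec y$ then $z\in S$, and $z\in X$ would force $y\succeq z\succeq x$ for some $x\in R_{r}$, i.e. $y\in X$). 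For (2), a tail $u\in(T_E\cap S)\setminus R_{r}$ equals no element of $R_{r}$ and, by the stopping condition, is not strictly above any; so $u\not\succeq x$ for all $x\in R_{r}$, whence $u\notin X$. For (3), each $x\in R_{r}$ satisfies $x\succeq x$, so $R_{r}\subseteq X$, and since $R_{r}\subseteq S$ we get $(S\setminus X)\cap R_{r}=\emptyset$.

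The substance is claim (4). Since $S$ is already splitting, every edge $uv\in E$ has both endpoints in $\Pi\setminus S$ (in which case both lie outside $S\setminus X$ and nothing changes) or both endpoints in $S$; thus it suffices to show that for every edge $uv\in E$ with $u,v\in S$ one has $u\in X\iff v\in X$, which is exactly the statement that $S\setminus X$ neither separates nor crosses it. The forward implication is the easy \emph{no-separation} direction: if the tail $u\in X$, then $u$ is a tail lying in $X\cap S$, and by the reasoning behind claim (2) such a tail must in fact lie in $R_{r}$ (a tail outside $R_{r}$ is not weakly above anything in $R_{r}$); since $R_{r}$ is closed under taking heads of its outgoing $E$-edges, the head $v\in R_{r}\subseteq X$.

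The remaining \emph{no-crossing} direction — if the head $v\in X$ then the tail $u\in X$ — is the crux, and the clean route I would first attempt runs through minimality. Assume toward a contradiction an edge $uv\in E$ with $u,v\in S$, $v\in X$ but $u\notin X$. Then $u\in(T_E\cap S)\setminus R_{r}$, so by \Cref{lem:uniqueMaximal} ($r$ is the unique maximal, hence maximum, tail of $T_E\cap S$) we have $u\preceq r$, and since $u\notin R_{r}\ni r$ in fact $u\prec r$; moreover $v\succeq x$ for some $x\in R_{r}$, witnessed by a path $r=a_0\to a_1\to\cdots\to a_m=x$ of edges of $E$ inside $R_{r}$. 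For any closed set $I$ with $v\in I$ and $u\notin I$ (e.g. $I=J_{v}$, which separates $uv$ by \Cref{lem:prime}), closedness gives $x\in I$, while $r\notin I$, since $r\preceq v$ combined with $u\prec r$ would yield $u\preceq v$, contradicting $uv\in E$ via \Cref{lem:prime}. Tracing the path from $r\notin I$ to $x\in I$ exhibits a separated edge $a_ia_{i+1}\in E$ with both endpoints in $R_{r}$, hence distinct from $uv$. Now invoking the minimality of $E$ (\Cref{remark:addedEdge}) to pick a closed set $I^{\ast}$ separating $uv$ and no other edge, the same analysis forces either a separated $R_{r}$-internal edge $\neq uv$ (when $r\notin I^{\ast}$) or $u\in I^{\ast}$ (when $r\in I^{\ast}$, since $u\prec r$), each contradicting the choice of $I^{\ast}$; hence no crossing occurs.

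The one genuinely delicate point — and where I expect the real work — is this appeal to minimality, since the edge replacements of \Cref{lem:semi-replace} can destroy the isolating closed sets that \Cref{remark:addedEdge} supplies for edges \emph{other} than the ones moved. Resolving it requires either re-establishing minimality for the current edge set, or, as in \Cref{lem:uniqueMaximal}, obtaining the contradiction directly from the join semi-sublattice property of $\mathcal{L}_2$: from a hypothetical crossing one constructs two closed subsets of $S$, each separating some edge (so generating matchings in $\mathcal{L}_2$ by \Cref{lem:separating}), whose union separates no edge and therefore generates a matching in $\mathcal{L}_1$, contradicting join-closure. Engineering this union — keeping both sets inside $S$ and ensuring that together they contain the tail of every edge either one separates, exactly the device used in \Cref{lem:uniqueMaximal} — is the step I anticipate being the main obstacle.
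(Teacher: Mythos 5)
Your proof is correct and follows essentially the same route as the paper's: claims (1)--(3) are the same bookkeeping, your no-separation direction is the paper's Claim~4 (your explicit handling of the case of a tail $u \in R_r$ is in fact more careful than the paper's one-line version), and your no-crossing direction is precisely the paper's Claim~5 --- $u \prec r$ via \Cref{lem:uniqueMaximal}, closedness forcing $x \in J$ and $r \notin J$ for any closed $J$ separating $uv$, tracing the path from $r$ to $x$ inside $E_r$ to produce a second separated edge, and contradiction with \Cref{remark:addedEdge}. The ``delicate point'' you flag at the end is not resolved in the paper either: its proof simply invokes \Cref{remark:addedEdge} for the current (post-replacement) edge set, exactly as your main argument does, so the alternative route through the join semi-sublattice property is unnecessary for matching the paper's argument.
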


\begin{proof}
The lemma follows from the claims given below: 

	\paragraph{Claim 1.}  $S \setminus X$ is a closed subset.
	\begin{proof}
		Let $v$ be a rotation in $S \setminus X$ and $u$ be a predecessor of $v$.
		Since $S$ is a closed subset, $u \in S$.
		Notice that if a rotation is in $X$, all of its successor must be included.
		Hence, since $v \notin X$, $u \notin X$.
		Therefore, $u \in S \setminus X$.
	\end{proof}
	
	\paragraph{Claim 2.} $S \setminus X$ contains $u$ for each $u \in (T_E \cap S) \setminus R_{r}$.
	\begin{proof}
		After replacing edges according to \Cref{lem:semi-replace}, for each $u \in (T_E \cap S) \setminus R_{r}$ we must have that $u$ does not succeed any $x \in R_r$. Therefore, $u \notin X$ by the definition of $X$.		
	\end{proof}

	\paragraph{Claim 3.} $(S \setminus X) \cap R_{r} = \emptyset$.
	\begin{proof}
		Since $R_{r} \subseteq X$, $(S \setminus X) \cap R_{r} = \emptyset$.
	\end{proof}
	
	\paragraph{Claim 4.} $S \setminus X$ does not separate any edge in $E$.
	\begin{proof}
		Suppose $S \setminus X$ separates $uv \in E$. Then $u \in X$ and $v \in  S \setminus X$.
		By Claim 2, $u$ can not be a tail vertex, which is a contradiction.  
	\end{proof}
	
	\paragraph{Claim 5.} $S \setminus X$ does not cross any edge in $E$.
	\begin{proof}
		Suppose $S \setminus X$ crosses $uv \in E$. Then $u \in S \setminus X$ and $v \in X$.
		Let $J$ be a closed subset separating $uv$. Then $v \in J$ and $u \notin J$.
		
		Since $uv \in E$ and $u \in S$, $u \in T_E \cap S$. Therefore, $r \succ u$ by \Cref{lem:uniqueMaximal}.
		Since $J$ is a closed subset, $r \notin J$.
		
		Since $v \in X$, $v \succeq x$ for $x \in R_r$. Again, as $J$ is a closed subset, $x \in J$.  
		
		Therefore, $J$ separates an edge in the path from $r$ to $x$ in $G_r$. 
		Hence, all closed subsets separating $uv$ must also separate another edge in $E_r$.
		This contradicts the assumption made in \Cref{remark:addedEdge}.
	\end{proof}
\end{proof}

\begin{lemma}
	\label{lem:semi-replace2}
	$E_r$ can be replaced by the following set of edges:
	\[E'_r = \{rv: v \in R_r \}.\]
\end{lemma}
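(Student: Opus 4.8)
The plan is to prove this in exactly the style of \Cref{lem:replace} and \Cref{lem:semi-replace}: I would show that replacing $E_r$ by $E'_r$ does not change the family of closed subsets of $\Pi$ that separate some edge of $E$. By \Cref{lem:separating}, a matching lies in the sublattice defined by an edge set precisely when its generating closed set separates no edge of that set, so preserving this separating family preserves the defined sublattice and yields the lemma. Writing $\widehat{E} = (E \setminus E_r) \cup E'_r$ and using that $E$ and $\widehat{E}$ agree on every edge having an endpoint outside $R_r$, it suffices to prove that for every closed subset $I$ of $\Pi$, $I$ separates some edge of $E_r$ if and only if it separates some edge of $E'_r$; the shared part $E \setminus E_r$ then contributes identically to both separating families.

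For the ($\Leftarrow$) direction I would use a routine path argument. Suppose $I$ separates $rv \in E'_r$, so $v \in I$ and $r \notin I$. Since $v \in R_r$, there is a path $r = x_0 \to \cdots \to x_m = v$ whose edges, having both endpoints reachable from $r$, all lie in $E_r$. As $x_0 \notin I$ and $x_m \in I$, taking the first index $i$ with $x_i \in I$ yields an edge $x_{i-1}x_i \in E_r$ that $I$ separates.

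The ($\Rightarrow$) direction is where the real work lies, and the crux is to rule out the possibility that $r \in I$. Suppose $I$ separates $uv \in E_r$, so $v \in I$ and $u \notin I$. The key point I would establish is that the tail $u$ satisfies $u \preceq r$: since $uv \in E$, $u$ is a tail, so $u \in T_E$, and since $uv \in E_r$ we have $u \in R_r \subseteq S$; hence $u \in T_E \cap S$, and because $r$ is the unique maximal element of $T_E \cap S$ by \Cref{lem:uniqueMaximal}---and a unique maximal element of a finite poset is its greatest element---every element of $T_E \cap S$, in particular $u$, is dominated by $r$. Now if $r \in I$, closedness of $I$ would force $u \in I$, contradicting $u \notin I$; therefore $r \notin I$, and then $I$ separates $rv \in E'_r$ (as $v \in I$ and $r \notin I$). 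The degenerate case $u = r$ is immediate, since then $r = u \notin I$.

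I expect the only genuine obstacle to be this last step---showing that no closed set separating an $E_r$-edge can contain $r$---and the realization that unlocks it is that every tail occurring inside $R_r$ belongs to $T_E \cap S$ and is thus dominated by the maximal tail $r$. Once the two directions are combined, the separating families of $E$ and $\widehat{E}$ coincide, so $\widehat{E}$ defines the same sublattice $\mathcal{L}_1$; I would close by remarking that one may re-impose the minimality convention of \Cref{remark:addedEdge} on $\widehat{E}$ for use in the subsequent steps.
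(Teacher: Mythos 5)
Your proof is correct and follows essentially the same route as the paper's: both arguments show that the closed subsets separating an edge of $E_r$ are exactly those separating an edge of $E'_r$, using the path-within-$R_r$ argument for one direction and, for the other, the fact that any tail $u$ of an $E_r$-edge lies in $T_E \cap S$ and is therefore dominated by $r$ (via \Cref{lem:uniqueMaximal}), so a closed set excluding $u$ must exclude $r$. Your write-up merely makes explicit two steps the paper leaves implicit—the reduction to comparing separating families of $E_r$ versus $E'_r$ alone, and the finite-poset fact that a unique maximal element is the greatest element—so there is nothing to correct.
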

\begin{proof}

	We will show that the set of closed subsets separating an edge in $E_r$ and 
	the set of closed subset separating an edge in $E'_r$ are identical. 
	
	Consider a closed subset $I$ separating an edge in $rv \in E'_r$. 
	Since $v \in R_r$, $I$ must separate an edge in $E$ in a path from $r$ to $v$.
	By definition, that edge is in $E_r$.
	
	Now let $I$ be a closed subset separating an edge in $uv \in E_r$.
	Since $uv \in E$, $u \in T_E \cap S$. By \Cref{lem:uniqueMaximal}, $r \succ u$. 
	Thus, $I$ must also separate $rv \in E'_r$. 
\end{proof}

We can now prove \Cref{thm:semi}.
\begin{proof} [Proof of \Cref{thm:semi}]
	To begin, let $S_1 = \Pi$ and let $r_1$ be the unique maximal rotation according to \Cref{lem:uniqueMaximal}.
	Then we can replace edges according to \Cref{lem:semi-replace} and \Cref{lem:semi-replace2}.
	After replacing, $r_1$ is the only tail vertex in $G_{r_1}$. 
	By \Cref{lem:flower-separating}, there exists a set $X$ such that $S_1 \setminus X$ does not contain any vertex in $R_{r_1}$ and contains all other tail vertices in $T_E$ except $r_1$. 
	Moreover, $S_1 \setminus X$ is a splitting set.
	Hence, we can set $S_2 = S_1 \setminus X$ and repeat. 
	
	Let $r_1, \ldots , r_k$ be the rotations found in the above process. Since $r_i$ is the unique maximal rotation in $T_E \cap S_i$ for all $1 \leq i \leq k$ and $S_1 \supset S_2 \supset \ldots \supset S_k$, we have $r_1 \succ r_2 \succ \ldots \succ r_k$. 
	By \Cref{lem:semi-replace2}, for each $1 \leq i \leq k$, $E_{r_i}$ consists of edges $r_iv$ for $v \in R_{r_i}$.
	Therefore, there is no path of length two composed of edges in $E$ and condition 2 is satisfied. Moreover, $r_1, \ldots , r_k$ are exactly the tail vertices in $T_E$, which gives condition 1.
	
	Let $r$ be a rotation in $T_E$ and consider $u,v \in F_r$. Moreover, assume that $u \prec v$. A closed subset $I$ separating $rv$ contains $v$ but not $r$. Since $I$ is a closed subset and $u \prec v$, $I$ contains $u$. Therefore, $I$ also separates $ru$, contradicting the assumption in \Cref{remark:addedEdge}. The same argument applies when $v \prec u$. Therefore, $u$ and $v$ are incomparable as stated in condition 3. 
	
	Finally, let $r_i, r_j \in T_E$ where $r_i \prec r_j$. By the construction given above, 
	$S_j \supset S_{j-1} \supset \ldots \supset S_i$, $R_{r_j} \subseteq S_j \setminus S_{j-1}$ and $R_{r_i} \subseteq S_i$.
	Therefore, $S_i$ contains all rotations in $R_{r_i}$ but none of the rotations in $R_{r_j}$, giving condition 4. 
\end{proof}

\Cref{thm:semi} naturally leads us to \Cref{prop:gen} explaining the structure if the unique edge set defining a sublattice in Setting II.

\begin{proposition}
\label{prop:gen}
	There exists a sequence of rotations $r_1 \prec \ldots \prec r_{k}$ 
	and a set $F_{r_i}$ for each $1 \leq i \leq k$ such that 
	a closed subset generates a matching in $\mathcal{L}_1$ if and only if
	whenever it contains a rotation in $F_{r_i}$, it must also contain $r_i$.
\end{proposition}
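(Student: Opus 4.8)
The plan is to read off \Cref{prop:gen} directly from the bouquet structure guaranteed by \Cref{thm:semi}, translated through the separation criterion of \Cref{lem:separating}; essentially no new construction is required. First I would invoke \Cref{thm:semi} to obtain a set of edges $E$ defining $\mathcal{L}_1$ whose tail set $T_E$ forms a chain in $\Pi$. By condition~1 of that theorem I can enumerate the tails as $r_1 \prec r_2 \prec \cdots \prec r_k$, and for each $i$ I take $F_{r_i} = \{v \in \Pi : r_i v \in E\}$ to be exactly the head set already defined there. This fixes both the sequence of rotations and the sets $F_{r_i}$ claimed in the proposition. The no-length-two-path condition (condition~2) further guarantees that every edge of $E$ has the shape $r_i v$ with tail $r_i \in T_E$, so this description of $E$ as a disjoint union of flowers is unambiguous.

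The second step characterizes membership in $\mathcal{L}_1$ using \Cref{lem:separating}: a matching $M$ lies in $\mathcal{L}_1$ if and only if the closed set $I$ generating $M$ separates no edge of $E$. Since every edge of $E$ is of the form $r_i v$ with $v \in F_{r_i}$, the event that $I$ separates $r_i v$ is by definition the event that $v \in I$ and $r_i \notin I$. Negating this over all edges, $I$ separates no edge of $E$ precisely when, for every $i$, the presence of any $v \in F_{r_i}$ in $I$ forces $r_i \in I$. This is verbatim the condition in the statement, so the two steps combine to give the proposition; I would present the equivalence as a short chain of iffs rather than a longer argument.

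There is essentially no hard step here, since the entire combinatorial difficulty --- arranging the tails into a chain, collapsing flowers, and verifying the splitting-set conditions --- has already been absorbed into \Cref{thm:semi}. The one point to handle with care is the orientation of ``separates'': an edge $uv$ is separated by $I$ exactly when the head $v$ lies in $I$ while the tail $u$ does not. Getting this direction right is what turns ``$I$ contains a rotation in $F_{r_i}$'' into the antecedent and ``$I$ contains $r_i$'' into the consequent, and it is the only place where a sign error could creep in. I would therefore treat \Cref{prop:gen} as the intended reformulation of \Cref{thm:semi} in terms of closed-set membership, with \Cref{lem:separating} supplying the bridge.
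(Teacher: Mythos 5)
Your proposal is correct and matches the paper's intent exactly: the paper states \Cref{prop:gen} as an immediate consequence of \Cref{thm:semi} (with no separate written proof), and your reconstruction---take the chain of tails $r_1 \prec \cdots \prec r_k$ and flowers $F_{r_i}$ from the bouquet, then translate ``$I$ separates no edge of $E$'' via \Cref{lem:separating} into ``$v \in F_{r_i} \cap I$ implies $r_i \in I$''---is precisely the intended argument, including the careful handling of the head/tail orientation in the definition of separation. The only quibble is that your appeal to condition~2 of \Cref{thm:semi} is superfluous (every edge has its tail in $T_E$ by definition of $T_E$, so the flower decomposition is unambiguous regardless), but this does not affect correctness.
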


\subsubsection{Algorithm for Finding a Bouquet}
\label{sec:alg}
In this section, we give an algorithm for finding a bouquet. Let $\mathcal{L}$
be a distributive lattice that can be partitioned into a sublattice $\mathcal{L}_1$ and a 
semi-sublattice $\mathcal{L}_2$.
Then given a poset $\Pi$ of $\mathcal{L}$ and a membership oracle, which determines if a matching
of $\mathcal{L}$ is in $\mathcal{L}_1$ or not, the algorithm returns a bouquet defining $\mathcal{L}_1$.

By \Cref{thm:semi}, the set of tails $T_E$ forms a chain $C$ in $\Pi$.
The idea of our algorihm, given in \Cref{alg:flowerSet}, is to find 
the flowers according to their order in $C$.
Specifically, a splitting set $S$ is maintained such that at any point, all flowers outside of $S$ are found.
At the beginning, $S$ is set to $\Pi$ and becomes smaller as the algorithm proceeds. 
Step 2 checks if $M_z$ is a matching in $\mathcal{L}_1$ or not. 
If $M_z \not\in \mathcal{L}_1$, the closed subset $\Pi \setminus \{t\}$ separates an edge in 
$E$ according to \Cref{lem:separating}. Hence, the first tail on $C$ must be $t$.
Otherwise, the algorithm jumps to Step 3 to find the first tail. 
Each time a tail $r$ is found, Step 5 immediately finds the flower $L_r$ corresponding to $r$. 
The splitting set $S$ is then updated so that $S$ no longer contains $L_r$ 
but still contains the flowers that have not been found yet. 
Next, our algorithm continues to look for the next tail inside the updated $S$.
If no tail is found, it terminates. 

\begin{algorithm}[ht]
	\begin{wbox}
		\textsc{FindBouquet}$(\Pi)$: \\
		\textbf{Input:} A poset $\Pi$. \\
		\textbf{Output:} A set $E$ of edges defining $\mathcal{L}_1$. 
		\begin{enumerate}
			\item Initialize: Let $S = \Pi, E = \emptyset$.
			\item If $M_z$ is in $\mathcal{L}_1$: go to Step 3. Else: $r = t$, go to Step 5.
			\item $r$ = \textsc{FindNextTail}$(\Pi,S)$. 
			\item If $r$ is not \textsc{Null}: Go to Step 5. Else: Go to Step 7. 
			\item $F_r$ = \textsc{FindFlower}$(\Pi,S,r)$.
			\item Update:
			\begin{enumerate}
				\item For each $u \in F_r$: $E \leftarrow E \cup \{ru\}$.
				\item $ S \leftarrow S \setminus  \bigcup_{u \in F_r \cup \{ r \}} J'_u $.
				\item Go to Step 3.
			\end{enumerate}
			\item Return $E$. 
		\end{enumerate}
	\end{wbox}
	\caption{Algorithm for finding a bouquet.}
	\label{alg:flowerSet} 
\end{algorithm}

First we prove a simple observation.

\begin{lemma}
	\label{lem:headTailCondition}
	Let $v$ be a rotation in $\Pi$. Let $S \subseteq \Pi$ such that both $S$ and $S \cup \{v\}$ are closed subsets.
	If $S$ generates a matching in $\mathcal{L}_1$ and $S \cup \{v\} $ generates a matching in $\mathcal{L}_2$, 
		$v$ is the head of an edge in $E$.
	If $S$ generates a matching in $\mathcal{L}_2$ and $S \cup \{v\} $ generates a matching in $\mathcal{L}_1$, 
		$v$ is the tail of an edge in $E$.

\end{lemma}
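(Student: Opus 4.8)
The plan is to reduce everything to \Cref{lem:separating}, which (with $\mathcal{L}' = \mathcal{L}_1$) says that a closed set generates a matching in $\mathcal{L}_1$ exactly when it separates no edge of $E$, and — since $\mathcal{L}_1$ and $\mathcal{L}_2$ partition $\mathcal{L}$ — generates a matching in $\mathcal{L}_2$ exactly when it separates at least one edge of $E$. The structural input I would exploit is that $S$ and $S\cup\{v\}$ are both closed and differ in the single element $v$; hence $v$ is a minimal element outside $S$, and adding $v$ can only change the separation status of edges incident to $v$. The whole argument is then a case check tracking which incident edge can flip.

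For the first claim, suppose $S$ generates a matching in $\mathcal{L}_1$ and $S\cup\{v\}$ one in $\mathcal{L}_2$. By \Cref{lem:separating}, $S$ separates no edge of $E$, while $S\cup\{v\}$ separates some edge $uv'\in E$, so $v'\in S\cup\{v\}$ and $u\notin S\cup\{v\}$. In particular $u\notin S$. Since $S$ does not separate $uv'$, we cannot have $v'\in S$ together with $u\notin S$; as $u\notin S$ already holds, this forces $v'\notin S$. Combined with $v'\in S\cup\{v\}$ this gives $v'=v$, so $v$ is the head of the edge $uv\in E$. Note the ``wrong-way'' possibility ($v$ being the \emph{tail}) is ruled out automatically: if $u=v$ then $u\in S\cup\{v\}$, contradicting $u\notin S\cup\{v\}$.

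The second claim is the mirror image. Suppose $S$ generates a matching in $\mathcal{L}_2$ and $S\cup\{v\}$ one in $\mathcal{L}_1$. Then $S$ separates some edge $u'v'\in E$, so $v'\in S$ and $u'\notin S$, while $S\cup\{v\}$ separates no edge. Since $v'\in S\subseteq S\cup\{v\}$, non-separation of $u'v'$ by $S\cup\{v\}$ forces the tail $u'\in S\cup\{v\}$; as $u'\notin S$, we conclude $u'=v$. Hence $v$ is the tail of the edge $u'v'\in E$.

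I do not expect a genuine obstacle here: the statement is really a bookkeeping lemma about how the separation relation behaves when a single minimal element is added to a closed set, and \Cref{lem:separating} does all the heavy lifting. The only point that needs a moment of care is confirming, in each direction, that the edge whose separation status flips is incident to $v$ in the correct orientation (head in the first case, tail in the second); as shown above, each incorrect orientation is immediately excluded by the defining inequalities of separation together with closedness, so no further appeal to minimality of $E$ (as in \Cref{remark:addedEdge}) is required.
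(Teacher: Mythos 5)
Your proof is correct and follows essentially the same route as the paper: both apply \Cref{lem:separating} to conclude that $S$ separates no edge of $E$ while $S\cup\{v\}$ does (or vice versa), and then observe that the only edge whose separation status can change is one with $v$ as its head (respectively tail). The paper states this last step as immediate (and handles the second case by symmetry), whereas you spell out the bookkeeping that rules out the wrong orientation; the content is identical.
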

\begin{proof}
	Suppose that $S$ generates a matching in $\mathcal{L}_1$ and $S \cup \{v\} $ generates a matching in $\mathcal{L}_2$.
	By \Cref{lem:separating}, $S$ does not separate any edge in $E$, and 
	$S \cup \{v\}$ separates an edge $e \in E$. 
	This can only happen if $u$ is the head of $e$.
	
	A similar argument can be given for the second case.
\end{proof}

\begin{algorithm}[ht]
	\begin{wbox}
		\textsc{FindNextTail}$(\Pi,S)$: \\
		\textbf{Input:} A poset $\Pi$, a splitting set $S$. \\
		\textbf{Output:} The maximal tail vertex in $S$, or \textsc{Null} if there is no tail vertex in $S$.
		\begin{enumerate}
			\item Compute the set $V$ of rotations $v$ in $S$ such that:  
			\begin{itemize}
				\item $\Pi \setminus I'_v$ generates a matching in $\mathcal{L}_1$.
				\item $\Pi \setminus J'_v$ generates a matching in $\mathcal{L}_2$.
			\end{itemize}
			\item If $V \not = \emptyset$ and  there is a unique maximal element $v$ in $V$: Return $v$. \\
			Else: Return \textsc{Null}.
		\end{enumerate}
	\end{wbox}
	\caption{Subroutine for finding the next tail.}
	\label{alg:findTail} 
\end{algorithm} 

\begin{lemma}
	\label{lem:correctnessFindNextTail}
	Given a splitting set $S$,
	\textsc{FindNextTail}$(\Pi,S)$ (\Cref{alg:findTail}) returns the maximal tail vertex in $S$, 
	or \textsc{Null} if there is no tail vertex in $S$.
\end{lemma}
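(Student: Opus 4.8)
The plan is to characterize the set $V$ computed in Step~1 exactly and then read off the output in Step~2. Throughout, let $E$ denote the bouquet defining $\mathcal{L}_1$ guaranteed by \Cref{thm:semi}, so its tails $T_E$ form a chain and, by \Cref{lem:uniqueMaximal}, $T_E \cap S$ has a unique maximal element whenever it is nonempty. I would first establish the inclusion $V \subseteq T_E \cap S$. Every $v \in V$ lies in $S$ by construction, and its two defining conditions say exactly that the closed set $\Pi \setminus J'_v$ generates a matching in $\mathcal{L}_2$ while the closed set $(\Pi \setminus J'_v) \cup \{v\} = \Pi \setminus I'_v$ generates a matching in $\mathcal{L}_1$; since both are closed (\Cref{def.poset}), the second case of \Cref{lem:headTailCondition} forces $v$ to be the tail of an edge of $E$. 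Hence $v \in T_E \cap S$.

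The containment of interest in the other direction is that the maximal element $r$ of $T_E \cap S$ belongs to $V$. Condition~(b) is immediate: since $r$ is a tail there is an edge $rw \in E$, by \Cref{lem:prime} the set $\Pi \setminus J'_r$ separates $rw$, and so by \Cref{lem:separating} it generates a matching in $\mathcal{L}_2$. Condition~(a), that $\Pi \setminus I'_r$ generates a matching in $\mathcal{L}_1$, is the heart of the argument. Using \Cref{prop:gen}, this is equivalent to the statement that every tail $r_i \succ r$ has all of its heads strictly above $r$, i.e. $F_{r_i} \subseteq I'_r$. Because $r$ is the maximal tail in $S$, any such $r_i$ lies outside $S$; since $S$ is splitting, each head $v \in F_{r_i}$ then lies outside $S$ as well, and as $S$ is a lower set containing $r$ this already yields $v \not\prec r$. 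The remaining possibility to exclude is that some such $v$ is incomparable to $r$.

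Ruling out this incomparability is the main obstacle, and I expect it to use the join semi-sublattice hypothesis in an essential way, since conditions~1--4 of \Cref{thm:semi} alone do not forbid it. My plan mirrors the contradiction in \Cref{lem:uniqueMaximal}: assuming $v \in F_{r_i}$ is incomparable to $r$, I would assemble two closed sets, each separating an edge of $E$ and hence (by \Cref{lem:separating}) generating a matching in $\mathcal{L}_2$, whose union separates no edge of $E$ and therefore generates a matching in $\mathcal{L}_1$. The raw material for these sets is supplied by the canonical separators $J_v$ and $\Pi \setminus J'_u$ of \Cref{lem:prime}, together with the splitting set furnished by condition~4 of \Cref{thm:semi} for the comparable pair $r \prec r_i$ (which contains $F_r \cup \{r\}$ and avoids $F_{r_i} \cup \{r_i\}$); the incomparability of $v$ and $r$ is precisely what allows the two separations to cancel in the union. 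This contradicts $\mathcal{L}_2$ being a join semi-sublattice, establishing $F_{r_i} \subseteq I'_r$ and hence condition~(a).

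Finally I would assemble the pieces. If $T_E \cap S = \emptyset$, then $V \subseteq T_E \cap S$ is empty and Step~2 correctly returns \textsc{Null}. Otherwise $r = \max(T_E \cap S)$ satisfies both conditions, so $r \in V$; and since $V \subseteq T_E \cap S$ while $r$ dominates every element of $T_E \cap S$ by \Cref{lem:uniqueMaximal}, $r$ is the unique maximal element of $V$, which Step~2 returns. This proves both clauses of the lemma.
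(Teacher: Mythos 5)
Your overall architecture coincides with the paper's proof: the inclusion $V \subseteq T_E \cap S$ via the second case of \Cref{lem:headTailCondition}, the verification of condition (b) for the maximal tail $r$ via \Cref{lem:prime} and \Cref{lem:separating}, and the final assembly (including the \textsc{Null} case) are exactly the paper's steps and are correct as you state them. Your reduction of condition (a) to the claim that $F_{r_i} \subseteq I'_r$ for every tail $r_i \succ r$ is also correct, as is your observation that the splitting property of $S$ only excludes $v \preceq r$ and leaves open the case of a head $v \in F_{r_i}$ incomparable to $r$. Here you have put your finger on the soft spot of the paper's own proof, which disposes of all edges with tails outside $S$ in one unsupported sentence (``Since $S$ is a splitting set, $\Pi \setminus I'_r$ does not separate any edge whose tails are in $\Pi \setminus S$'') and never confronts the incomparable case at all.

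The problem is that the step you defer cannot be carried out: the incomparability cannot be ruled out from the stated hypotheses, so no choice of closed sets will yield the contradiction you are hoping for. Take $\Pi$ with proper elements $x_1 \prec x_2 \prec x_3$ and $y \prec x_3$, with $y$ incomparable to $x_1$ and $x_2$, and the bouquet $E = \{x_2x_1,\ x_3y\}$ (tails $x_2 \prec x_3$, flowers $F_{x_2}=\{x_1\}$, $F_{x_3}=\{y\}$). The closed sets separating no edge have proper parts $\emptyset$, $\{x_1,x_2\}$, $\{x_1,x_2,y,x_3\}$, so $\mathcal{L}_1$ is a chain sublattice; the remaining closed sets $\{x_1\},\{y\},\{x_1,y\},\{x_1,x_2,y\}$ form $\mathcal{L}_2$, which is closed under union, hence a join semi-sublattice; and conditions 1--4 of \Cref{thm:semi} all hold (for condition 4 use the splitting set with proper part $\{x_1,x_2\}$). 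Now let $S$ be that very splitting set --- which, note, is exactly the set \textsc{FindBouquet} itself constructs after its first iteration. Its maximal tail is $x_2$, yet $\Pi \setminus I'_{x_2}$ has proper part $\{x_1,x_2,y\}$, which separates $x_3y$ and therefore generates a matching in $\mathcal{L}_2$: condition (a) of \Cref{alg:findTail} fails for $x_2$ (and for $x_1$), so $V=\emptyset$ and the algorithm returns \textsc{Null} although a tail exists in $S$. Concretely, your planned contradiction evaporates when instantiated: $J_y \cup J_{x_1}$ still separates $x_3y$ (because $x_3 \not\preceq y$ and $x_3 \not\preceq x_1$), and indeed every union of members of $\mathcal{L}_2$ stays in $\mathcal{L}_2$ here. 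So \Cref{lem:correctnessFindNextTail} is not provable at the stated level of generality (lattice, partition into sublattice and join semi-sublattice, membership oracle); both your argument and the paper's would need some additional structural property of the $(0,1)$ stable-matching application that neither proof identifies or uses.
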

\begin{proof}
	Let $r$ be the maximal tail vertex in $S$.

	First we show that $r \in V$.
	By \Cref{thm:semi}, the set of tails of edges in $E$ forms a chain in $\Pi$.
	Therefore $\Pi \setminus I'_r$ contains all tails in $S$. 
	Hence, $\Pi \setminus I'_r$ does not separate any edge whose tails are in $S$. 
	Since $S$ is a splitting set, $\Pi \setminus I'_r$ does not separate any edge whose tails are in $\Pi \setminus S$.   
	Therefore, by \Cref{lem:separating}, $\Pi \setminus I'_r$ generates a matching in $\mathcal{L}_1$.
	By \Cref{lem:prime}, $\Pi \setminus J'_r$ must separate an edge in $E$, 
	and hence generates a matching in $\mathcal{L}_2$ according to \Cref{lem:separating}. 
	
	By \Cref{lem:headTailCondition}, any rotation in $V$ must be the tail of an edge in $E$.
	Hence, they are all predecessors of $r$ according to \Cref{thm:semi}.
\end{proof}

\begin{algorithm}[ht]
	\begin{wbox}
		\textsc{FindFlower}$(\Pi,S,r)$: \\
		\textbf{Input:} A poset $\Pi$, a tail vertex $r$ and a splitting set $S$ containing $r$. \\
		\textbf{Output:} The set $F_r = \{ v \in \Pi: rv \in E \}$.
		\begin{enumerate}
			\item Compute $X = \{ v \in I_r: J_v \text{ generates a matching in } \mathcal{L}_1 \}$.
			\item Let $Y = \bigcup_{v \in X} J_v$.
			\item If $Y = \emptyset$ and $M_0 \in \mathcal{L}_2$: Return $\{s\}$. 
			\item Compute the set $V$ of rotations $v$ in $S$ such that:
			\begin{itemize}
				\item $Y \cup I_v$ generates a matching in $\mathcal{L}_1$.
				\item $Y \cup J_v$ generates a matching in $\mathcal{L}_2$.
			\end{itemize} 
			\item Return $V$.
		\end{enumerate}
	\end{wbox}
	\caption{Subroutine for finding a flower.}
	\label{alg:findflower} 
\end{algorithm}

\begin{lemma}
	\label{lem:correctnessFindFlower}
	Given a tail vertex $r$ and a splitting set $S$ containing $r$, \textsc{FindFlower}$(\Pi,S,r)$ (\Cref{alg:findflower})
	correctly returns $F_r$.
\end{lemma}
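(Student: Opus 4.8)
The plan is to show that the set $V$ returned in Step~5 equals $F_r$ exactly, with the degenerate flower $\{s\}$ handled separately by Step~3, by proving the two inclusions $F_r \subseteq V$ and $V \subseteq F_r$. Throughout I will use \Cref{lem:separating}, which lets me identify ``generates a matching in $\mathcal{L}_1$'' with ``does not separate any edge of $E$,'' and \Cref{lem:headTailCondition}, which detects heads and tails of $E$. Two preliminary facts drive everything. First, since $\mathcal{L}_1$ is a sublattice it is closed under union of closed sets, so $Y=\bigcup_{v'\in X}J_{v'}$, being a union of sets each generating a matching in $\mathcal{L}_1$, itself generates a matching in $\mathcal{L}_1$. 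Second, for every tail $r'\in T_E$ the down-set $J_{r'}$ generates a matching in $\mathcal{L}_1$: if $J_{r'}$ separated some edge $r_\ell w$ then $w\preceq r'$ while $r'\prec r_\ell$ (the tails form a chain by \Cref{thm:semi}), but condition~4 of \Cref{thm:semi} applied to $r'\prec r_\ell$ yields a splitting (hence closed) set containing $r'$ but not $w\in F_{r_\ell}$, contradicting $w\preceq r'$. Consequently every tail $r'\prec r$ lies in $X$, and hence in $Y$.

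For the inclusion $F_r\subseteq V$, take $v$ with $rv\in E$. Since $r\in S$ and $S$ is splitting, it does not cross $rv$, so $v\in S$. By \Cref{lem:prime}, $r\not\preceq v$, so $r\notin I_v$; and $r\notin Y$ because every element of $Y$ lies strictly below some $v'\prec r$. Hence $r\notin Y\cup I_v$, so $Y\cup J_v=(Y\cup I_v)\cup\{v\}$ separates $rv$ and generates a matching in $\mathcal{L}_2$. It remains to show $Y\cup I_v$ separates no edge. Suppose it separated $r_\ell w$, so $w\in Y\cup I_v$ and $r_\ell\notin Y\cup I_v$. The case $w\in Y$ is impossible, since $Y$ generates $\mathcal{L}_1$ and would force $r_\ell\in Y$. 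Thus $w\in I_v\setminus Y$, i.e.\ $w\prec v$, with $r_\ell\notin Y$. If $r_\ell=r$ then $w,v\in F_r$ with $w\prec v$, contradicting \Cref{thm:semi}; if $r_\ell\succ r$, condition~4 gives a closed set containing $v\in F_r$ but not $w\in F_{r_\ell}$, impossible since $w\prec v$; and if $r_\ell\prec r$ then the tail $r_\ell$ lies in $Y$, contradicting $r_\ell\notin Y$. So no edge is separated and $v\in V$.

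For the reverse inclusion $V\subseteq F_r$, take $v\in V$. Because $Y\cup I_v$ generates $\mathcal{L}_1$ while $Y\cup J_v$ generates $\mathcal{L}_2$, \Cref{lem:headTailCondition} shows $v$ is the head of an edge $r_\ell v\in E$, and the separated edge must be exactly $r_\ell v$, since its head is the only new element $v$; in particular $r_\ell\notin Y\cup I_v$. As the tails form a chain, $r_\ell$ is comparable to $r$. If $r_\ell\prec r$ then $r_\ell\in Y$, a contradiction. If $r_\ell\succ r$, then since $v\in F_{r_\ell}\cap S$ and $S$ is splitting we would get $r_\ell\in S$; but $r$ is the maximal tail remaining in $S$, the invariant maintained by \textsc{FindBouquet} and guaranteed by \textsc{FindNextTail}, so no tail above $r$ lies in $S$, a contradiction. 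Hence $r_\ell=r$ and $v\in F_r$.

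Finally, the degenerate flower $\{s\}$ arises precisely when $Y=\emptyset$: this occurs exactly when no tail lies strictly below $r$ and $\{s\}$ generates a matching in $\mathcal{L}_2$, i.e.\ $M_0\in\mathcal{L}_2$, in which case $s$ is a head whose tail must be $r$, and $F_r=\{s\}$ by the incomparability in \Cref{thm:semi}; Step~3 returns this directly. I expect the main obstacle to be the reverse inclusion, specifically ruling out heads of tails $r_\ell\succ r$: this is exactly where the argument needs the invariant that $r$ is the \emph{maximal} tail remaining in the splitting set $S$, rather than merely that $r\in S$, and care is required to state and invoke this invariant correctly.
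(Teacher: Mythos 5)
Your proof is correct and follows the same overall architecture as the paper's proof---first establish that $Y$ generates a matching in $\mathcal{L}_1$ and contains every tail preceding $r$ (the paper's Claims~1 and~2), then prove the two inclusions $F_r \subseteq V$ and $V \subseteq F_r$ via \Cref{lem:separating} and \Cref{lem:headTailCondition}, with the degenerate flower $\{s\}$ handled separately---but it is noticeably more careful, and the extra care is not cosmetic. Two differences are worth recording. First, your route to ``every tail $r' \prec r$ lies in $Y$'' is cleaner and stronger: you show that $J_{r'}$ itself generates a matching in $\mathcal{L}_1$ for \emph{every} tail $r'$, using condition~4 of \Cref{thm:semi}, so each tail below $r$ lies in $X$ directly; the paper instead argues only about the direct predecessor tail, via an auxiliary splitting set and the maximal element of the chain inside it. Second, and more importantly, you explicitly dispose of edges whose tail $r_\ell$ lies strictly \emph{above} $r$, a case the paper's Claim~3 silently skips in both inclusions (it rules out tails $\prec r$ using Claim~1 and then asserts the tail ``must be $r$''). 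For $F_r \subseteq V$ you eliminate this case unconditionally via condition~4 of \Cref{thm:semi}; for $V \subseteq F_r$ you correctly observe that it can only be excluded by the invariant that $r$ is the \emph{maximal} tail of $T_E \cap S$---guaranteed inside \textsc{FindBouquet} by \Cref{lem:correctnessFindNextTail}, but not a hypothesis of the lemma as stated. Your caution is warranted: without maximality the statement is literally false. Take $\Pi$ with proper elements $r \prec r_\ell$ (the two tails) and heads $w, v$ incomparable to each other and to both tails, with $E = \{rw,\, r_\ell v\}$; one checks that the corresponding $\mathcal{L}_1$ is a sublattice and $\mathcal{L}_2$ a join semi-sublattice, and that $E$ satisfies \Cref{thm:semi}. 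Then $S = \Pi$ is a splitting set containing $r$, but running \textsc{FindFlower}$(\Pi,\Pi,r)$ gives $Y \cup I_v = \{s\}$, which separates nothing, while $Y \cup J_v = \{s,v\}$ separates $r_\ell v$; hence $v \in V$ and $V = \{w,v\} \neq \{w\} = F_r$. So your proof establishes the lemma in the form in which the algorithm actually uses it, and in doing so repairs an imprecision present in both the lemma statement and the paper's own argument.
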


\begin{proof}
	First we give two crucial properties of the set $Y$.
	By \Cref{thm:semi}, the set of tails of edges in $E$ forms a chain $C$ in $\Pi$.
	
	\paragraph{Claim 1.} $Y$ contains all predecessors of $r$ in $C$.
	\begin{proof}
		Assume that there is at least one predecessor of $r$ in $C$, and denote by $r'$ the direct predecessor. 
		It suffices to show that $r' \in Y$. 
		By \Cref{thm:semi}, there exists a splitting set $I$ such that $R_{r'} \subseteq I$ and $R_r \cap I = \emptyset$.
		Let $v$ be the maximal element in $C \cap I$.
		Then $v$ is a successor of all tail vertices in $I$.
		It follows that $J_v$ does not separate any edges in $E$ inside $I$.
		Therefore, $v \in X$.
		Since $J_v \subseteq Y$, $Y$ contains all predecessors of $r$ in $C$.	
	\end{proof} 
	
	\paragraph{Claim 2.}  $Y$ does not contain any rotation in $F_r$.
	\begin{proof}
		Since $Y$ is the union of closed subset generating matching in $\mathcal{L}_1$, $Y$ also generates a matching in $\mathcal{L}_1$.
		By \Cref{lem:separating}, $Y$ does not separate any edge in $E$. 
		Since $r \not \in Y$, $Y$ must not contain any rotation in $F_r$.
	\end{proof}
	
	By Claim 1, if $Y = \emptyset$, $r$ is the last tail found in $C$. 
	Hence, if $M_0 \in \mathcal{L}_2$, $s$ must be in $F_r$. 
	By \Cref{thm:semi}, the heads in $F_r$ are incomparable. 
	Therefore, $s$ is the only rotation in $C$.
	\textsc{FindFlower} correctly returns $\{s\}$ in Step 3.
	Suppose such a situation does not happen, we will show that the returned set is $F_r$.

	\paragraph{Claim 3.} $V = F_r$.
	\begin{proof}
		Let $v$ be a rotation in $V$. 
		By \Cref{lem:headTailCondition}, $v$ is a head of some edge $e$ in $E$.
		Since $Y$ contains all predecessors of $r$ in $C$, the tail of $e$ must be $r$. 
		Hence, $v \in  F_r$.
		
		Let $v$ be a rotation in $F_r$. 
		Since $Y$ contains all predecessors of $r$ in $C$, $Y \cup I_v$ can not separate any edge 
		whose tails are predecessors of $r$. 
		Moreover, by \Cref{thm:semi}, the heads in $F_r$ are incomparable. 
		Therefore, $I_v$ does not contain any rotation in $F_r$. 
		Since $Y$ does not contain any rotation in $F_r$ by the above claim, 
		$Y \cup I_v$ does not separate any edge in $E$.
		It follows that $Y \cup I_v$ generates a matching in $\mathcal{L}_1$.
		Finally, $Y \cup J_v$ separates $rv$ clearly, and hence generates a matching in $\mathcal{L}_2$.
		Therefore, $v \in V$ as desired. 
	\end{proof}
\end{proof}

\begin{theorem}
	\label{thm:algFindFlower}
	\textsc{FindBouquet}$(\Pi)$, given in \Cref{alg:flowerSet}, 
	returns a set of edges defining $\mathcal{L}_1$.
\end{theorem}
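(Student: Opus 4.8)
The plan is to prove the theorem by induction on the iterations of the main loop (Steps~3--6), maintaining an invariant that ties the current state of the algorithm to the bouquet guaranteed by \Cref{thm:semi}. Recall from \Cref{prop:gen} that in Setting~II the minimal edge set defining $\mathcal{L}_1$ is unique; let $E^\star$ denote this bouquet, let $r_1 \succ r_2 \succ \cdots \succ r_k$ be its tails (which form a chain by condition~1 of \Cref{thm:semi}), and let $L_{r_1}, \ldots, L_{r_k}$ be the corresponding flowers. The invariant I would carry is the following: at the start of each iteration, the current set $S$ is a splitting set that excludes the flowers $L_{r_1}, \ldots, L_{r_{i-1}}$ already recorded in $E$ and still contains every remaining flower $L_{r_i}, \ldots, L_{r_k}$, and the accumulated edge set $E$ equals $L_{r_1} \cup \cdots \cup L_{r_{i-1}}$. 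Establishing and preserving this invariant, together with the already-proved correctness of the subroutines, yields $E = E^\star$ at termination.

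First I would treat the special handling of $t$ in Step~2, which establishes the base case. The matching $M_z$ is generated by the maximal proper closed set $\Pi \setminus \{t\}$; by \Cref{lem:separating}, $M_z \in \mathcal{L}_2$ if and only if $\Pi \setminus \{t\}$ separates an edge of $E^\star$, and the only edge such a set can separate is one whose tail is $t$. Hence $M_z \notin \mathcal{L}_1$ forces the maximal tail to be $r_1 = t$, exactly as Step~2 sets; otherwise there is no edge with tail $t$, and control correctly passes to \textsc{FindNextTail}. Either way, the invariant holds at the first iteration with $S = \Pi$, which is trivially a splitting set containing all flowers.

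For the inductive step, assume the invariant holds with splitting set $S$. Since $L_{r_1}, \ldots, L_{r_{i-1}}$ have been removed, the maximal tail remaining in $S$ is $r_i$, so \Cref{lem:correctnessFindNextTail} guarantees that \textsc{FindNextTail}$(\Pi,S)$ returns $r_i$ (and returns \textsc{Null} exactly when no tail remains, forcing termination at Step~7). By \Cref{lem:correctnessFindFlower}, \textsc{FindFlower}$(\Pi,S,r_i)$ returns $F_{r_i}$, so Step~6(a) appends precisely the flower $L_{r_i}$ to $E$. To preserve the invariant it remains to analyze the update $S \leftarrow S \setminus \bigcup_{u \in F_{r_i} \cup \{r_i\}} J'_u$ in Step~6(b). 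Because $E^\star$ contains no path of length two (condition~2 of \Cref{thm:semi}), the set of vertices reachable from $r_i$ along edges of $E^\star$ is exactly $R_{r_i} = \{r_i\} \cup F_{r_i}$, so the removed set coincides with the set $X = \{ v \in S : v \succeq x \text{ for some } x \in R_{r_i}\}$ of \Cref{lem:flower-separating}. That lemma then gives that the updated $S \setminus X$ is again a splitting set, that it contains every remaining tail in $(T_{E^\star} \cap S) \setminus R_{r_i}$ (its Claim~2), and that it is disjoint from $R_{r_i}$ (its Claim~3). Since a splitting set never separates or crosses an edge, retaining the tails $r_{i+1}, \ldots, r_k$ forces it to retain their full flowers as well, closing the induction.

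The main obstacle I anticipate is precisely this bookkeeping in the update step: one must argue that deleting $\bigcup_{u \in F_{r_i} \cup \{r_i\}} J'_u$ removes the entire flower $L_{r_i}$ without erasing any vertex of a later flower and without destroying the splitting property, and that the maximal tail of the reduced set is genuinely the next chain element rather than a spurious vertex. Both points rest on the chain structure of $T_{E^\star}$ and the nested splitting sets supplied by condition~4 of \Cref{thm:semi}, funneled through \Cref{lem:flower-separating}. Once the invariant is maintained, completeness and termination are immediate: when \textsc{FindNextTail} returns \textsc{Null}, the splitting set $S$ contains no tail, so by the invariant every flower of $E^\star$ has been found, and the accumulated set $E = L_{r_1} \cup \cdots \cup L_{r_k} = E^\star$ defines $\mathcal{L}_1$, as required.
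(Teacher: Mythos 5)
Your proof is correct and follows essentially the same route as the paper's: both reduce the theorem to \Cref{lem:correctnessFindNextTail}, \Cref{lem:correctnessFindFlower}, and \Cref{lem:flower-separating}, with the key observation that the Step~6(b) update $S \setminus \bigcup_{u \in F_r \cup \{r\}} J'_u$ removes exactly the set $X$ of \Cref{lem:flower-separating} (the paper states this in one line; you wrap it in an explicit induction invariant and a base case for Step~2). One small inaccuracy: \Cref{prop:gen} does not assert uniqueness of the minimal bouquet $E^\star$ --- the paper explicitly notes that several edge sets can define the same sublattice --- but this is harmless, since your argument only needs to fix \emph{some} bouquet satisfying \Cref{thm:semi} and \Cref{remark:addedEdge} and show the algorithm reproduces it.
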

\begin{proof}
	From \Cref{lem:correctnessFindNextTail} and \Cref{lem:correctnessFindFlower}, it suffices to show that
	$S$ is udpated correctly in Step 6(b). To be precised, we need that
	\[ S \setminus  \bigcup_{u \in F_r \cup \{ r \}} J'_u \]
	must still be a splitting set, and contains all flowers that have not been found. 
	This follows from \Cref{lem:flower-separating} by noticing that 
	\[ \bigcup_{u \in F_r \cup \{ r \}} J'_u  = \{v \in \Pi: v \succeq u \text{ for some } u \in R_r\}.\]
\end{proof}

Clearly, a sublattice of $\mathcal{L}$ must also be a semi-sublattice. Therefore, \textsc{FindBouquet} can be used to find a canonical path described in \Cref{sec:sublattice}. The same algorithm can be used to check if $M_A \cap M_B = \emptyset$. Let $E$ be the edge set given by the \textsc{FindBouquet} algorithm and $H_E$ be the corresponding graph obtained by adding $E$ to the Hasse diagram of the original rotation poset $\Pi$ of $\mathcal{L}_A$. If $H_E$ has a single strongly connected component, the compression $\Pi'$ has a single meta-element and represents the empty lattice. 

\begin{theorem}
\label{thm:bouquet}
If a lattice $\mathcal{L}$ can be partitioned into a sublattice $\mathcal{L}_1$ and a semi-sublattice $\mathcal{L}_2$, and there is a polynomial-time oracle that determines whether any $x \in \mathcal{L}$ belongs to $\mathcal{L}_1$ or $\mathcal{L}_2$, then an edge set defining $\mathcal{L}_1$ can be found in polynomial time.
\end{theorem}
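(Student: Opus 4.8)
The plan is to establish \Cref{thm:bouquet} by showing that the \textsc{FindBouquet} algorithm (\Cref{alg:flowerSet}) runs in polynomial time, since its correctness is already guaranteed by \Cref{thm:algFindFlower}. The key observation is that every step of the algorithm reduces to a bounded number of membership queries against the oracle, combined with polynomial-time poset computations. First I would recall that the rotation poset $\Pi$ has at most $O(n^2)$ elements and can itself be constructed in polynomial time by \Cref{lem:computePoset}; consequently, the sets $I_v, J_v, I'_v, J'_v$ of \Cref{def.poset} are computable in polynomial time for every element $v$, and each generates a closed set whose corresponding matching can be produced by eliminating the relevant rotations from $M_0$.

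Next I would bound the work done inside each subroutine. In \textsc{FindNextTail} (\Cref{alg:findTail}), computing the candidate set $V$ requires, for each of the $O(n^2)$ rotations $v \in S$, generating the matchings associated with $\Pi \setminus I'_v$ and $\Pi \setminus J'_v$ and submitting each to the oracle; this is $O(n^2)$ oracle calls and $O(n^2)$ matching constructions, and testing for a unique maximal element of $V$ is a polynomial-time poset operation. Similarly, \textsc{FindFlower} (\Cref{alg:findflower}) computes the set $X$ with one oracle call per element, forms $Y$ as a union of closed sets, and then computes $V$ with two oracle calls per candidate; all of this is again $O(n^2)$ oracle calls together with polynomial poset and set manipulations. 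Thus each invocation of a subroutine is polynomial.

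Finally I would bound the number of outer iterations of the main loop. Each iteration of Steps 3–6 either terminates the algorithm or discovers a new tail $r$ on the chain $C$ of \Cref{thm:semi} and its associated flower; by \Cref{thm:semi}, the tails $T_E$ form a chain in $\Pi$, so there are at most $O(n^2)$ of them, and Step 6(b) strictly shrinks the splitting set $S$ by removing $\bigcup_{u \in F_r \cup \{r\}} J'_u$, which (as noted in the proof of \Cref{thm:algFindFlower}) equals the upward closure of $R_r$ and is nonempty at each step. Hence the loop executes $O(n^2)$ times, and the total running time is the product of the iteration count and the per-iteration cost, i.e.\ polynomially many oracle calls and polynomially many polynomial-time operations. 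Since the oracle is assumed to run in polynomial time, the overall running time is polynomial, which proves the theorem.

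The main obstacle I anticipate is not any single step but verifying that the splitting-set invariant is truly maintained across iterations so that the iteration bound is valid: one must confirm that after each update $S$ remains a genuine splitting set containing exactly the not-yet-discovered flowers, so that \textsc{FindNextTail} and \textsc{FindFlower} continue to satisfy their input preconditions. This invariant is precisely what \Cref{lem:flower-separating} and the correctness argument of \Cref{thm:algFindFlower} supply, so the proof should reduce to assembling these facts and performing the routine complexity accounting rather than proving anything structurally new.
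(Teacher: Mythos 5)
Your proposal is correct and follows essentially the same route as the paper: the paper also treats \Cref{thm:bouquet} as a direct consequence of the correctness of \textsc{FindBouquet} (\Cref{thm:algFindFlower}, via \Cref{lem:correctnessFindNextTail,lem:correctnessFindFlower,lem:flower-separating}) combined with the observation that the poset has polynomially many elements, so every step—including all oracle invocations in \textsc{FindNextTail} and \textsc{FindFlower}—takes polynomial time (this accounting appears explicitly in the proof of \Cref{prop:MABcompute}). Your added detail on the iteration bound and the splitting-set invariant is exactly the intended justification, not a new argument.
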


\subsection{Structure of Robust Stable Matchings}
\label{sec:structure}

In this section, we study the structure of robust stable matchings, $\mathcal{M}_A \cap \mathcal{M}_B$, for two stable matching instances $A$ and $B$. We show how this set decomposes the lattice $\mathcal{L}_A$, and then extend the analysis to the case of multiple instances. We begin with the most basic setting, in which only a single agent changes preferences from instance $A$ to instance $B$. We first restrict this change to a specific operation known as an \emph{upward shift}. We then generalize to the case in which an agent performs an arbitrary permutation of its preference list, and finally to settings in which multiple agents change their preferences in an arbitrary manner between $A$ and $B$.

\subsubsection{Single agent: upward shift}

We begin with the simplest nontrivial model of preference change, in which the preferences of a single agent are modified by an \emph{upward shift}. Informally, an upward shift corresponds to promoting a single option higher in an agent’s preference list while preserving the relative order of all other options.

Formally, let $A$ be a stable matching instance and let $a$ be an agent. An instance $B$ is obtained from $A$ by an \emph{upward shift} on agent $a$'s preference list if there exist two agents $x$ and $y$ such that $a$ strictly prefers $x$ to $y$ in $A$ while $y$ is immediately above $x$ in $a$’s preference list in $A$, and the relative order of all other agents in $a$’s list remains unchanged. All other agents’ preference lists are identical in $A$ and $B$.

In this setting, \Cref{thm:sublattice} already implies that the set of robust stable matchings, $\mathcal{M}_A \cap \mathcal{M}_B$, forms a sublattice of $\mathcal{L}_A$. We now strengthen this structural result by showing that the complementary set of matchings that are stable in $A$ but not in $B$, namely $\mathcal{M}_A \setminus \mathcal{M}_B$, also forms a sublattice of $\mathcal{L}_A$. Together, these results show that an upward shift induces a clean partition of the stable matching lattice into two sublattices.

We first prove a simple observation.

\begin{lemma}
\label{lem:blockingPair}
Let $M \in \mathcal{M}_A \setminus \mathcal{M}_B$. The only blocking pair of $M$ under instance $B$ is $(w,f)$.
\end{lemma}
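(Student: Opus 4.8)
The plan is to localize the analysis to the single agent $a$ whose list is modified, and then to exploit the very restricted nature of an upward shift to pin down the unique firm that can form a new blocking pair. Throughout I assume without loss of generality that $a$ is a worker $w$, so that $f$ denotes the firm that is promoted in $w$'s list by the shift; the case in which $a$ is a firm is symmetric, with the roles of the two sides exchanged.

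First I would argue that every blocking pair of $M$ under $B$ must contain $w$. Consider any pair $(w',f')$ with $w' \neq w$. Since $B$ is obtained from $A$ by altering only $w$'s preference list, the list of $w'$ and the list of every firm are identical in $A$ and $B$. Hence both blocking conditions $f' >_{w'} M(w')$ and $w' >_{f'} M(f')$ have the same truth value under $A$ and under $B$, so $(w',f')$ blocks $M$ in $B$ if and only if it blocks $M$ in $A$. Because $M \in \mathcal{M}_A$, no such pair blocks in $A$, and therefore none blocks in $B$. Consequently any blocking pair of $M$ under $B$ has the form $(w,f')$.

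Next I would identify $f'$. Fix a blocking pair $(w,f')$ of $M$ under $B$. Its firm-side condition $w >_{f'} M(f')$ depends only on $f'$'s list, which is unchanged, so it holds under $A$ as well. Since $M$ is stable under $A$, the pair $(w,f')$ cannot block in $A$, which forces the worker-side condition to fail there: $M(w) >^A_w f'$. On the other hand, blocking in $B$ gives $f' >^B_w M(w)$. Thus the relative order of $f'$ and $M(w)$ in $w$'s list reverses between $A$ and $B$, with $f'$ moving from below $M(w)$ to above it. By the definition of an upward shift, the only agent whose rank strictly increases in $w$'s list is the promoted firm $f$, and the relative order of every pair not involving $f$ is preserved; hence $f$ is the unique firm that can rise above $M(w)$, forcing $f' = f$.

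Combining the two steps, every blocking pair of $M$ under $B$ equals $(w,f)$. Since $M \in \mathcal{M}_A \setminus \mathcal{M}_B$, the matching $M$ is unstable under $B$, so at least one blocking pair exists; it must be $(w,f)$, and it is the only one. The one point requiring care is the final uniqueness claim: it rests entirely on the fact that an upward shift raises a single option while preserving the relative order of all others, so that $f$ is the sole firm capable of crossing above $M(w)$. I would therefore make sure the stated definition of upward shift is invoked precisely at this step, as it is exactly what rules out any second blocking pair.
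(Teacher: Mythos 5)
Your proof is correct and takes essentially the same approach as the paper's: both show that any blocking pair under $B$ other than $(w,f)$ would already block $M$ under $A$ (contradicting $M \in \mathcal{M}_A$), using that all lists except the shifted one are unchanged, and that an upward shift only reverses relative orders for pairs involving the promoted agent, and only in the upward direction. The sole cosmetic difference is that you take the modified list to belong to a worker whereas the paper's convention is that $w$ is promoted in firm $f$'s list; you correctly dispose of this by the worker--firm symmetry.
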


\begin{proof}
Since $M \notin \mathcal{M}_B$, there exists a blocking pair $(x,y) \notin M$ under $B$. Suppose $(x,y) \neq (w,f)$; we show that $(x,y)$ must also be a blocking pair under $A$.

Let $y'$ be the partner of $x$ in $M$, and let $x'$ be the partner of $y$ in $M$. Since $(x,y)$ is a blocking pair under $B$, we have
$x \succ^{B}_y x'$ and $y \succ^{B}_x y'$.

The preference list of $x$ is unchanged from $A$ to $B$, and hence $y \succ^{A}_x y'$.
We now consider two cases.

\begin{itemize}
    \item If $y \neq f$, then the preference list of $y$ is unchanged between $A$ and $B$, implying
    $x \succ^{A}_y x'$. Thus, $(x,y)$ is a blocking pair under $A$.

    \item If $y = f$, then since $(x,y) \neq (w,f)$, we have $x \neq w$. For all workers $x \neq w$,
    $x \succ^{B}_f x'$ implies $x \succ^{A}_f x'$, because only $w$ is moved upward in $f$’s list.
    Hence, $(x,y)$ is again a blocking pair under $A$.
\end{itemize}

In both cases, $(x,y)$ blocks $M$ under $A$, contradicting the stability of $M$ in instance $A$.
\end{proof}

Recall that $w_1 \geq_f w_2 \geq_f \ldots \geq_f w_k$ are $k$ workers right above $w$ in $f$'s list such that the position of $w$ is shifted up to be above $w_1$ in $B$. From \Cref{lem:blockingPair}, we can then characterize the set $\mathcal{M}_A \setminus \mathcal{M}_B$.

\begin{lemma} \label{lem:characterize}
	$\mathcal{M}_A \setminus \mathcal{M}_B$ is the set of all stable matchings in $A$ that match $f$ to a partner between $w_1$ and $w_k$ in $f$'s list, and match $w$ to a partner below $f$ in $w$'s list.
\end{lemma}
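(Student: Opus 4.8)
The plan is to prove the two set inclusions separately, using \Cref{lem:blockingPair} as the main engine. That lemma tells us that for any $M \in \mathcal{M}_A \setminus \mathcal{M}_B$ the \emph{only} pair that can block $M$ under $B$ is $(w,f)$; since such an $M$ is unstable in $B$, this pair must in fact block. Hence, for $M \in \mathcal{M}_A$, membership in $\mathcal{M}_A \setminus \mathcal{M}_B$ is equivalent to the single condition that $(w,f)$ blocks $M$ under $B$. The whole lemma then reduces to translating the statement ``$(w,f)$ blocks under $B$, but $M$ is stable under $A$'' into the two stated positional conditions on $M(f)$ and $M(w)$, using throughout that every preference list except $f$'s is identical in $A$ and $B$.

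For the forward direction, suppose $M \in \mathcal{M}_A \setminus \mathcal{M}_B$. By the above, $(w,f)$ blocks under $B$, so $f \succ^B_w M(w)$ and $w \succ^B_f M(f)$. Because $w$'s list is unchanged, the first inequality gives $f \succ^A_w M(w)$, which is exactly the condition that $w$ is matched below $f$. For the second condition I would argue as follows: since $M$ is stable under $A$ and $f \succ^A_w M(w)$, the pair $(w,f)$ cannot block under $A$, forcing $M(f) \succ^A_f w$, i.e.\ $M(f)$ lies above $w$ in $f$'s $A$-list. On the other hand $w \succ^B_f M(f)$ places $M(f)$ below $w$ in $f$'s $B$-list. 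Since the workers whose position relative to $w$ flips between the two lists are precisely $w_1,\dots,w_k$, we conclude $M(f)\in\{w_1,\dots,w_k\}$.

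For the reverse direction, suppose $M \in \mathcal{M}_A$ satisfies both conditions. Matching $w$ below $f$ gives $f \succ^A_w M(w)$, hence $f \succ^B_w M(w)$ since $w$'s list is unchanged. Since $M(f)\in\{w_1,\dots,w_k\}$ and all of $w_1,\dots,w_k$ sit below $w$ in $f$'s $B$-list after the shift, we get $w \succ^B_f M(f)$. Thus $(w,f)$ blocks $M$ under $B$, so $M\notin\mathcal{M}_B$ and therefore $M\in\mathcal{M}_A\setminus\mathcal{M}_B$.

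The step I expect to require the most care is pinning down the exact set in the second condition. The blocking condition under $B$ alone only places $M(f)$ somewhere below $w$ in $f$'s $B$-list, a set that a priori also includes workers already below $w$ in $A$. It is the additional use of $A$-stability---ruling out $M(f)$ being below $w$ in $A$---that narrows the location to the reordered block. Making the equivalence ``above $w$ in $A$ and below $w$ in $B$ $\iff$ lying in $\{w_1,\dots,w_k\}$'' precise, and confirming that this intersection is exactly the block named in the recall, is the crux; everything else is a direct application of the definition of a blocking pair together with the invariance of all unshifted lists.
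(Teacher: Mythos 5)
Your proof is correct and follows essentially the same route as the paper's: both directions reduce, via \Cref{lem:blockingPair}, to the condition that $(w,f)$ blocks under $B$, and then translate that blocking condition plus $A$-stability into the two positional constraints. The only cosmetic difference is that in the forward direction you pin down $M(f)$ directly as lying in the set of workers above $w$ in $A$ but below $w$ in $B$ (which is $\{w_1,\dots,w_k\}$ by definition of the shift), whereas the paper phrases the same step as a contradiction ($M(f)\notin\{w_1,\dots,w_k\}$ would force $M(f) <^A_f w$, making $(w,f)$ block in $A$).
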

\begin{proof}
	Assume $M$ is a stable matching in $A$ that contains $w_i f$ for $1 \leq i \leq k$ and $wf'$ such that $f >_w f'$. 
	In $B$,
	$f$ prefers $w$ to $w_i$, and hence $wf$ is a blocking pair. Therefore, $M$ is not stable under $B$ and $M \in \mathcal{M}_A \setminus \mathcal{M}_B$.
	
	To prove the other direction, let $M$ be a matching in $\mathcal{M}_A \setminus \mathcal{M}_B$. By \Cref{lem:blockingPair}, $wf$ is the only blocking pair of $M$ in $B$. 
	For that to happen, $p_M(w) <^B_w f$ and $p_M(f) <^B_f w$. We will show that $p_M(f) = w_i$ for $1 \leq i \leq k$. Assume not, then $p_M(f) <^B_f w_k$, and hence, $p_M(f) <^A_f w$. Therefore, $wf$ is a blocking pair in $A$, which is a contradiction. 
\end{proof}

Let $\mathcal{L}_{A}$ be the worker-optimal lattice ( i.e., the lattice where the top element is the worker-optimal stable matching) formed by $\mathcal{M}_{A}$. 

\begin{theorem} \label{cor:sublattice}
	The set $\mathcal{M}_A \setminus \mathcal{M}_B$ forms a sublattice of $\mathcal{L}_A$.
\end{theorem}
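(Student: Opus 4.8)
The plan is to use the explicit characterization of $\mathcal{M}_A \setminus \mathcal{M}_B$ provided by \Cref{lem:characterize} and verify closure under both the meet $\wedge_A$ and the join $\vee_A$ directly. Recall that \Cref{lem:characterize} says $M \in \mathcal{M}_A \setminus \mathcal{M}_B$ if and only if $M$ matches $f$ to some worker $w_i$ with $1 \le i \le k$ (equivalently, $f$'s partner lies between $w_1$ and $w_k$ in $f$'s list) and matches $w$ to a partner strictly below $f$ in $w$'s list. So it suffices to take two matchings $M, M' \in \mathcal{M}_A \setminus \mathcal{M}_B$ and show that $M \wedge_A M'$ and $M \vee_A M'$ each satisfy both conditions.

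First I would track what the meet and join do to the partners of the two distinguished agents $w$ and $f$. By the description in \Cref{sec:latticeOfSM}, $M \wedge_A M'$ assigns each worker its \emph{more} preferred partner, and $M \vee_A M'$ its \emph{less} preferred partner. For the worker $w$: both $M(w)$ and $M'(w)$ lie strictly below $f$ in $w$'s list, so whichever of the two is selected (more preferred in the meet, less preferred in the join) still lies strictly below $f$; thus the $w$-condition is preserved by both operations. The key remaining point is the $f$-condition, which I would handle via the worker--firm duality of the lattice operations: since taking each worker's more-preferred partner corresponds to taking each firm's less-preferred partner (and vice versa), the partner of $f$ in $M \wedge_A M'$ is the \emph{less} preferred of $\{M(f), M'(f)\}$ in $f$'s list, and in $M \vee_A M'$ it is the \emph{more} preferred. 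Both $M(f)$ and $M'(f)$ lie in the block between $w_1$ and $w_k$ in $f$'s list, and this block is a contiguous interval of $f$'s preference order; hence any element selected from the two-element set $\{M(f), M'(f)\}$ — whether the more or the less preferred — again lies in that same interval. Therefore $f$'s partner remains between $w_1$ and $w_k$ under both the meet and the join.

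Combining the two observations, both $M \wedge_A M'$ and $M \vee_A M'$ satisfy the $w$-condition and the $f$-condition of \Cref{lem:characterize}, and since they are stable in $A$ (being lattice operations applied within $\mathcal{L}_A$), they lie in $\mathcal{M}_A \setminus \mathcal{M}_B$. This establishes closure under both $\wedge_A$ and $\vee_A$, so $\mathcal{M}_A \setminus \mathcal{M}_B$ is a sublattice of $\mathcal{L}_A$. The main obstacle I anticipate is getting the direction of the duality exactly right: one must be careful that ``more preferred partner for every worker'' translates to ``less preferred partner for every firm,'' and in particular confirm that the selected worker for $f$ is genuinely one of $M(f)$ or $M'(f)$ (rather than some unrelated worker) — this is exactly the content of the standard fact that the meet and join, defined worker-side, also admit the dual firm-side description, which I would invoke from \Cref{sec:latticeOfSM}. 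Once that duality is pinned down, the contiguity of the $[w_1, w_k]$ block makes the $f$-condition immediate.
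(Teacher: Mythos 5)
Your proof is correct and follows essentially the same route as the paper: invoke \Cref{lem:characterize} to characterize $\mathcal{M}_A \setminus \mathcal{M}_B$ by the two conditions on $w$ and $f$, then use the standard worker/firm duality of the lattice operations (meet gives each worker its more preferred and each firm its less preferred partner, dually for join) to see both conditions are preserved. The only superfluous step is your appeal to contiguity of the block $[w_1,\ldots,w_k]$ — since $f$'s partner in the meet or join is literally one of $M(f)$ or $M'(f)$, both of which already satisfy the $f$-condition, membership in the set follows without any interval argument.
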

\begin{proof}
	Assume $\mathcal{M}_A \setminus \mathcal{M}_B$ is not empty. Let $M_1$ and $M_2$ be two matchings in $\mathcal{M}_A \setminus \mathcal{M}_B$. By \Cref{lem:characterize}, $M_1$ and $M_2$ both match $f$ to a partner between $w_1$ and $w_k$ in $f$'s list, and match $w$ to a partner below $f$ in $w$'s list. Since $M_1 \wedge M_2$ is the matching resulting from having each worker choose the more preferred partner and each firm choose the least preferred partner, $M_1 \wedge M_2$ also belongs to the set characterized by \Cref{lem:characterize}. A similar argument can be applied to the case of $M_1 \vee M_2$.
	Therefore $\mathcal{M}_A \setminus \mathcal{M}_B$ forms a sublattice of $\mathcal{L}_A$. 
\end{proof}


More detailed structural results for the upward-shift model can be found in~\cite{MV.robust}. Since an upward shift is a special case of an arbitrary permutation of a preference list, we now move to this more general setting.

\subsubsection{Single agent: arbitrary permutation change between \texorpdfstring{$A$}{A} and \texorpdfstring{$B$}{B}}
\label{subsection:semisublatticeNecessarySufficient}

Let $A$ be a stable matching instance, and let $B$ be an instance obtained by arbitrarily permuting the preference list of a single worker or a single firm.
\Cref{lem.eg} provides an example of such a permutation for which $\mathcal{M}_A \setminus \mathcal{M}_B$ is not a sublattice of $\mathcal{L}_A$, demonstrating that the structural guarantees obtained for upward shifts in \Cref{sec:sublattice} do not extend to this more general setting.

Nevertheless, some structure persists. For all such instances $B$, \Cref{lem:MABsemi} shows that $\mathcal{M}_A \setminus \mathcal{M}_B$ always forms a semi-sublattice of $\mathcal{L}_A$. Together with \Cref{thm:sublattice}, this implies that understanding semi-sublattices is both necessary and sufficient for characterizing robust stable matchings under arbitrary single-agent preference changes.

The next lemma pertains to the example given in \Cref{ex:notSublattice}, in which the set of workers is $\mathcal{W} = \{a,b,c,d\}$ and the set of firms is $\mathcal{F} = \{1,2,3,4\}$. Instance $B$ is obtained from instance $A$ by permuting the preference list of firm~$1$.

\begin{lemma} 
\label{lem.eg}
	$\mathcal{M}_A \setminus \mathcal{M}_B$ is not a sublattice of $\mathcal{L}_A$.
\end{lemma}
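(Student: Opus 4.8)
The plan is to prove this by exhibiting, inside the explicit four-worker four-firm instance of \Cref{ex:notSublattice}, two stable matchings $M_1, M_2 \in \mathcal{M}_A \setminus \mathcal{M}_B$ whose meet or join in $\mathcal{L}_A$ lands back inside $\mathcal{M}_A \cap \mathcal{M}_B$. Since the union $\mathcal{M}_A = (\mathcal{M}_A \cap \mathcal{M}_B) \cup (\mathcal{M}_A \setminus \mathcal{M}_B)$ is disjoint, such a witness shows that $\mathcal{M}_A \setminus \mathcal{M}_B$ is not closed under one of the lattice operations, which is exactly the failure of the sublattice property.

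First I would enumerate $\mathcal{M}_A$ together with its dominance order. With only four agents per side, $\mathcal{L}_A$ is small, so I can list every stable matching of $A$ and draw its Hasse diagram. For any two incomparable matchings the meet and join are then read off directly from the characterization in \Cref{sec:latticeOfSM}: the meet assigns each worker its more preferred partner and the join its less preferred partner. Second, I would decide for each $M \in \mathcal{M}_A$ whether $M \in \mathcal{M}_B$. Because $B$ differs from $A$ only in firm~$1$'s list, any pair that blocks $M$ under $B$ but not under $A$ must have firm~$1$ as its firm: every worker's list and every other firm's list are unchanged, so by the same reasoning as in \Cref{lem:blockingPair} a newly created blocking pair can only pass through firm~$1$. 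Hence membership in $\mathcal{M}_B$ reduces to checking whether firm~$1$, under its permuted list, forms a blocking pair with its $M$-partner; this cleanly splits $\mathcal{M}_A$ into $\mathcal{M}_A \cap \mathcal{M}_B$ and $\mathcal{M}_A \setminus \mathcal{M}_B$ by inspecting firm~$1$ alone.

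Third, with this split in hand I would search the pairs lying in $\mathcal{M}_A \setminus \mathcal{M}_B$ for one whose meet (or join) escapes into $\mathcal{M}_A \cap \mathcal{M}_B$, and verify that single pair explicitly. I expect to find $M_1, M_2$ in which firm~$1$ is matched to two different workers that each induce a blocking pair under the permuted list, but where forming, say, the meet reassigns firm~$1$ to a worker it no longer wishes to deviate from, restoring stability under $B$.

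The main obstacle is purely the bookkeeping: correctly enumerating all stable matchings of $A$, getting the dominance order (and hence meet/join) right, and confirming stability under $B$ for the candidate matchings. There is no conceptual difficulty beyond this finite verification, and the only real subtlety is ensuring the chosen $M_1, M_2$ are genuinely both in $\mathcal{M}_A \setminus \mathcal{M}_B$ while their lattice operation leaves that set. Note that, since $\mathcal{M}_A \cap \mathcal{M}_B$ is itself a sublattice by \Cref{thm:sublattice}, it is enough that \emph{one} of the meet or join of the two chosen non-robust matchings be robust; pinning down precisely which operation fails is where the care lies.
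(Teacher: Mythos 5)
Your strategy coincides with the paper's: work inside the concrete instance of \Cref{ex:notSublattice}, find two matchings in $\mathcal{M}_A \setminus \mathcal{M}_B$ whose meet or join under $\mathcal{L}_A$ is robust, and conclude that the set is not closed under that operation. Your supporting observations are also sound: since only firm $1$'s list changes, any blocking pair under $B$ of a matching stable under $A$ must involve firm $1$ (the analogue of the reasoning in \Cref{lem:blockingPair}), so membership in $\mathcal{M}_B$ is decided by inspecting firm $1$ alone; and it indeed suffices that just one of the meet or join escape the set (this is immediate from the definition of sublattice — you do not even need \Cref{thm:sublattice} for it). The genuine gap is that the proposal stops at a search plan: the witness pair is never exhibited, and for a lemma asserting a property of a single finite instance, producing and verifying that pair \emph{is} the proof. "I expect to find $M_1, M_2$" establishes nothing until the pair is named and checked; a reader cannot even confirm the lemma is true from what you have written.

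For the record, the paper's proof consists exactly of the data your search would have produced. Take $M_1 = \{1a,2b,3d,4c\}$ and $M_2 = \{1b,2a,3c,4d\}$; both are stable under $A$. Under $B$, where firm $1$'s list becomes $c,a,b,d$, the pair $1c$ blocks $M_1$ (worker $c$ prefers $1$ to its partner $4$, and firm $1$ now prefers $c$ to $a$) and the pair $1a$ blocks $M_2$ (worker $a$ prefers $1$ to $2$, and firm $1$ prefers $a$ to $b$), so both lie in $\mathcal{M}_A \setminus \mathcal{M}_B$. Yet $M_1 \wedge_A M_2 = \{1a,2b,3c,4d\}$ is stable under $B$ as well: the only worker firm $1$ prefers to $a$ under $B$ is $c$, who holds its top choice $3$, and no pair avoiding firm $1$ can block since those preferences are unchanged from $A$. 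So the meet lies in $\mathcal{M}_A \cap \mathcal{M}_B$, and $\mathcal{M}_A \setminus \mathcal{M}_B$ is not closed under $\wedge_A$. This matches your predicted shape of the witness — firm $1$ matched to two different workers, each inducing a $B$-blocking pair, with the meet leaving firm $1$ in a configuration from which no one wishes to deviate — so carrying out this finite verification is all that was missing from your argument.
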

\begin{proof}
	$M_1 = \{1a,2b,3d,4c\}$ and $M_2 = \{1b,2a,3c,4d\}$ are stable matching with respect to instance $A$.
	Clearly, $M_1 \wedge_A M_2 = \{1a,2b,3c,4d\}$ is also a stable matching under $A$.
	
	In going from $A$ to $B$, the positions of workers $b$ and $c$ are swapped in firm 1's list. 
	Under $B$, $1c$ is a blocking pair for $M_1$ and $1a$ is a blocking pair for $M_2$.
	Hence, $M_1$ and $M_2$ are both in $\mathcal{M}_A \setminus \mathcal{M}_B$.
However, $M_1 \wedge_A M_2$ is a stable matching under $B$, and therefore is not in $\mathcal{M}_A \setminus \mathcal{M}_B$.
Hence, $\mathcal{M}_A \setminus \mathcal{M}_B$ is not closed under the $\wedge_A$ operation.
\end{proof}

\begin{figure}
	\begin{wbox}
	\begin{minipage}{.33\linewidth}
		\centering
		\begin{tabular}{l|llll}
			1  & b & a & c & d \\
			2  & a & b & c & d \\
			3  & d & c & a & b \\
			4  & c & d & a & b
		\end{tabular}
		
		\hspace{1cm}
		
		firms' preferences in $A$ 
		
		\hspace{1cm}
		
	\end{minipage}%
	\begin{minipage}{.34\linewidth}
	\centering
	\begin{tabular}{l|llll}
		\color{red}{1}  & \color{red}{c} & \color{red}{a} & \color{red}{b} & \color{red}{d} \\
		2  & a & b & c & d \\
		3  & d & c & a & b \\
		4  & c & d & a & b
	\end{tabular}
	
	\hspace{1cm}
	
	firms' preferences in $B$
	 
	\hspace{1cm}
\end{minipage}%
	\begin{minipage}{.33\linewidth}
		\centering
		\begin{tabular}{l|llll}
		a  & 1 & 2 & 3 & 4 \\
		b  & 2 & 1 & 3 & 4 \\
		c  & 3 & 1 & 4 & 2 \\
		d  & 4 & 3 & 1 & 2 
		\end{tabular}

		\hspace{1cm}

		workers' preferences in both instances
	\end{minipage} 
	\end{wbox}
	\caption{An example in which $\mathcal{M}_A \setminus \mathcal{M}_B$ is not a sublattice of $\mathcal{L}_A$.}
	\label{ex:notSublattice} 
\end{figure}

\begin{lemma}
	\label{lem:MABsemi}
	For any instance $B$ obtained by permuting the preference list of one worker or one firm,
	$\mathcal{M}_A \setminus \mathcal{M}_B$ forms a semi-sublattice of $\mathcal{L}_A$. 
\end{lemma}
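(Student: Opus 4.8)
The plan is to prove the statement by localizing the source of $B$-instability to the single permuted agent, and then showing that the appropriate lattice operation preserves a $B$-blocking pair. I will treat the case in which a firm $f$ is permuted and show that $\mathcal{M}_A \setminus \mathcal{M}_B$ is closed under the join $\vee_A$; the case of a permuted worker is dual and gives closure under the meet $\wedge_A$. In either case this is exactly the semi-sublattice property, so this suffices (the empty and singleton cases being trivial).

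First I would record a localization observation generalizing \Cref{lem:blockingPair}: if $M \in \mathcal{M}_A \setminus \mathcal{M}_B$ and $f$ is the only agent whose list changes, then every blocking pair of $M$ under $B$ must involve $f$. Indeed, a blocking pair $(x,y)$ with $y \neq f$ would involve only agents whose preference lists are identical in $A$ and $B$, so $(x,y)$ would block $M$ under $A$ as well, contradicting $M \in \mathcal{M}_A$. Hence every $M \in \mathcal{M}_A \setminus \mathcal{M}_B$ admits a worker $u$ with $f >^A_u M(u)$ and $u >^B_f M(f)$, where the first inequality may equivalently be read under $B$ since $u$'s list is unchanged.

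Next, take $M_1, M_2 \in \mathcal{M}_A \setminus \mathcal{M}_B$ and set $N = M_1 \vee_A M_2$; it is stable in $A$, so it suffices to exhibit a $B$-blocking pair of $N$. Since under $\vee_A$ each firm receives its more-preferred ($A$) partner, I would assume w.l.o.g.\ that $f$ weakly $A$-prefers $M_2(f)$ to $M_1(f)$, so that $N(f) = M_2(f)$. Applying the localization step to $M_2$ yields a worker $u$ with $u >^B_f M_2(f)$ and $f >^A_u M_2(u)$. I claim $(u,f)$ blocks $N$ under $B$. The firm-side condition holds verbatim: $u >^B_f M_2(f) = N(f)$. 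For the worker-side condition, under $\vee_A$ each worker receives its less-preferred ($A$) partner, so $u$ weakly $A$-prefers $M_2(u)$ to $N(u)$; combined with $f >^A_u M_2(u)$ this gives $f >^A_u N(u)$, i.e.\ $f >^B_u N(u)$ since $u$'s list is unchanged. Thus $(u,f)$ blocks $N$ under $B$, so $N \in \mathcal{M}_A \setminus \mathcal{M}_B$, establishing join-closure.

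The main obstacle, and the reason the argument must be arranged exactly this way, is the mismatch between the two preference orders: the operation $\vee_A$ is defined through $A$-preferences, whereas $B$-instability is witnessed through $f$'s permuted $B$-list, and these two orders can disagree precisely on $f$'s list. The resolution is to choose both the operation and the source matching so that one of the two blocking inequalities matches a hypothesis on the nose—here the firm-side inequality $u >^B_f M_2(f)$, available because $N(f)$ equals $M_2(f)$ exactly—while the other inequality follows from the monotone way $\vee_A$ degrades the worker's partner, a step that stays entirely within the unchanged worker list and therefore never invokes the $A$/$B$ discrepancy. A naive attempt using the meet, or drawing the blocker from the ``wrong'' matching, breaks on this discrepancy; indeed \Cref{lem.eg} exhibits a permuted firm for which the set is \emph{not} closed under $\wedge_A$, consistent with closure holding under $\vee_A$. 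The worker case follows from the order-reversing symmetry between workers and firms, interchanging $\vee_A$ with $\wedge_A$, and yields a meet semi-sublattice. Together with \Cref{thm:sublattice} this gives the claimed semi-sublattice structure.
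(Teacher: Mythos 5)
Your proposal is correct and follows essentially the same route as the paper's own proof: localize every $B$-blocking pair to the permuted firm $f$, take the matching among $M_1,M_2$ whose $f$-partner survives in $M_1 \vee_A M_2$, and transfer its blocking pair to the join using the fact that each worker's partner only weakly degrades under $\vee_A$. The only difference is cosmetic (which of the two matchings is named as the source of the blocking pair), so there is nothing further to add.
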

\begin{proof}
	Assume that the preference list of a firm $f$ is permuted. 
	We will show that $\mathcal{M}_A \setminus \mathcal{M}_B$ is a join semi-sublattice of $\mathcal{L}_A$.
	By switching the role of workers and firms, 
	permuting the list of a worker will result in $\mathcal{M}_A \setminus \mathcal{M}_B$ 
	being a meet semi-sublattice of $\mathcal{L}_A$.
	
	Let $M_1$ and $M_2$ be two matchings in $\mathcal{M}_A \setminus \mathcal{M}_B$. 
	Hence, neither of them are in $\mathcal{M}_{B}$. 
	In other words, each has a blocking pair under instance $B$.
	
	Let $w$ be the partner of $f$ in $M_1 \vee_A M_2$.
	Then $w$ must also be matched to $f$ in either $M_1$ or $M_2$ (or both).
	We may assume that $w$ is matched to $f$ in $M_1$. 
	
	Let $xy$ be a blocking pair of $M_1$ under $B$. 
	We will show that $xy$ must also be a blocking pair of $M_1 \vee_A M_2$ under $B$.
	To begin, the firm $y$ must be $f$ since other preference lists remain unchanged. 
	Since $xf$ is a blocking pair of $M_1$ under $B$, $x >_f^B w$.
	Similarly, $f >_x f'$ where $f'$ is the $M_1$-partner of $x$. 
	Let $f''$ be the partner of $x$ in $M_1 \vee_A M_2$.
	Then $f' \geq_x f''$. It follows that $f >_x f''$.
	Since $x >_f^B w$ and $f >_x f''$,
	$xf$ must be a blocking pair of $M_1 \vee_A M_2$ under $B$.
\end{proof}


\begin{proposition}
	\label{prop:MABcompute}
A set of edges defining the sublattice $\mathcal{L}'$, consisting of matchings in $\mathcal{M}_A \cap \mathcal{M}_B$,
	can be computed efficiently.
\end{proposition}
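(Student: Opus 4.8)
The plan is to realize this proposition as a direct application of the Setting~II machinery, namely \Cref{thm:bouquet}. The key observation is that membership in $\mathcal{M}_B$ partitions the lattice $\mathcal{L}_A$ into exactly two pieces: the set $\mathcal{L}' = \mathcal{M}_A \cap \mathcal{M}_B$, which is a sublattice of $\mathcal{L}_A$ by \Cref{thm:sublattice}, and its complement $\mathcal{M}_A \setminus \mathcal{M}_B$, which by \Cref{lem:MABsemi} is a semi-sublattice of $\mathcal{L}_A$ — a join semi-sublattice when a firm's list is permuted and a meet semi-sublattice when a worker's list is permuted. This is precisely the hypothesis of \Cref{thm:bouquet}, taking $\mathcal{L}_1 = \mathcal{L}'$ and $\mathcal{L}_2 = \mathcal{M}_A \setminus \mathcal{M}_B$. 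In the worker case, one first passes to the order-dual of $\mathcal{L}_A$ so that $\mathcal{L}_2$ becomes a join semi-sublattice, exactly as noted at the start of \Cref{sec:semi}.

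To invoke \Cref{thm:bouquet}, I would supply its two ingredients. First, the generating poset: the rotation poset $\Pi$ of $\mathcal{L}_A$ is computable in polynomial time by \Cref{lem:computePoset}, and any closed subset $S$ of $\Pi$ can be converted into the matching $M(S)$ it generates in polynomial time by eliminating its rotations in a topological order. Second, the required polynomial-time membership oracle: given any matching $M \in \mathcal{M}_A$, we decide whether $M \in \mathcal{L}_1$ simply by testing whether $M$ is stable under $B$, i.e.\ by scanning all $O(n^2)$ worker--firm pairs for a blocking pair with respect to the permuted preference list. Since only one agent's list differs between $A$ and $B$, this runs in $O(n^2)$ time and returns $\mathcal{L}_1$ exactly when $M$ has no blocking pair under $B$, and $\mathcal{L}_2$ otherwise.

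With these ingredients, running \textsc{FindBouquet}$(\Pi)$ (\Cref{alg:flowerSet}) returns an edge set $E$ defining $\mathcal{L}_1 = \mathcal{M}_A \cap \mathcal{M}_B$; correctness and the polynomial running time follow from \Cref{thm:algFindFlower} and \Cref{thm:bouquet}, since every oracle call issued by the algorithm is made on a matching generated by an explicitly described closed subset (such as $\Pi \setminus I'_v$, $\Pi \setminus J'_v$, $J_v$, or $Y \cup I_v$), each of which is produced in polynomial time.

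The only point requiring care — and the step I would expect to be the main obstacle — is verifying that the two parts satisfy the partition-into-sublattice-and-semi-sublattice hypothesis in the correct orientation. Concretely, one must confirm that $\mathcal{M}_A \cap \mathcal{M}_B$ plays the role of the sublattice $\mathcal{L}_1$ and $\mathcal{M}_A \setminus \mathcal{M}_B$ that of the semi-sublattice $\mathcal{L}_2$, and, in the worker-permutation case, resolve the meet/join duality by reversing the order of $\mathcal{L}_A$ so that \Cref{thm:bouquet} applies verbatim. Once this bookkeeping is settled, the proposition follows immediately from the earlier structural and algorithmic results, with no further argument required.
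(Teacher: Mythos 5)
Your proposal is correct and follows essentially the same route as the paper's own proof: both invoke \Cref{lem:MABsemi} to obtain the partition of $\mathcal{L}_A$ into the sublattice $\mathcal{M}_A \cap \mathcal{M}_B$ and the semi-sublattice $\mathcal{M}_A \setminus \mathcal{M}_B$, compute the rotation poset $\Pi$ via \Cref{lem:computePoset}, and run \textsc{FindBouquet}$(\Pi)$ with a stability-checking membership oracle, citing \Cref{thm:algFindFlower} for correctness. Your write-up is in fact slightly more explicit than the paper's (spelling out the $O(n^2)$ blocking-pair oracle and the order-reversal for the worker-permutation case), but the underlying argument is identical.
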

\begin{proof}
	 We have that $\mathcal{L}'$ and $\mathcal{M}_A \setminus \mathcal{M}_B$ partition $\mathcal{L}_A$,
	 with $\mathcal{M}_A \setminus \mathcal{M}_B$ being a semi-sublattice of $\mathcal{L}_A$, by \Cref{lem:MABsemi}.
	 Therefore,	\textsc{FindBouquet}$(\Pi)$
	 finds a set of edges defining $\mathcal{L}'$ by \Cref{thm:algFindFlower}.
	
	By \Cref{lem:computePoset}, the input $\Pi$ to \textsc{FindBouquet} can be computed in polynomial time. 
	Clearly, a membership oracle checking if a matching is in $\mathcal{L}'$ or 
	not can also be implemented efficiently. 
	Since $\Pi$ has $O(n^2)$ vertices (\Cref{lem:computePoset}), any step of \textsc{FindBouquet} takes polynomial time.	
\end{proof}

This gives us the following \Cref{thm:main} which will be useful in \Cref{sec:lattice} when dealing with multiple agents changing preferences Between $A$ and $B$.

\begin{theorem}
	\label{thm:main}
There is an algorithm for checking if there is a fully robust stable matching w.r.t. any set $S$ in time  $O(|S| p(n))$, where $p$ is a polynomial function, provided all the instances in $S$ are (0,1) w.r.t. some instance $A\in S$. Moreover, if the answer is yes, the set of all such matchings forms a sublattice of $\mathcal{L}_A$ and our algorithm finds a partial order that generates it.
\end{theorem}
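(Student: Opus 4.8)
The plan is to reduce the multi-instance problem to a sequence of the two-instance computations already available, and then combine their outputs by a single union-of-edges operation. Write $S = \{A, B_1, \ldots, B_m\}$ with $m = |S|-1$, where each pair $(A,B_i)$ is of type $(0,1)$. The object we must describe is $\bigcap_{I \in S} \mathcal{M}_I = \mathcal{M}_A \cap \bigcap_{i=1}^m \mathcal{M}_{B_i}$. By \Cref{cor.sublatticeIntersection} this set is already known to be a sublattice of $\mathcal{L}_A$, which settles the sublattice part of the claim outright; the remaining work is to produce a generating poset and to decide emptiness within the stated time bound.

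First I would compute the rotation poset $\Pi$ of $\mathcal{L}_A$, which takes polynomial time by \Cref{lem:computePoset}. Then, for each $i$, I would invoke \textsc{FindBouquet}$(\Pi)$ (\Cref{alg:flowerSet}) with the membership oracle testing membership in $\mathcal{M}_A \cap \mathcal{M}_{B_i}$. Since $(A,B_i)$ is of type $(0,1)$, \Cref{lem:MABsemi} guarantees that $\mathcal{M}_A \setminus \mathcal{M}_{B_i}$ is a semi-sublattice, so by \Cref{prop:MABcompute} (equivalently \Cref{thm:algFindFlower}) this call returns, in time $p(n)$, an edge set $E_i$ defining the sublattice $\mathcal{M}_A \cap \mathcal{M}_{B_i}$. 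Running this once per instance costs $O(m\,p(n)) = O(|S|\,p(n))$.

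The combining step is to set $E = \bigcup_{i=1}^m E_i$ and form the compression $\Pi'$ by adding $E$ to the Hasse diagram of $\Pi$ and contracting strongly connected components, following the edge-based view of compression from \Cref{sec:alternative}. The correctness of this step is the crux: I claim $E$ defines exactly $\bigcap_i (\mathcal{M}_A \cap \mathcal{M}_{B_i})$. This follows from \Cref{lem:separating}: a matching $M$ generated by the closed set $I$ of $\Pi$ lies in $\mathcal{M}_A \cap \mathcal{M}_{B_i}$ iff $I$ separates no edge of $E_i$; hence $M$ lies in the full intersection iff $I$ separates no edge of any $E_i$, that is, no edge of $E$. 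Because the intersection is a priori a sublattice (\Cref{cor.sublatticeIntersection}), this is precisely the condition that $E$ defines it, so the compression $\Pi'$ generates the intersection sublattice.

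Finally, emptiness is read off from $\Pi'$: a fully robust stable matching exists iff $\Pi'$ has a proper closed set, i.e.\ iff the dummy elements $s$ and $t$ lie in distinct meta-elements of $\Pi'$. If the contraction places $s$ and $t$ in the same strongly connected component, every closed set containing $s$ also contains $t$, so no proper closed set exists and $\mathcal{M}_A \cap \bigcap_i \mathcal{M}_{B_i} = \emptyset$. The SCC computation runs in time linear in $|\Pi| + |E|$; since $|\Pi| = O(n^2)$ and $|E| = \sum_i |E_i|$ is polynomial per instance, this stays within $O(|S|\,p(n))$. The main obstacle is the correctness of the union step, where one must ensure that taking the union of the separately computed bouquets neither spuriously enlarges nor shrinks the defined lattice; the membership characterization of \Cref{lem:separating} together with the sublattice property from \Cref{cor.sublatticeIntersection} is exactly what rules this out.
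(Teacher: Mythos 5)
Your proposal is correct and takes essentially the same route as the paper's own proof: it computes each edge set $E_i$ via \textsc{FindBouquet} (\Cref{prop:MABcompute}), forms the union $E = \bigcup_i E_i$ and justifies that $E$ defines the intersection using \Cref{lem:separating} (this is exactly the paper's \Cref{lem:poset}), and decides emptiness by checking whether $s$ and $t$ land in the same meta-rotation of the compression $\Pi'$. The only differences are presentational, namely that you spell out the SCC contraction and the per-step running times slightly more explicitly.
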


\begin{proof}
    
Let $A, B_1, \ldots, B_k$ be polynomially many instances in $S$. Let $E_{i}$ be the set of edges defining $\mathcal{M}_A \cap \mathcal{M}_{B_i}$ for all $1 \leq i \leq k$.
By Corollary~\ref{cor.sublatticeIntersection}, $\mathcal{L}' = \mathcal{M}_A \cap \mathcal{M}_{B_1} \cap \ldots \cap \mathcal{M}_{B_k}$ is a sublattice of $\mathcal{L}_A$.

\begin{lemma}
	\label{lem:poset}
	$E = \bigcup_i E_{i}$ defines $\mathcal{L}'$.
\end{lemma}
\begin{proof}
	By~\Cref{lem:separating}, it suffices to show that for any closed subset $I$,
	$I$ does not separate an edge in $E$ iff $I$ generates a matching in $\mathcal{L}'$.
	
	$I$ does not separate an edge in $E$ iff
	$I$ does not separate any edge in $E_i$  for all $1 \leq i\leq k$ iff
	the matching generated by $I$ is in 
	$\mathcal{M}_A \cap \mathcal{M}_{B_i}$ for all $1 \leq i \leq k$ by~\Cref{lem:separating}. 
\end{proof}

By~\Cref{lem:poset}, a compression $\Pi'$ generating $\mathcal{L}'$ can be constructed from $E$ as described in Section~\ref{sec:alternative}.
By~\Cref{prop:MABcompute}, we can compute each $E_{i}$, and hence, $\Pi'$ efficiently. 
Clearly, $\Pi'$ can be used to check if a fully robust stable matching exists. To be precise, 
a fully robust stable matching exists iff there exists a proper closed subset of $\Pi'$.
This happens iff $s$ and $t$ belong to different meta-rotations in $\Pi'$, an easy to check
condition. Hence, we have~\Cref{thm:main}. 
\end{proof}

\subsubsection{Multiple agents: arbitrary permutation changes}
\label{sec:lattice}

We now consider the most general model of robustness, namely $(p,q)$-robust stable matchings. Formally, we allow $p$ workers and $q$ firms to arbitrarily permute their preference lists between two instances $A$ and $B$, defined on the same agent sets $\mathcal{W}=\{w_1,w_2,\ldots,w_n\}$ and $\mathcal{F}=\{f_1,f_2,\ldots,f_n\}$. We assume $0 \le p \le q \le n$, since all results are symmetric in $p$ and $q$.

The central question in this setting is whether the set of robust stable matchings,
$\mathcal{M}_A \cap \mathcal{M}_B$, forms a sublattice of $\mathcal{L}_A$ and $\mathcal{L}_B$.
As shown in \Cref{thm:sublattice}, when preference changes are restricted to one side of the market—namely, when $(A,B)$ is of type $(0,n)$ or $(n,0)$—the intersection $\mathcal{M}_A \cap \mathcal{M}_B$ is indeed a sublattice of both lattices.

When agents on both sides are allowed to change their preferences, this argument no longer applies.
In particular, $\mathcal{M}_A \setminus \mathcal{M}_B$ need not form a semi-sublattice, as it does in the $(0,1)$ case, and the join and meet operations induced by $\mathcal{L}_A$ and $\mathcal{L}_B$ may differ.
Consequently, for matchings $M,M' \in \mathcal{M}_A \cap \mathcal{M}_B$, it may happen that some workers prefer $M$ to $M'$ in instance $A$, while preferring $M'$ to $M$ in instance $B$—a phenomenon that cannot arise when changes are confined to a single side.

Despite these complications, it turns out that in many instances the intersection
$\mathcal{M}_A \cap \mathcal{M}_B$ still forms a sublattice of both $\mathcal{L}_A$ and $\mathcal{L}_B$.
In such cases, the same underlying set of matchings admits distinct lattice structures under the two instances: while the base set coincides, the lattice operations $(\vee_A,\wedge_A)$ and $(\vee_B,\wedge_B)$ are no longer identical.
An example illustrating this phenomenon appears in~\Cref{ex:both_sides_twisted_lattice}.

\begin{figure}[ht]
	\begin{wbox}
	\centering
	\begin{minipage}{.45\linewidth}
        \centering
		\begin{tabular}{p{0.25cm}|p{0.25cm}p{0.25cm}p{0.25cm}p{0.25cm}p{0.25cm}p{0.25cm}}
			1  & a & b & c & d & e & f\\
			2  & b & a & c & d & e & f\\
			3  & c & d & a & b & e & f\\
			4  & d & c & a & b & e & f\\
			5  & e & f & a & b & c & d\\
			6  & f & e & a & b & c & d
		\end{tabular}
		
		\hspace{0.2cm}
		
		Worker preferences in $A$
	\end{minipage}%
	\begin{minipage}{.45\linewidth}
        \centering
		\begin{tabular}{p{0.20cm}|p{0.20cm}p{0.20cm}p{0.20cm}p{0.20cm}p{0.20cm}p{0.20cm}}
            a  & 2 & 1 & 3 & 4 & 5 & 6\\
			b  & 1 & 2 & 3 & 4 & 5 & 6\\
			c  & 4 & 3 & 1 & 2 & 5 & 6\\
			d  & 3 & 4 & 1 & 2 & 5 & 6\\
			e  & 6 & 5 & 1 & 2 & 3 & 4\\
			f  & 5 & 6 & 1 & 2 & 3 & 4
		\end{tabular}
		
		\hspace{0.2cm}
		
		Firm preferences in $A$
	\end{minipage}

	\hspace{1.0cm}

	\begin{minipage}{.45\linewidth}
        \centering
		\begin{tabular}{p{0.25cm}|p{0.25cm}p{0.25cm}p{0.25cm}p{0.25cm}p{0.25cm}p{0.25cm}}
            \cellcolor{red!25}1  & \cellcolor{red!25}b & \cellcolor{red!25}a & c & d & e & f\\
			\cellcolor{red!25}2  & \cellcolor{red!25}a & \cellcolor{red!25}b & c & d & e & f\\
			3  & c & d & a & b & e & f\\
			4  & d & c & a & b & e & f\\
			5  & e & f & a & b & c & d\\
			6  & f & e & a & b & c & d
		\end{tabular}
		
		\hspace{0.2cm}
		
		Worker preferences in $B$
	\end{minipage}%
	\begin{minipage}{.45\linewidth}
        \centering
		\begin{tabular}{p{0.25cm}|p{0.25cm}p{0.25cm}p{0.25cm}p{0.25cm}p{0.25cm}p{0.25cm}}
            \cellcolor{red!25}a  & \cellcolor{red!25}1 & \cellcolor{red!25}2 & 3 & 4 & 5 & 6\\
			\cellcolor{red!25}b  & \cellcolor{red!25}2 & \cellcolor{red!25}1 & 3 & 4 & 5 & 6\\
			c  & 4 & 3 & 1 & 2 & 5 & 6\\
			d  & 3 & 4 & 1 & 2 & 5 & 6\\
			e  & 6 & 5 & 1 & 2 & 3 & 4\\
			f  & 5 & 6 & 1 & 2 & 3 & 4
		\end{tabular}
		
		\hspace{0.2cm}
		
		Firm preferences in $B$
	\end{minipage}
	\end{wbox}
	\caption{Peculiar example for multiple agents changing preferences. Worker and firm preferences are shown from most preferred to least.}
	\label{ex:both_sides_twisted_lattice}
\end{figure}

Instance $B$ is obtained from instance $A$ by permuting the preferences of workers $1$ and $2$, and firms $a$ and $b$. It can be seen that taking the join (or meet) of any two matchings in $\mathcal{M}_A \cap \mathcal{M}_B$ yields another matching in the intersection, regardless of which instance is considered. However, workers $1$ and $2$ prefer $M_2$ to $M_1$ in $A$, while they prefer $M_1$ to $M_2$ in $B$. \Cref{fig.example} illustrates this change in the partial ordering of matchings in $\mathcal{M}_A \cap \mathcal{M}_B$, depending on the underlying instance.

$M_1 = \{1b, 2a, 3d, 4c, 5e, 6f\}$ and $M_2 = \{1a, 2b, 3c, 4d, 5f, 6e\}$ are both stable under $A$ and $B$. \\
$X_1 = M_1 \wedge_A M_2 = \{1a, 2b, 3c, 4d, 5e, 6f\}$ and $X_2 = M_1 \vee_A M_2 = \{1b, 2a, 3d, 4c, 5f, 6e\}$. \\
$Y_1 = M_1 \wedge_B M_2 = \{1b, 2a, 3c, 4d, 5e, 6f\}$ and $Y_2 = M_1 \vee_B M_2 = \{1a, 2b, 3d, 4c, 5f, 6e\}$.

\begin{figure}[ht]
\centering
\begin{tikzpicture}
\tikzset{
mydot/.style={
  fill,
  circle,
  inner sep=1.5pt
  }
}
\path (0, 0) coordinate (A) (2, 2) coordinate (B) (4, 0) coordinate (C) (2, -2) coordinate (D);
\draw (A)-- (B) node [midway, above]{};
\draw (B)-- (C) node [midway, above]{};
\draw (A)-- (D) node [midway, above]{};
\draw (D)-- (C) node [midway, above]{};

\path (0, -2) coordinate (E) (2, 0) coordinate (F) (4, -2) coordinate (G) (2, -4) coordinate (H);
\draw (E)-- (F) node [midway, above]{};
\draw (F)-- (G) node [midway, above]{};
\draw (E)-- (H) node [midway, above]{};
\draw (H)-- (G) node [midway, above]{};

\draw (A)-- (E) node [midway, above]{};
\draw (B)-- (F) node [midway, above]{};
\draw (C)-- (G) node [midway, above]{};
\draw (D)-- (H) node [midway, above]{};

\node[mydot,label={[align=center]left:$\color{blue}Y_1$}] at (A) {};
\node[mydot,label={[align=center]above:$\cellcolor{red!25}X_1$}] at (B) {};
\node[mydot,label={[align=center]right:$M_2$}] at (C) {};
\node[mydot,label={[align=center]right:$ $}] at (D) {};
\node[mydot,label={[align=center]left:$M_1$}] at (E) {};
\node[mydot,label={[align=center]right:$ $}] at (F) {};
\node[mydot,label={[align=center]right:$\color{blue}Y_2$}] at (G) {};
\node[mydot,label={[align=center]below:$\cellcolor{red!25}X_2$}] at (H) {};

\end{tikzpicture}
\hspace{1cm}
\begin{tikzpicture}
\tikzset{
mydot/.style={
  fill,
  circle,
  inner sep=1.5pt
  }
}
\path (0, 0) coordinate (A) (2, 2) coordinate (B) (4, 0) coordinate (C) (2, -2) coordinate (D);
\draw (A)-- (B) node [midway, above]{};
\draw (B)-- (C) node [midway, above]{};
\draw (A)-- (D) node [midway, above]{};
\draw (D)-- (C) node [midway, above]{};

\path (0, -2) coordinate (E) (2,0) coordinate (F) (4, -2) coordinate (G) (2, -4) coordinate (H);
\draw (E)-- (F) node [midway, above]{};
\draw (F)-- (G) node [midway, above]{};
\draw (E)-- (H) node [midway, above]{};
\draw (H)-- (G) node [midway, above]{};

\draw (A)-- (E) node [midway, above]{};
\draw (B)-- (F) node [midway, above]{};
\draw (C)-- (G) node [midway, above]{};
\draw (D)-- (H) node [midway, above]{};

\node[mydot,label={[align=center]left:$M_1$}] at (A) {};
\node[mydot,label={[align=center]above:$\color{blue}Y_1$}] at (B) {};
\node[mydot,label={[align=center]right:\cellcolor{red!25}$X_1$}] at (C) {};
\node[mydot,label={[align=center]right:$ $}] at (D) {};
\node[mydot,label={[align=center]left:\cellcolor{red!25}$X_2$}] at (E) {};
\node[mydot,label={[align=center]right:$ $}] at (F) {};
\node[mydot,label={[align=center]right:$M_2$}] at (G) {};
\node[mydot,label={[align=center]below:$\color{blue}Y_2$}] at (H) {};

\end{tikzpicture}
\caption{Lattices $\mathcal{L}_A$ and $\mathcal{L}_B$ for example in \Cref{ex:both_sides_twisted_lattice}}
\label{fig.example}
\end{figure}
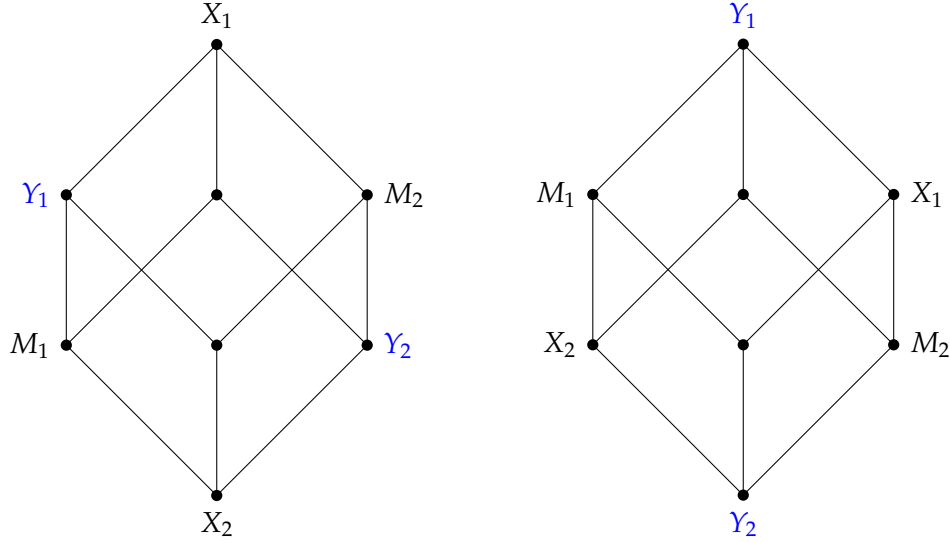

However, as \Cref{thm:both_sides_not_sublattice} shows, the intersection $\mathcal{M}_A \cap \mathcal{M}_B$ is not always a sublattice. An explicit example is provided in \Cref{ex:twoSidesNotSublattice}. Thus, while the sublattice property holds for $(0,n)$, it no longer holds for $(2,2)$.

\begin{restatable}{theorem}{bothsidesnotsublattice}
\label{thm:both_sides_not_sublattice}
When $(A, B)$ are of type $(p, q)$ with $p, q \geq 2$, the set $\mathcal{M}_A \cap \mathcal{M}_B$ is not always a sublattice of $\mathcal{L}_A$ and $\mathcal{L}_B$.
\end{restatable}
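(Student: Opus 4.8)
The plan is to prove the statement by exhibiting an explicit counterexample, which is the only viable route for a ``not always'' claim. It suffices to construct a single instance pair of type $(2,2)$ for which $\mathcal{M}_A \cap \mathcal{M}_B$ fails to be a sublattice; the remaining cases with $p,q \ge 2$ then follow by a padding argument. Concretely, I would add $p-2$ extra workers and $q-2$ extra firms, pairing each extra worker with a dedicated extra firm as mutual top choices so that the pair is forced in every stable matching of either instance, and letting these extra agents permute only the irrelevant tails of their lists (so they genuinely ``change'' and count toward the type). This keeps the type exactly $(p,q)$ while leaving the stable-matching structure of the core gadget untouched, so the failure propagates verbatim.

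For the core $(2,2)$ gadget, the mechanism to exploit is exactly the one highlighted just before the theorem: once agents on \emph{both} sides change, the meet operations $\wedge_A$ and $\wedge_B$ need not coincide. I would permute the lists of two firms and two workers so that some changed worker ranks its two candidate partners in \emph{opposite} orders under $A$ and $B$. The goal is to produce two matchings $M_1, M_2 \in \mathcal{M}_A \cap \mathcal{M}_B$ whose $A$-meet $N_A = M_1 \wedge_A M_2$ assigns a changed firm a worker that the firm considers poor under $B$, thereby opening a blocking pair in $B$, even though $N_A$ remains stable under $A$.

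The verification, realized in \Cref{ex:twoSidesNotSublattice}, proceeds as follows. First confirm $M_1, M_2 \in \mathcal{M}_A \cap \mathcal{M}_B$ by checking that neither has a blocking pair in $A$ or in $B$. Next compute $N_A = M_1 \wedge_A M_2$ by assigning each worker its $A$-preferred partner among $M_1, M_2$; since $N_A$ is a meet of two $A$-stable matchings, it is stable under $A$ by the lattice property of \Cref{sec:latticeOfSM}, but one then exhibits an explicit pair blocking $N_A$ under $B$, so $N_A \in \mathcal{M}_A \setminus \mathcal{M}_B$. Hence $\mathcal{M}_A \cap \mathcal{M}_B$ is not closed under $\wedge_A$, and is therefore not a sublattice of $\mathcal{L}_A$. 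Symmetrically, the same two matchings yield a witness $N_B \in \mathcal{M}_B \setminus \mathcal{M}_A$ (the $B$-meet, or a join witness), showing the intersection is not a sublattice of $\mathcal{L}_B$ either.

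The main obstacle is the engineering of the gadget: one must simultaneously guarantee that both base matchings $M_1, M_2$ are genuinely robust---free of blocking pairs in \emph{both} instances---while arranging that a single cross-instance meet escapes stability. The tension is that making the changed firm rank the $N_A$-worker poorly in $B$ (to force the blocking pair) must not inadvertently create a blocking pair for $M_1$ or $M_2$ in either instance, and the two symmetric roles needed to break both $\mathcal{L}_A$ and $\mathcal{L}_B$ have to be realized by the same four changed agents. I expect that ruling out all \emph{other} blocking pairs across the two instances for the candidate matchings will be the delicate, though ultimately routine, part of the argument.
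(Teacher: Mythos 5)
Your core argument is exactly the paper's own proof: the explicit $(2,2)$ instance of \Cref{ex:twoSidesNotSublattice}, the two robust matchings $M_1=\{1a,2b,3c,4d\}$ and $M_2=\{1b,2a,3d,4c\}$, and the observation that $M_1 \wedge_A M_2 = \{1a,2b,3d,4c\}$ acquires the blocking pair $(4,a)$ under $B$, so the intersection is not closed under $\wedge_A$ and hence not a sublattice of $\mathcal{L}_A$. That part is correct, and it already proves the theorem: the statement only denies the conjunction ``sublattice of $\mathcal{L}_A$ and of $\mathcal{L}_B$,'' so a failure on one side suffices (and if one wants a $B$-side failure, relabeling $A \leftrightarrow B$ gives a different instance exhibiting it). Your padding construction for general $p,q \ge 2$ is harmless but unnecessary: the paper defines type $(p,q)$ with ``at most,'' so the $(2,2)$ example is already of type $(p,q)$ for every $p,q \ge 2$.

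However, your final step --- that ``the same two matchings yield a witness $N_B \in \mathcal{M}_B \setminus \mathcal{M}_A$ (the $B$-meet, or a join witness), showing the intersection is not a sublattice of $\mathcal{L}_B$ either'' --- is false for this example. Under instance $B$ of \Cref{ex:twoSidesNotSublattice}, $M_1$ gives every worker its first choice and $M_2$ gives every firm its first choice, so $M_1$ and $M_2$ are respectively the worker-optimal and firm-optimal matchings of $\mathcal{L}_B$; they are comparable in $\mathcal{L}_B$, hence $M_1 \wedge_B M_2 = M_1$ and $M_1 \vee_B M_2 = M_2$, both of which remain in the intersection. Indeed, $\mathcal{M}_B$ consists of exactly three matchings, $M_1$, $\{1b,2c,3d,4a\}$, and $M_2$, and the middle one is blocked by $(4,c)$ under $A$; so $\mathcal{M}_A \cap \mathcal{M}_B = \{M_1,M_2\}$, which \emph{is} a sublattice of $\mathcal{L}_B$ in this instance. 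This error does not sink your proof, since breaking both lattices within a single instance is not required by the statement, but the sentence asserts something the cited example does not deliver, and the engineering goal in your last paragraph (one gadget whose intersection fails to be a sublattice of both $\mathcal{L}_A$ and $\mathcal{L}_B$ simultaneously) is strictly more than the theorem asks and is not achieved by \Cref{ex:twoSidesNotSublattice}; if you want that stronger conclusion, you would need to construct a different example.
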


\begin{proof} 
The example in \Cref{ex:twoSidesNotSublattice} shows that the intersection is not necessarily a sublattice in $(2, 2)$. Instance $B$ differs from $A$ in the preferences of workers $3$ and $4$ and firms $c$ and $d$.

Consider matchings $M_1 = \{1a, 2b, 3c, 4d\}$ and $M_2 = \{1b, 2a, 3d, 4c\}$ that are stable under $A$ and $B$. However, $M = M_1 \wedge_A M_2 = \{1a, 2b, 3d, 4c\}$ is not stable under $B$, as $a >_4 c$ and $4 >_a 1$, so $4a$ is a blocking pair of $M$ under $B$. 

Therefore, if there are at least two agents on each side having different preferences in $A$ and $B$, $\mathcal{M}_A \cap \mathcal{M}_B$ is not a sublattice of $\mathcal{L}_B$. 
\end{proof}

This raises the question: where does the sublattice property break down? We show that as long as no more than one agent changes preferences on one side, the sublattice property still holds. In particular, we establish that for $(1,n)$, the intersection $\mathcal{M}_A \cap \mathcal{M}_B$ remains a sublattice.

\begin{figure}[ht]
	\begin{wbox}
	\centering
	\begin{minipage}{.45\linewidth}
        \centering
		\begin{tabular}{p{0.5cm}|p{0.5cm}p{0.5cm}p{0.5cm}p{0.5cm}}
			1  & a & b & c & d \\
			2  & b & c & a & d \\
			3  & d & c & a & b \\
			4  & c & d & a & b
		\end{tabular}
		
		\hspace{1cm}
		
		Worker preferences in $A$ 
		
		\hspace{1cm}
		
	\end{minipage}%
	\begin{minipage}{.45\linewidth}
    \centering
	\begin{tabular}{p{0.5cm}|p{0.5cm}p{0.5cm}p{0.5cm}p{0.5cm}}
		a  & 2 & 4 & 1 & 3 \\
		b  & 1 & 2 & 3 & 4 \\
		c  & 3 & 4 & 2 & 1 \\
		d  & 4 & 3 & 2 & 1
	\end{tabular}
	
	\hspace{1cm}
	
	Firm preferences in $A$
	 
	\hspace{1cm}
\end{minipage}%

	\begin{minipage}{.45\linewidth}
        \centering
		\begin{tabular}{p{0.5cm}|p{0.5cm}p{0.5cm}p{0.5cm}p{0.5cm}}
		1  & a & b & c & d \\
		2  & b & c & a & d \\
		\cellcolor{red!25}3  & \cellcolor{red!25}c & \cellcolor{red!25}d & \cellcolor{red!25}a & \cellcolor{red!25}b \\
		\cellcolor{red!25}4  & \cellcolor{red!25}d & \cellcolor{red!25}a & \cellcolor{red!25}c & \cellcolor{red!25}b 
		\end{tabular}

		\hspace{1cm}

		Worker preferences in $B$
	\end{minipage} 
	\begin{minipage}{.45\linewidth}
        \centering
		\begin{tabular}{p{0.5cm}|p{0.5cm}p{0.5cm}p{0.5cm}p{0.5cm}}
		a  & 2 & 4 & 1 & 3 \\
		b  & 1 & 2 & 3 & 4 \\
		\cellcolor{red!25}c  & \cellcolor{red!25}4 & \cellcolor{red!25}2 & \cellcolor{red!25}3 & \cellcolor{red!25}1 \\
		\cellcolor{red!25}d  & \cellcolor{red!25}3 & \cellcolor{red!25}4 & \cellcolor{red!25}1 & \cellcolor{red!25}2 
		\end{tabular}

		\hspace{1cm}

		Firm preferences in $B$
	\end{minipage} 
	\end{wbox}
        \caption{$\mathcal{M}_A \cap \mathcal{M}_B$ is not a sublattice of $\mathcal{L}_A$. Preference lists are in decreasing order. The agents whose preferences change from $A$ to $B$ are marked in red. This notation is used for all examples.}
	\label{ex:twoSidesNotSublattice}
\end{figure}

Let $A$ and $B$ be a $(1,n)$ robust stable matching instance where only $w_1$ and all firms change their preferences from $A$ to $B$.

\begin{restatable}{theorem}{sublatticetwoside}
\label{thm:sublattice_two_side}
If $A$ and $B$ are of type $(1,n)$, then $\mathcal{M}_A \cap \mathcal{M}_B$ is a sublattice of both $\mathcal{L}_A$ and $\mathcal{L}_B$. 
\end{restatable}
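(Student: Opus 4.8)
The plan is to reduce the $(1,n)$ case to essentially the argument already used for \Cref{thm:sublattice}, by showing that although the join and meet operations of $\mathcal{L}_A$ and $\mathcal{L}_B$ need not coincide when both sides change, they \emph{do} agree on pairs drawn from $\mathcal{M}_A \cap \mathcal{M}_B$ in this restricted regime. Concretely, I would fix two matchings $M_1, M_2 \in \mathcal{M}_A \cap \mathcal{M}_B$ and compare $M_1 \wedge_A M_2$ with $M_1 \wedge_B M_2$ (and analogously for the joins). Recall from \Cref{sec:latticeOfSM} that the meet assigns each worker the partner it prefers more among $M_1$ and $M_2$, an operation that depends \emph{only} on the workers' preference lists and not on the firms' lists. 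Since in a $(1,n)$ instance only $w_1$ alters its list between $A$ and $B$, for every worker $w \neq w_1$ its more-preferred partner among $M_1(w), M_2(w)$ is identical in the two instances; hence $M_1 \wedge_A M_2$ and $M_1 \wedge_B M_2$ agree on all of $w_2, \ldots, w_n$.

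The crux of the argument, and the step that pins down why $p \le 1$ is essential, is a completion observation: both $M_1 \wedge_A M_2$ and $M_1 \wedge_B M_2$ are perfect matchings, and two perfect matchings on the same agent sets that agree on $n-1$ workers must also agree on the last one, since the firm matched to $w_1$ is forced to be the unique firm left unmatched by $w_2, \ldots, w_n$. Therefore $M_1 \wedge_A M_2 = M_1 \wedge_B M_2$ as matchings. Because $M_1, M_2$ are stable in $A$, their $A$-meet is $A$-stable; because they are also stable in $B$, their $B$-meet is $B$-stable; and since these two matchings are equal, the common matching lies in $\mathcal{M}_A \cap \mathcal{M}_B$. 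Running the identical argument for the join (each worker now taking its \emph{less}-preferred partner) yields $M_1 \vee_A M_2 = M_1 \vee_B M_2 \in \mathcal{M}_A \cap \mathcal{M}_B$.

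It follows that on the intersection the two instances induce the \emph{same} meet and join, so $\mathcal{M}_A \cap \mathcal{M}_B$ is simultaneously closed under $(\wedge_A, \vee_A)$ and under $(\wedge_B, \vee_B)$. By the definition of a sublattice in \Cref{sec:sublattice_defn}, this certifies that $\mathcal{M}_A \cap \mathcal{M}_B$ is a sublattice of both $\mathcal{L}_A$ and $\mathcal{L}_B$, giving the theorem.

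I expect the only genuinely delicate point to be the completion observation — verifying that agreement on $n-1$ workers forces agreement on the last worker, and emphasizing that this is precisely where the single-changing-worker hypothesis ($p \le 1$) is consumed. For $p \ge 2$ the $A$- and $B$-operations may differ on two or more changing workers with no forced completion, which is exactly consistent with the $(2,2)$ counterexample of \Cref{thm:both_sides_not_sublattice}; I would flag this contrast explicitly so the threshold is transparent. The remaining ingredients — that the meet and join depend only on worker preferences, and that the meet (join) of two stable matchings is itself stable — are exactly the facts recorded in \Cref{sec:latticeOfSM}, so no new machinery is needed.
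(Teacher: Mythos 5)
Your proposal is correct and follows essentially the same route as the paper's own proof: both arguments observe that the meet/join operations depend only on worker preferences, deduce agreement on the $n-1$ unchanged workers, and then use perfectness of stable matchings to force agreement on $w_1$, concluding that the $A$- and $B$-operations coincide on the intersection. The paper's proof is exactly this argument, so no further comparison is needed.
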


\begin{proof}
Without loss of generality, assume $|\mathcal{M}_A \cap \mathcal{M}_B| > 1$, and let $M_1$ and $M_2$ be two distinct matchings in $\mathcal{M}_A \cap \mathcal{M}_B$. Let $\vee_A$ and $\vee_B$ denote the join operations under $A$ and $B$, respectively. Likewise, let $\wedge_A$ and $\wedge_B$ denote the meet operations under $A$ and $B$.

The join $M_1 \vee_A M_2$ is obtained by assigning each worker to their less preferred partner from $M_1$ and $M_2$, according to instance $A$. Since the preferences of workers $w_2, w_3, \ldots, w_n$ are identical in $A$ and $B$, their less preferred partners under $A$ and $B$ coincide. Thus, for all $w \neq w_1$, we have:
$$(M_1 \vee_A M_2)(w) = (M_1 \vee_B M_2)(w).$$

Since $M_1 \vee_A M_2$ and $M_1 \vee_B M_2$ are both stable matchings under their respective instances, they are also perfect matchings. Hence, the remaining worker $w_1$ must be matched to the same firm in both matchings:
$$(M_1 \vee_A M_2)(w_1) = (M_1 \vee_B M_2)(w_1).$$

Therefore, $M_1 \vee_A M_2 = M_1 \vee_B M_2$. A similar argument shows that $M_1 \wedge_A M_2 = M_1 \wedge_B M_2$. Hence, the join and meet operations under instances $A$ and $B$ are equivalent, and both $M_1 \vee_A M_2$ and $M_1 \wedge_A M_2$ belong to $\mathcal{M}_A \cap \mathcal{M}_B$. The theorem follows.
\end{proof}

Thus, when the instances are of type $(1,n)$, \Cref{thm:generalization} guarantees the existence of a poset that generates the corresponding sublattice. However, unlike the $(0,1)$ case, \Cref{thm:bouquet} does not apply to instances of type $(0,n)$.

\begin{restatable}{lemma}{lemnnotsemisublattice}
\label{lem:0_n_not_semi_sublattice}
If $A$ and $B$ are of type $(0,n)$, then $\mathcal{M}_A \setminus \mathcal{M}_B$ is not always a semi-sublattice of $\mathcal{L}_A$.
\end{restatable}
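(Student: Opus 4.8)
The plan is to establish the lemma by producing an explicit counterexample: a robust instance $(A,B)$ of type $(0,n)$ — one in which every worker's preference list is identical across $A$ and $B$, while the firms are permitted to permute their lists — for which $\mathcal{M}_A \setminus \mathcal{M}_B$ is \emph{neither} a join nor a meet semi-sublattice of $\mathcal{L}_A$. Since ``semi-sublattice'' means ``join \emph{or} meet semi-sublattice,'' negating it requires exhibiting both closure failures within the \emph{same} instance. Concretely, I must produce a pair $M_1,M_2 \in \mathcal{M}_A\setminus\mathcal{M}_B$ whose $A$-join $M_1\vee_A M_2$ lies in $\mathcal{M}_A\cap\mathcal{M}_B$, together with a pair $M_3,M_4 \in \mathcal{M}_A\setminus\mathcal{M}_B$ whose $A$-meet $M_3\wedge_A M_4$ lies in $\mathcal{M}_A\cap\mathcal{M}_B$.

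The mechanism I would exploit is exactly the one that causes the $(0,1)$ argument of \Cref{lem:MABsemi} to break once more than one firm changes. In that proof, join-closure held because a single changed firm forces \emph{every} $B$-blocking pair of an $A$-stable matching to involve that one firm, and this blocking pair then survives the join. When all firms may permute, two matchings in $\mathcal{M}_A\setminus\mathcal{M}_B$ can be $B$-unstable for unrelated reasons — say $M_1$ via a blocking pair at one firm and $M_2$ via a blocking pair at a different firm — and taking the $A$-join, which moves each worker to the \emph{less}-preferred of its two partners, can simultaneously destroy both blocking pairs, rendering the join $B$-stable. The symmetric behavior under the $A$-meet, which moves each worker to the \emph{more}-preferred partner, yields the meet failure. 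Designing the firm permutations to be (nearly) self-dual would be the cleanest route, as a single symmetric construction could then witness both failures at once.

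Concretely I would proceed in four steps: (1) fix the common worker preference lists and design the two firm profiles $A$ and $B$ on a small number of agents; (2) enumerate $\mathcal{M}_A$ and flag, for each matching, whether it is also stable under $B$; (3) search the flagged lattice for two $B$-unstable matchings whose $A$-join is $B$-stable, certifying the join failure; and (4) search for two $B$-unstable matchings whose $A$-meet is $B$-stable, certifying the meet failure. Each stability check reduces to a direct verification that no worker--firm pair blocks under the relevant instance.

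The main obstacle is the joint design in step (1): the permutations must make $\mathcal{L}_A$ rich enough to house both configurations simultaneously, with blocking pairs that resolve ``upward'' (for the join) and, separately, ``downward'' (for the meet). This typically forces the counterexample to be larger than the single-firm case, and the bookkeeping of checking $A$-stability and $B$-(in)stability for all the matchings involved is the tedious-but-routine part. Once a valid instance is in hand, the lemma follows immediately from the two witnessed closure failures.
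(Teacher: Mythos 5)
Your proposal correctly identifies everything a proof of this lemma must contain: since \Cref{lem:MABsemi} shows that $\mathcal{M}_A \setminus \mathcal{M}_B$ can be a join \emph{or} a meet semi-sublattice, refuting the semi-sublattice property requires a single $(0,n)$ instance witnessing \emph{both} a join-closure failure and a meet-closure failure, and your diagnosis of the mechanism — two matchings in $\mathcal{M}_A \setminus \mathcal{M}_B$ that are $B$-unstable via blocking pairs at \emph{different} changed firms, so that the $A$-join (resp.\ $A$-meet) destroys both blocking pairs at once — is exactly the mechanism the paper's counterexample exploits. So the intended approach coincides with the paper's: an explicit counterexample verified by direct stability checks.

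However, the proposal stops short of producing the instance, and that is a genuine gap rather than routine bookkeeping: the lemma is an existential claim, and its entire mathematical content is the example. A plan to ``design'' preference lists and then ``search'' the flagged lattice is not a proof that such lists exist; a priori, the considerable structure that survives one-sided changes (\Cref{thm:sublattice}, \Cref{lem:MABsemi}) could conceivably obstruct housing both failure configurations in one instance. The paper resolves this by exhibiting a concrete instance (\Cref{ex:oneSideNotSemiSubLattice}) on five workers $\{1,\ldots,5\}$ and five firms $\{a,\ldots,e\}$ in which only firms $b$ and $c$ permute their lists: there $M_1=\{1b,2a,3c,4d,5e\}$ and $M_2=\{1a,2b,3d,4c,5e\}$ lie in $\mathcal{M}_A\setminus\mathcal{M}_B$ (with $B$-blocking pairs $(5,c)$ and $(4,b)$, at different firms, just as you predicted), yet $M_1\vee_A M_2$ is stable under both instances; and $M_3=\{1b,2c,3d,4a,5e\}$, $M_4=\{1d,2a,3b,4c,5e\}$ lie in $\mathcal{M}_A\setminus\mathcal{M}_B$, yet $M_3\wedge_A M_4$ is stable under both. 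Note that only two firms change, so the example is of type $(0,2)$, which additionally makes the threshold tight against the $(0,1)$ positive result — your expectation that the counterexample must be ``larger'' is right in spirit but the required size is modest. Until your step (1) is instantiated with such explicit lists and the stability checks are carried out, the proposal remains a strategy, not a proof.
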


\begin{proof}
Consider instances $A$ and $B$ as given in \Cref{ex:oneSideNotSemiSubLattice}. The set of workers is $\mathcal{W} = \{1, 2, 3, 4, 5\}$ and the set of firms is $\mathcal{F} = \{a, b, c, d, e\}$. Instance $B$ is obtained from instance $A$ by permuting the lists of firms $b$ and $c$. Consider matchings $M_1 = \{1b, 2a, 3c, 4d, 5e\}$ and $M_2 = \{1a, 2b, 3d, 4c, 5e\}$ that are both stable matchings with respect to instance $A$. With respect to instance $B$, $M_1$ has the blocking pair $(5, c)$ and $M_2$ has the blocking pair $(4, b)$. Hence, $M_1$ and $M_2$ are both in $\mathcal{M}_{A} \setminus \mathcal{M}_B$. However, the \emph{join} with respect to $A$, $M_1 \vee_A M_2 = \{1b, 2a, 3d, 4c, 5e\}$, is stable with respect to both $A$ and $B$.

Similarly, we see that $M_3 = \{1b, 2c, 3d, 4a, 5e\}$ and $M_4 = \{1d, 2a, 3b, 4c, 5e\}$ are stable with respect to $A$, but have blocking pairs $(1, a)$ and $(4, d)$ respectively, under $B$. However, their \emph{meet}, $M_3 \wedge_A M_4 = \{1b, 2a, 3d, 4c, 5e\}$, is stable with respect to both $A$ and $B$.

Hence, $\mathcal{M}_{A} \setminus \mathcal{M}_B$ is not a semi-sublattice of $\mathcal{L}_A$.  
\end{proof}

Note that, since $A$ and $B$ in \Cref{ex:oneSideNotSemiSubLattice} are actually $(0,2)$, this provides a tight bound for when $\mathcal{M}_A \setminus \mathcal{M}_B$ is a semi-sublattice of $\mathcal{L}_A$.

\begin{figure}[ht]
	\begin{wbox}
	\begin{minipage}{.4\linewidth}
        \centering
		\begin{tabular}{p{0.20cm}|p{0.20cm}p{0.25cm}p{0.25cm}p{0.25cm}p{0.25cm}}
			1  & a & b & d & c & e \\
			2  & b & a & c & d & e \\
			3  & c & d & b & a & e \\
			4  & b & d & c & a & e \\
            5  & c & e & a & b & d
		\end{tabular}
		
		\hspace{1cm}
		
		Worker preferences in $A$ and $B$
		
		\hspace{1cm}
		
	\end{minipage}%
	\begin{minipage}{.33\linewidth}
        \centering
	\begin{tabular}{l|lllll}
		a  & 4 & 2 & 1 & 3 & 5 \\
		b  & 3 & 1 & 2 & 4 & 5 \\
		c  & 2 & 4 & 3 & 1 & 5 \\
		d  & 1 & 3 & 4 & 2 & 5 \\
        e  & 5 & 1 & 2 & 3 & 4
	\end{tabular}
	
	\hspace{1cm}
	
	Firm preferences in $A$
	 
	\hspace{1cm}
 
        \end{minipage}%
	\begin{minipage}{.33\linewidth}
        \centering
		\begin{tabular}{l|lllll}
		a  & 4 & 2 & 1 & 3 & 5 \\
		\cellcolor{red!25}b  & \cellcolor{red!25}1 & \cellcolor{red!25}4 & \cellcolor{red!25}2 & \cellcolor{red!25}3 & \cellcolor{red!25}5 \\
            \cellcolor{red!25}c  & \cellcolor{red!25}4 & \cellcolor{red!25}5 & \cellcolor{red!25}3 & \cellcolor{red!25}1 & \cellcolor{red!25}2 \\
            d  & 1 & 3 & 4 & 2 & 5 \\
            e  & 5 & 1 & 2 & 3 & 4 
		\end{tabular}

		\hspace{1cm}

		Firm preferences in $B$

            \hspace{1cm}
	\end{minipage} 
	\end{wbox}
	\caption{$\mathcal{M}_{A} \setminus \mathcal{M}_B$ is not a semi-sublattice of $\mathcal{L}_A$.}
	\label{ex:oneSideNotSemiSubLattice} 
\end{figure} 

This shows that \Cref{thm:bouquet} cannot be used directly to find Birkhoff’s partial order in the $(0,n)$ setting. To circumvent this, we define $n$ hybrid instances $B_1, \dots, B_n$ such that for each $i \in \{1, 2, \ldots, n\}$, the pair $(A, B_i)$ is of type $(0,1)$. Specifically, the preference profile of agent $x$ in instance $B_i$ is defined as:
\begin{equation*}
>_x^{B_i} = 
\left \{
    \begin{array}{ll}
        >_x^B, & \text{if } x = f_i \\
        >_x^A, & \text{if } x \neq f_i
    \end{array}
\right\}
\end{equation*}
That is, instance $B_i$ is identical to instance $A$, except that the preferences of firm $f_i$ are replaced by those in instance $B$.

Defined this way, \Cref{thm:breaking_instances_one_side} shows that the matchings stable under both $A$ and $B$ are precisely those stable under $A$ and all the hybrid instances $B_1, B_2, \ldots, B_n$.

\begin{restatable}{theorem}{breakinginstancesoneside}
\label{thm:breaking_instances_one_side}
$\mathcal{M}_A \cap \mathcal{M}_B = \mathcal{M}_A \cap \mathcal{M}_{B_1} \cap \mathcal{M}_{B_2} \cap \ldots \cap \mathcal{M}_{B_n}$.
\end{restatable}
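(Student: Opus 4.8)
The plan is to reduce the claimed equality to a firm-by-firm analysis of blocking pairs, exploiting the fact that in this type-$(0,n)$ setting every worker has an identical preference list across $A$, $B$, and all of $B_1,\ldots,B_n$. The key observation I would isolate at the outset is that whether a given pair $(w,f_j)$ blocks a matching $M$ depends only on $w$'s list and $f_j$'s list: by definition it blocks iff $f_j >_w M(w)$ and $w >_{f_j} M(f_j)$, a condition that mentions no third agent. Since worker lists never change, and since firm $f_j$ carries its $B$-list in $B_j$ while carrying its $A$-list in every $B_i$ with $i \neq j$, the set of blocking pairs of $M$ involving $f_j$ is identical in $B$ and in $B_j$, and identical in $A$ and in every $B_i$ with $i \neq j$. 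Because each blocking pair consists of exactly one worker and one firm, $M$ is stable under an instance iff no firm participates in a blocking pair, so stability can be verified one firm at a time.

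With this factorization in hand, I would prove the two inclusions separately, each by contraposition on a single offending pair. For $\mathcal{M}_A \cap \mathcal{M}_B \subseteq \mathcal{M}_A \cap \mathcal{M}_{B_1} \cap \cdots \cap \mathcal{M}_{B_n}$, fix $M$ stable under both $A$ and $B$ and an index $i$, and suppose toward a contradiction that $(w,f_j)$ blocks $M$ under $B_i$. If $j = i$, then $f_i$ carries its $B$-list in $B_i$ and $w$'s list is unchanged, so $(w,f_i)$ blocks $M$ under $B$, contradicting $M \in \mathcal{M}_B$. If $j \neq i$, then $f_j$ carries its $A$-list in $B_i$, so $(w,f_j)$ blocks $M$ under $A$, contradicting $M \in \mathcal{M}_A$. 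Hence $M \in \mathcal{M}_{B_i}$ for every $i$.

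For the reverse inclusion it suffices to show $\mathcal{M}_A \cap \mathcal{M}_{B_1} \cap \cdots \cap \mathcal{M}_{B_n} \subseteq \mathcal{M}_B$, since membership in $\mathcal{M}_A$ is common to both sides. Fix $M$ stable under $A$ and under every $B_i$, and suppose $(w,f_j)$ blocks $M$ under $B$. Since $f_j$ has the same list in $B$ and in $B_j$, and $w$'s list is unchanged, $(w,f_j)$ also blocks $M$ under $B_j$, contradicting $M \in \mathcal{M}_{B_j}$. This closes both directions and yields the stated equality.

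I do not anticipate a substantive obstacle; the argument is essentially careful bookkeeping. The one point requiring attention—and where a careless argument could slip—is tracking, for each fixed firm index $j$, which instance supplies $f_j$'s preference list, and confirming that the blocking condition genuinely depends only on the two agents' own lists rather than on any other agent's ranking. Making the ``blocking of $(w,f_j)$ depends only on $w$'s and $f_j$'s lists'' observation explicit first is precisely what renders the two case analyses routine.
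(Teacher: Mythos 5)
Your proof is correct and takes essentially the same approach as the paper's: both arguments reduce the equality to a firm-by-firm analysis of blocking pairs, using that worker lists are identical in all instances and that $f_j$'s list in $B_i$ coincides with its $B$-list when $i=j$ and with its $A$-list when $i\neq j$. The only difference is presentational—you organize the argument as two set inclusions proved by contradiction, whereas the paper phrases it as a transfer of blocking pairs between the family $\{A,B\}$ and the family $\{A,B_1,\ldots,B_n\}$.
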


\begin{proof}
We will show that any blocking pair for a perfect matching under the preference lists of instance $A$ or instance $B$ is a blocking pair under the preference lists of at least one of the instances $A$ or $B_1, \dots, B_n$, and vice versa.

Let $M$ be a perfect matching with a blocking pair under the preference lists in $A$ or in $B$. If it has a blocking pair under $A$, then $M$ is clearly not in $\mathcal{M}_{A} \cap \mathcal{M}_{B_1} \cap \ldots \cap \mathcal{M}_{B_n}$. Suppose $(w, f)$ is a blocking pair for $B$ and $f >_w^B M(w)$, and so $w >_f^B M(f)$. Workers have identical lists in all instances, so $w$ prefers $f$ to $M(w)$ in every instance $B_i$. Firm $f$ has the same preference list in $B$ and in $B_k$ for some $k \in \{1, \dots, n\}$. Then $w >_f^{B_k} M(f)$, and $(w, f)$ is a blocking pair for instance $B_k$, so $M$ is not a stable matching in $\mathcal{M}_{A} \cap \mathcal{M}_{B_1} \cap \ldots \cap \mathcal{M}_{B_n}$.

Let $M$ be a perfect matching with blocking pair $(w, f)$ under the preference lists in one of the instances $A, B_1, \dots, B_n$. As in the previous case, if $(w, f)$ is a blocking pair under the preference lists of $A$, then $M$ is not a stable matching in $\mathcal{M}_A \cap \mathcal{M}_B$. Otherwise, $(w, f)$ is a blocking pair for some instance $B_k$. We will show that $(w, f)$ must be a blocking pair for $B$. 

Assume that $f$ is not the agent whose preference list is identical in $B$ and $B_k$. Then both $w$ and $f$ have identical preference lists in $A$ and in $B_k$, and so this is a blocking pair for $A$. Otherwise, it must be the case that $f$ changed its preference list in $B_k$ and has the same changed list in $B$. Then $(w, f)$ is a blocking pair for $B$ as well, as $f >_w^{B} M(w)$ and $w >_f^{B} M(f)$. We conclude that the two sets are equal. 
\end{proof}

\begin{restatable}{theorem}{birkhoff}
\label{thm:birkhoff_one_side}
If $A$ and $B$ are of type $(0,n)$, then Birkhoff’s partial order generating $\mathcal{M}_A \cap \mathcal{M}_B$ can be computed efficiently, and the matchings in this set can also be enumerated with polynomial delay.
\end{restatable}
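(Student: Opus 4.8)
The plan is to reduce the $(0,n)$ problem to a collection of $(0,1)$ problems and then invoke the machinery already developed for the single-firm case. The starting point is \Cref{thm:breaking_instances_one_side}, which gives the identity
$$\mathcal{M}_A \cap \mathcal{M}_B = \mathcal{M}_A \cap \mathcal{M}_{B_1} \cap \cdots \cap \mathcal{M}_{B_n},$$
where each hybrid instance $B_i$ differs from $A$ only in the preference list of firm $f_i$, so that each pair $(A, B_i)$ is of type $(0,1)$. This identity is the conceptual heart of the reduction and has already been established; everything downstream is algorithmic bookkeeping carried out on the rotation poset $\Pi$ of $\mathcal{L}_A$.

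First I would fix $\Pi$, which by \Cref{lem:computePoset} has $O(n^2)$ rotations and is computable in polynomial time. For each $i \in \{1,\ldots,n\}$, since $(A,B_i)$ is of type $(0,1)$, \Cref{lem:MABsemi} guarantees that $\mathcal{M}_A \setminus \mathcal{M}_{B_i}$ is a semi-sublattice of $\mathcal{L}_A$, so \Cref{prop:MABcompute} (via the \textsc{FindBouquet} algorithm of \Cref{thm:algFindFlower}) produces an edge set $E_i$ on $\Pi$ defining the sublattice $\mathcal{M}_A \cap \mathcal{M}_{B_i}$ in polynomial time. I would then form $E = \bigcup_{i=1}^n E_i$. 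By \Cref{lem:poset}, $E$ defines exactly the sublattice $\mathcal{L}' = \mathcal{M}_A \cap \mathcal{M}_{B_1} \cap \cdots \cap \mathcal{M}_{B_n}$, which by the displayed identity equals $\mathcal{M}_A \cap \mathcal{M}_B$. Adding $E$ to the Hasse diagram of $\Pi$ and contracting strongly connected components as in \Cref{sec:alternative} yields the compression $\Pi'$, i.e.\ the desired Birkhoff partial order generating $\mathcal{M}_A \cap \mathcal{M}_B$. The total cost is $n$ invocations of \textsc{FindBouquet} plus one strongly-connected-component computation, all polynomial in $n$; this is precisely the bound of \Cref{thm:main} applied to the set $S = \{A, B_1, \ldots, B_n\}$.

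For enumeration, I would rely on the one-to-one correspondence between matchings in $\mathcal{M}_A \cap \mathcal{M}_B$ and proper closed sets of $\Pi'$ established by \Cref{thm:generalization}. Enumerating the closed sets (order ideals) of a finite poset admits classical polynomial-delay algorithms, and applying such an algorithm to $\Pi'$ lists every robust stable matching with polynomial delay between successive outputs. I expect the only genuine subtlety to be verifying that the edge sets $E_i$ all live on the \emph{same} poset $\Pi$, so that their union is meaningful: this holds because each $\mathcal{M}_A \cap \mathcal{M}_{B_i}$ is a sublattice of the common lattice $\mathcal{L}_A$, which is exactly the hypothesis under which \Cref{lem:poset} is stated. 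Beyond that point the argument is a direct composition of \Cref{thm:breaking_instances_one_side}, \Cref{prop:MABcompute}, and \Cref{lem:poset}, so no new structural obstacle arises.
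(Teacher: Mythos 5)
Your proposal is correct and follows essentially the same route as the paper's own proof: decompose $B$ into the hybrid instances $B_1,\ldots,B_n$ via \Cref{thm:breaking_instances_one_side}, compute each edge set $E_i$ with \Cref{prop:MABcompute}, take the union justified by \Cref{lem:poset} (as in \Cref{thm:main}), and enumerate via closed sets of the resulting compression. The only difference is that you spell out the standard polynomial-delay enumeration of order ideals and the common-poset subtlety explicitly, which the paper leaves implicit.
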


\begin{proof}
Since each $B_i$ differs from instance $A$ at exactly one agent, by \Cref{prop:MABcompute}, edge sets $E_i$ defining $\mathcal{M}_A \cap \mathcal{M}_{B_i}$ can be computed efficiently. By \Cref{lem:poset} and \Cref{thm:main}, the union $E = \bigcup_i E_i$ defines the sublattice $\mathcal{M}_A \cap \mathcal{M}_B$, and can be used to enumerate all matchings in this intersection efficiently.
\end{proof}

A similar technique can be attempted to the case where $(A, B)$ is of type $(1,n)$. Let $w_1 \in \mathcal{W}$ and $f_1, \dots, f_n \in \mathcal{F}$ be the worker and firms whose preferences differ between $A$ and $B$. For any subset $X \subseteq \mathcal{W} \cup \mathcal{F}$, let $B_X$ denote the instance in which each agent in $X$ has the same preferences as in $B$, and all other agents have the same preferences as in $A$. Then:

\begin{restatable}{theorem}{breakinginstancestwoside}
\label{thm:breaking_instances_two_side}
$\mathcal{M}_A \cap \mathcal{M}_B = \mathcal{M}_A \cap \mathcal{M}_{B_{\{w_1, f_1\}}} \cap \mathcal{M}_{B_{\{w_1, f_2\}}} \cap \dots \cap \mathcal{M}_{B_{\{w_1, f_n\}}}$.
\end{restatable}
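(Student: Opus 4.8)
The plan is to prove the set equality by a blocking-pair correspondence, mirroring the proof of \Cref{thm:breaking_instances_one_side}. The decomposition is natural because each pair $(A, B_{\{w_1,f_i\}})$ is of type $(1,1)$: the right-hand side rewrites $\mathcal{M}_A \cap \mathcal{M}_B$ as the intersection of the stable-matching sets of $n$ type-$(1,1)$ instances, all sharing the base instance $A$. Since $\mathcal{M}_A$ occurs on both sides, it suffices to show that a perfect matching $M \in \mathcal{M}_A$ avoids every blocking pair under $B$ if and only if it avoids every blocking pair under each hybrid $B_{\{w_1,f_i\}}$.

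For the inclusion $\supseteq$, I would show that each $B$-blocking pair survives into some hybrid. If $(w,f_j)$ blocks $M$ under $B$, then in $B_{\{w_1,f_j\}}$ the firm $f_j$ still carries its $B$-list, and the worker $w$ carries exactly the list it uses in $B$: this is immediate when $w \ne w_1$ (workers other than $w_1$ have instance-independent lists) and also holds when $w = w_1$, since $w_1$ carries its $B$-list in $B_{\{w_1,f_j\}}$. Hence $(w,f_j)$ blocks $M$ under $B_{\{w_1,f_j\}}$, giving $M \notin \mathcal{M}_{B_{\{w_1,f_j\}}}$; together with the fact that an $A$-blocking pair is excluded by $M \in \mathcal{M}_A$, this direction is clean.

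For the reverse inclusion, I would take $M \in \mathcal{M}_A \cap \mathcal{M}_B$, assume for contradiction that $(w,f)$ blocks $M$ under some $B_{\{w_1,f_i\}}$, and split into four cases according to whether $w = w_1$ and whether $f = f_i$. Three are routine: if $w \ne w_1$ and $f \ne f_i$ both agents use their $A$-lists, so $(w,f)$ blocks under $A$; if $w = w_1$ and $f = f_i$ both use their $B$-lists, so it blocks under $B$; and if $w \ne w_1$ and $f = f_i$ then $w$'s instance-independent list together with $f_i$'s $B$-list again make it block under $B$. Each contradicts $M \in \mathcal{M}_A \cap \mathcal{M}_B$.

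I expect the single remaining case --- $w = w_1$ and $f = f_j$ with $j \ne i$ --- to be the main obstacle, and it is the genuine crux of the theorem. Here $w_1$ ranks $f_j$ by its $B$-list while $f_j$ ranks $w_1$ by its $A$-list, so the blocking pair is mixed and lives consistently in neither $A$ nor $B$. The only leverage is the joint stability of $M$: from $M \in \mathcal{M}_B$, the condition $f_j \succ^{B}_{w_1} M(w_1)$ forces $M(f_j) \succ^{B}_{f_j} w_1$, while from $M \in \mathcal{M}_A$, the condition $w_1 \succ^{A}_{f_j} M(f_j)$ forces $M(w_1) \succ^{A}_{w_1} f_j$. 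The difficulty is that these two inequalities are stated in different instances --- $w_1$'s comparison governed by its $B$-list and $f_j$'s by its $A$-list --- so they do not obviously combine into a contradiction. Making this case rigorous, by controlling how $w_1$'s set of preferred firms interacts with the firms' changed lists across $A$ and $B$, is where I would concentrate the effort, since it is exactly the two-sided interaction that the single-sided argument of \Cref{thm:breaking_instances_one_side} never had to confront.
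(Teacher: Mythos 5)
Your $\supseteq$ direction and the three routine cases of $\subseteq$ are correct and coincide with the paper's own argument. As a proof, though, the proposal is incomplete: the mixed case $w=w_1$, $f=f_j$ with $j\neq i$ is left open. The important news is that this case is not merely hard---it cannot be closed, because the claimed equality is false. The paper's proof disposes of this case (Case~2 of its $\subseteq$ direction) by asserting that $w_1$ ``has the same list in $A$, $B$, and $B_{\{w_1,f_i\}}$'' and that $f$ does too; this directly contradicts the standing hypothesis that $w_1$ and $f_1,\dots,f_n$ are precisely the agents whose lists differ between $A$ and $B$. So the paper's argument breaks down at exactly the spot you flagged, and your refusal to wave it through was the right call.

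In fact, the four inequalities you isolated are simultaneously realizable and extend to a counterexample. Take three workers and three firms with $A$-lists (most preferred first) $w_1\colon f_1,f_2,f_3$;\ $w_2\colon f_2,f_1,f_3$;\ $w_3\colon f_3,f_1,f_2$;\ $f_1\colon w_1,w_2,w_3$;\ $f_2\colon w_1,w_2,w_3$;\ $f_3\colon w_3,w_1,w_2$, and let $B$ differ only in $w_1\colon f_2,f_1,f_3$;\ $f_2\colon w_2,w_1,w_3$;\ $f_3\colon w_3,w_2,w_1$ (a pair of type $(1,2)$, hence of type $(1,n)$). The matching $M=\{w_1f_1,\,w_2f_2,\,w_3f_3\}$ is stable under $A$ (every worker has its $A$-top choice) and under $B$ (the only pair in which the worker prefers the firm to its $M$-partner is $(w_1,f_2)$, and under $B$ the firm $f_2$ prefers its partner $w_2$ to $w_1$). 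But in the hybrid $B_{\{w_1,f_3\}}$ the pair $(w_1,f_2)$ blocks $M$: there $w_1$ ranks by its $B$-list, so $f_2\succ_{w_1} M(w_1)=f_1$, while $f_2$ ranks by its $A$-list, so $w_1\succ_{f_2} M(f_2)=w_2$. Hence $M\in\mathcal{M}_A\cap\mathcal{M}_B$ yet $M\notin\mathcal{M}_{B_{\{w_1,f_3\}}}$, so the inclusion $\mathcal{M}_A\cap\mathcal{M}_B\subseteq\mathcal{M}_A\cap\bigcap_i\mathcal{M}_{B_{\{w_1,f_i\}}}$ fails. The equality survives only when your mixed case is vacuous---when at most one firm changes (so the unique hybrid is $B$ itself), or when $w_1$'s list is unchanged, which is the one-sided setting of \Cref{thm:breaking_instances_one_side}---and \Cref{cor:birkhoff_one_n}, which is deduced from \Cref{thm:breaking_instances_two_side}, is affected accordingly. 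So do not spend effort trying to ``control how $w_1$'s preferred firms interact with the firms' changed lists'': the two-sided interaction you identified as the crux is genuinely fatal to the statement.
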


\begin{proof}
    $(\supseteq)$
    Let $\mu \notin \mathcal{M}_A \cap \mathcal{M}_B$. Then $\mu$ has a blocking pair with respect to $A$ or $B$. If it is $A$, then it is trivially not in $\mathcal{M}_A \cap \mathcal{M}_{B_{\{w_1, f_1\}}} \cap \mathcal{M}_{B_{\{w_1, f_2\}}} \cap \dots \cap \mathcal{M}_{B_{\{w_1, f_n\}}}$. So suppose it has a blocking pair $(w, f)$ with respect to $B$. 
    
    Suppose $w = w_1$ and $f \in \{f_1, \dots, f_n\}$. So $f >_{w}^B \mu(w)$ and $w >_{f}^B \mu(f)$. But these are the preferences of $w$ and $f$ in $B_{\{w, f\}}$, so $\mu \notin \mathcal{M}_A \cap \mathcal{M}_{B_{\{w_1, f_1\}}} \cap \mathcal{M}_{B_{\{w_1, f_2\}}} \cap \dots \cap \mathcal{M}_{B_{\{w_1, f_n\}}}$. 

    Else, if $w \neq w_1$ and $f \in \{f_1, \dots, f_n\}$, then $w$ has the same preferences in $A$ and $B$, as well as in every $B_{\{w_1, f\}}$, while $f$ has the same preferences in $B$ and $B_{\{w_1, f\}}$. Hence, $(w, f)$ is a blocking pair for instance $B_{w_1, f}$. 
    
    $(\subseteq)$
    Similarly, consider $\mu \notin \mathcal{M}_A \cap \mathcal{M}_{B_{w_1, f_1}} \cap \dots \cap \mathcal{M}_{B_{w_1, f_n}}$. Then $\mu$ has a blocking pair $(w, f)$, and as in the previous case, assume the blocking pair is not with respect to $A$. Then it is with respect to some instance $B_{\{w_1, f_i\}}$. 

    There are three cases:
    \begin{enumerate} 
        \item $w \neq w_1$ and $f = f_i$: Then $f_i$ has the same list in $B$ and $B_{w_1, f_i}$, $w$ has the same list in $A$, $B$, and $B_{w_1, f_i}$; hence it is a blocking pair with respect to $B$.
        
        \item $w = w_1$ and $f \neq f_i$: Then $w_1$ has the same list in $A$, $B$, and $B_{\{w_1, f_i\}}$, and $f$ has the same list in $A$, $B$, and $B_{\{w_1, f_i\}}$; hence it is a blocking pair with respect to $B$.
        
        \item $w = w_1$ and $f = f_i$: Then $w_1$ has the same list in $B$ and $B_{\{w_1, f_i\}}$, and $f$ has the same list in $B$ and $B_{\{w_1, f_i\}}$; hence it is a blocking pair with respect to $B$.
    \end{enumerate}
    The other case of $w \neq w_1$ and $f \neq f_i$ does not arise because then $(w,f)$ would be a blocking pair of $A$, which cannot be, as $\mu \in \mathcal{M}_A$. 
    
    Hence, $(w, f)$ is a blocking pair with respect to $B$, and $\mu \notin \mathcal{M}_B$; hence it is not in $\mathcal{M}_A \cap \mathcal{M}_B$.

    We conclude that the sets are the same. 
\end{proof}

\begin{restatable}{corollary}{birkhoffonen}
\label{cor:birkhoff_one_n}
Computing Birkhoff’s partial order for the $(1,n)$ case is in $\mathcal{P}$ if and only if computing it for the $(1,1)$ case is in $\mathcal{P}$.
\end{restatable}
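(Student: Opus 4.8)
The plan is to prove the two directions separately, with the ``only if'' direction being essentially immediate and the ``if'' direction leveraging the decomposition of \Cref{thm:breaking_instances_two_side} together with the edge-union machinery of \Cref{lem:poset}.

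For the ``only if'' direction, I would observe that every $(1,1)$ instance is in particular a $(1,n)$ instance, since type $(p,q)$ permits \emph{at most} $p$ workers and $q$ firms to change and a $(1,1)$ pair trivially satisfies $1 \le 1$ and $1 \le n$. Hence any polynomial-time algorithm computing Birkhoff's partial order for every $(1,n)$ pair also computes it, unchanged, for every $(1,1)$ pair, so $(1,n) \in \mathcal{P}$ yields $(1,1) \in \mathcal{P}$.

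For the ``if'' direction, suppose we have a polynomial-time algorithm for the $(1,1)$ case. Given a $(1,n)$ instance $(A,B)$ in which worker $w_1$ and firms $f_1,\dots,f_n$ change, I would form the $n$ hybrid instances $B_{\{w_1,f_i\}}$ defined before \Cref{thm:breaking_instances_two_side}; each pair $(A, B_{\{w_1,f_i\}})$ is of type $(1,1)$, and by \Cref{thm:sublattice_two_side} the intersection $\mathcal{M}_A \cap \mathcal{M}_{B_{\{w_1,f_i\}}}$ is a sublattice of $\mathcal{L}_A$. Running the assumed $(1,1)$ algorithm on each pair produces its Birkhoff partial order, from which (as noted in \Cref{sec:alternative}) an edge set $E_i$ defining this sublattice is extracted in polynomial time. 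Since \Cref{thm:breaking_instances_two_side} gives $\mathcal{M}_A \cap \mathcal{M}_B = \bigcap_{i=1}^n (\mathcal{M}_A \cap \mathcal{M}_{B_{\{w_1,f_i\}}})$, the argument of \Cref{lem:poset} shows that the union $E = \bigcup_i E_i$ defines the sublattice $\mathcal{M}_A \cap \mathcal{M}_B$ (itself a sublattice of $\mathcal{L}_A$ by \Cref{thm:sublattice_two_side}); from $E$ one constructs the compression $\Pi'$ generating it as in \Cref{sec:alternative}. As there are $n$ hybrid instances and each edge set is produced in polynomial time, the whole procedure runs in polynomial time, establishing $(1,n) \in \mathcal{P}$.

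The step I expect to require the most care is the invocation of \Cref{lem:poset}: that lemma is stated inside the proof of \Cref{thm:main} for the $(0,1)$ setting, so I would want to confirm that its argument uses only \Cref{lem:separating} and the fact that each $\mathcal{M}_A \cap \mathcal{M}_{B_{\{w_1,f_i\}}}$ is a sublattice of $\mathcal{L}_A$ defined by $E_i$---both of which hold here because $(1,1)$ instances are covered by \Cref{thm:sublattice_two_side}. The only genuinely new ingredient relative to the $(0,n)$ construction in \Cref{thm:birkhoff_one_side} is that the pieces are now $(1,1)$ rather than $(0,1)$ instances, which is precisely why the resulting statement is conditional on a $(1,1)$ algorithm rather than unconditional.
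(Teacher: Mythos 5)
Your proposal is correct and follows essentially the same route the paper intends: the trivial inclusion of $(1,1)$ instances into $(1,n)$ instances for one direction, and for the other the decomposition of \Cref{thm:breaking_instances_two_side} into $n$ hybrid $(1,1)$ instances combined via the edge-union argument of \Cref{lem:poset} (whose proof, as you verified, only needs \Cref{lem:separating} and therefore applies beyond the $(0,1)$ setting in which it is stated). The paper leaves this corollary without an explicit proof, so your write-up is a faithful and complete rendering of the intended argument.
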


Unlike the $(0,1)$ case, the set $\mathcal{M}_A \setminus \mathcal{M}_B$ is, by \Cref{lem:1_1_not_semi_sublattice}, not always a semi-sublattice when $(A, B)$ is of type $(1,1)$. As a result, \Cref{thm:bouquet} cannot be applied, and characterizing the sublattice structure remains an open problem.

\begin{restatable}{lemma}{lnotsemisublattice}
\label{lem:1_1_not_semi_sublattice}
If $A$ and $B$ are of type $(1,1)$, then $\mathcal{M}_A \setminus \mathcal{M}_B$ is not always a semi-sublattice of $\mathcal{L}_A$.
\end{restatable}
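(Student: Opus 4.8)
The plan is to prove this negative result by exhibiting an explicit $(1,1)$ counterexample, in the same spirit as \Cref{lem.eg} and \Cref{lem:0_n_not_semi_sublattice}. By the definitions in \Cref{sec:sublattice_defn}, a subset fails to be a semi-sublattice precisely when it is neither closed under $\vee_A$ nor closed under $\wedge_A$. The target is therefore a single instance $B$, obtained from $A$ by permuting the preference list of exactly one worker $w_1$ and exactly one firm $f_1$, together with two witness pairs: matchings $M_1, M_2 \in \mathcal{M}_A \setminus \mathcal{M}_B$ whose join $M_1 \vee_A M_2$ is stable under $B$ (and hence leaves the set), and matchings $M_3, M_4 \in \mathcal{M}_A \setminus \mathcal{M}_B$ whose meet $M_3 \wedge_A M_4$ is likewise stable under $B$.

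The guiding intuition comes from \Cref{lem:MABsemi}: permuting a firm's list alone leaves $\mathcal{M}_A \setminus \mathcal{M}_B$ a \emph{join} semi-sublattice, so only the meet can fail (as in \Cref{lem.eg}), whereas permuting a worker's list alone leaves it a \emph{meet} semi-sublattice, so only the join can fail. In the $(1,1)$ regime we perturb both a worker and a firm, so neither one-sided guarantee applies, and one expects to be able to route the worker-side perturbation into a join failure and the firm-side perturbation into a meet failure within a single instance. Concretely, I would start from a base instance $A$ whose lattice $\mathcal{L}_A$ is rich enough to contain the four required matchings with the correct dominance relations, then choose the permuted worker $w_1$ so that $M_1, M_2$ become unstable under $B$ while their join $M_1 \vee_A M_2$ remains stable under $B$, and simultaneously choose the permuted firm $f_1$ so that $M_3, M_4$ become unstable under $B$ while their meet $M_3 \wedge_A M_4$ remains stable. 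A configuration on four or five agents per side, presented in a preference-table figure analogous to \Cref{ex:oneSideNotSemiSubLattice}, should suffice.

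Once the instance is fixed, the verification is mechanical and follows the template already used in the paper. For $A$ one checks that each of $M_1, \dots, M_4$, together with the relevant join and meet, has no blocking pair, so all lie in $\mathcal{M}_A$. Membership of $M_1, \dots, M_4$ in $\mathcal{M}_A \setminus \mathcal{M}_B$ is certified by exhibiting, for each, an explicit blocking pair under $B$; since only $w_1$ and $f_1$ changed their lists, every such blocking pair must involve $w_1$ or $f_1$, which sharply narrows the search. The join $M_1 \vee_A M_2$ and meet $M_3 \wedge_A M_4$ are then computed by the standard rule (each worker takes its less-preferred, resp.\ more-preferred, partner from the pair under $A$), and one verifies directly that neither admits a blocking pair under $B$, placing both in $\mathcal{M}_A \cap \mathcal{M}_B$.

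The main obstacle is the design step rather than the verification: I must engineer a \emph{single} permuted worker together with a \emph{single} permuted firm so that the join-closure and the meet-closure of $\mathcal{M}_A \setminus \mathcal{M}_B$ fail simultaneously. The delicate point is that $w_1$ and $f_1$ interact---a blocking pair under $B$ may involve either agent, and indeed may be the pair $(w_1, f_1)$ itself---so the two witness pairs cannot be tuned independently; the permuted lists of $w_1$ and $f_1$ must be chosen jointly so that the intended blocking pairs appear under $B$ for $M_1, \dots, M_4$ yet are suppressed for their join and meet. Keeping $\mathcal{L}_A$ large enough to host all four matchings with the needed order relations while confining every change to two agents is precisely what makes the example nontrivial to find, even though verifying it afterwards is routine.
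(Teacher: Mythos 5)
Your strategy is the same one the paper uses: it proves the lemma via an explicit $(1,1)$ counterexample (\Cref{ex:1-1notSemisublattice}) on five workers $\{1,\dots,5\}$ and five firms $\{a,\dots,e\}$, where worker~$3$ and firm~$c$ are permuted, together with exactly the two witness pairs your plan calls for: $M_1=\{1b,2c,3a,4d,5e\}$ and $M_2=\{1a,2b,3c,4e,5d\}$ lie in $\mathcal{M}_A\setminus\mathcal{M}_B$ (blocking pairs $(3,d)$ and $(4,c)$ under $B$) yet $M_1\vee_A M_2=\{1a,2b,3c,4d,5e\}$ is stable under both instances; and $M_3=\{1c,2a,3b,4d,5e\}$ and $M_4=\{1b,2c,3a,4e,5d\}$ lie in $\mathcal{M}_A\setminus\mathcal{M}_B$ yet $M_3\wedge_A M_4=\{1c,2a,3b,4e,5d\}$ is stable under both. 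Your reading of what must be shown (failure of both join closure and meet closure) and your verification template are correct.

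The genuine gap is that you never produce the instance. For a ``not always'' statement, the explicit counterexample \emph{is} the proof; the design heuristics (routing the worker-side change into a join failure and the firm-side change into a meet failure, guided by \Cref{lem:MABsemi}), the observation that any blocking pair under $B$ must involve $w_1$ or $f_1$, and the mechanical verification recipe are scaffolding around a construction you explicitly defer. You concede that ``the main obstacle is the design step rather than the verification'' and then leave precisely that step undone, and you give no argument that such an instance must exist---nor could a purely formal one suffice, since in the $(0,1)$ regime \Cref{lem:MABsemi} shows the analogous set \emph{is} always a semi-sublattice, so existence here genuinely hinges on the joint choice of the two permuted lists. (Incidentally, the paper's example does not realize your clean separation of roles: each of its witness pairs uses one blocking pair through the permuted worker and one through the permuted firm.) As written, the proposal is a correct plan for a proof rather than a proof; completing it requires writing down concrete preference lists and carrying out the blocking-pair checks, as the paper does.
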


\begin{proof}
Consider instances $A$ and $B$ as given in \Cref{ex:1-1notSemisublattice}. The set of workers is $\mathcal{W} = \{1, 2, 3, 4, 5\}$ and the set of firms is $\mathcal{F} = \{a, b, c, d, e\}$. Instance $B$ is obtained from instance $A$ by permuting the lists of firm $c$ and worker $3$. Consider matchings $M_1 = \{1b, 2c, 3a, 4d, 5e\}$ and $M_2 = \{1a, 2b, 3c, 4e, 5d\}$ that are both stable with respect to $A$. With respect to instance $B$, $M_1$ has the blocking pair $(3, d)$ and $M_2$ has the blocking pair $(4, c)$. Hence, $M_1$ and $M_2$ are both in $\mathcal{M}_{A} \setminus \mathcal{M}_B$. However, the \emph{join} with respect to $A$, $M_1 \vee_A M_2 = \{1a, 2b, 3c, 4d, 5e\}$, is stable with respect to $A$ and $B$. 

Similarly, we see that $M_3 = \{1c, 2a, 3b, 4d, 5e\}$ and $M_4 = \{1b, 2c, 3a, 4e, 5d\}$ are stable with respect to $A$ but have blocking pairs $(3, d)$ and $(4, c)$ respectively, under $B$. However, their \emph{meet}, $M_3 \wedge_A M_4 = \{1c, 2a, 3b, 4e, 5d\}$, is stable with respect to $A$ and $B$. 

Hence, $\mathcal{M}_{A} \setminus \mathcal{M}_B$ is not a semi-sublattice of $\mathcal{L}_A$. 
\end{proof}

\begin{figure}[ht]
	\begin{wbox}
	\centering
	\begin{minipage}{.45\linewidth}
        \centering
		\begin{tabular}{p{0.25cm}|p{0.25cm}p{0.25cm}p{0.25cm}p{0.25cm}p{0.25cm}}
			1  & a & b & c & d & e \\
			2  & b & c & a & d & e \\
			3  & c & a & b & d & e \\
			4  & d & c & e & a & b \\
            5  & e & d & a & b & c
		\end{tabular}
		
		\hspace{1cm}
		
		Worker preferences in $A$ 
		
		\hspace{1cm}
		
	\end{minipage}%
	\begin{minipage}{.45\linewidth}
    \centering
	\begin{tabular}{p{0.25cm}|p{0.25cm}p{0.25cm}p{0.25cm}p{0.25cm}p{0.25cm}}
		a  & 2 & 3 & 1 & 4 & 5 \\
		b  & 3 & 1 & 2 & 4 & 5 \\
		c  & 1 & 2 & 3 & 4 & 5 \\
		d  & 5 & 3 & 4 & 1 & 2 \\
        e  & 4 & 5 & 1 & 2 & 3
	\end{tabular}
	
	\hspace{1cm}
	
	Firm preferences in $A$
	 
	\hspace{1cm}
\end{minipage}%

	\begin{minipage}{.45\linewidth}
        \centering
		\begin{tabular}{p{0.25cm}|p{0.25cm}p{0.25cm}p{0.25cm}p{0.25cm}p{0.25cm}}
		1  & a & b & c & d & e \\
		2  & b & c & a & d & e \\
		\cellcolor{red!25}3  & \cellcolor{red!25}c & \cellcolor{red!25}d & \cellcolor{red!25}a & \cellcolor{red!25}b & \cellcolor{red!25}e \\
		4  & d & c & e & a & b \\
        5  & e & d & a & b & c 
		\end{tabular}

		\hspace{1cm}

		Worker preferences in $B$
	\end{minipage} 
	\begin{minipage}{.45\linewidth}
        \centering
		\begin{tabular}{p{0.25cm}|p{0.25cm}p{0.25cm}p{0.25cm}p{0.25cm}p{0.25cm}}
		a  & 2 & 3 & 1 & 4 & 5 \\
		b  & 3 & 1 & 2 & 4 & 5 \\
		\cellcolor{red!25}c  & \cellcolor{red!25}1 & \cellcolor{red!25}4 & \cellcolor{red!25}3 & \cellcolor{red!25}2 & \cellcolor{red!25}5 \\
		d  & 5 & 3 & 4 & 1 & 2 \\
        e  & 4 & 5 & 1 & 2 & 3
		\end{tabular}

		\hspace{1cm}

		Firm preferences in $B$
	\end{minipage} 
	\end{wbox}
	\caption{ $\mathcal{M}_{A} \setminus \mathcal{M}_B$ is not a semi-sublattice of $\mathcal{L}_A$, even when $(A, B)$ is of type $(1,1)$.}
	\label{ex:1-1notSemisublattice} 
\end{figure}

\subsection{Algorithms to Compute Worker- and Firm-Optimal Robust Stable Matchings}
\label{sec:optimal_matchings}

\Cref{thm:sublattice_two_side} shows that when the instance $(A,B)$ is of type $(1,n)$, the set of robust stable matchings forms a sublattice of both $\mathcal{L}_A$ and $\mathcal{L}_B$. This structural property implies the existence of well-defined worker-optimal and firm-optimal robust stable matchings. In this section, we present algorithms to compute these extremal matchings.

The rotation poset provides a powerful description of how to traverse a stable matching lattice by characterizing the minimal transformations between matchings. However, it does not by itself yield a robust stable matching. Consequently, it is necessary to first compute at least one robust stable matching using an explicit algorithm; once such a matching is obtained, the remaining robust stable matchings can be generated by applying rotations.

For the $(0,n)$ case—when preference changes are confined to a single side of the market—existing algorithms from the literature can be adapted to compute worker-optimal and firm-optimal robust stable matchings. We discuss these adaptations in Appendix~\ref{app:one_side}. These approaches, however, do not extend to the $(1,n)$ setting, where both sides of the market are involved. For this case, we design new algorithms based on Gale and Shapley’s Deferred Acceptance (DA) algorithm~\cite{GaleS}.

The Deferred Acceptance Algorithm (\Cref{alg:gs_algorithm}) proceeds in iterations. In each iteration, the following steps occur:  
$(a)$ The proposing side (e.g., workers) proposes to their most preferred firm that has not yet rejected them.  
$(b)$ Each firm tentatively accepts its most preferred proposal received in that round and rejects all others.  
$(c)$ Each worker eliminates the firms that rejected them from their preference list.  

\begin{algorithm}[ht]
	\begin{wbox}
		\textsc{DeferredAcceptanceAlgorithm}($A$): \\
		\textbf{Input:} Stable matching instance $A$ \\
		\textbf{Output:} Perfect matching $M$ \\
        Each worker has a list of all firms in decreasing order of priority.
		\begin{enumerate}
			\item Until all firms receive a proposal, do:
			\begin{enumerate}
				\item $\forall w \in W$, $w$ proposes to their best uncrossed firm.
				\item $\forall f \in F$, $f$ tentatively accepts their \textit{\textbf{best}} proposal and rejects the rest.
				\item $\forall w \in W$, if $w$ is rejected by a firm $f$, they cross $f$ off their list.
			\end{enumerate}  
			\item Output the perfect matching $M$.
		\end{enumerate}
	\end{wbox}
	\caption{Workers proposing Deferred Acceptance Algorithm, by \cite{GaleS} (from \cite{MM.Book-Online}) }
	\label{alg:gs_algorithm} 
\end{algorithm}

The process continues until a perfect matching is formed, at which point the algorithm outputs it as a stable matching. The key idea is that whenever a rejection occurs, the corresponding worker-firm pair can never be part of any stable matching. We modify this algorithm to compute worker- and firm-optimal stable matchings in the intersection $\mathcal{M}_A \cap \mathcal{M}_B$ for the $(1,n)$ setting.

For the $(1,n)$ case, the worker-optimal \Cref{alg:daalgorithm_worker_both_sides} runs the DA algorithm simultaneously in multiple rooms, where the preference lists for each room correspond to individual instances. Agents act according to preferences in each room: in every iteration, workers propose and firms tentatively accept their best proposals and reject the rest. However, the rejections apply \emph{across all rooms}. For instance, if a worker is rejected by a firm in room $\mathcal{R}_A$, they remove that firm from their list in \emph{all} rooms. This synchronization ensures that if a perfect matching is returned, it is stable with respect to all instances. The primary technical challenge is the possibility that the algorithm produces distinct perfect matchings in different rooms. \Cref{lem:nomultmatchings} shows this cannot occur, a fact that follows from the intersection $\mathcal{M}_A \cap \mathcal{M}_B$ having the same partial order in both lattices.

The firm-optimal \Cref{alg:daalgorithm_firm_both_sides} proceeds analogously, with firms proposing and workers tentatively accepting. A key distinction is that workers issue \emph{preemptive} rejections: in each room, a worker rejects \emph{all} firms ranked below their current tentative match—including firms that have not yet proposed—based on their local preference list. This ensures the resulting perfect matching is identical in all rooms.

\begin{restatable}{theorem}{nalgoworks}
\label{thm:n_1_algo_works}
Let $A$ and $B$ be two stable matching instances that are of type $(1,n)$. Then \Cref{alg:daalgorithm_worker_both_sides} and \Cref{alg:daalgorithm_firm_both_sides} find the worker- and firm-optimal robust stable matchings, respectively, or correctly report that no such matching exists.
\end{restatable}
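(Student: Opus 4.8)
The plan is to establish four facts about \Cref{alg:daalgorithm_worker_both_sides}: that it \emph{terminates}, that if it outputs a perfect matching then the same matching $M$ is produced in every room, that any such $M$ is robust stable (soundness), and that no worker is ever rejected by a firm with which it is matched in some robust stable matching (the optimality invariant). Termination is routine: a firm is crossed off a worker's list in \emph{all} rooms the moment that worker is rejected anywhere, so each worker proposes to each firm at most once, bounding the total number of proposals by $O(n^2)$ per room. The ``same matching in all rooms'' fact is exactly \Cref{lem:nomultmatchings}, which rests on $\mathcal{M}_A \cap \mathcal{M}_B$ carrying the same partial order in $\mathcal{L}_A$ and $\mathcal{L}_B$ (\Cref{thm:sublattice_two_side}); I would invoke it directly. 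Worker-optimality and the correctness of the failure report will then both follow from the optimality invariant, and the firm-optimal case is handled by the dual argument.

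For soundness, suppose the algorithm halts with a common perfect matching $M$ and fix an instance $X \in \{A,B\}$ with a candidate blocking pair $(w,f)$, so $f >_w^X M(w)$ and $w >_f^X M(f)$. Since $w$ proposes in room $\mathcal{R}_X$ in decreasing order of its $X$-list and ends at $M(w)$, it must have proposed to the more-preferred $f$ at some earlier point and been rejected. In room $\mathcal{R}_X$ a firm only improves its held worker over time, so its final partner satisfies $M(f) \geq_f^X w$, contradicting $w >_f^X M(f)$. Hence $M$ has no blocking pair in either instance, so $M \in \mathcal{M}_A \cap \mathcal{M}_B$.

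The heart of the proof---and the step I expect to be the main obstacle---is the optimality invariant, proved by induction on the global temporal sequence of rejections across all rooms. Suppose toward contradiction that some rejection destroys a robust-stable pair, and consider the \emph{first} such event: worker $w$ is rejected by firm $f$ in room $\mathcal{R}_X$, yet $(w,f) \in M^*$ for some $M^* \in \mathcal{M}_A \cap \mathcal{M}_B$. This rejection is caused by some $w'$ with $w' >_f^X w$ proposing to $f$ in room $\mathcal{R}_X$. Because lists are synchronized, $w'$ has been rejected (globally) by every firm it prefers to $f$ in $X$; by minimality of the chosen event, none of those rejections destroyed a robust-stable pair of $w'$, so $w'$ cannot have been rejected by $M^*(w')$, forcing $f \geq_{w'}^X M^*(w')$. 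Since $M^*(f) = w \neq w'$, equality is impossible, so $f >_{w'}^X M^*(w')$; then $(w',f)$ blocks $M^*$ in instance $X$ (as $w' >_f^X w = M^*(f)$), contradicting $M^* \in \mathcal{M}_X$. The delicacy here is that the proposing room $X$ and the instance in which the blocking pair is exhibited must be the \emph{same}, while the inductive hypothesis ranges over rejections in \emph{all} rooms; the cross-room synchronization of crossings-off is precisely what reconciles these.

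Finally, completeness follows quickly. If a robust stable matching $M^*$ exists, the optimality invariant guarantees each worker $w$ is never rejected by $M^*(w)$, so no list is ever exhausted and the process terminates in a common perfect matching, which by soundness lies in $\mathcal{M}_A \cap \mathcal{M}_B$ and, again by the invariant, is worker-optimal there. If no robust stable matching exists, the algorithm cannot output a perfect matching (soundness would otherwise produce one), so it correctly reports failure. For \Cref{alg:daalgorithm_firm_both_sides} I would run the identical argument with the roles of workers and firms exchanged; the only additional ingredient is that the \emph{preemptive} rejections play the role of the synchronization above and yield the firm-side analogue of \Cref{lem:nomultmatchings}, ensuring a single common matching, after which the dual optimality invariant gives firm-optimality.
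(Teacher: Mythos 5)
Your proposal follows the paper's own proof almost lemma-for-lemma: termination plus the single-matching claim (\Cref{lem:nomultmatchings}), soundness of the output, the rejection invariant (the paper's \Cref{lem:n_1_worker_bp}), completeness, optimality, and then dualization via preemptive rejections for \Cref{alg:daalgorithm_firm_both_sides}. Where you depart, you actually improve on the paper: the paper proves \Cref{lem:n_1_worker_bp} by citing the single-instance DA property (``if $f$ rejects $w$ in a DA run on $A$, then $(w,f)$ is in no stable matching of $A$''), which does not apply verbatim here, because the synchronized run is not a faithful DA run on either instance --- the proposal sequence in room $\mathcal{R}_A$ is perturbed by rejections from room $\mathcal{R}_B$. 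Your induction on the first rejection that destroys a pair of some $M^\star \in \mathcal{M}_A \cap \mathcal{M}_B$ is exactly the argument this step needs, and it is carried out correctly.

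However, your soundness step contains a genuine gap, inherited from the paper's \Cref{lem:daalgorithm_worker_both_sides2}. You claim that if $f >^X_w M(w)$ then $w$ proposed to $f$ in room $\mathcal{R}_X$ and was rejected there, and that a firm's held partner only improves within a room. Neither claim survives cross-room synchronization: a firm can lose its held worker in room $\mathcal{R}_A$ when that worker is rejected by the same firm in room $\mathcal{R}_B$, and --- fatally --- the changed worker $w_1$ can skip a firm in room $\mathcal{R}_X$ that was crossed off solely because of a rejection in the other room, so the pair is never tested in room $\mathcal{R}_X$ at all. Concretely, take workers $1,2,3$ and firms $a,b,c$, where worker $1$ has lists $b,a,c$ in $A$ and $a,b,c$ in $B$; worker $2$ has $a,b,c$ and worker $3$ has $b,c,a$ in both; firm $a$ has $1,3,2$ in $A$ and $2,3,1$ in $B$; firm $b$ has $3,1,2$ and firm $c$ has $1,2,3$ in both. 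In round 1, worker $1$ proposes to $b$ in room $\mathcal{R}_A$ and to $a$ in room $\mathcal{R}_B$, is rejected by $b$ in $\mathcal{R}_A$ and by $a$ in $\mathcal{R}_B$, and crosses off both; in round 2 both rooms terminate with $\{1c,2a,3b\}$, which \Cref{alg:daalgorithm_worker_both_sides} outputs. Yet $a >^A_1 c$ and $1 >^A_a 2$, so $(1,a)$ blocks this matching under $A$; moreover one can check $\mathcal{M}_A = \{\{1a,2c,3b\}\}$ and $\mathcal{M}_B = \{\{1c,2a,3b\}\}$, so $\mathcal{M}_A \cap \mathcal{M}_B = \emptyset$ and the correct output is $\boxtimes$. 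So soundness --- and the completeness and optimality conclusions you route through it --- cannot be established by the monotonicity argument as written; closing the gap requires explicitly handling pairs crossed off ``from the other room,'' or modifying the algorithm itself. Since the paper's proof has the identical defect, you reproduced it rather than introduced it, but your proof, like the paper's, is incomplete at precisely this point, and the same concern carries over to your dualized argument for \Cref{alg:daalgorithm_firm_both_sides}.
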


\begin{algorithm}[ht]
	\begin{wbox}
		\textsc{WorkerOptimal}($A,B$): \\
		\textbf{Input:} Stable matching instances $A$ and $B$ on agents $W \cup F$.\\
		\textbf{Output:}  Perfect matching $M$ or $\boxtimes$ (when no robust stable matching exists).\\
        Assume there are two rooms, $\mathcal{R}_A$ and $\mathcal{R}_B$ corresponding to instances $A$ and $B$ respectively. Each worker has a list, initialized to all firms, that they look at while proposing.
		\begin{enumerate}
			\item Until there is no rejection in any room or some worker is rejected by all firms, do:
			\begin{enumerate}
				    \item $\forall I \in \{A,B\}, \forall w \in W: w$ proposes to their best-in-$I$ uncrossed firm in $\mathcal{R}_I$.
				    \item $\forall I \in \{A,B\}, \forall f \in F: f$ tentatively accepts their best-in-$I$ proposal in room $\mathcal{R}_I$ and rejects the rest.
				    \item $\forall w \in W:$ if $w$ is rejected by a firm $f$ in \textbf{\textit{any}} room, they cross $f$ off their list. 
			\end{enumerate}  
			\item 
                \begin{enumerate}
                    \item If some worker is rejected by all firms, output $\boxtimes$.
                    \item Else, the acceptances define perfect matchings in each room. If they are the same in all rooms, output the perfect matching ($M$). 
                \end{enumerate}
		\end{enumerate}
	\end{wbox}
	\caption{Algorithm to find the worker-optimal stable matching. Note that proposers (workers) maintain a single list (or set) across rooms. They may propose to different firms in each room but update their list synchronously based on rejections from any room.}
	\label{alg:daalgorithm_worker_both_sides} 
\end{algorithm} 

\begin{algorithm} [ht]
	\begin{wbox}
		\textsc{FirmOptimal}($A,B$): \\
		\textbf{Input:} Stable matching instances $A$ and $B$ on agents $W \cup F$.\\
		\textbf{Output:}  Perfect matching $M$ or $\boxtimes$ (when no robust stable matching exists). \\
        Assume there are two rooms, $\mathcal{R}_A$ and $\mathcal{R}_B$ corresponding to instances $A$ and $B$ respectively. Each firm has a list, initialized to all workers, that they look at while proposing.
		\begin{enumerate}
			\item Until there is no rejection in any room or some firm is rejected by all firms, do:
			\begin{enumerate}
				    \item $\forall I \in \{A,B\}, \forall f \in F: f$ proposes to their best-in-$I$ uncrossed worker in $\mathcal{R}_I$.
				    \item $\forall I \in \{A,B\}, \forall w \in W: w$ tentatively accepts their best-in-$I$ proposal (call it $f^I_w$) in room $\mathcal{R}_I$ and rejects the rest.
                        \item $\forall I \in \{A,B\}, \forall w \in W: w$
                        sends \textbf{\textit{preemptive rejections}} to all firms $f'$ not yet rejected, that are worse than their current option in $\mathcal{R}_I$, i.e., $f^I_w \geq^I_w f'$.
				    \item $\forall f \in F:$ if $f$ is rejected by a worker $w$ in some room, they cross $w$ off their list. 
			\end{enumerate}  
			\item 
                \begin{enumerate}
                    \item If some firm is rejected by all workers, output $\boxtimes$.
                    \item Else, the acceptances define perfect matchings in each room. If they are the same in all rooms, output the perfect matching($M$). 
                \end{enumerate}
		\end{enumerate}

	\end{wbox}
        \caption{Algorithm to find the firm-optimal stable matching. Note that in step 1(c), workers may send rejections to firms that have not yet proposed to them.}
	\label{alg:daalgorithm_firm_both_sides} 
\end{algorithm} 

The proof of \Cref{thm:n_1_algo_works} is split between
\Cref{sec:daalgorithm_worker_both_sides} and
\Cref{sec:daalgorithm_firm_both_sides}.

\subsubsection{Proof of Correctness of \texorpdfstring{\Cref{alg:daalgorithm_worker_both_sides}}{Algorithm~2}}
\label{sec:daalgorithm_worker_both_sides}

We analyze the worker-proposing version of the algorithm for the $(1,n)$ setting, in which exactly one worker changes preferences between instances $A$ and $B$. The algorithm runs the Deferred Acceptance algorithm on a collection of appropriately defined instances and returns the worker-optimal robust stable matching in $\mathcal{M}_A \cap \mathcal{M}_B$, if one exists. Let $w_1$ denote the unique worker whose preference list differs between $A$ and $B$.

\begin{lemma}
\label{lem:nomultmatchings}
\Cref{alg:daalgorithm_worker_both_sides} always terminates and returns either a perfect matching or the failure symbol $\boxtimes$.
\end{lemma}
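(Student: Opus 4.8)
The plan is to establish two facts: that the loop always halts, and that upon halting the algorithm lands in exactly one of the two cases of Step~2, with Step~2(b) producing a \emph{single} well-defined matching. Only the termination bound is routine; the genuine content is ruling out the possibility of two distinct room-matchings.

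\emph{Termination.} First I would note that the only change ever made to a worker's (single, shared) list is the deletion of a firm in Step~1(c), and firms are never re-inserted. Since a worker proposes only to firms still on its list, each rejection in any room deletes a firm that was present, and each of the $n$ workers can delete each of the $n$ firms at most once; hence at most $n^2$ deletions occur over the whole run. Any iteration after which the loop continues must have produced a rejection (otherwise the ``no rejection in any room'' stopping condition would hold), and thus at least one deletion. Therefore the loop executes at most $n^2+1$ iterations and terminates.

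\emph{The dichotomy at termination.} If some worker has had every firm deleted, Step~2(a) outputs $\boxtimes$; so assume otherwise, in which case every worker has a nonempty list and the loop stopped because no room saw a rejection in the final iteration. Fixing a room $\mathcal{R}_I$, each worker proposes to exactly one firm, namely the $>^I_w$-maximal firm still on its list, giving $n$ proposals; since no firm rejects, each firm receives at most one proposal. With $n$ proposals distributed over $n$ firms and at most one per firm, a counting argument forces each firm to receive exactly one, so the accepted pairs form a perfect matching $M_I$ in which $M_I(w)$ is exactly the $>^I_w$-maximal firm on $w$'s final list $L_w$.

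\emph{Agreement across rooms (the crux).} The main step, and the one I expect to be the real obstacle, is showing $M_A=M_B$, i.e.\ that the ``if'' in Step~2(b) always holds. Here I would invoke the $(1,n)$ hypothesis: among the workers, only $w_1$ changes its preference list, so every $w\neq w_1$ has identical preferences under $A$ and $B$. Because the final list $L_w$ is shared across rooms, the $>^A_w$-maximal and $>^B_w$-maximal elements of $L_w$ coincide for each such $w$, giving $M_A(w)=M_B(w)$ for all $w\neq w_1$. Thus $M_A$ and $M_B$ are perfect matchings agreeing on $w_2,\ldots,w_n$; they assign the same set of $n-1$ firms to those workers, leaving a unique remaining firm $f^\ast$ that each must assign to $w_1$, so $M_A(w_1)=M_B(w_1)=f^\ast$ and hence $M_A=M_B$. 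The subtlety worth stressing is that although the firms' lists \emph{do} differ between $A$ and $B$, so that the rejection histories and the per-room targets of $w_1$ may genuinely diverge during execution, perfectness in each room combined with agreement on all but one worker pins down the last pair identically; the argument uses only that a single worker differs, which is precisely what the $(1,n)$ assumption guarantees.
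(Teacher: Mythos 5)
Your proof is correct and takes essentially the same route as the paper's: both hinge on the facts that the crossed-off list is shared across rooms, that every worker other than $w_1$ has identical preferences in $A$ and $B$, and that two perfect matchings agreeing on all but one worker must coincide. The only differences are presentational—you argue directly by characterizing each room's final matching as ``each worker matched to its best uncrossed firm,'' whereas the paper argues by contradiction starting from a worker $w \neq w_1$ with different partners, and you additionally spell out the $n^2$ termination bound and the counting argument that the paper leaves implicit.
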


\begin{proof}
    The algorithm always terminates since there is at least one rejection per iteration before the final iteration. 

    Suppose the algorithm terminates with two distinct perfect matchings, $\mu_A$ in room $A$ and $\mu_B$ in room $B$. 
    Then there must exist a worker $w \neq w_1$ with the same preferences in $A$ and $B$ but different matches in $\mu_A$ and $\mu_B$. Such a worker must exist since only $w_1$ changed their preferences, and the matchings are perfect. Suppose, without loss of generality, that $\mu_A(w) >_w^{A, B} \mu_B(w)$. As $w$ has the same lists in $A$ and $B$, $w$ must have proposed to $\mu_A(w)$ in room $B$, been rejected, and crossed them off their list. However, per the algorithm, $w$ should have also crossed off $\mu_A(w)$ from their list in room $A$, hence $\mu_A$ cannot be a perfect matching achieved in room $A$. 
\end{proof}

\begin{lemma}
\label{lem:n_1_worker_bp}
    If firm $f$ rejects $w$ in either room, $w$ and $f$ cannot be matched in any matching in $\mathcal{M}_A \cap \mathcal{M}_B$.
\end{lemma}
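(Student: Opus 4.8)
The plan is to adapt the classical Gale--Shapley argument that a pair rejected during Deferred Acceptance cannot be matched in any stable matching, now carried out simultaneously across the two synchronized rooms $\mathcal{R}_A$ and $\mathcal{R}_B$. Concretely, I would order all rejection events by the round in which they occur during the execution of \Cref{alg:daalgorithm_worker_both_sides}, call a rejection of $w$ by $f$ in room $I\in\{A,B\}$ \emph{bad} if $(w,f)\in M$ for some $M\in\mathcal{M}_A\cap\mathcal{M}_B$, and assume for contradiction that a bad rejection exists. Fix the chronologically first bad rejection, say $f$ rejects $w$ in room $I$, together with a witnessing robust stable matching $M$ satisfying $M(w)=f$.

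Next I would analyze why this rejection occurred. Since $f$ rejects $w$ in room $I$, firm $f$ received in that round a proposal from some worker $w'$ with $w'>^I_f w$, and clearly $w'\neq w$. Because $w'$ proposed to $f$ in room $I$, every firm $g$ that $w'$ strictly prefers to $f$ under instance $I$ has already been removed from $w'$'s (shared) list; by the crossing-off rule this means $w'$ was rejected by each such $g$ in some room, strictly earlier than the current round. I would then split on where $M$ places $w'$, writing $g^\ast=M(w')$ and comparing $g^\ast$ with $f$ in the instance-$I$ order of $w'$.

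If $f>^I_{w'}g^\ast$, then together with $w'>^I_f w=M(f)$ the pair $(w',f)$ blocks $M$ under instance $I$; since $M\in\mathcal{M}_A\cap\mathcal{M}_B\subseteq\mathcal{M}_I$, this contradicts stability of $M$ in $I$. Otherwise $g^\ast>^I_{w'}f$ (strict, as $g^\ast\neq f$ because $w'\neq w=M(f)$), so $g^\ast$ is one of the firms $w'$ crossed off before proposing to $f$; hence $g^\ast$ rejected $w'$ strictly earlier than the fixed rejection. But $(w',g^\ast)=(w',M(w'))\in M\in\mathcal{M}_A\cap\mathcal{M}_B$, so that earlier rejection is itself bad, contradicting minimality of the chosen one. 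Either branch yields a contradiction, so no bad rejection exists and the lemma follows.

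The step I expect to require the most care is making the chronological induction rigorous in the two-room model: because the crossing-off list is shared but firms may rank workers differently in $A$ and $B$, I must verify that the firm preference invoked when $f$ rejects $w$ is exactly the instance-$I$ order used when exhibiting the blocking pair, and that a crossing-off of $g^\ast$ by $w'$ genuinely corresponds to an \emph{earlier} rejection event in the global timeline rather than merely in room $I$. Establishing the relevant monotonicity—that once $w'$ has crossed off $g^\ast$ it never re-proposes to it in either room, so the offending rejection is permanent—is precisely what licenses the reduction to a strictly earlier bad rejection and closes the induction.
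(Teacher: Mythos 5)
Your proof is correct, but it takes a genuinely different route from the paper's, and the difference is substantive. The paper's proof is a two-line reduction: assume without loss of generality that the rejection occurs in room $A$, cite the classical Gale--Shapley fact that a pair rejected during single-instance worker-proposing DA belongs to no stable matching of that instance, conclude $(w,f)$ is in no matching of $\mathcal{M}_A$, and hence in none of $\mathcal{M}_A\cap\mathcal{M}_B$. Your proof instead performs a global induction over the joint timeline of both rooms: take the chronologically first rejection that kills a pair of some robust stable matching $M$, and derive either an instance-$I$ blocking pair for $M$ or a strictly earlier bad rejection through the shared crossing-off list. This extra work is not pedantry: the room-$A$ execution of \Cref{alg:daalgorithm_worker_both_sides} is \emph{not} a faithful DA run on instance $A$, since workers also cross off firms that rejected them only in room $B$, so the classical fact does not apply verbatim, and the paper's intermediate claim (that $(w,f)$ lies in no matching of $\mathcal{M}_A$) is false in general. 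For instance, with workers $1,2$ and firms $a,b$, let both workers rank $a$ above $b$ in both instances, and let both firms rank $1$ above $2$ in $A$ but $2$ above $1$ in $B$; in the joint run, firm $b$ rejects worker $2$ in room $A$ in the second round, even though $(2,b)$ belongs to the unique $A$-stable matching $\{1a,2b\}$ (there is no robust stable matching here, so the lemma itself is unharmed). What your chronological induction buys is exactly the repair of this gap: a crossing-off triggered in the \emph{other} room that destroys a pair of $M$ is itself a bad rejection, because $M$ is stable under both instances, so minimality still applies --- which is precisely the subtlety your final paragraph isolates as the crux.
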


\begin{proof}
    We use the property of the DA algorithm for a single instance $A$ which states that if a firm $f$ rejects a worker $w$, then $(w, f)$ are not matched in any stable matching with respect to $A$~\cite{GaleS}. 

    Suppose, without loss of generality, that $f$ rejects $w$ in room $A$. Then $(w, f)$ are not matched in any stable matching in $\mathcal{M}_A$, and hence they cannot be matched in any stable matching in $\mathcal{M}_A \cap \mathcal{M}_B$.
\end{proof}

\begin{lemma}
\label{lem:daalgorithm_worker_both_sides}
    If $A$ and $B$ admit a robust stable matching, then \Cref{alg:daalgorithm_worker_both_sides} finds one.
\end{lemma}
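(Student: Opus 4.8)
The plan is to use the two preceding lemmas as a backbone and reduce the whole statement to the stability of a fixed \emph{witness}. Let $M^{*}\in\mathcal{M}_A\cap\mathcal{M}_B$ be any robust stable matching, which exists by hypothesis. \Cref{lem:n_1_worker_bp} says exactly that a firm never rejects a worker with whom it is matched in some robust stable matching, so in particular no firm ever rejects the pair $(w,M^{*}(w))$ in either room. Consequently every worker $w$ always retains $M^{*}(w)$ on its list and can never be rejected by all firms; thus the algorithm cannot output $\boxtimes$ and must return a perfect matching, which by \Cref{lem:nomultmatchings} is the \emph{same} matching $M$ in both rooms. The only substantive task left is to show that $M$ is stable under both $A$ and $B$.

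The first ingredient I would record is a domination bound: for every worker $w$ and each $I\in\{A,B\}$, $M(w)\succeq^{I}_{w}M^{*}(w)$. Indeed, in room $\mathcal{R}_I$ worker $w$ proposes in decreasing order of $>^{I}_{w}$, and since $M^{*}(w)$ never rejects $w$ (again \Cref{lem:n_1_worker_bp}), $w$ never crosses $M^{*}(w)$ off its list; hence $w$ either settles strictly above $M^{*}(w)$ or is still holding $M^{*}(w)$ at termination. So $M$ worker-dominates $M^{*}$ under both instances simultaneously.

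For stability, suppose toward a contradiction that $(w,f)$ blocks $M$ in $A$, i.e.\ $f>^{A}_{w}M(w)$ and $w>^{A}_{f}M(f)$. Since each firm's tentative partner improves monotonically within its own room, a rejection of $w$ by $f$ inside $\mathcal{R}_A$ would force $M(f)\succeq^{A}_{f}w$, contradicting $w>^{A}_{f}M(f)$; and $f>^{A}_{w}M(w)$ together with $w$ proposing in $>^{A}_{w}$-order in $\mathcal{R}_A$ forces $w$ to have removed $f$ from its list. The only mechanism for this removal is a rejection of $w$ by $f$ in the \emph{other} room $\mathcal{R}_B$. This is where the two-sided nature of the $(1,n)$ setting bites: a firm may rank $w$ oppositely in $A$ and $B$, so the room-$\mathcal{R}_B$ rejection only yields $M(f)>^{B}_{f}w$ and $f>^{B}_{w}M(w)$, which is perfectly consistent with the assumed $A$-blocking pair and gives no immediate contradiction.

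To close it I would launder the hypothetical blocking pair through the stability of a genuine robust stable matching. From $f>^{B}_{w}M(w)$ and the domination bound $M(w)\succeq^{B}_{w}M^{*}(w)$ I get $f>^{B}_{w}M^{*}(w)$, so stability of $M^{*}$ in $B$ forces $v:=M^{*}(f)\neq w$ with $v>^{B}_{f}w$. I then split on whether $f$ holds its valid partner $v$ in room $\mathcal{R}_A$. If $M(f)=v$, then $w>^{A}_{f}M(f)=M^{*}(f)$ and $f>^{A}_{w}M(w)\succeq^{A}_{w}M^{*}(w)$ show that $(w,f)$ already blocks $M^{*}$ in $A$, impossible. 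Otherwise, using that no worker is ever rejected by its own $M^{*}$-partner, I trace a worker-improving cycle over $M^{*}$ inside instance $A$, apply the corresponding rotation to the $A$-stable matching $M^{*}$ to obtain an $A$-stable matching $M'$ with $M'(f)=M(f)$, and verify that $(w,f)$ blocks $M'$ in $A$, again contradicting stability. The symmetric argument (exchanging $\mathcal{R}_A$ and $\mathcal{R}_B$) rules out a blocking pair in $B$, so $M$ is robust stable; the domination bound moreover identifies $M$ as the worker-optimal element of $\mathcal{M}_A\cap\mathcal{M}_B$, which is well defined by \Cref{thm:sublattice_two_side}. I expect this last step to be the main obstacle, precisely because local monotonicity within a single room is insufficient once firms' preferences can invert between $A$ and $B$, forcing the detour through a real robust stable matching and, in the hard case, through a rotation applied to it.
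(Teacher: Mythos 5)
Your opening step is exactly the paper's proof of this lemma: the paper scopes \Cref{lem:daalgorithm_worker_both_sides} to showing only that the algorithm cannot output $\boxtimes$ (a worker rejected by all firms would, by \Cref{lem:n_1_worker_bp}, have no partner in any robust stable matching, contradicting the existence of $M^*$), and it defers stability and optimality of the returned matching to the two subsequent lemmas (\Cref{lem:daalgorithm_worker_both_sides2,lem:daalgorithm_worker_both_sides3}). Up to and including the invocation of \Cref{lem:nomultmatchings}, and also in your domination bound $M(w)\succeq^I_w M^*(w)$ (which is a clean way to see worker-optimality), your argument is correct and matches the paper.

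The difficulty is in the extra stability argument you attempt, and there is a genuine gap at its final step. In the hard case ($M(f)\neq M^*(f)$) you assert that one can ``trace a worker-improving cycle over $M^*$ inside instance $A$, apply the corresponding rotation\ldots to obtain an $A$-stable matching $M'$ with $M'(f)=M(f)$.'' Nothing supports this: $M$ is the algorithm's output, whose $A$-stability is precisely what is being proved, so there is no reason $M(f)$ is $f$'s partner in \emph{any} $A$-stable matching, and rotations exposed at $M^*$ move pairs dictated by the structure of $\mathcal{L}_A$, not by the algorithm's execution. Worse, the direction is wrong for your intended contradiction: to conclude $f>^A_w M'(w)$ from $f>^A_w M(w)\succeq^A_w M^*(w)$ you need $M'$ to be worker-\emph{worse} than $M^*$ (i.e., $M^*\preceq M'$ in $\mathcal{L}_A$), whereas a ``worker-improving'' rotation produces a matching that dominates $M^*$, for which the first blocking condition need not hold. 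So the case that you yourself identify as the crux is not closed. A secondary issue: both your easy case and your hard case rely on within-room monotonicity of firms' tentative partners (to deduce $M(f)\succeq^A_f w$, resp.\ $M(f)>^B_f w$, from a rejection). In this two-room algorithm that monotonicity is not automatic: a firm can lose its room-$\mathcal{R}_A$ tentative partner when it rejects that same worker in room $\mathcal{R}_B$, since the worker then crosses the firm off globally. The paper's own proof of \Cref{lem:daalgorithm_worker_both_sides2} makes the same monotonicity claim, so this is not a defect unique to your write-up, but a self-contained proof along your lines would have to justify it rather than assume it.
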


\begin{proof}
Suppose there exists such a stable matching, but the algorithm terminates by outputting $\boxtimes$. This only happens when a worker is rejected by all firms. However, by \Cref{lem:n_1_worker_bp}, this means that the worker has no feasible partner in any stable matching, which is a contradiction.
\end{proof}

\begin{lemma}
\label{lem:daalgorithm_worker_both_sides2}
    If \Cref{alg:daalgorithm_worker_both_sides} returns a matching, it must be stable under both $A$ and $B$.
\end{lemma}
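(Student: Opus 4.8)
The plan is to establish stability instance by instance, using the standard Deferred Acceptance stability argument adapted to the two-room execution, with \Cref{lem:nomultmatchings}, \Cref{lem:n_1_worker_bp}, and \Cref{lem:daalgorithm_worker_both_sides} as the main tools. First I would fix the returned matching $M$: since the algorithm did not output $\boxtimes$, \Cref{lem:nomultmatchings} guarantees that the tentative acceptances at termination define a single perfect matching $M$ that coincides in rooms $\mathcal{R}_A$ and $\mathcal{R}_B$. By the symmetry between the two rooms (swapping the roles of $A$ and $B$), it suffices to show that $M$ has no blocking pair under instance $A$; the argument for $B$ is identical. So I would assume, for contradiction, that some pair $(w,f)\notin M$ blocks $M$ under $A$, i.e.\ $f >_w^A M(w)$ and $w >_f^A M(f)$, and aim to derive a contradiction.

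The worker side of the argument is the routine half. Since $f >_w^A M(w)$ and, in room $\mathcal{R}_A$, worker $w$ proposes in decreasing order of its $A$-list to firms that remain on its (shared) list and terminates matched to $M(w)$, the firm $f$ must have been crossed off $w$'s list before $w$ reached $M(w)$. A firm is crossed off only upon a rejection, so $w$ was rejected by $f$ in room $\mathcal{R}_A$ or in room $\mathcal{R}_B$. The goal is then to show that this rejection forces $M(f) >_f^A w$, contradicting the blocking condition $w >_f^A M(f)$. When the rejection happened in room $\mathcal{R}_A$, the natural route is the per-room invariant that a firm's final partner is no worse, in that room's instance, than any worker it has rejected in that room, which directly yields $M(f) >_f^A w$. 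When the rejection happened only in room $\mathcal{R}_B$, I would transfer the information across rooms using that $M$ is common to both rooms (\Cref{lem:nomultmatchings}) together with the fact that every rejected pair is infeasible in $\mathcal{M}_A \cap \mathcal{M}_B$ (\Cref{lem:n_1_worker_bp}), so that a pair eliminated in room $\mathcal{R}_B$ cannot reappear as a stable match and cannot survive as a blocking pair of the common output.

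The hard part will be precisely this firm-side step in the presence of \emph{synchronized cross-room rejections}. Unlike a single run of Deferred Acceptance, here a worker may abandon firm $f$ in room $\mathcal{R}_A$ not because $f$ rejected it in room $\mathcal{R}_A$, but because it was rejected by $f$ (or forced off the shared list) in room $\mathcal{R}_B$; consequently a firm's tentative partner within one room need not improve monotonically, which is exactly the property the textbook stability proof relies on. The crux is therefore to show that such a cross-room withdrawal can never leave behind a surviving $A$-blocking pair: I would argue that whenever $w$ leaves $f$ in room $\mathcal{R}_A$ owing to a rejection in room $\mathcal{R}_B$, the common-matching property and the infeasibility guarantee of \Cref{lem:n_1_worker_bp} force $f$'s eventual partner $M(f)$ to already dominate $w$ under $A$, so no blocking pair persists. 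Once this reconciliation is carried out, $M$ is stable under $A$, and the same reasoning with the rooms interchanged gives stability under $B$, so $M \in \mathcal{M}_A \cap \mathcal{M}_B$. The analogous statement for the firm-proposing \Cref{alg:daalgorithm_firm_both_sides}, where the subtlety instead concerns the preemptive rejections, is handled by the mirror argument developed in \Cref{sec:daalgorithm_firm_both_sides}.
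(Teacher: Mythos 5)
Your overall skeleton is the same as the paper's: suppose $(w,f)$ blocks the returned matching $M$ under $A$; since $f >_w^A M(w)$, firm $f$ must have been crossed off $w$'s shared list, hence $f$ rejected $w$ in some room; then argue that $f$'s final partner satisfies $M(f) >_f^A w$, contradicting $w >_f^A M(f)$. The paper's proof is exactly this argument compressed into three sentences: it asserts that the rejection took place in room $\mathcal{R}_A$ and that ``firms only get better matches as the algorithm progresses,'' and concludes. So in the room-$\mathcal{R}_A$ case your argument and the paper's coincide, and your observation that the within-room monotonicity of firms' tentative partners is exactly what cross-room withdrawals threaten is accurate.

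The genuine gap is the case you yourself single out as the crux, and your proposed repair does not close it. \Cref{lem:n_1_worker_bp} says a rejected pair is matched in \emph{no} matching of $\mathcal{M}_A \cap \mathcal{M}_B$; that is a statement about robust stable matchings, not about blocking pairs of the output $M$. Since the entire point of \Cref{lem:daalgorithm_worker_both_sides2} is to establish $M \in \mathcal{M}_A \cap \mathcal{M}_B$, invoking \Cref{lem:n_1_worker_bp} together with \Cref{lem:nomultmatchings} to conclude that ``no blocking pair persists'' is circular: the fact that $(w,f)$ lies in no robust stable matching does nothing to prevent $(w,f)$ from blocking $M$. Concretely, the unresolved situation is $w = w_1$ rejected by $f$ only in room $\mathcal{R}_B$: then $w_1$ may never propose to $f$ in room $\mathcal{R}_A$ at all (if $f$ sits high in $w_1$'s $B$-list, it is crossed off before $w_1$ descends to it in its $A$-list), so the execution contains no comparison in the order $>_f^A$ between $w_1$ and any room-$\mathcal{R}_A$ proposer of $f$, and nothing forces $M(f) >_f^A w_1$. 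This is not a gap that more careful bookkeeping can fill: one can write down explicit $3\times 3$ instances of type $(1,1)$ (with $\mathcal{M}_A \cap \mathcal{M}_B = \emptyset$) realizing precisely this scenario, in which \Cref{alg:daalgorithm_worker_both_sides} terminates with no worker rejected by all firms and returns a matching containing an $A$-blocking pair of this form. Any correct completion therefore has to change something---for instance, assume $\mathcal{M}_A \cap \mathcal{M}_B \neq \emptyset$ and use a fixed robust stable matching as a benchmark via \Cref{lem:n_1_worker_bp}, or add an explicit final stability check to the algorithm. Your diagnosis of where the difficulty lies is correct (and the paper's own proof is silent on exactly this case), but the tools you name cannot resolve it.
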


\begin{proof}
Suppose $\mu$ is the perfect matching returned, and without loss of generality, suppose it has a blocking pair $(w, f)$ with respect to $A$. That is, $f >_w^A \mu(w)$ and $w >_f^A \mu(f)$. Then $w$ must have proposed to $f$ during some iteration of the algorithm and been rejected in room $A$ for some worker $w'$ whom $f$ preferred to $w$, eventually ending up with $\mu(f)$. As firms only get better matches as the algorithm progresses, it must be the case that $\mu(f) >_f w$, and hence $(w, f)$ is not a blocking pair.
\end{proof}

\begin{lemma}
\label{lem:daalgorithm_worker_both_sides3}
    If \Cref{alg:daalgorithm_worker_both_sides} returns a matching, it must be the worker-optimal matching.
\end{lemma}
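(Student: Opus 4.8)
The plan is to show that the returned matching $\mu$ dominates every matching in $\mathcal{M}_A \cap \mathcal{M}_B$ under the worker preferences of \emph{both} $A$ and $B$; this identifies $\mu$ as the worker-optimal robust stable matching. First I would record that the target is well-defined: since $(A,B)$ is of type $(1,n)$, \Cref{thm:sublattice_two_side} shows that the join and meet operations coincide under $A$ and $B$ on $\mathcal{M}_A \cap \mathcal{M}_B$, so this sublattice contains a single element that dominates all others simultaneously under $\preceq_A$ and under $\preceq_B$. Call a firm $f$ \emph{feasible} for a worker $w$ if $(w,f) \in M$ for some $M \in \mathcal{M}_A \cap \mathcal{M}_B$.

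The core input is \Cref{lem:n_1_worker_bp}, whose contrapositive states that a feasible firm never rejects its worker in either room. I would combine this with the monotonicity of worker-proposing deferred acceptance as follows. Fix a worker $w$, a room $I \in \{A,B\}$, and any $M \in \mathcal{M}_A \cap \mathcal{M}_B$. In room $\mathcal{R}_I$ the worker $w$ proposes in strictly decreasing $>^I_w$ order and ends holding $\mu(w)$, which is the same firm in both rooms by \Cref{lem:nomultmatchings}; hence every firm $f$ with $f >^I_w \mu(w)$ must have been crossed off $w$'s shared list, i.e.\ must have rejected $w$ in some room. Were $M(w) >^I_w \mu(w)$, then $M(w)$, being feasible for $w$, would have had to reject $w$, contradicting \Cref{lem:n_1_worker_bp}. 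Therefore $\mu(w) \geq^I_w M(w)$.

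Since this holds for every worker $w$, every $M \in \mathcal{M}_A \cap \mathcal{M}_B$, and both $I \in \{A,B\}$, the matching $\mu$ dominates every robust stable matching under both $\preceq_A$ and $\preceq_B$; as $\mu$ is itself robust stable by \Cref{lem:daalgorithm_worker_both_sides2}, it is the unique worker-optimal robust stable matching. The only real subtleties to handle carefully are the two bookkeeping points: that the crossing-off list is shared across rooms while preferences are read per room, and that worker $w_1$ has different lists in $A$ and $B$. Both are absorbed cleanly because the claimed inequality $\mu(w) \geq^I_w M(w)$ is argued separately inside each room $I$ (so only instance $I$'s preferences enter), and \Cref{lem:nomultmatchings} guarantees that $w_1$, like every worker, holds the same firm $\mu(w_1)$ in both rooms. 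I expect this assembling step, rather than any single hard computation, to be the main thing requiring care.
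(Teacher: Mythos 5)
Your proof is correct and follows essentially the same route as the paper's: both rest on \Cref{lem:n_1_worker_bp}, concluding that any firm a worker prefers to its returned partner must have been crossed off, i.e.\ rejected that worker in some room, and hence cannot be its partner in any robust stable matching. The only difference is bookkeeping: the paper sidesteps $w_1$ by noting that a discrepancy with the worker-optimal matching must already occur at some worker $w \neq w_1$, while you handle $w_1$ directly via \Cref{lem:nomultmatchings} and establish domination of every robust stable matching under both instances' orders.
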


\begin{proof}
If not, some worker $w \neq w_1$ has a better partner in the worker-optimal matching. But then, that worker would have proposed to that partner in every room and been rejected in at least one of them, resulting in a contradiction.
\end{proof}

\subsubsection{Proof of Correctness of \texorpdfstring{\Cref{alg:daalgorithm_firm_both_sides}}{Algorithm~3}}
\label{sec:daalgorithm_firm_both_sides}

We now analyze the firm-proposing version of the algorithm for the $(1,n)$ setting, in which exactly one worker changes preferences between instances $A$ and $B$. The algorithm runs the Deferred Acceptance algorithm on a collection of appropriately defined instances and returns the firm-optimal robust stable matching in $\mathcal{M}_A \cap \mathcal{M}_B$, if one exists. Let $w_1$ denote the unique worker whose preference list differs between $A$ and $B$.

\begin{lemma}
\label{lem:daalgorithm_firm_both_sides}
    The \Cref{alg:daalgorithm_firm_both_sides} terminates and always returns a perfect matching or $\boxtimes$.
\end{lemma}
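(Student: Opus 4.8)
The plan is to establish the three separate guarantees in turn: termination, the perfect-matching property within each room, and the agreement of the two room-matchings. For termination, I would argue exactly as in the worker-proposing case (\Cref{lem:nomultmatchings}): the loop only continues while at least one rejection occurs in some room, and every rejection — ordinary or preemptive — permanently deletes a worker from a firm's (shared) list. Since there are only $n^2$ worker–firm pairs, the total number of rejections, and hence the number of iterations, is at most $n^2$, so the algorithm halts.

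Next I would handle the case in which the algorithm does not output $\boxtimes$, i.e.\ it exits the loop because no rejection occurs in any room. In this terminal configuration every firm still has an uncrossed worker and therefore proposes to exactly one worker in each room, giving $n$ proposals among $n$ workers. The absence of rejections forces each worker to hold at most one proposal (two proposals would trigger a rejection), so by counting each worker holds exactly one; the tentative acceptances thus form a perfect matching $\mu_A$ in room $\mathcal{R}_A$ and $\mu_B$ in room $\mathcal{R}_B$.

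The crux is to show $\mu_A = \mu_B$, so that the algorithm returns a single matching $M$. I would mirror the argument of \Cref{lem:nomultmatchings}, but the synchronization between rooms now comes from the workers' \emph{preemptive} rejections rather than from a shared proposer list, since here the proposers are the firms and in the $(1,n)$ regime their preferences differ between $A$ and $B$. Suppose $\mu_A \neq \mu_B$. The symmetric difference $\mu_A \,\triangle\, \mu_B$ decomposes into alternating cycles, each involving at least two workers; as only $w_1$ has changed its list, some worker $w \neq w_1$ is matched differently, and $w$ has identical preferences in both rooms. By the room symmetry of the algorithm we may assume $\mu_A(w) >_w \mu_B(w)$; write $g = \mu_B(w)$. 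Since a worker's tentative partner only improves over the course of the firm-proposing process and $w$ ends at $\mu_A(w) >_w g$ in room $\mathcal{R}_A$, by the final iteration $w$ has preemptively rejected every firm worse than $\mu_A(w)$ in room $\mathcal{R}_A$, in particular $g$. This permanently removes $w$ from $g$'s shared list. But $w$ holding $g$ in room $\mathcal{R}_B$ at termination requires $w$ to still lie on $g$'s list — a contradiction. Hence $\mu_A = \mu_B$, and the algorithm returns this common perfect matching.

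The step I expect to be the main obstacle is the last one, precisely because the clean ``shared proposer list'' argument of \Cref{lem:nomultmatchings} is unavailable when firms propose and change their preferences. The whole point of the preemptive-rejection step in \Cref{alg:daalgorithm_firm_both_sides} is to restore this coupling: a worker's commitment to partners at least as good as its current match, recorded on the firms' shared lists, forces the two rooms to agree even though the firms' preference orders differ. I would take care to verify that preemptive rejections are genuinely permanent cross-offs acting across both rooms, and that the ``not yet rejected'' guard in the algorithm does not weaken the conclusion that $w$ is eventually removed from $g$'s list.
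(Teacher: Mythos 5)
Your proof is correct and follows essentially the same route as the paper's: termination by counting rejections, and agreement of the two rooms' matchings by observing that when a worker $w \neq w_1$ tentatively holds $\mu_A(w)$ in room $\mathcal{R}_A$, its preemptive rejection of the worse firm $g = \mu_B(w)$ crosses $w$ off $g$'s shared list, so $g$ cannot propose to (hence cannot hold) $w$ in room $\mathcal{R}_B$ at termination. You supply slightly more detail than the paper (the $n^2$ bound on iterations and the counting argument that the terminal acceptances form perfect matchings), but the key mechanism is identical.
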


\begin{proof}
    The algorithm terminates since there is at least one rejection per iteration. 

    Suppose the algorithm terminates with distinct perfect matchings $\mu_A$ in room $A$ and $\mu_B$ in room $B$. Then there is some worker $w \neq w_1$ who has the same lists in $A$ and $B$ but distinct matches $\mu_A(w)$ and $\mu_B(w)$. Suppose, without loss of generality, that $\mu_A(w) >_w^{A, B} \mu_B(w)$. Then $w$ must have received a proposal from $\mu_A(w)$ in some iteration, since to be tentatively matched, a worker must have received a proposal from that firm. Workers reject all firms that are less preferred in all rooms, so $w$ must have rejected $\mu_B(w)$ in room $R_A$ as well as in room $R_B$. Therefore, $\mu_B$ cannot be the final perfect matching in room $R_B$—a contradiction.
\end{proof}

\begin{lemma}
\label{lem:n_1_firm_bp}
    If firm $f$ is rejected by worker $w$ in either room, then $w$ and $f$ cannot be matched in any strongly stable matching in $\mathcal{M}_A \cap \mathcal{M}_B$.
\end{lemma}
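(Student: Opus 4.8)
The plan is to mirror the proof of \Cref{lem:n_1_worker_bp}, while accounting for the two features that distinguish the firm-proposing algorithm from the standard single-instance argument: the \emph{preemptive} rejections issued by workers, and the cross-room synchronization by which a firm crosses a worker off in \emph{all} rooms once rejected in any one of them. Without loss of generality I would assume the rejection of $f$ by $w$ occurs in room $\mathcal{R}_A$, and I would prove the contrapositive target directly: for every $M \in \mathcal{M}_A \cap \mathcal{M}_B$, assuming $M(w)=f$ produces a blocking pair of $M$ under instance $A$, contradicting $M \in \mathcal{M}_A$. The symmetric situation, where the rejection happens in $\mathcal{R}_B$, yields a blocking pair under $B$ and hence $M \notin \mathcal{M}_B$; either way $(w,f)\notin \mathcal{M}_A \cap \mathcal{M}_B$.

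The first step is to unify the two kinds of rejections so the rest of the argument does not have to case-split. Whether $f$ is rejected because it actually proposed to $w$ and lost, or because $w$ rejected it \emph{preemptively} (Step~1(c) of \Cref{alg:daalgorithm_firm_both_sides}), at the moment of rejection $w$ holds a tentative partner $f' = f^A_w$ in $\mathcal{R}_A$ with $f' >_w^A f$, and this $f'$ has already proposed to $w$ in $\mathcal{R}_A$. Establishing this single invariant lets me treat ordinary and preemptive rejections on exactly the same footing for the remainder of the proof.

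The core is then an induction over the global sequence of rejection events, ordered in time across both rooms, carrying the invariant: if $w$ has rejected $f$ in $\mathcal{R}_A$ by the current step, then $M(w) \neq f$ for every $M \in \mathcal{M}_A \cap \mathcal{M}_B$. For the inductive step I would use that $f'$ proposes to its best uncrossed worker in $\mathcal{R}_A$, so if $f'$ proposed to $w$ then every worker $f'$ strictly $A$-prefers to $w$ has been crossed off $f'$'s list. By the synchronization rule, each such crossing was caused by a genuine, strictly earlier rejection of $f'$ in \emph{some} room, so the induction hypothesis applies and gives $M(f') \neq w'''$ for every $w''' >_{f'}^A w$. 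Consequently $w \geq_{f'}^A M(f')$.

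To finish, suppose $M(w)=f$. Then $f' >_w^A f = M(w)$, and either $M(f')=w$ (which forces $M(w)=f'\neq f$, a contradiction) or $M(f')\neq w$, in which case $w >_{f'}^A M(f')$ combined with $f' >_w^A M(w)$ makes $(w,f')$ a blocking pair of $M$ under $A$; since $M\in\mathcal{M}_A$ this is impossible. Hence $M(w)\neq f$, completing the induction. I expect the main obstacle to be exactly the interaction of preemptive rejections with cross-room crossings: the delicate point is verifying that "$f'$ proposed to $w$ in $\mathcal{R}_A$" still certifies that all $A$-higher-ranked workers of $f'$ were removed by genuine rejection events occurring in strictly earlier iterations, so that the induction hypothesis is legitimately available for each of them. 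Once this bookkeeping is pinned down, the blocking-pair contradiction is routine.
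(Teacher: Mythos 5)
Your proposal is correct, and it is worth noting that it is \emph{more} careful than the paper's own proof, which is a three-sentence reduction: the paper simply invokes the classical single-instance property of firm-proposing Deferred Acceptance (a rejection of $f$ by $w$ means $(w,f)$ lies in no stable matching of that instance), remarks that preemptive rejections are equivalent to standard DA behavior, and concludes that a rejection in room $\mathcal{R}_A$ rules the pair out of $\mathcal{M}_A$, hence out of $\mathcal{M}_A \cap \mathcal{M}_B$. The weak point of that reduction is exactly the issue you isolate: because of the synchronization rule, the execution inside room $\mathcal{R}_A$ is \emph{not} a pure DA run on instance $A$ --- firms cross workers off their lists due to rejections originating in $\mathcal{R}_B$ --- so the single-instance property does not literally apply, and the paper's intermediate claim ``rejected in $\mathcal{R}_A$ $\Rightarrow$ not in $\mathcal{M}_A$'' is stronger than what the interacting runs justify (a pair stable under $A$ alone can perfectly well be rejected in $\mathcal{R}_A$ as a downstream effect of room-$B$ crossings). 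Your global induction over the time-ordered sequence of rejection events in \emph{both} rooms, carrying the weaker invariant ``rejected $\Rightarrow$ not in $\mathcal{M}_A \cap \mathcal{M}_B$,'' is precisely the repair: the inductive hypothesis is only ever applied to the strictly earlier rejections that caused the crossings determining $f'$'s proposal target, and the final blocking-pair contradiction under $A$ needs only $w \geq_{f'}^A M(f')$, which that invariant supplies. What the paper's route buys is brevity; what yours buys is an argument that actually accommodates the cross-room contamination and is self-contained. (One cosmetic point: the word ``strongly'' in the lemma statement is a leftover from the compound-instance setting of the appendix; like the paper, you correctly read the statement as being about robust stable matchings in $\mathcal{M}_A \cap \mathcal{M}_B$.)
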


\begin{proof}
We use the property of the DA algorithm for a single instance $A$, which states that if a firm $f$ is rejected by worker $w$, then $(w, f)$ are not matched in any stable matching with respect to $A$~\cite{GaleS}. 

We also note the additional fact that once a worker rejects a firm, they are only matched to partners they strictly prefer over that firm from that point onward. This is equivalent to preemptively rejecting all firms ranked below $f$-an idea we carry over to this algorithm.

Suppose, without loss of generality, that $w$ rejects $f$ in room $A$. Then $(w, f)$ are not matched in any stable matching in $\mathcal{M}_A$, and hence they cannot be matched in any stable matching in $\mathcal{M}_A \cap \mathcal{M}_B$.
\end{proof}
\begin{lemma}
\label{lem:daalgorithm_firm_both_sides2}
If $A$ and $B$ admit a robust stable matching, then \Cref{alg:daalgorithm_firm_both_sides} finds one.
\end{lemma}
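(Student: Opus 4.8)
The plan is to argue by contradiction, mirroring the proof of \Cref{lem:daalgorithm_worker_both_sides} but with the roles of the two sides exchanged. Suppose $A$ and $B$ admit a robust stable matching $\mu^\star \in \mathcal{M}_A \cap \mathcal{M}_B$, yet \Cref{alg:daalgorithm_firm_both_sides} outputs $\boxtimes$. By inspection of Step~2(a), the only way the algorithm returns $\boxtimes$ is if some firm $f$ is rejected by \emph{every} worker over the course of the execution (counting rejections in either room and of either kind).

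The key step is to invoke \Cref{lem:n_1_firm_bp}: every rejection of $f$ by a worker $w$ certifies that the pair $(w,f)$ belongs to no matching in $\mathcal{M}_A \cap \mathcal{M}_B$. Applying this to all $n$ workers shows that $f$ has no feasible partner in any robust stable matching. This directly contradicts the existence of $\mu^\star$: being a perfect matching, $\mu^\star$ assigns $f$ some partner $w=\mu^\star(f)$, and since $\mu^\star \in \mathcal{M}_A \cap \mathcal{M}_B$ this contradicts the infeasibility of every pair incident to $f$. Hence the algorithm cannot output $\boxtimes$ whenever a robust stable matching exists, and by \Cref{lem:daalgorithm_firm_both_sides} it must therefore terminate with a perfect matching—which is robust stable by the companion stability argument (the firm-side analogue of \Cref{lem:daalgorithm_worker_both_sides2}).

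The one point requiring care is that \Cref{lem:n_1_firm_bp} must apply to \emph{preemptive} rejections, not merely to rejections forced by a strictly preferred competing proposal, since in the firm-proposing algorithm a worker $w$ may reject a firm $f$ in Step~1(c) before $f$ has ever proposed. This is exactly the subtlety handled in the proof of \Cref{lem:n_1_firm_bp}: a preemptive rejection of $f$ by $w$ is equivalent to the standard Deferred Acceptance fact that once $w$ rejects $f$, $w$ is thereafter matched only to firms it strictly prefers to $f$, so the pair $(w,f)$ is ruled out of every stable matching of the corresponding single instance, and hence of $\mathcal{M}_A \cap \mathcal{M}_B$. With this justification in hand, the infeasibility conclusion holds uniformly across both rejection types and both rooms, and the contradiction goes through.
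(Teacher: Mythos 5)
Your proof is correct and takes essentially the same approach as the paper's: a contradiction argument in which an output of $\boxtimes$ means some firm was rejected by all workers, whence \Cref{lem:n_1_firm_bp} implies that firm has no feasible partner in any matching of $\mathcal{M}_A \cap \mathcal{M}_B$, contradicting the existence of a (perfect) robust stable matching. Your added caveat that \Cref{lem:n_1_firm_bp} must cover preemptive rejections is exactly the subtlety the paper handles inside that lemma's proof, so your argument is complete.
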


\begin{proof}
Suppose there exists such a stable matching, but the algorithm terminates by outputting $\boxtimes$. This only happens when a firm is rejected by all workers. However, by \Cref{lem:n_1_firm_bp}, this means that the firm has no feasible partner in any of the stable matchings, which is a contradiction.
\end{proof}

\begin{lemma}
\label{lem:daalgorithm_firm_both_sides3}
If \Cref{alg:daalgorithm_firm_both_sides} returns a matching, it must be stable under both $A$ and $B$.
\end{lemma}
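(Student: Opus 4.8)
The plan is to mirror the argument used for the worker-proposing algorithm in \Cref{lem:daalgorithm_worker_both_sides2}, adapted to account for the preemptive-rejection rule that characterizes the firm-proposing variant. Let $\mu$ be the matching returned by \Cref{alg:daalgorithm_firm_both_sides}; by \Cref{lem:daalgorithm_firm_both_sides} it is the \emph{same} perfect matching in both rooms, so writing $\mu(w)$ and $\mu(f)$ is unambiguous. Suppose toward a contradiction that $\mu$ has a blocking pair. Since the two rooms play symmetric roles, I would assume without loss of generality that the blocking pair $(w,f)$ is with respect to $A$, i.e. $f >_w^A \mu(w)$ and $w >_f^A \mu(f)$.

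First I would establish that $f$ must have been rejected by $w$ in room $\mathcal{R}_A$. In the firm-proposing process each firm proposes to its best worker still on its list, so a firm's tentative partner can only \emph{descend} through its preference list, while each worker's tentative partner can only \emph{ascend}. Since $f$ terminates matched to $\mu(f)$ with $w >_f^A \mu(f)$, worker $w$ must have been crossed off $f$'s list in room $\mathcal{R}_A$ — otherwise $f$ would have proposed to $w$ before ever settling on the lower-ranked $\mu(f)$. A worker is crossed off a firm's list only upon rejecting it, so $w$ rejected $f$ in room $\mathcal{R}_A$.

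Next I would use the monotonicity of the worker side to derive the contradiction. At the moment $w$ rejected $f$ in room $\mathcal{R}_A$ — whether through an ordinary rejection in step 1(b) or a preemptive one in step 1(c) — the rejection rule guarantees that $w$'s current tentative partner in that room was strictly preferred to $f$ under $>_w^A$. Because a worker's partner in the firm-proposing algorithm only improves as the algorithm proceeds, the final partner satisfies $\mu(w) >_w^A f$. This directly contradicts the blocking-pair condition $f >_w^A \mu(w)$, completing the argument.

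The step I expect to require the most care is handling the preemptive rejections uniformly with ordinary rejections. In the standard firm-proposing Deferred Acceptance one argues via actual proposals, but here a worker may cross off a firm that never proposed to it. I would therefore phrase the key invariant purely in terms of rejections: whenever $w$ crosses $f$ off its list in a room, $w$ already holds — and thereafter retains or improves upon — a partner strictly better than $f$ in that room's instance. Verifying that both the accept-the-best step and the preemptive-rejection step maintain this invariant is the crux; once it is in place, the contradiction follows exactly as above.
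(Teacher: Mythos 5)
Your proposal follows the same route as the paper's own proof: deduce that $w$ must have rejected $f$, then invoke worker-side monotonicity to get $\mu(w) >_w^A f$, contradicting the blocking condition. In one respect you are more careful than the paper, which simply asserts that ``$f$ must have proposed to $w$ \ldots and been rejected'' and thus ignores preemptive rejections; you fold ordinary and preemptive rejections into a single invariant, and that invariant is sound --- indeed the preemptive-rejection step is exactly what restores per-room monotonicity, since the moment $w$ holds $f^I_w$ in room $\mathcal{R}_I$, every firm below $f^I_w$ under $>_w^I$ is crossed off and can never propose to $w$ again in either room.

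The gap is in the step that places the rejection in room $\mathcal{R}_A$. In \Cref{alg:daalgorithm_firm_both_sides} each firm keeps a \emph{single} list shared by both rooms, and step 1(d) removes $w$ from $f$'s list whenever $w$ rejects $f$ in \emph{some} room; ``crossed off in room $\mathcal{R}_A$'' is therefore not a well-defined event, and from $w >_f^A \mu(f)$ you may only conclude that $w$ rejected $f$ in $\mathcal{R}_A$ \emph{or} in $\mathcal{R}_B$. If the rejection was issued in $\mathcal{R}_B$, your invariant yields $\mu(w) >_w^B f$, which contradicts $f >_w^A \mu(w)$ only when $>_w^A$ and $>_w^B$ coincide, i.e., only for $w \neq w_1$. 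The uncovered case is $w = w_1$: the deviating worker can preemptively reject $f$ in room $\mathcal{R}_B$ (because $f$ sits low in $w_1$'s $B$-list) before $f$'s descending proposal sequence in room $\mathcal{R}_A$ ever reaches $w_1$, so $f$ is skipped in $\mathcal{R}_A$ without any rejection under $>_{w_1}^A$ ever occurring; nothing in your argument then bounds $\mu(w_1)$ against $f$ in the order $>_{w_1}^A$. This is precisely the situation where the two instances' orders come apart, so it cannot be dismissed as symmetric bookkeeping. For what it is worth, the paper's own proof is loose at exactly the same point (it silently assumes the proposal and rejection happened in $\mathcal{R}_A$), so you have reproduced the paper's argument together with its weak spot; a complete proof needs a separate argument handling a cross-room rejection by $w_1$.
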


\begin{proof}
Suppose the algorithm returns a perfect matching $\mu$, and assume, without loss of generality, that it has a blocking pair $(w, f)$ with respect to $A$. That is, $w >_f^A \mu(f)$ and $f >_w^A \mu(w)$. Then $f$ must have proposed to $w$ during some iteration of the algorithm in room $R_A$ and been rejected for some firm $f'$ that $w$ preferred (under $A$). Eventually, $w$ ends up with $\mu(w)$, which it prefers the most. Hence, $\mu(w) >_w^A f$, contradicting the assumption that $(w, f)$ is a blocking pair.
\end{proof}

\begin{lemma}
\label{lem:daalgorithm_firm_both_sides4}
If \Cref{alg:daalgorithm_firm_both_sides} returns a matching, it must be the firm-optimal matching.
\end{lemma}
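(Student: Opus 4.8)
The plan is to mirror the worker-optimal argument of \Cref{lem:daalgorithm_worker_both_sides3}, now exploiting that firms are the proposers. Let $\mu$ be the matching returned by \Cref{alg:daalgorithm_firm_both_sides}. By \Cref{lem:daalgorithm_firm_both_sides3} we already know $\mu \in \mathcal{M}_A \cap \mathcal{M}_B$, and by \Cref{thm:sublattice_two_side} this intersection is a sublattice of both $\mathcal{L}_A$ and $\mathcal{L}_B$ whose join and meet operations coincide; in particular it has a unique firm-optimal element $\nu$. I would argue by contradiction, assuming $\mu \neq \nu$, and aim to exhibit a firm--worker pair that is matched in $\nu$ yet was eliminated during the run of the algorithm.

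First I would pin down what ``firm-optimal'' means across two instances in which the firms' lists genuinely differ. Since the join operations under $A$ and $B$ agree on the sublattice (\Cref{thm:sublattice_two_side}), the join $\nu$ of all robust stable matchings is the same element whether computed in $\mathcal{L}_A$ or $\mathcal{L}_B$, namely the firm-optimal (equivalently worker-pessimal) matching. Hence every worker weakly prefers any robust stable matching $M$ to $\nu$, and applying opposition of interests separately within each of the two individually valid instances, every firm weakly prefers $\nu$ to $M$ in both $A$ and $B$. Consequently, if $\mu \neq \nu$, strictness of preferences yields a firm $f$ with $\nu(f) \neq \mu(f)$ and $\nu(f) >^A_f \mu(f)$ as well as $\nu(f) >^B_f \mu(f)$.

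The decisive step is the proposal argument. Firms propose in decreasing order of their room-local preference and maintain a single list shared across rooms, crossing off any worker who rejects them in either room (directly or via a preemptive rejection). By \Cref{lem:daalgorithm_firm_both_sides} the final matching $\mu$ is identical in both rooms. Working in room $\mathcal{R}_A$, where $\nu(f) >^A_f \mu(f)$: were $\nu(f)$ still on $f$'s list, firm $f$ would propose to $\nu(f)$ before ever settling for $\mu(f)$, and so could not terminate matched to $\mu(f)$. Therefore $\nu(f)$ must have been crossed off $f$'s list, i.e.\ worker $\nu(f)$ rejected $f$ in some room. By \Cref{lem:n_1_firm_bp}, the pair $(\nu(f), f)$ then cannot belong to any matching in $\mathcal{M}_A \cap \mathcal{M}_B$, contradicting the fact that $(\nu(f), f) \in \nu$. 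This forces $\mu = \nu$.

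The main obstacle I anticipate is the second step: reconciling firm-optimality across two instances whose firm preferences differ. Unlike the single-instance case, one must show that the ``worker each firm likes most'' is well-defined, instance-independent, and realized by a single matching on the sublattice; the resolution is precisely the coincidence of lattice operations from \Cref{thm:sublattice_two_side}, which lets me anchor firm-optimality to the shared element $\nu$ and then invoke opposition of interests inside each instance without worrying about the mismatch between the $A$- and $B$-orders of the changed worker $w_1$. Once this is in place, the remaining proposal/rejection bookkeeping is routine and exactly parallels \Cref{lem:daalgorithm_worker_both_sides3}.
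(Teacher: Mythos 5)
Your proof is correct and takes essentially the same route as the paper's: assume some firm $f$ has a better partner in the firm-optimal robust matching, observe that this partner must have been crossed off $f$'s list (i.e., rejected $f$ in some room, directly or preemptively), and derive a contradiction via \Cref{lem:n_1_firm_bp}. The paper's version is much terser and leaves implicit the opposition-of-interests step you spell out (that the coincidence of lattice operations from \Cref{thm:sublattice_two_side} makes the firm-optimal element well-defined and strictly preferred by $f$ in \emph{both} instances), so your write-up is a more careful rendering of the same argument rather than a different one.
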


\begin{proof}
If not, some firm $f$ has a better partner in the firm-optimal matching. But then, that firm would have proposed to that partner in all rooms and been rejected in at least one room, leading to a contradiction.
\end{proof}

These together prove that \Cref{alg:daalgorithm_worker_both_sides,alg:daalgorithm_firm_both_sides} find the worker- and firm-optimal robust stable matchings  respectively, completing the proof of \Cref{thm:n_1_algo_works}.

\subsection{Integrality of the Polytope}
\label{sec:polytope}

In \Cref{sec:lp_intro}, we described a linear programming formulation whose feasible region is the stable matching polytope and whose vertices are integral. We now extend this framework to robust stable matchings by defining an appropriate linear program whose feasible region captures matchings that are stable under multiple instances. Our goal is to characterize precisely when the resulting robust stable matching polytope is integral, as integrality enables efficient LP-based algorithms for computing robust stable matchings.

We show that the robust stable matching polytope is integral when $(A, B)$ is of type either $(n,1)$ or $(1,n)$, which in turn yields efficient algorithms for computing robust matchings in these settings.

The linear program for the robust stable matching instance $(A, B)$ is given below. The first two and the last constraints ensure that any feasible solution $x$ is a fractional perfect matching, while the middle two constraints prevent blocking pairs under instances $A$ and $B$, respectively.

\begin{equation}
\label{eq:lp_multi}
    \begin{aligned}[b]
        \text{maximize} \quad & 0 \\
        \text{subject to} \quad
        & \sum_w x_{wf} = 1 && \forall f \in F, \\
        & \sum_f x_{wf} = 1 && \forall w \in W, \\
        & \sum_{f >^A_w f'} x_{wf'} - \sum_{w' >^A_f w} x_{w'f} \leq 0 && \forall w \in W,\ \forall f \in F, \\
        & \sum_{f >^B_w f'} x_{wf'} - \sum_{w' >^B_f w} x_{w'f} \leq 0 && \forall w \in W,\ \forall f \in F, \\
        & x_{wf} \geq 0 && \forall w \in W,\ \forall f \in F.
    \end{aligned}
\end{equation}

A solution to LP~\eqref{eq:lp_multi} corresponds to a fractional matching that is stable under both $A$ and $B$. Analogous to the single-instance case (see \Cref{sec:lp_intro}), let $\theta \in [0,1]$ be chosen uniformly at random. Construct two integral perfect matchings $\mu^A_{\theta}$ and $\mu^B_{\theta}$ by performing the interval-based rounding procedure using the preference orders from instances $A$ and $B$, respectively.

\begin{restatable}{lemma}{lp}
\label{lem:lp}
The matchings $\mu^A_{\theta}$ and $\mu^B_{\theta}$ are identical for every $\theta \in [0,1]$, i.e., $\mu^A_{\theta} = \mu^B_{\theta}$.
\end{restatable}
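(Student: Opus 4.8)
The plan is to reduce everything to the single-instance rounding guarantee already in hand and then exploit that only one worker (in the $(1,n)$ case) has altered preferences. By symmetry it suffices to treat the $(1,n)$ case: the $(n,1)$ case follows by interchanging the roles of workers and firms and applying the identical argument to the firm-side intervals $I_f$ and the map $\mu'_\theta$. So assume exactly one worker $w_1$ changes its preference list between $A$ and $B$, while all other workers (and possibly all firms) may differ only on the firm side. First I would observe that a feasible solution $x$ of LP~\eqref{eq:lp_multi} restricts to a feasible solution of the single-instance LP~\eqref{eq:lp_single} for each of $A$ and $B$: the first, second, and fifth constraints are shared, the third constraint is exactly the $A$-stability system, and the fourth is exactly the $B$-stability system. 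Consequently \Cref{lem:lp_perfect} applies separately to each instance, so both $\mu^A_\theta$ and $\mu^B_\theta$ are genuine perfect matchings, i.e.\ bijections $\mathcal{W} \to \mathcal{F}$.

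The heart of the argument is that the rounding for an unchanged worker is literally the same under both instances. Fix any worker $w \neq w_1$. Its preference order is identical in $A$ and $B$, and the fractional values $x_{wf}$ are determined by the same LP solution $x$. Hence the subdivision of the unit interval $I_w$ into subintervals of lengths $x_{wf}$, ordered from $w$'s most preferred firm to its least, is the very same partition of $[0,1]$ in both roundings. Therefore the subinterval containing $\theta$ names the same firm in each, giving $\mu^A_\theta(w) = \mu^B_\theta(w)$ for every $w \neq w_1$.

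To finish, I would use a counting/bijection argument mirroring the one in \Cref{thm:sublattice_two_side}. The two maps $\mu^A_\theta$ and $\mu^B_\theta$ are bijections from $\mathcal{W}$ to $\mathcal{F}$ that agree on all $n-1$ workers other than $w_1$. Since a bijection between two $n$-element sets is determined on its last coordinate once it is fixed on the other $n-1$, the image of $w_1$ must be the unique firm not already assigned, forcing $\mu^A_\theta(w_1) = \mu^B_\theta(w_1)$ as well. Hence $\mu^A_\theta = \mu^B_\theta$ for every $\theta \in [0,1]$. The only point requiring genuine care—rather than a genuine obstacle—is the first step: verifying that \Cref{lem:lp_perfect} is legitimately invoked for each instance by confirming that the robust LP cleanly splits into the two single-instance stability systems, and that the interval layout for $w$ depends on the instance only through the altered worker $w_1$.
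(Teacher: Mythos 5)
Your proof is correct and follows essentially the same route as the paper's: both arguments note that the interval subdivision (and hence the rounded partner) is identical for every worker whose preferences are unchanged, and then use the fact that $\mu^A_\theta$ and $\mu^B_\theta$ are perfect matchings to force agreement on the one remaining worker. Your write-up is somewhat more careful than the paper's, since you explicitly justify invoking \Cref{lem:lp_perfect} by observing that a feasible point of LP~\eqref{eq:lp_multi} is feasible for each single-instance LP~\eqref{eq:lp_single}, a step the paper leaves implicit.
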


\begin{proof} 
    $\mu^A_{\theta}$ and $\mu^B_{\theta}$ are perfect matchings, and we know that the preferences of $n-1$ workers $w_1, \dots, w_{n-1}$ are the same in $A$ and $B$. This means that these $n-1$ workers are paired to $n-1$ distinct firms in both matchings. Since they are perfect matchings, the $n^{\text{th}}$ worker must be paired with the remaining firm. Hence, the matchings are identical. 
\end{proof}

\begin{restatable}{theorem}{nintegralpolytope}
\label{thm:n_1_integral_polytope} 
If $(A,B)$ is of type $(1,n)$ then $\mu^A_{\theta}$ is stable under both $A$ and $B$ and robust fractional stable matching polytope has integer optimal vertices.
\end{restatable}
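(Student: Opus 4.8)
The plan is to combine the rounding machinery of Teo and Sethuraman (\Cref{lem:lp_perfect,lem:lp_stable}) with the collapse established in \Cref{lem:lp}. First I would fix any feasible solution $x$ of LP~\eqref{eq:lp_multi} together with a threshold $\theta \in [0,1]$, and form the two rounded matchings $\mu^A_\theta$ and $\mu^B_\theta$ by the interval procedure of \Cref{sec:lp_intro} using the preference orders of $A$ and $B$ respectively. Since $x$ satisfies the blocking-pair constraints for $A$, \Cref{lem:lp_stable} applied to instance $A$ guarantees that $\mu^A_\theta$ is stable under $A$; symmetrically, because $x$ also satisfies the constraints for $B$, the matching $\mu^B_\theta$ is stable under $B$. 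By \Cref{lem:lp} these two matchings coincide, so the common matching $\mu_\theta := \mu^A_\theta = \mu^B_\theta$ is simultaneously stable under $A$ and under $B$. This is exactly the first assertion of the theorem.

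For the integrality claim I would run the standard averaging argument, but now with the matchings $\mu_\theta$, each of which is a genuine \emph{integral robust} stable matching. By construction the subinterval of $w$'s interval associated with firm $f$ has length $x_{wf}$, so drawing $\theta$ uniformly from $[0,1]$ gives $\Pr[\mu_\theta(w)=f] = x_{wf}$, that is, $x_{wf} = \int_0^1 \mathbf{1}[\mu_\theta(w)=f]\,d\theta$. Because the interval boundaries occur at only finitely many values of $\theta$, the family $\{\mu_\theta : \theta \in [0,1]\}$ is finite, and this identity exhibits $x$ as a finite convex combination of the integral matchings $\mu_\theta$. Each $\mu_\theta$ is feasible for LP~\eqref{eq:lp_multi}, since it is stable under both instances, so every feasible $x$ lies in the convex hull of integral robust stable matchings. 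Hence the polytope equals the convex hull of its integral points, and every vertex — in particular every optimal vertex — is integral.

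The only delicate point, and the one I would be most careful to flag, is the use of \Cref{lem:lp}: it relies crucially on the $(1,n)$ structure, namely that $n-1$ workers retain identical preference lists across $A$ and $B$, which forces the two roundings to agree on those workers and hence on the last worker as well. Without this collapse, $\mu^A_\theta$ and $\mu^B_\theta$ could differ, and neither rounding would be guaranteed to be stable in the other instance, so the common integral witness would not exist. This is precisely the obstruction that the $(2,2)$ counterexample of \Cref{thm:2_2_not_integral_polytope} exploits. Everything else is a direct transcription of the single-instance proof, so no new computation is required beyond invoking \Cref{lem:lp_perfect,lem:lp_stable,lem:lp}.
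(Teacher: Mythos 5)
Your proposal is correct and follows essentially the same route as the paper: apply \Cref{lem:lp_stable} instance-wise to get stability of $\mu^A_\theta$ under $A$ and $\mu^B_\theta$ under $B$, invoke \Cref{lem:lp} to identify the two roundings, and then express any feasible $x$ as a convex combination of the (finitely many) integral matchings $\mu_\theta$ exactly as in \Cref{thm:lp_polytope}. The paper's own proof is just a terser version of this argument, so your write-up — including the explicit averaging identity and the remark on where the $(1,n)$ hypothesis enters — is a faithful, slightly more detailed rendering of it.
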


\begin{proof} 
    From \Cref{lem:lp}, $\mu^A_{\theta}$ and $\mu^B_{\theta}$ are the same. Therefore, since it is integral and a solution to LP~\eqref{eq:lp_multi}, it is stable under both $A$ and $B$. This means that the set of all fractional robust stable matchings can be written as a convex combination of integral solutions, similar to \Cref{thm:lp_polytope}. This proves that the polytope is integral.
\end{proof}

Hence, in the $(n,1)$ and $(1,n)$ settings, LP~\eqref{eq:lp_multi} provides an efficient method to compute a robust stable matching. However, this technique fails for $(2,2)$ instances—the proof of \Cref{lem:lp} does not extend, and, in fact, the polytope may not even be integral, see example in \Cref{ex:2_2_not_integral}.

\begin{restatable}{theorem}{notintegralpolytope}
\label{thm:2_2_not_integral_polytope}
If $(A,B)$ is of type $(2,2)$ then robust stable matching polytope is not always integral.
\end{restatable}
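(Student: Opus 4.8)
The plan is to establish non-integrality by exhibiting a single $(2,2)$ counterexample together with an explicit fractional feasible point of LP~\eqref{eq:lp_multi} that cannot be written as a convex combination of integral feasible points. Concretely, I would work with the instance in \Cref{ex:2_2_not_integral} on four workers and four firms, in which two workers and two firms permute their lists between $A$ and $B$ (the same flavor of instance that defeats the sublattice property in \Cref{thm:both_sides_not_sublattice}). The first observation is that the integral points of the polytope defined by LP~\eqref{eq:lp_multi} are exactly the robust stable matchings: a $0/1$ solution to the first two constraints is a perfect matching, and the two families of blocking-pair constraints are satisfied precisely when there is no blocking pair under $A$ and none under $B$. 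Hence, if the polytope were integral, it would equal $\mathrm{conv}(\mathcal{M}_A \cap \mathcal{M}_B)$.

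The key step is to enumerate $\mathcal{M}_A \cap \mathcal{M}_B$ for this instance---by computing $\mathcal{M}_A$ and $\mathcal{M}_B$ via their rotation posets and intersecting---and to locate a worker--firm pair $(w,e)$ that appears in \emph{no} robust stable matching, so that $x_{we} = 0$ holds throughout $\mathrm{conv}(\mathcal{M}_A \cap \mathcal{M}_B)$. I would then construct an explicit fractional point $x^\ast$, a half-integral assignment whose support forms an alternating cycle over the relevant workers and firms, that places strictly positive weight $x^\ast_{we} > 0$ on this forbidden pair.

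It remains to verify that $x^\ast$ is feasible for LP~\eqref{eq:lp_multi}: the marginal constraints $\sum_w x_{wf} = 1$ and $\sum_f x_{wf} = 1$, together with nonnegativity, are immediate from the cyclic half-integral structure, while the two families of blocking-pair inequalities must be checked under both $A$ and $B$. Once feasibility is confirmed, the conclusion is immediate: since every $M \in \mathcal{M}_A \cap \mathcal{M}_B$ has $x_{we} = 0$ but $x^\ast_{we} > 0$, the feasible point $x^\ast$ lies outside $\mathrm{conv}(\mathcal{M}_A \cap \mathcal{M}_B)$. The polytope therefore strictly contains the convex hull of its integer points and, unlike in \Cref{thm:lp_polytope}, is not integral.

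The main obstacle I anticipate is not the separation argument---which is clean once the forbidden pair is identified---but the simultaneous design of the instance and the fractional point: I need a pair $(w,e)$ unused by any robust stable matching yet reachable by a feasible fractional assignment, and $x^\ast$ must dodge all of the stability constraints of both instances at once. This is exactly the coupling between the two instances that the $(1,n)$ proof of \Cref{thm:n_1_integral_polytope} enforces through \Cref{lem:lp}, and it is the failure of that coupling at $(2,2)$---where the identical-preferences argument pinning down the last worker no longer applies---that I would exploit.
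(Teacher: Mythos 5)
Your proposal is correct and would go through, but it certifies non-integrality by a more roundabout route than the paper, and it misses the feature of \Cref{ex:2_2_not_integral} that makes the paper's argument nearly immediate. When you actually enumerate the stable matchings of that instance you find $\mathcal{M}_A = \{M_A^1, M_A^2\}$ and $\mathcal{M}_B = \{M_B^1, M_B^2\}$ with $\mathcal{M}_A \cap \mathcal{M}_B = \emptyset$: there are \emph{no} robust stable matchings at all. Consequently your ``forbidden pair'' machinery is vacuous (every pair $(w,f)$ appears in no robust stable matching), and the separation step collapses to the trivial observation that a nonempty polytope with no integral feasible points cannot be the convex hull of its integer points. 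The second simplification you miss concerns feasibility of the fractional point: rather than checking the blocking-pair inequalities of both instances by hand, the paper observes that the half-integral point satisfies $\frac{1}{2}(M_A^1 + M_A^2) = \frac{1}{2}(M_B^1 + M_B^2)$, i.e.\ it is simultaneously a convex combination of $A$-stable matchings and of $B$-stable matchings, so it satisfies both families of constraints in LP~\eqref{eq:lp_multi} by linearity, with no case analysis. What your approach buys is generality: the forbidden-pair separation argument would also establish non-integrality for a $(2,2)$ instance in which $\mathcal{M}_A \cap \mathcal{M}_B$ is nonempty but the polytope still has a fractional vertex, a situation the paper's ``empty intersection'' shortcut cannot handle. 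What the paper's approach buys is brevity and a stronger punchline: in its example the polytope is a single fractional point, so it is as far from integral as possible, and the same instance doubles as a witness that LP-based existence testing fails in the $(2,2)$ regime.
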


\begin{proof}
Consider the example provided in \Cref{ex:2_2_not_integral}, where instances $A$ and $B$ are defined on 4 workers $\{1,2,3,4\}$ and 4 firms $\{a,b,c,d\}$. Only workers 1 and 2 and firms $a$ and $b$ change their preferences from $A$ to $B$.

Instance $A$ has exactly two stable matchings:
\[
M_A^1 = \{(1,a), (2,b), (3,c), (4,d)\}, \quad
M_A^2 = \{(1,b), (2,a), (3,d), (4,c)\}.
\]
Instance $B$ also has exactly two stable matchings:
\[
M_B^1 = \{(1,b), (2,a), (3,c), (4,d)\}, \quad
M_B^2 = \{(1,a), (2,b), (3,d), (4,c)\}.
\]
And so, there are no robust stable matchings for this pair of $A$ and $B$. 

Now, consider the fractional matching $M = \frac{1}{2}(M_A^1 + M_A^2) = \frac{1}{2}(M_B^1 + M_B^2)$. Since $M$ is a convex combination of stable matchings under both $A$ and $B$, it is a feasible solution to LP~\eqref{eq:lp_multi}. Furthermore, $M$ is the unique solution to the LP in this instance.

Therefore, the robust stable matching polytope is a singleton fractional point, showing that the polytope is not integral in general for $(2,2)$ instances.
\end{proof}

\begin{figure}[ht]
	\begin{wbox}
	\centering
	\begin{minipage}{.45\linewidth}
        \centering
		\begin{tabular}{p{0.5cm}|p{0.5cm}p{0.5cm}p{0.5cm}p{0.5cm}p{0.5cm}}
			1  & a & c & b & d \\
			2  & b & a & c & d \\
			3  & c & d & a & b \\
			4  & d & c & a & b \\
		\end{tabular}
		
		\hspace{1cm}
		
		Worker preferences in $A$ 
		
		\hspace{1cm}
		
	\end{minipage}%
	\begin{minipage}{.45\linewidth}
        \centering
	\begin{tabular}{p{0.5cm}|p{0.5cm}p{0.5cm}p{0.5cm}p{0.5cm}p{0.5cm}}
		a  & 2 & 1 & 3 & 4 \\
		b  & 1 & 2 & 3 & 4 \\
		c  & 4 & 1 & 3 & 2 \\
		d  & 3 & 1 & 4 & 2 \\
	\end{tabular}
	
	\hspace{1cm}
	
	Firm preferences in $A$
	 
	\hspace{1cm}
        \end{minipage}%

        \begin{minipage}{.45\linewidth}
        \centering
		\begin{tabular}{p{0.5cm}|p{0.5cm}p{0.5cm}p{0.5cm}p{0.5cm}p{0.5cm}}
			\cellcolor{red!25}1  & \cellcolor{red!25}b & \cellcolor{red!25}d & \cellcolor{red!25}a & \cellcolor{red!25}c \\
			\cellcolor{red!25}2  & \cellcolor{red!25}a & \cellcolor{red!25}b & \cellcolor{red!25}c & \cellcolor{red!25}d \\
			3  & c & d & a & b \\
			4  & d & c & a & b \\
		\end{tabular}
		
		\hspace{1cm}
		
		Worker preferences in $B$ 
		
		\hspace{1cm}
		
	\end{minipage}%
	\begin{minipage}{.45\linewidth}
        \centering
	\begin{tabular}{p{0.5cm}|p{0.5cm}p{0.5cm}p{0.5cm}p{0.5cm}p{0.5cm}}
		\cellcolor{red!25}a  & \cellcolor{red!25}1 & \cellcolor{red!25}2 & \cellcolor{red!25}3 & \cellcolor{red!25}4 \\
		\cellcolor{red!25}b  & \cellcolor{red!25}2 & \cellcolor{red!25}1 & \cellcolor{red!25}3 & \cellcolor{red!25}4 \\
		c  & 4 & 1 & 3 & 2 \\
		d  & 3 & 1 & 4 & 2 \\
	\end{tabular}
	
	\hspace{1cm}
	
	Firm preferences in $B$
	 
	\hspace{1cm}
        \end{minipage}%

	\end{wbox}
	\caption{The robust stable matching polytope for this $(2,2)$ instance is not integral.}
	\label{ex:2_2_not_integral} 
\end{figure}


All the results from this section extend to more than two instances. Let $(A_1, A_2, \ldots, A_k)$ be a robust stable matching instance that is $(1,n)$, i.e., $n-1$ of the $n$ workers have identical preference profiles across all instances. Then:

\begin{restatable}{theorem}{nmultiple}
\label{thm:n_1_multiple}
For a $(1,n)$ robust stable matching instance $(A_1, A_2, \ldots, A_k)$ with $k \geq 2$, with the same worker changing preferences across instances, the set $\mathcal{L}' = \mathcal{M}_{A_1} \cap \mathcal{M}_{A_2} \cap \ldots \cap \mathcal{M}_{A_k}$ forms a sublattice of each of the lattices $\mathcal{L}_{A_i}$. The LP formulation can be used to find the matchings in $\mathcal{L}'$, and the worker-optimal and firm-optimal robust stable matchings can be computed in polynomial time. Furthermore, if the instances are also $(0,n)$, then Birkhoff's partial order generating $\mathcal{L}'$ can be computed efficiently, and its matchings can be enumerated with polynomial delay.
\end{restatable}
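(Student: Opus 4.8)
The plan is to lift each of the four assertions from the two-instance $(1,n)$ results (\Cref{thm:sublattice_two_side}, \Cref{thm:n_1_integral_polytope}, \Cref{thm:n_1_algo_works}, and \Cref{thm:birkhoff_one_side}) to the setting of $k$ instances, exploiting the structural fact that $n-1$ workers share an identical preference list across all of $A_1,\dots,A_k$ and that only a single worker $w_1$ varies.

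For the sublattice claim, I would fix $M_1,M_2\in\mathcal{L}'$ and show that the join $M_1\vee_{A_i}M_2$ is the same matching for every $i$. The join assigns each worker its less-preferred partner under $A_i$; since the $n-1$ unchanged workers have identical lists in all instances, they receive identical partners in each $M_1\vee_{A_i}M_2$, occupying the same $n-1$ firms. Because every join is a perfect matching, the lone worker $w_1$ is forced to the unique remaining firm, which is the same across all instances. Hence all $k$ joins coincide in one matching $M^{\ast}$; as $M^{\ast}=M_1\vee_{A_i}M_2$ is stable in each $A_i$, we get $M^{\ast}\in\mathcal{L}'$, and symmetrically for the meet. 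This shows $\mathcal{L}'$ is closed under the join and meet of each $\mathcal{L}_{A_i}$, i.e.\ is a sublattice of every $\mathcal{L}_{A_i}$, exactly as in the proof of \Cref{thm:sublattice_two_side} but with the two-room forcing argument replaced by a $k$-room one.

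For integrality I would extend LP~\eqref{eq:lp_multi} by including one block of blocking-pair constraints per instance $A_i$, and extend \Cref{lem:lp}: the interval-rounding procedure produces matchings $\mu^{A_i}_{\theta}$ that all agree on the $n-1$ common workers and therefore, being perfect, agree on $w_1$ as well, so the $k$ rounded matchings coincide in a single integral matching that is simultaneously stable in every instance. This writes any feasible fractional point as a convex combination of integral robust stable matchings, giving integrality and an LP route to computing members of $\mathcal{L}'$. For the worker- and firm-optimal matchings, I would run the $k$-room analogues of \Cref{alg:daalgorithm_worker_both_sides} and \Cref{alg:daalgorithm_firm_both_sides}, with one room per instance and rejections propagated across all rooms. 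The correctness lemmas transfer with little change: the ``no distinct matchings across rooms'' argument of \Cref{lem:nomultmatchings} still applies, since any worker other than $w_1$ ending with different partners in two rooms would have crossed that firm off everywhere; the rejection-feasibility argument (\Cref{lem:n_1_worker_bp}) applies per room; and optimality follows as before. All $k$ rooms run Deferred Acceptance, so the running time stays polynomial in $kn$. Finally, for the $(0,n)$ refinement (where $w_1$ too is unchanged, so only firms differ), I would reduce to the single-side machinery: treating $A_1$ as the base, I would apply the hybrid-instance decomposition of \Cref{thm:breaking_instances_one_side} to each $A_j$ separately, expressing $\mathcal{L}'=\mathcal{M}_{A_1}\cap\bigcap_{j,i}\mathcal{M}_{B_{j,i}}$ where each $(A_1,B_{j,i})$ is of type $(0,1)$. \Cref{prop:MABcompute} then yields an edge set $E_{j,i}$ defining each $\mathcal{M}_{A_1}\cap\mathcal{M}_{B_{j,i}}$ in polynomial time, and by \Cref{lem:poset} and \Cref{thm:main} their union defines $\mathcal{L}'$ and determines the generating compression (Birkhoff partial order); standard closed-set enumeration on this poset then lists all robust matchings with polynomial delay.

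I expect the only genuinely delicate point to be the algorithmic part: verifying that synchronizing rejections across $k$ rooms still forces a single common perfect matching and preserves optimality, since this is where the asymmetry of $w_1$ must be handled carefully. The remaining parts are essentially mechanical liftings of the two-instance proofs, with the two-room forcing arguments replaced by their $k$-room counterparts.
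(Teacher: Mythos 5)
Your proposal is correct and follows essentially the same route as the paper's own (much terser) proof: the sublattice claim via the perfect-matching forcing argument of \Cref{thm:sublattice_two_side} lifted to $k$ instances, integrality via the LP with one block of blocking-pair constraints per instance and the agreement of the rounded matchings, worker/firm-optimality via $k$-room Deferred Acceptance with synchronized rejections, and the $(0,n)$ Birkhoff claim via the hybrid $(0,1)$ decomposition together with \Cref{lem:poset} and \Cref{thm:main}. The only difference is that you unfold the hybrid-instance decomposition explicitly where the paper simply cites the edge sets for each pair $(A_1,A_i)$, and you spell out details the paper leaves implicit.
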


\begin{proof}
    $\mathcal{L}'$ is a sublattice of each $\mathcal{L}_{A_i}$ as a corollary of \Cref{thm:sublattice_two_side}. The LP approach can be used to find stable matchings in the intersection: The new formulation will have $k$ blocking pair conditions—one for each instance $A_i$—to ensure that no worker–firm pair $(w,f)$ is a blocking pair for $x$ under that instance. \Cref{alg:daalgorithm_worker_both_sides,alg:daalgorithm_firm_both_sides} can be used to find the worker- and firm-optimal matchings. Define a room for each instance and run the Deferred Acceptance algorithm in each room according to the preferences of the corresponding instance. If the same perfect matching $M$ is obtained in every room, it is output by the algorithm and is stable with respect to all instances.

    When the instances are $(0,n)$, Birkhoff's partial order for $\mathcal{L}'$ can be found by identifying the edge set representing $\mathcal{M}_{A_1} \cap \mathcal{M}_{A_i}$ for each instance $A_i$, and by \Cref{lem:poset}, the union of these edge sets defines $\mathcal{L}'$. This provides a way to efficiently enumerate the matchings.
\end{proof}

\subsection{A Robust Stable Matching Algorithm for the General Case}
\label{sec:robust_general_algo}

\Cref{sec:structure,sec:optimal_matchings,sec:polytope} establish that finding robust stable matchings for $(1,n)$ instances can be done in polynomial time, whereas \cite{NP-Two-stable} shows that the problem becomes NP-hard when $(A,B)$ is of type $(n,n)$. We now consider the fully general setting in which instances $A$ and $B$ differ arbitrarily on the preferences of a subset of agents $S \subseteq \mathcal{W} \cup \mathcal{F}$, where $|S \cap \mathcal{W}| = p$ and $|S \cap \mathcal{F}| = q$.

\Cref{alg:robust_xp} finds a robust stable matching whenever one exists. By \Cref{thm:robust_xp}, the algorithm runs in time $O(n^{p+q+2})$ and can be modified to enumerate all robust stable matchings with the same asymptotic delay. This yields an XP-time algorithm parameterized by the total number of agents whose preferences differ between the two instances.

Intuitively all possible partner assignments for agents in $S$ are enumerated, there are at most $O(n^{p+q})$ possibilities. Each one defines a partial matching $M$ for the agents in $S$, and the algorithm attempts to determine whether $M$ can be extended to a full matching that is stable under $A$ and $B$. First it verifies that $M$ does not contain any blocking pairs within $T = S \cup M(S)$. If no blocking pair exists, define truncated instance $X$ on the remaining agents $U = (W \cup F) \setminus T$. In $X$, the preference lists of agents in $U$ are shortened to eliminate any potential blocking pairs involving agents in $T$. A stable matching $M'$ is computed on $X$. If $M'$ is a perfect matching on $U$, it can be shown that the combined matching $M^\star = M \cup M'$ is a stable matching on the full set of agents in both $A$ and $B$.

\begin{algorithm}[ht]
  \begin{wbox}
    \textsc{RobustStableMatching}$(A,B)$:\\
    \textbf{Input:} Stable matching instances $A$ and $B$ on agents $W \cup F$.\\
    \textbf{Output:} Perfect matching $M^\star$ or $\boxtimes$ (when no robust stable matching exists).\\
    $S \subseteq W \cup F$ are agents whose preferences differ in $A$ and $B$.
    \begin{enumerate}
      \item \textbf{For} each assignment of partners to agents in $S$ that defines a valid partial matching:
      \begin{enumerate}
        \item Let $M$ be the resulting partial matching and $T \coloneqq S \cup M(S)$.
        \item \textbf{If} there exists $w \in T \cap W$ and $f \in T \cap F$ such that $(w,f)$ is a blocking pair with respect to $M$ in either $A$ or $B$, \textbf{continue} to the next iteration.
        \item Remove all agents in $T$ to obtain the instance $X$ on $U \coloneqq (W \cup F)\setminus T$;
        \item For each $a \in T$ and $b \in U$, if $b \succ_a M(a)$ in either $A$ or $B$, then truncate $b$'s preference list at $a$ in $X$ (i.e., remove $a$ and all agents ranked below $a$).
        \item Find stable matching $M'$ in $X$.
        \item {If} $M'$ is a perfect matching on $U$, \textbf{return} $M^\star \coloneqq M \cup M'$.
      \end{enumerate}
      \item \textbf{Return} $\boxtimes$.
    \end{enumerate}
  \end{wbox}
  \caption{An XP-time algorithm to find a robust stable matching for small $(p,q)$.}
  \label{alg:robust_xp}
\end{algorithm}

\begin{restatable}{theorem}{robustalgoworks}
\label{thm:robust_xp}
    For instances $A$ and $B$ of type $(1,n)$. \Cref{alg:robust_xp} finds a robust stable matching in $O\!\left(n^{p+q+2}\right)$ time, or correctly reports that no such matching exists. And the set of robust stable matchings can be enumerated with $O\!\left(n^{p+q+2}\right)$ delay.
 \end{restatable}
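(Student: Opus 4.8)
The plan is to verify Algorithm~\ref{alg:robust_xp} on four fronts: \emph{soundness} (every matching it outputs is robust stable), \emph{completeness} (whenever a robust stable matching exists it outputs one rather than $\boxtimes$), the running-time bound, and the enumeration delay. Fix an iteration and write $T = S \cup M(S)$ and $U = (\mathcal{W}\cup\mathcal{F})\setminus T$. The key fact I would use repeatedly is that every agent of $U$ has identical preferences in $A$ and $B$ (since $U\cap S=\emptyset$), so the truncated instance $X$ on $U$ is a genuine single-profile stable-matching instance with incomplete lists, and ``blocking within $U$'' means the same thing under $A$ and under $B$.

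For soundness, suppose the algorithm returns $M^\star = M \cup M'$ and take any candidate blocking pair $(w,f)$. I would split into three cases. If $w,f\in T$, the pair is ruled out directly by Step~1(b). If the pair straddles, say $w\in T$, $f\in U$ with $f >_w M(w)$, then the truncation rule in Step~1(d) (applied with the agent of $T$ playing the role of $a$ and using the ``either $A$ or $B$'' clause) has deleted $w$ and everything below it from $f$'s list, so $M'(f) >_f w$ and $(w,f)$ cannot block; the symmetric straddle is identical. If $w,f\in U$, then either both still lie on each other's truncated lists, in which case $(w,f)$ would block the stable matching $M'$ of $X$ (contradiction), or $f$ was truncated off $w$'s list at some $a\in T$, whence $M'(w) >_w a >_w f$ and $w$ does not prefer $f$. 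Because $U$-agents share their $A$ and $B$ orders, this handles blocking under either instance, so $M^\star$ is stable in both.

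For completeness, let $N$ be any robust stable matching and examine the iteration with $M = N|_S$, so $M(a)=N(a)$ for all $a\in T$. I would first observe that $N$ matches $U$ into $U$: an agent of $U$ matched to an agent of $T$ would have to lie in $S$ or in $N(S)=M(S)$, a contradiction. Step~1(b) passes since $N$ blocks in neither instance. The crucial claim is that $N|_U$ is a \emph{stable} matching of $X$: no $N$-pair inside $U$ is destroyed by truncation (otherwise that pair would block $N$ in whichever of $A,B$ triggered the truncation, using that the $U$-agent's order agrees across instances), and no surviving pair blocks $N|_U$ in $X$ (else it blocks $N$). Thus $X$ has a perfect stable matching; by the Rural Hospitals theorem all stable matchings of $X$ saturate the same agents, so the $M'$ computed in Step~1(e) is perfect and the iteration returns a robust stable matching. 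Hence the algorithm never outputs $\boxtimes$ when a robust stable matching exists.

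The running time then follows: there are $O(n^{p+q})$ valid partial matchings $M$, and each iteration—scanning the $O((p+q)^2)$ pairs inside $T$, building $X$, running Deferred Acceptance on $X$, and testing perfectness—costs $O(n^2)$, for a total of $O(n^{p+q+2})$. For enumeration I would invoke the bijection, valid for each fixed $M$, between robust stable matchings $N$ with $N|_S=M$ and the (necessarily perfect) stable matchings of $X$, which is exactly the soundness and completeness arguments above read in both directions; distinct $M$ give disjoint families, so nothing is listed twice. Since the stable matchings of a single instance can be enumerated with polynomial delay, one sweeps the $O(n^{p+q})$ choices of $M$ and enumerates the stable matchings of each associated $X$, and between two consecutive outputs at most all iterations (each $O(n^2)$) are scanned, giving delay $O(n^{p+q+2})$. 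I expect the main obstacle to be precisely the correctness of the truncation step—the claim that stable matchings of the reduced instance $X$ correspond to robust stable extensions of $M$. The delicate point is the simultaneous bookkeeping of the two profiles: truncations fire whenever a pair is out of order in \emph{either} $A$ or $B$, whereas stability of $M'$ in $X$ is single-profile, and it is exactly the agreement of the surviving $U$-agents' preferences across $A$ and $B$ that reconciles these two viewpoints in both directions.
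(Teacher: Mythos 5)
Your proposal is correct and follows essentially the same route as the paper's proof: the same three-case blocking-pair analysis for soundness, the same restriction-plus-Rural-Hospitals argument for completeness, and the same counting of $O(n^{p+q})$ iterations at $O(n^2)$ work each for the time and enumeration-delay bounds. If anything, you are slightly more careful than the paper on two points it leaves implicit—that $N|_U$ actually survives the truncation (so it is a feasible matching of $X$), and that distinct choices of $M$ yield disjoint families so the enumeration lists no matching twice.
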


\begin{proof}
    

Let $A$ and $B$ be two instances on workers $W$ and firms $F$. Let $S$ be the set of agents whose preferences differ between $A$ and $B$. In a fixed iteration of \Cref{alg:robust_xp}, let $M$ denote the partial matching on $S$ chosen in Step~1(a), let $T \coloneqq S \cup M(S)$, and let $U \coloneqq (W \cup F)\setminus T$. The truncated instance $X$ on $U$ is obtained in Steps~1(c)--(d). Since all agents with differing preferences are removed, the preferences of agents in $U$ are identical in both $A$ and $B$.

\begin{lemma}
\label{lem:robust_completeness}
If there exists a robust stable matching $M^\star \in M_A \cap M_B$, then \Cref{alg:robust_xp} returns a perfect matching.
\end{lemma}

\begin{proof}
Consider the iteration that selects the partial matching $M \coloneqq M^\star|_S$ in Step~1(a). Let $M' \coloneqq M^\star|_U$. 

First, note that there are no blocking pairs $(w, f)$ with $w, f \in T$ for $M$ under either instance $A$ or $B$; otherwise, $M^\star$ would not be stable in the corresponding instance. Therefore, the algorithm does not abort at Step~1(b) during this iteration.

Next, observe that $M'$ is also stable in $X$. Otherwise, any blocking pair for $M'$ in $X$ would correspond to a blocking pair in both $A$ and $B$, contradicting the stability of $M^\star$. Furthermore, since $M^\star$ is a perfect matching on $W \cup F$, $M'$ must be a perfect matching on $U$.

Since $M'$ is a perfect stable matching in $X$, the Rural Hospital Theorem implies that every stable matching in $X$ is perfect. Hence, the algorithm must output a perfect matching in this iteration.

\end{proof}

\begin{lemma}
\label{lem:robust_soundness}
If \Cref{alg:robust_xp} returns a perfect matching $M^\star$, then $M^\star \in \mathcal{M}_A \cap \mathcal{M}_B$.
\end{lemma}

\begin{proof}
Let $M \coloneqq M^\star|_S$ be the partial matching on $S$, and let $M' \coloneqq M^\star|_U$ be the perfect stable matching in $X$ obtained in the iteration in which the algorithm returned $M^\star$. Fix any instance $I \in \{A, B\}$. Any potential blocking pair $(a, b)$ under $I$ must fall into one of the following cases:
\begin{enumerate}
    \item \emph{Both agents in $T$.} This is ruled out by Step~1(b), which discards the iteration if any such pair blocks $M$ in either instance.
    
    \item \emph{One agent in $T$ and the other in $U$.} Without loss of generality, let $a \in T$ and $b \in U$. If $(a, b)$ is a blocking pair under $I$, then $b \succ_a^I M(a)$ and $a \succ_b^I M'(b)$. However, since $b \succ_a^I M(a)$, agent $a$ and all agents ranked worse than $a$ are removed from $b$'s preference list in instance $X$ during Step~1(d). Thus, $b$ cannot be matched to an agent ranked below $a$, contradicting $a \succ_b^I M'(b)$.
    
    \item \emph{Both agents in $U$.} If $(a, b)$ is a blocking pair, then $b \succ_a^I M'(a)$ and $a \succ_b^I M'(b)$. Note that we only truncate the lower ends of each agent’s preference list. In particular, if $x \in U$ is matched to $M'(x)$, then every agent that $x$ prefers to $M'(x)$ in instance $I$ must still appear in $x$'s preference list in instance $X$. Therefore, the pair $(a, b)$ must also be blocking under instance $X$, contradicting the fact that Step~1(e) returned $M'$ as stable.
\end{enumerate}
Thus $M^\star$ admits no blocking pair in $I$, and since $I \in \{A, B\}$ was arbitrary, $M^\star \in \mathcal{M}_A \cap \mathcal{M}_B$.
\end{proof}

\begin{lemma}
\label{lem:robust_xp}
\Cref{alg:robust_xp} decides the existence of a robust stable matching in time $O\!\left(n^{p+q+2}\right)$.
\end{lemma}

\begin{proof}
We bound the work per iteration and the total number of iterations:
\begin{enumerate}
    \item \emph{Number of iterations.} Step~1 iterates over all injective partner assignments to the $|S| = p + q$ agents, which yields $O(n^{p+q})$ possibilities.
    
    \item \emph{Run time per-iteration.} Step~1(b) checks for blocking pairs within $T$; Steps~1(c)--(d) construct the truncated instance $X$; Step~1(f) finds a stable matching in $X$; and Step~1(g) verifies whether $M'$ is a perfect matching on $U$. Each of these steps can be implemented in $O(n^2)$ time using standard data structures from the $O(n^2)$ implementation of the Deferred Acceptance algorithm.
\end{enumerate}

Hence, the overall runtime is $O(n^{p+q+2})$. Correctness follows from \Cref{lem:robust_completeness,lem:robust_soundness}.
\end{proof}

\begin{lemma}
\label{lem:robust_xp_enumeration}
The set of robust stable matchings of instances $A$ and $B$ can be enumerated with $O(n^{p+q+2})$ delay.
\end{lemma}

\begin{proof}
\Cref{alg:robust_xp} can be modified to enumerate all robust stable matchings as follows: instead of returning a single stable matching in Step~1(f), enumerate all stable matchings of the truncated instance $X$ using a known enumeration algorithms of its stable matching lattice structure with $O(n^2)$ delay per matching.

Since the outer loop of Step~1 iterates over $O(n^{p+q})$ choices of partial matchings on $S$, and each such iteration performs $O(n^2)$ work (or terminates early at Step~1(b)), the delay between any two outputs is at most $O(n^{p+q+2})$.

Hence, all robust stable matchings can be enumerated with $O(n^{p+q+2})$ delay.
\end{proof}

Together, the preceding lemmas establish \Cref{thm:robust_xp}.

\end{proof}

\begin{restatable}{corollary}{robust_xp}
\label{cor:xp_time}
If a constant number of agents change their preferences, then deciding whether a robust stable matching exists and finding one can be achieved in polynomial time. All such matchings can be enumerated with polynomial delay.
\end{restatable}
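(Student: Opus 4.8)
The plan is to obtain this corollary as a direct specialization of \Cref{thm:robust_xp}, turning its XP-time bound into a polynomial one by fixing the parameter. Concretely, suppose a constant number $c$ of agents change their preferences between $A$ and $B$. Write $p = |S \cap \mathcal{W}|$ and $q = |S \cap \mathcal{F}|$ for the number of changing workers and firms, so that $p + q = c$ with $c$ independent of $n$. \Cref{thm:robust_xp} already establishes that \Cref{alg:robust_xp} decides whether a robust stable matching exists, returns one when it does, and otherwise outputs $\boxtimes$, all within time $O(n^{p+q+2})$. I would simply substitute $p + q = c$ into this bound to obtain running time $O(n^{c+2})$, which is polynomial in $n$ precisely because $c$ is a fixed constant.

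For the enumeration claim I would invoke the second half of \Cref{thm:robust_xp}, which guarantees that the set of all robust stable matchings can be enumerated with delay $O(n^{p+q+2})$. Making the same substitution $p+q=c$, this delay becomes $O(n^{c+2})$, i.e.\ polynomial delay. Thus the three assertions of the corollary---decision, search, and enumeration---are all immediate consequences of plugging a constant value of $p+q$ into the parameterized bounds already proved.

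I do not expect a genuine obstacle: the corollary records nothing beyond the observation that a constant exponent converts the XP bound $O(n^{p+q+2})$ into a polynomial. All of the substantive work resides in \Cref{thm:robust_xp} itself, whose proof bounds the number of injective partner assignments to the $p+q$ agents of $S$ by $O(n^{p+q})$ and the per-iteration cost---checking for blocking pairs within $T$, truncating preference lists to form the instance $X$, and running Deferred Acceptance on $X$---by $O(n^2)$. The present statement merely extracts the tractability consequence for the fixed-parameter regime $p+q = O(1)$.
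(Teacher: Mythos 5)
Your proposal is correct and is essentially identical to the paper's own derivation: the corollary is obtained by specializing \Cref{thm:robust_xp} to the case $p+q = O(1)$, so that both the $O(n^{p+q+2})$ running time for decision/search and the $O(n^{p+q+2})$ enumeration delay become polynomial. No further argument is needed.
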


\Cref{thm:robust_xp} implies \Cref{cor:xp_time} as the running time is polynomial if a constant number of agents change preferences. The decision and enumeration problems for robust stable matchings can be solved efficiently. This shows that the set of robust stable matchings can be expressed as a union of $O(n^{p+q})$ stable matching lattices, one for each consistent assignment of partners to agents in $S$. 

However the partial orders defining these lattices can differ significantly. It remains unclear whether this decomposition yields an efficient method for computing Birkhoff’s partial order even in the $(1,1)$ case—and by \Cref{cor:birkhoff_one_n}, also in the $(1,n)$ case. These observations naturally lead to the following open problems.

\begin{remark}
\textbf{Open Problem.} The problem of computing Birkhoff’s partial order for the $(1,1)$ case and by extension, for $(1,n)$ is open.
\end{remark}

\begin{remark}
\textbf{Open Problem.} Given a $(p,q)$ robust stable matching instance $(A,B)$ with $2 \leq p,q \leq n$, $p+q < 2n$, and $p+q = \omega(1)$, it remains open whether determining the existence of a robust stable matching—and computing one if it exists—is in $\mathcal{P}$.
\end{remark}

\section{Acknowledgements}
\label{sec:ack}

This work was supported in part by the NSF under grant CCF-2230414.
We would like to thank Simon Murray for his insights, which led to~\Cref{alg:robust_xp} and all the anonymous reviewers for their valuable comments.

\bibliographystyle{alpha}
\bibliography{refs}

\newcommand{\etalchar}[1]{$^{#1}$}
\begin{thebibliography}{GMRV26}

\bibitem[ABF{\etalchar{+}}17]{aziz2}
H.~Aziz, P.~Biro, T.~Fleiner, S.~Gaspers, R.~de~Haan, N.~Mattei, and B.~Rastegari.
\newblock Stable matching with uncertain pairwise preferences.
\newblock In {\em International Joint Conference on Autonomous Agents and Multiagent Systems}, pages 344--352, 2017.

\bibitem[ABG{\etalchar{+}}16]{aziz1}
H.~Aziz, P.~Biro, S.~Gaspers, R.~de~Haan, N.~Mattei, and B.~Rastegari.
\newblock Stable matching with uncertain linear preferences.
\newblock In {\em International Symposium on Algorithmic Game Theory}, pages 195--206, 2016.

\bibitem[APR09]{NYC-school}
Atila Abdulkadiro{\u{g}}lu, Parag~A Pathak, and Alvin~E Roth.
\newblock Strategy-proofness versus efficiency in matching with indifferences: Redesigning the {NYC} high school match.
\newblock {\em American Economic Review}, 99(5):1954--78, 2009.

\bibitem[APRS05]{Boston}
Atila Abdulkadiro{\u{g}}lu, Parag~A Pathak, Alvin~E Roth, and Tayfun S{\"o}nmez.
\newblock The boston public school match.
\newblock {\em American Economic Review}, 95(2):368--371, 2005.

\bibitem[B{\etalchar{+}}37]{Birkhoff}
Garrett Birkhoff et~al.
\newblock Rings of sets.
\newblock {\em Duke Mathematical Journal}, 3(3):443--454, 1937.

\bibitem[BBHN21]{BBHN21a}
Niclas Boehmer, Robert Bredereck, Klaus Heeger, and Rolf Niedermeier.
\newblock Bribery and control in stable marriage.
\newblock {\em Journal of Artificial Intelligence Research}, 71:993--1048, 2021.

\bibitem[BCK{\etalchar{+}}20]{BredereckCKLN20}
Robert Bredereck, Jiehua Chen, Dusan Knop, Junjie Luo, and Rolf Niedermeier.
\newblock Adapting stable matchings to evolving preferences.
\newblock In {\em The Thirty-Fourth {AAAI} Conference on Artificial Intelligence, {AAAI} 2020, The Thirty-Second Innovative Applications of Artificial Intelligence Conference, {IAAI} 2020, The Tenth {AAAI} Symposium on Educational Advances in Artificial Intelligence, {EAAI} 2020, New York, NY, USA, February 7-12, 2020}, pages 1830--1837. {AAAI} Press, 2020.

\bibitem[BCK24]{BERCZI}
Kristóf Bérczi, Gergely Csáji, and Tamás Király.
\newblock Manipulating the outcome of stable marriage and roommates problems.
\newblock {\em Games and Economic Behavior}, 147:407--428, 2024.

\bibitem[BFK{\etalchar{+}}21]{BFK+21a}
Robert Bredereck, Piotr Faliszewski, Andrzej Kaczmarczyk, Rolf Niedermeier, Piotr Skowron, and Nimrod Talmon.
\newblock Robustness among multiwinner voting rules.
\newblock {\em Artificial Intelligence}, 290:103403, 2021.

\bibitem[BGS24]{robust_pop_matchings_BGS}
Martin Bullinger, Rohith~Reddy Gangam, and Parnian Shahkar.
\newblock Robust popular matchings.
\newblock In {\em Proceedings of the 23rd International Conference on Autonomous Agents and Multiagent Systems}, AAMAS '24, page 225–233, Richland, SC, 2024. International Foundation for Autonomous Agents and Multiagent Systems.

\bibitem[BHN22a]{boehmer_incremental_mfcs}
Niclas Boehmer, Klaus Heeger, and Rolf Niedermeier.
\newblock {Deepening the (Parameterized) Complexity Analysis of Incremental Stable Matching Problems}.
\newblock In Stefan Szeider, Robert Ganian, and Alexandra Silva, editors, {\em 47th International Symposium on Mathematical Foundations of Computer Science (MFCS 2022)}, volume 241 of {\em Leibniz International Proceedings in Informatics (LIPIcs)}, pages 21:1--21:15, Dagstuhl, Germany, 2022. Schloss Dagstuhl -- Leibniz-Zentrum f{\"u}r Informatik.

\bibitem[BHN22b]{boehmer_incremental_aaai}
Niclas Boehmer, Klaus Heeger, and Rolf Niedermeier.
\newblock Theory of and experiments on minimally invasive stability preservation in changing two-sided matching markets.
\newblock {\em Proceedings of the AAAI Conference on Artificial Intelligence}, 36(5):4851--4858, Jun. 2022.

\bibitem[CNS18]{NP-Two-stable-Chen}
Jiehua Chen, Rolf Niedermeier, and Piotr Skowron.
\newblock Stable marriage with multi-modal preferences.
\newblock In {\em Proceedings of the 2018 ACM Conference on Economics and Computation}, EC '18, page 269–286, New York, NY, USA, 2018. Association for Computing Machinery.

\bibitem[Cs{\'a}24]{Csaj24a}
Gergely Cs{\'a}ji.
\newblock Popular and dominant matchings with uncertain and multimodal preferences.
\newblock In Kate Larson, editor, {\em Proceedings of the Thirty-Third International Joint Conference on Artificial Intelligence, {IJCAI-24}}, pages 2740--2747. International Joint Conferences on Artificial Intelligence Organization, 8 2024.
\newblock Main Track.

\bibitem[CSS21]{Chen-Matchings-Under-preferences}
Jiehua Chen, Piotr Skowron, and Manuel Sorge.
\newblock Matchings under preferences: Strength of stability and tradeoffs.
\newblock {\em ACM Trans. Econ. Comput.}, 9(4), oct 2021.

\bibitem[DF81]{DubinsF}
Lester~E Dubins and David~A Freedman.
\newblock Machiavelli and the {G}ale-{S}hapley algorithm.
\newblock {\em The American Mathematical Monthly}, 88(7):485--494, 1981.

\bibitem[EFS09]{EFS09a}
Edith Elkind, Piotr Faliszewski, and Arkadii Slinko.
\newblock On distance rationalizability of some voting rules.
\newblock In {\em Proceedings of the 12th Conference on Theoretical Aspects of Rationality and Knowledge}, TARK '09, page 108–117, New York, NY, USA, 2009. Association for Computing Machinery.

\bibitem[EIV23]{MM.Book-Online}
Federico Echenique, Nicole Immorlica, and Vijay~V. Vazirani, editors.
\newblock {\em Online and Matching-Based Market Design}.
\newblock Cambridge University Press, 2023.

\bibitem[FIM07]{Fleiner}
Tamás Fleiner, Robert~W. Irving, and David~F. Manlove.
\newblock Efficient algorithms for generalized stable marriage and roommates problems.
\newblock {\em Theoretical Computer Science}, 381(1):162--176, 2007.

\bibitem[FKJ16]{genPreferences}
Linda Farczadi, Georgiou Konstantinos, and Könemann Jochen.
\newblock Stable marriage with general preferences.
\newblock {\em Theory of Computing Systems}, 59(4):683--699, 2016.

\bibitem[FR16]{FaRo15a}
Piotr Faliszewski and J{\"o}rg Rothe.
\newblock Control and bribery in voting.
\newblock In Felix Brandt, Vincent Conitzer, Ulle Endriss, J.~Lang, and Ariel~D. Procaccia, editors, {\em Handbook of Computational Social Choice}, chapter~7. Cambridge University Press, 2016.

\bibitem[G{\"a}r75]{Gard75a}
Peter G{\"a}rdenfors.
\newblock Match making: {A}ssignments based on bilateral preferences.
\newblock {\em Behavioral Science}, 20(3):166--173, 1975.

\bibitem[GI89]{GusfieldI}
Dan Gusfield and Robert~W Irving.
\newblock {\em The stable marriage problem: structure and algorithms}.
\newblock MIT press, 1989.

\bibitem[GMRV22]{GMRV-nearby-instances}
Rohith~Reddy Gangam, Tung Mai, Nitya Raju, and Vijay~V. Vazirani.
\newblock {A Structural and Algorithmic Study of Stable Matching Lattices of "Nearby" Instances, with Applications}.
\newblock In {\em 42nd IARCS Annual Conference on Foundations of Software Technology and Theoretical Computer Science (FSTTCS 2022)}, 2022.

\bibitem[GMRV26]{GMRV-general-instance}
Rohith~Reddy Gangam, Tung Mai, Nitya Raju, and Vijay~V. Vazirani.
\newblock {Stable Matching: Dealing with Changes in Preferences}.
\newblock In {\em Proceedings of the 25th International Conference on Autonomous Agents and Multi-Agent Systems (AAMAS 2026)}, 2026.
\newblock to appear.

\bibitem[GS62]{GaleS}
David Gale and Lloyd~S Shapley.
\newblock College admissions and the stability of marriage.
\newblock {\em The American Mathematical Monthly}, 69(1):9--15, 1962.

\bibitem[GSOS17]{genc2}
Begum Genc, Mohamed Siala, Barry O'Sullivan, and Gilles Simonin.
\newblock Finding robust solutions to stable marriage.
\newblock {\em arXiv preprint arXiv:1705.09218}, 2017.

\bibitem[GSSO17]{genc1}
Begum Genc, Mohamed Siala, Gilles Simonin, and Barry O’Sullivan.
\newblock On the complexity of robust stable marriage.
\newblock In {\em International Conference on Combinatorial Optimization and Applications}, pages 441--448. Springer, 2017.

\bibitem[IL86]{irving2}
Robert~W Irving and Paul Leather.
\newblock The complexity of counting stable marriages.
\newblock {\em SIAM Journal on Computing}, 15(3):655--667, 1986.

\bibitem[Irv85]{irving}
Robert~W Irving.
\newblock An efficient algorithm for the “stable roommates” problem.
\newblock {\em Journal of Algorithms}, 6(4):577--595, 1985.

\bibitem[Irv94]{IRVING-Indifferences}
Robert~W. Irving.
\newblock Stable marriage and indifference.
\newblock {\em Discrete Applied Mathematics}, 48(3):261--272, 1994.

\bibitem[Knu76]{knuth1976marriages}
Donald~Ervin Knuth.
\newblock Marriages stables.
\newblock {\em Technical report}, 1976.

\bibitem[Knu97a]{Knuth-book}
Donald~Ervin Knuth.
\newblock {\em Stable marriage and its relation to other combinatorial problems: An introduction to the mathematical analysis of algorithms}.
\newblock American Mathematical Soc., 1997.

\bibitem[Knu97b]{Knuth}
Donald~Ervin Knuth.
\newblock {\em Stable marriage and its relation to other combinatorial problems: An introduction to the mathematical analysis of algorithms}, volume~10.
\newblock American Mathematical Soc., 1997.

\bibitem[KPG16]{Kunysz-ssm}
Adam Kunysz, Katarzyna~E. Paluch, and Pratik Ghosal.
\newblock Characterisation of strongly stable matchings.
\newblock In Robert Krauthgamer, editor, {\em Proceedings of the Twenty-Seventh Annual {ACM-SIAM} Symposium on Discrete Algorithms, {SODA} 2016, Arlington, VA, USA, January 10-12, 2016}, pages 107--119. {SIAM}, 2016.

\bibitem[Man02]{MANLOVE-indifference-lattice}
David~F. Manlove.
\newblock The structure of stable marriage with indifference.
\newblock {\em Discrete Applied Mathematics}, 122(1):167--181, 2002.

\bibitem[Man13]{Manlove-book}
David Manlove.
\newblock {\em Algorithmics of Matching Under Preferences}.
\newblock World Scientific, 2013.

\bibitem[ML18]{Menon_Larson}
Vijay Menon and Kate Larson.
\newblock Robust and approximately stable marriages under partial information.
\newblock In George Christodoulou and Tobias Harks, editors, {\em Web and Internet Economics}, pages 341--355, Cham, 2018. Springer International Publishing.

\bibitem[MO19]{NP-Two-stable}
Shuichi Miyazaki and Kazuya Okamoto.
\newblock Jointly stable matchings.
\newblock {\em J. Comb. Optim.}, 38(2):646–665, 2019.

\bibitem[MV18]{MV.robust}
Tung Mai and Vijay~V. Vazirani.
\newblock Finding stable matchings that are robust to errors in the input.
\newblock In {\em European Symposium on Algorithms}, 2018.

\bibitem[Rot82]{Roth-IC-1982economics}
Alvin~E Roth.
\newblock The economics of matching: Stability and incentives.
\newblock {\em Mathematics of operations research}, 7(4):617--628, 1982.

\bibitem[Rot85]{Roth85}
A.~E. Roth.
\newblock The college admissions problem is not equivalent to the marriage problem.
\newblock {\em Journal of Economic Theory}, 36(2):277--288, 1985.

\bibitem[Rot16]{Roth}
Alvin~E. Roth.
\newblock Al {R}oth's game theory, experimental economics, and market design page, 2016.
\newblock \url{http://stanford.edu/~alroth/alroth.html#MarketDesign}.

\bibitem[RS12]{RS}
Alvin~E. Roth and Lloyd~S. Shapley.
\newblock The 2012 {N}obel {P}rize in {E}conomics, 2012.
\newblock \url{https://www.nobelprize.org/nobel_prizes/economic-sciences/laureates/2012/}.

\bibitem[Spi95]{SPIEKER-indifference-lattice}
Boris Spieker.
\newblock The set of super-stable marriages forms a distributive lattice.
\newblock {\em Discrete Applied Mathematics}, 58(1):79--84, 1995.

\bibitem[Sta96]{Stanley}
Richard Stanley.
\newblock Enumerative combinatorics, vol. 1, wadsworth and brooks/cole, pacific grove, ca, 1986; second printing, 1996.

\bibitem[SYE13]{SYE13a}
Dmitry Shiryaev, Lan Yu, and Edith Elkind.
\newblock On elections with robust winners.
\newblock In {\em Proceedings of the 2013 International Conference on Autonomous Agents and Multi-Agent Systems}, AAMAS '13, page 415–422, Richland, SC, 2013. International Foundation for Autonomous Agents and Multiagent Systems.

\bibitem[TS98]{Teo-Sethuraman-Lp}
Chung-Piaw Teo and Jay Sethuraman.
\newblock The geometry of fractional stable matchings and its applications.
\newblock {\em Mathematics of Operations Research}, 23(4):874--891, 1998.

\bibitem[{Van}89]{VANDEVATE}
John~H. {Vande Vate}.
\newblock Linear programming brings marital bliss.
\newblock {\em Operations Research Letters}, 8(3):147--153, 1989.

\end{thebibliography}

\appendix

\section{Proof of Birkhoff's Theorem using Stable Matching Lattices}
\label{app:gen_proof}

We will prove \Cref{thm:generalization} in the context of stable matching lattices; this is
w.l.o.g. since stable matching lattices are as general as finite distributive lattices.
In this context, the proper elements of partial order $\Pi$ will be rotations, 
and meta-elements are called \emph{meta-rotations}.
Let $\mathcal{L} = L(\Pi)$ be the corresponding stable matching lattice. 

Clearly it suffices to show that:
\begin{itemize}
	\item Given a compression $\Pi'$, $L(\Pi')$ is isomorphic to a sublattice of $\mathcal{L}$.
	\item Any sublattice $\mathcal{L}'$ is $L(\Pi')$ for some compression $\Pi'$.
\end{itemize}
These two proofs are given
in Sections \ref{sec.forward} and \ref{sec.backward}, respectively.

\subsection{\texorpdfstring{$L(\Pi')$}{L(Pi')} is isomorphic to a sublattice of \texorpdfstring{$L(\Pi)$}{L(Pi)}}
\label{sec.forward}

Let $I$ be a closed subset of $\Pi'$; clearly $I$ is a set of meta-rotations. Define $\rot(I)$ to
be the union of all meta-rotations in $I$, i.e.,
\[ \rot(I) = \{ \rho \in A: A \text{ is a meta-rotation in } I \} .\]

We will define the process of {\em elimination of a meta-rotation} $A$ of $\Pi'$ to be the
elimination of the rotations in $A$ in an order consistent with partial order $\Pi$. Furthermore,
{\em elimination of meta-rotations in $I$} will mean starting from stable matching $M_0$ in
lattice $\mathcal{L}$ and eliminating all meta-rotations in $I$
in an order consistent with $\Pi'$. Observe that this is equivalent to starting from stable matching 
$M_0$ in $\mathcal{L}$ and eliminating all rotations
in $\rot(I)$ in an order consistent with partial order $\Pi$. This follows from
Definition~\ref{def:compression}, since
if there exist rotations $x,y$ in $\Pi$ such that $x$ is in meta-rotation $X$, $y$ is in meta-rotation $Y$ and $x$ precedes $y$, then $X$ must also precede $Y$. 
Hence, if the elimination of all rotations in $\rot(I)$ gives matching $M_I$, then elimination
of all meta-rotations in $I$ will also give the same matching.

Finally, to prove the statement in the title of this section, it suffices to observe that if
$I$ and $J$ are two proper closed sets of the partial order $\Pi'$ then
\[ \rot(I \cup J) = \rot(I) \cup \rot(J) \ \ \ \ \mbox{and} \ \ \ \ 
\rot(I \cap J) = \rot(I) \cap \rot(J) . \]
It follows that the set of matchings obtained by elimination of meta-rotations in a proper
closed set of $\Pi'$ are closed under the operations of meet and join and hence form a 
sublattice of $\mathcal{L}$.

\subsection{Sublattice \texorpdfstring{$\mathcal{L}'$}{L'} is generated by a compression \texorpdfstring{$\Pi'$}{Pi'} of \texorpdfstring{$\Pi$}{Pi}}
\label{sec.backward}

We will obtain compression $\Pi'$ of $\Pi$ in stages. First, we show how to partition the set of 
rotations of $\Pi$ to obtain the meta-rotations of $\Pi'$. We then find precedence relations among 
these meta-rotations to obtain $\Pi'$. Finally, we show $L(\Pi') = \mathcal{L}'$.

Notice that $\mathcal{L}$ can be represented by its Hasse diagram $H(\mathcal{L})$. 
Each edge of $H(\mathcal{L})$
{\em contains} exactly one (not necessarily unique) rotation of $\Pi$.
Then, by ~\Cref{lem:seqElimination}, for any two stable matchings $M_1, M_2 \in \mathcal{L}$ such that
$M_1 \prec M_2$, all paths from $M_1$ to $M_2$ in $H(\mathcal{L})$ contain the same set of rotations.

\begin{definition} 
	For $M_1, M_2 \in \mathcal{L}'$, $M_2$ is said to be an {\em $\mathcal{L}'$-direct successor} of $M_1$ iff 
	$M_1 \prec M_2$ and there is no $M \in \mathcal{L}'$ such that $M_1 \prec M \prec M_2$.
	Let $M_1 \prec \ldots \prec M_k$ be a sequence of matchings in $\mathcal{L}'$ such that $M_{i+1}$ 
	is an $\mathcal{L}'$-direct successor of $M_{i}$ for all $1 \leq i \leq k-1$. Then any
	path in $H(\mathcal{L})$ from $M_1$ to $M_k$ containing $M_{i}$, for all $1 \leq i \leq k-1$, is called 
	an {\em $\mathcal{L}'$-path}. 
\end{definition}

Let $M_{0'}$ and $M_{z'}$ denote the worker-optimal and firm-optimal matchings, respectively, 
in $\mathcal{L}'$. For $M_1, M_2 \in \mathcal{L}'$ with $M_1 \prec M_2$, let $S_{M_1, M_2}$ denote the set of 
rotations contained on any $\mathcal{L}'$-path from $M_1$ to $M_2$. Further,
let $S_{M_0,M_{0'}}$ and $S_{M_{z'},M_z}$ denote the set of rotations contained on any path from 
$M_0$ to $M_{0'}$ and $M_{z'}$ to $M_z$, respectively in $H(\mathcal{L})$. Define the following set
whose elements are sets of rotations.
$$ \mathcal{S} =  \{S_{M_i,M_j} \ | \  M_j \text{ is an $\mathcal{L}'$-direct successor of } M_i,
$$
$$\ 
\text{for every pair of matchings} \ M_i, M_j \ \text{in} \ \mathcal{L}' \} \bigcup $$
$$ \ \  \{ S_{M_0,M_{0'}}, \ S_{M_{z'},M_z} \}. $$

\begin{lemma} 
	\label{lem:partition}
	$\mathcal{S}$ is a partition of $\Pi$.
\end{lemma}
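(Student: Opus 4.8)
The plan is to translate everything into the language of closed sets of rotations and then let distributivity control how each rotation is distributed among the blocks of $\mathcal{S}$. For a matching $M \in \mathcal{L}$, write $R(M)$ for the set of rotations eliminated along any path from $M_0$ to $M$; by \Cref{lem:seqElimination} this set is path-independent, $M_1 \preceq M_2$ iff $R(M_1) \subseteq R(M_2)$, and meet and join correspond to intersection and union of these closed sets. In particular $R(M_0) = \emptyset$, $R(M_z) = \Pi$, and for $M_1 \prec M_2$ the block $S_{M_1,M_2}$ equals $R(M_2) \setminus R(M_1)$. With this dictionary the covering part is immediate: fix any maximal chain $M_{0'} = N_0 \prec N_1 \prec \cdots \prec N_t = M_{z'}$ of $\mathcal{L}'$; the concatenated path $M_0 \preceq N_0 \prec \cdots \prec N_t \preceq M_z$ runs from $M_0$ to $M_z$, so by \Cref{lem:seqElimination} the blocks $S_{M_0,M_{0'}}, S_{N_0,N_1}, \dots, S_{N_{t-1},N_t}, S_{M_{z'},M_z}$ are disjoint with union $\Pi$, and since each lies in $\mathcal{S}$ we get $\bigcup \mathcal{S} = \Pi$.

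The substance is disjointness: no rotation may lie in two distinct blocks. Fix $\rho \in \Pi$ and consider $F_\rho = \{M \in \mathcal{L}' : \rho \in R(M)\}$, which is an up-set of $\mathcal{L}'$ closed under meet (because $R(M \wedge M') = R(M) \cap R(M')$ and $\mathcal{L}'$ is a sublattice). If $\rho \notin R(M_{z'})$, then $F_\rho = \emptyset$, and since every block other than $S_{M_{z'},M_z} = \Pi \setminus R(M_{z'})$ is contained in $R(M_{z'})$, the rotation $\rho$ lies only in $S_{M_{z'},M_z}$. Otherwise $F_\rho$ is a nonempty meet-closed up-set, hence has a unique minimum $N := \bigwedge F_\rho \in \mathcal{L}'$. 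If $N = M_{0'}$, then $\rho \in R(M)$ for every $M \in \mathcal{L}'$, so $\rho$ meets no block $S_{M_i,M_j} = R(M_j)\setminus R(M_i)$ and lies only in $S_{M_0,M_{0'}} = R(M_{0'})$.

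The remaining case $N \neq M_{0'}$ is where distributivity does the real work, and I expect it to be the main obstacle. First I would show $N$ has a \emph{unique} lower cover $N^-$ in $\mathcal{L}'$: two distinct lower covers $N_1^-, N_2^-$ would force $N = N_1^- \vee N_2^-$, hence $R(N) = R(N_1^-) \cup R(N_2^-)$, and then $\rho \in R(N)$ would already put $\rho$ in some $R(N_i^-)$, contradicting the minimality of $N$. Thus $\rho \in R(N) \setminus R(N^-) = S_{N^-,N}$. It then remains to show this is the only block containing $\rho$. Suppose $\rho \in S_{M_i,M_j}$ for an $\mathcal{L}'$-direct-successor pair. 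Then $\rho \in R(M_j)$ and $\rho \notin R(M_i)$ give $N \preceq M_j$ and $N \not\preceq M_i$, so $M_i \prec M_i \vee N \preceq M_j$; since $M_j$ covers $M_i$ in $\mathcal{L}'$ and $M_i \vee N \in \mathcal{L}'$, we get $M_j = M_i \vee N$. Applying the transposition (diamond) isomorphism $[N \wedge M_i,\, N] \cong [M_i,\, M_i \vee N]$, valid in the distributive lattice $\mathcal{L}'$, and using that $[M_i, M_j]$ is a covering, we conclude that $N \wedge M_i$ is a lower cover of $N$, hence equals $N^-$. Therefore $S_{M_i,M_j} = R(M_j) \setminus R(M_i) = R(N) \setminus R(N \wedge M_i) = R(N) \setminus R(N^-) = S_{N^-,N}$, so the two coincide as elements of $\mathcal{S}$.

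Finally, in this last case $\rho$ avoids $S_{M_0,M_{0'}}$ (as $M_{0'} \prec N$ gives $\rho \notin R(M_{0'})$) and $S_{M_{z'},M_z}$ (as $\rho \in R(M_{z'})$), so $\rho$ lies in the single block $S_{N^-,N}$. Combining the three cases, every rotation lies in exactly one element of $\mathcal{S}$; together with the covering step this shows $\mathcal{S}$ is a partition, after discarding the possibly-empty end blocks $S_{M_0,M_{0'}}$ and $S_{M_{z'},M_z}$ that arise when $M_{0'} = M_0$ or $M_{z'} = M_z$. The one step genuinely requiring care is the uniqueness argument in the third paragraph, where both the unique-lower-cover claim and the interval transposition rely essentially on distributivity of $\mathcal{L}$ (equivalently of the sublattice $\mathcal{L}'$).
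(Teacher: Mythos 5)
Your proof is correct, but its disjointness half takes a genuinely different route from the paper's. The paper argues by contradiction: given two distinct interior blocks $S_{M_1,M_2}$ and $S_{M_3,M_4}$ with nonempty intersection $X$, it forms $M=(M_2\vee M_3)\wedge(M_1\vee M_4)$ and computes, purely through the closed-set calculus (join $=$ union, meet $=$ intersection of rotation sets), that $S_{M_0,M_1}\subset S_{M_0,M\wedge M_2}\subset S_{M_0,M_2}$, so $M\wedge M_2$ is a matching of $\mathcal{L}'$ strictly between $M_1$ and $M_2$, contradicting $\mathcal{L}'$-direct-successorship. You instead give a per-rotation canonical-form argument: the filter $F_\rho$ of matchings in $\mathcal{L}'$ whose rotation set contains $\rho$ has a minimum $N$, you show $N$ has a unique lower cover $N^-$ in $\mathcal{L}'$ (via the join decomposition $R(N)=R(N_1^-)\cup R(N_2^-)$), and the diamond/transposition isomorphism $[N\wedge M_i,N]\cong[M_i,M_i\vee N]$ forces every block containing $\rho$ to equal $S_{N^-,N}$. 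Your approach buys three things: it is constructive, explicitly naming the block of each rotation as $R(N)\setminus R(N^-)$ rather than merely ruling out overlaps; it treats the end blocks $S_{M_0,M_{0'}}$ and $S_{M_{z'},M_z}$ uniformly, whereas the paper's contradiction argument is stated only for pairs of interior blocks and leaves their (easy) disjointness from the end blocks implicit; and it isolates exactly where modularity/distributivity is invoked. The paper's approach buys brevity, avoiding the cover-uniqueness and interval-transposition machinery at the cost of a somewhat delicate chain of set identities (where, e.g., strictness of the inclusions deserves the same care you gave your covering argument). The only caveats in your write-up are cosmetic: ``maximal chain'' must be read as a chain of $\mathcal{L}'$-covers so that its segments really are blocks of $\mathcal{S}$, and your remark about discarding possibly-empty end blocks is the right way to handle the degenerate cases $M_{0'}=M_0$ or $M_{z'}=M_z$, which the paper also glosses over.
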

\begin{proof}
	First, we show that any rotation must be in an element of $\mathcal{S}$. Consider a path $p$ from $M_0$ to $M_z$ in the $H(\mathcal{L})$ such that $p$ goes from $M_{0'}$ to $M_{z'}$ via an $\mathcal{L}'$-path. Since $p$ is a path from $M_0$ to $M_z$, all rotations of $\Pi$ are contained on
	$p$ by Lemma~\ref{lem:seqElimination}. Hence, they all appear in the sets in $\mathcal{S}$.
	
	Next assume that there are two pairs $(M_1,M_2) \not = (M_3,M_4)$ of $\mathcal{L}'$-direct successors 
	such that $S_{M_1,M_2} \not = S_{M_3,M_4}$ and $X = S_{M_1,M_2} \cap S_{M_3,M_4} \not = \emptyset$. The set of rotations eliminated from $M_0$ to $M_2$ is  
	\[S_{M_0,M_2} = S_{M_0,M_1} \cup S_{M_1,M_2}.\]
	Similarly, 
	\[S_{M_0,M_4} = S_{M_0,M_3} \cup S_{M_3,M_4}.\]
	Therefore, 
	\[S_{M_0,M_2 \vee M_3} =S_{M_0,M_3} \cup S_{M_1,M_2} \cup S_{M_0,M_1}. \]
	\[S_{M_0,M_1 \vee M_4} = S_{M_0,M_3} \cup S_{M_3,M_4} \cup S_{M_0,M_1}. \]
	Let $M = (M_2 \vee M_3) \wedge (M_1 \vee M_4)$, we have  
	\[S_{M_0, M} = S_{M_0,M_3} \cup S_{M_0,M_1} \cup X.\]
	Hence, 
	\[S_{M_0, M \wedge M_2} = S_{M_0,M_1} \cup X.\]
	Since $X \subset S_{M_1,M_2}$ and $S_{M_1,M_2} \cap S_{M_0,M_1} = \emptyset$, $X \cap S_{M_0,M_1} = \emptyset$. Therefore, 
	\[S_{M_0,M_1} \subset S_{M_0, M \wedge M_2} \subset S_{M_0,M_2},\] 
	and hence $M_2$ is not a $\mathcal{L}'$-direct successor of $M_1$, leading to a contradiction.
\end{proof}

We will denote $S_{M_0,M_{0'}}$ and $S_{M_{z'},M_{z}}$ by $A_s$ and $A_t$, respectively.
The elements of $\mathcal{S}$ will be the meta-rotations of $\Pi'$.  Next, we need to define  
precedence relations among these meta-rotations to complete the construction of $\Pi'$. 
For a meta-rotation $A \in \mathcal{S}$, $A \neq A_t$, define the following subset of $\mathcal{L}'$:
\[ \mathcal{M}^A = \{M \in \mathcal{L}' \ \mbox{such that} \ A \subseteq S_{M_0,M} \} .\]

\begin{lemma}
	\label{lem:latticeContainingMetaRotaion}
	For each meta-rotation $A \in \mathcal{S}$, $A \neq A_t$, $\mathcal{M}^A$ forms a sublattice $\mathcal{L}^A$ of $\mathcal{L}'$.
\end{lemma}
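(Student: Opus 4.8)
The plan is to reduce the statement to two facts already available: (i) $\mathcal{L}'$ is a sublattice of $\mathcal{L}$, hence closed under $\vee$ and $\wedge$; and (ii) the map $M \mapsto S_{M_0,M}$ sending a matching to the closed set of $\Pi$ that generates it is a lattice isomorphism onto $L(\Pi)$, carrying $\vee$ to $\cup$ and $\wedge$ to $\cap$. Concretely, for any $M_1,M_2 \in \mathcal{L}$ one has $S_{M_0, M_1\vee M_2} = S_{M_0,M_1}\cup S_{M_0,M_2}$ and $S_{M_0, M_1\wedge M_2} = S_{M_0,M_1}\cap S_{M_0,M_2}$; this is exactly the correspondence described in \Cref{sec:rotation-poset}, together with \Cref{lem:seqElimination}, which is what guarantees that $S_{M_0,M}$ is well-defined independently of the elimination path.

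Given this, I would take arbitrary $M_1, M_2 \in \mathcal{M}^A$, so that $A \subseteq S_{M_0,M_1}$ and $A \subseteq S_{M_0,M_2}$. Since $\mathcal{L}'$ is a sublattice, both $M_1\vee M_2$ and $M_1\wedge M_2$ already lie in $\mathcal{L}'$, so it only remains to verify the defining containment $A \subseteq S_{M_0,\cdot}$ for each. For the join, $A \subseteq S_{M_0,M_1} \subseteq S_{M_0,M_1}\cup S_{M_0,M_2} = S_{M_0, M_1\vee M_2}$, so $M_1\vee M_2 \in \mathcal{M}^A$. For the meet, $A$ is contained in both $S_{M_0,M_1}$ and $S_{M_0,M_2}$, hence in their intersection $S_{M_0,M_1}\cap S_{M_0,M_2} = S_{M_0,M_1\wedge M_2}$, so $M_1\wedge M_2 \in \mathcal{M}^A$. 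Thus $\mathcal{M}^A$ is closed under both operations and forms a sublattice $\mathcal{L}^A$ of $\mathcal{L}'$.

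To finish I would record that $\mathcal{M}^A$ is nonempty: writing $A = S_{M_i,M_j}$ for the $\mathcal{L}'$-direct successor pair defining it (or $A = A_s$), we have $A \subseteq S_{M_0,M_j}$ because $S_{M_0,M_j} = S_{M_0,M_i}\cup S_{M_i,M_j}$, so $M_j \in \mathcal{M}^A$ (and in the boundary case $\mathcal{M}^{A_s} = \mathcal{L}'$, since $M_{0'} \preceq M$ for every $M \in \mathcal{L}'$). There is no substantial obstacle here; the only point requiring care is fixing the orientation of the isomorphism so that set union corresponds to $\vee$ and set intersection to $\wedge$. Once that is pinned down, the meet case — usually the delicate direction — is immediate, because the containment condition $A \subseteq S_{M_0,\cdot}$ is automatically preserved under intersection whenever it holds for both operands.
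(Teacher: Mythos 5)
Your proof is correct and follows essentially the same route as the paper: the paper's own argument is precisely that $S_{M_0,M_1\vee M_2} = S_{M_0,M_1}\cup S_{M_0,M_2}$ and $S_{M_0,M_1\wedge M_2} = S_{M_0,M_1}\cap S_{M_0,M_2}$, so containment of $A$ is preserved under both operations. Your additional remarks (making explicit that $M_1\vee M_2, M_1\wedge M_2 \in \mathcal{L}'$ because $\mathcal{L}'$ is a sublattice, and that $\mathcal{M}^A \neq \emptyset$) are implicit in the paper and harmless.
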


\begin{proof}
	Take two matchings $M_1,M_2$ such that $S_{M_0,M_1}$ and $S_{M_0,M_2}$ are supersets of $A$. Then $S_{M_0,M_1 \wedge M_2} = S_{M_0,M_1} \cap S_{M_0,M_2}$ and $S_{M_0,M_1 \vee M_2} = S_{M_0,M_1} \cup S_{M_0,M_2}$ are also supersets of $A$.
\end{proof}

Let $M^A$ be the worker-optimal matching in the lattice $\mathcal{L}^A$. Let $p$ be any $\mathcal{L}'$-path from 
$M_{0'}$ to $M^A$ and let $\pre(A)$ be the set of meta-rotations appearing before $A$ on $p$. 

\begin{lemma}
	The set $\pre(A)$ does not depend on $p$. Furthermore, on any $\mathcal{L}'$-path from $M_{0'}$ containing 
	$A$, each meta-rotation in $\pre(A)$ appears before $A$.
	\label{lem:preMetaPoset} 
\end{lemma}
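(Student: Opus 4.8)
The plan is to translate everything into the language of closed sets of rotations and to lean on \Cref{lem:seqElimination}, which guarantees that for any two fixed matchings $M \preceq N$ in $\mathcal{L}$ the set of rotations eliminated along any path from $M$ to $N$ in $H(\mathcal{L})$ is a fixed set, namely $S_{M,N}$. The first observation I would record is that along any $\mathcal{L}'$-path each $\mathcal{L}'$-direct-successor step $M_i \to M_{i+1}$ eliminates exactly the meta-rotation $S_{M_i,M_{i+1}} \in \mathcal{S}$; since the meta-rotations partition $\Pi$ by \Cref{lem:partition}, the union of the meta-rotations traversed by any $\mathcal{L}'$-path from $M_{0'}$ to a matching $N \in \mathcal{L}'$ is precisely $S_{M_{0'},N}$, and these traversed meta-rotations are exactly $\{C \in \mathcal{S} : C \subseteq S_{M_{0'},N}\}$. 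I would also repeatedly use the monotonicity dictionary $M \preceq M'$ iff $S_{M_0,M} \subseteq S_{M_0,M'}$.

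Next I would pin down the role of $A$ on a path from $M_{0'}$ to $M^A$. Since $M^A$ is the worker-optimal element of the sublattice $\mathcal{L}^A = \mathcal{M}^A$ (\Cref{lem:latticeContainingMetaRotaion}), we have $M^A \preceq M$ for every $M \in \mathcal{M}^A$, so $S_{M_0,M^A}$ is the unique minimal closed set among $\{S_{M_0,M} : A \subseteq S_{M_0,M},\ M \in \mathcal{L}'\}$. Let $M^-$ be the $\mathcal{L}'$-direct predecessor of $M^A$ on the path $p$. If $A \subseteq S_{M_0,M^-}$ then $M^- \in \mathcal{M}^A$ with $M^- \prec M^A$, contradicting minimality of $M^A$; hence $A \not\subseteq S_{M_0,M^-}$. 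The final step eliminates a single meta-rotation $B = S_{M_0,M^A}\setminus S_{M_0,M^-}$, and since some rotation of $A$ lies in $B$, disjointness of the partition $\mathcal{S}$ forces $B = A$. Thus $A$ is the final meta-rotation on every $\mathcal{L}'$-path from $M_{0'}$ to $M^A$.

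Path-independence of $\pre(A)$ then follows quickly. By the previous paragraph the meta-rotations traversed from $M_{0'}$ to $M^A$ are exactly $\pre(A)\cup\{A\}$, with union $S_{M_{0'},M^A}$; removing the fixed meta-rotation $A$ gives $\bigcup \pre(A) = S_{M_{0'},M^A}\setminus A$. This set is fixed by \Cref{lem:seqElimination}, as both $M_{0'}$ and $M^A$ are fixed matchings, so $\pre(A) = \{C \in \mathcal{S} : C \subseteq S_{M_{0'},M^A}\}\setminus\{A\}$ is determined independently of the choice of $p$.

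For the second assertion, I would take any $\mathcal{L}'$-path $q$ from $M_{0'}$ that eliminates $A$, say at the step $M^{*-}\to M^{**}$, so that $A \subseteq S_{M_0,M^{**}}$ and hence $M^{**}\in\mathcal{M}^A$. Worker-optimality of $M^A$ gives $M^A \preceq M^{**}$, i.e.\ $S_{M_0,M^A}\subseteq S_{M_0,M^{**}}$. Since $\bigcup\pre(A)\subseteq S_{M_0,M^A}$ and is disjoint from $A$, I obtain $\bigcup\pre(A)\subseteq S_{M_0,M^{**}}\setminus A = S_{M_0,M^{*-}}$; because each meta-rotation is eliminated as an indivisible unit, every meta-rotation of $\pre(A)$ must be traversed strictly before the step eliminating $A$. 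The main obstacle is the partition bookkeeping — specifically, showing that $A$ is precisely the \emph{final} meta-rotation on the way to $M^A$ (not merely eliminated somewhere along the way) and that $\bigcup \pre(A)$ equals the fixed set $S_{M_{0'},M^A}\setminus A$; once these are secured, both claims reduce to monotonicity of $S_{M_0,\cdot}$ under the dominance order.
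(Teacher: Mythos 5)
Your proof is correct and follows essentially the same route as the paper's: path-independence via \Cref{lem:seqElimination} combined with the observation that $A$ must be the last meta-rotation eliminated on any $\mathcal{L}'$-path into $M^A$ (else a matching of $\mathcal{L}^A$ would strictly precede $M^A$), and the second assertion via the minimality of $M^A$ in $\mathcal{L}^A$ guaranteed by \Cref{lem:latticeContainingMetaRotaion}. The only cosmetic difference is in the second part, where the paper argues by contradiction through the meet, using $S_{M_0,M^A \wedge M} = S_{M_0,M^A} \cap S_{M_0,M}$ to produce a matching of $\mathcal{L}^A$ preceding $M^A$, whereas you argue directly from $M^A \preceq M^{**}$ and the containment $\bigcup \pre(A) \subseteq S_{M_0,M^{*-}}$; both hinge on exactly the same worker-optimality fact.
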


\begin{proof}
	Since all paths from $M_{0'}$ to $M^A$ give the same set of rotations, 
	all $\mathcal{L}'$-paths from $M_{0'}$ to $M^A$ give the same set of meta-rotations.
	Moreover, $A$ must appear last in the any $\mathcal{L}'$-path from $M_{0'}$ to $M^A$;
	otherwise, there exists a matching in $\mathcal{L}^A$ preceding $M^A$, giving a contradiction.
	It follows that $\pre(A)$ does not depend on $p$.
	
	Let $q$ be an $\mathcal{L}'$-path from $M_{0'}$ that contains matchings $M', M \in \mathcal{L}'$, where $M$ is 
	an $\mathcal{L}'$-direct successor of $M'$. Let $A$ denote the meta-rotation that is contained on edge 
	$(M', M)$. Suppose there is a meta-rotation $A' \in \pre(A)$ such that $A'$ does not appear before
	$A$ on $q$. Then $S_{M_0,M^A \wedge M} = S_{M_0,M^A} \cap S_{M_0,M}$ contains $A$ but 
	not $A'$. Therefore $M^A \wedge M$ is a matching in $\mathcal{L}^A$ preceding $M^A$, giving is a 
	contradiction.	Hence all matchings in $\pre(A)$ must appear before $A$ on all such paths $q$.
\end{proof}

Finally, add precedence relations from all meta-rotations in $\pre(A)$ to $A$, for each
meta-rotation in $\mathcal{S} - \{A_t\}$. Also, add precedence relations from all meta-rotations 
in $\mathcal{S} - \{A_t\}$ to $A_t$. This completes the construction of $\Pi'$. Below we show
that $\Pi'$ is indeed a compression of $\Pi$, but first we need to establish that this construction 
does yield a valid poset.

\begin{lemma}
	$\Pi'$ satisfies transitivity and anti-symmetry.
\end{lemma}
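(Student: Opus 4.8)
The plan is to exhibit a strictly order-preserving potential on the meta-rotations, which simultaneously rules out cycles (anti-symmetry) and, together with a monotonicity property of $\pre$, yields transitivity. Throughout I treat $\Pi'$ as the relation $\prec'$ generated by the edges $X \to A$ for $X \in \pre(A)$ (with $A \neq A_t$), together with the edges $A \to A_t$ for every $A \in \mathcal{S} \setminus \{A_t\}$.

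First I would establish transitivity directly from \Cref{lem:preMetaPoset}. Suppose $X \in \pre(A)$ and $A \in \pre(C)$. Fix an $\mathcal{L}'$-path from $M_{0'}$ to $M^C$; by definition of $\pre(C)$ it contains $A$ before $C$, and since this path contains $A$, \Cref{lem:preMetaPoset} forces every meta-rotation of $\pre(A)$—in particular $X$—to appear before $A$, hence before $C$. Thus $X \in \pre(C)$, so the generating relation is already transitive among the proper meta-rotations. The relations into $A_t$ are handled separately: $A_t$ never lies in any $\pre(\cdot)$, since its rotations are eliminated only after $M_{z'}$ and so appear on no $\mathcal{L}'$-path, making $A_t$ maximal; transitivity through $A_t$ is then immediate because $A \prec' A_t$ for every proper $A$.

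For anti-symmetry I would assign to each proper meta-rotation $A$ the potential $\phi(A) = |S_{M_0, M^A}|$, and set $\phi(A_t) = |\Pi| + 1$. The crux is the claim that $X \in \pre(A)$ implies $M^X \prec M^A$ strictly. To see this, take the matching $N \in \mathcal{L}'$ reached on an $\mathcal{L}'$-path from $M_{0'}$ to $M^A$ immediately after the edge carrying $X$; then $X \subseteq S_{M_0,N}$, so $N \in \mathcal{M}^X$, and hence $M^X \preceq N \preceq M^A$ by worker-optimality of $M^X$ in $\mathcal{L}^X$ (\Cref{lem:latticeContainingMetaRotaion}), where $N \preceq M^A$ because $N$ precedes $M^A$ on the downward path. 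Equality $M^X = M^A$ is impossible: by \Cref{lem:preMetaPoset} both $X$ and $A$ would have to be the last meta-rotation on the $\mathcal{L}'$-path from $M_{0'}$ to this common matching, contradicting $X \neq A$ (the meta-rotations are distinct blocks of the partition of \Cref{lem:partition}). Therefore $M^X \prec M^A$, so $S_{M_0,M^X} \subsetneq S_{M_0,M^A}$ and $\phi(X) < \phi(A)$; the edges into $A_t$ strictly increase $\phi$ as well, since $\phi(A) \le |\Pi| < \phi(A_t)$ for proper $A$.

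Since $\phi$ strictly increases along every generating edge, it strictly increases along every directed path, so no directed cycle can exist and $\Pi'$ is anti-symmetric; combined with the transitivity argument this shows $\prec'$ is a genuine partial order. The main obstacle I anticipate is precisely the strict inequality $M^X \prec M^A$: establishing $M^X \preceq M^A$ and separately ruling out equality both hinge on combining the worker-optimality of the matchings $M^A$ with the ``last meta-rotation'' characterization of \Cref{lem:preMetaPoset}, and on keeping the orientation of the dominance order consistent with the direction of rotation elimination from $M_0$ to $M_z$.
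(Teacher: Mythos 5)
Your proof is correct and follows essentially the same route as the paper: both arguments rest on the two supporting lemmas (that $\mathcal{M}^A$ is a sublattice with worker-optimal element $M^A$, and that $\pre(A)$ is path-independent with $A$ appearing last), and both derive anti-symmetry from the strict dominance $M^X \prec M^A$ whenever $X \in \pre(A)$ — your potential $\phi(A) = |S_{M_0,M^A}|$ is just an explicit repackaging of the paper's contradiction $M^{A_1} \prec M^{A_2} \prec M^{A_1}$, and it has the virtue of making explicit the strictness step (ruling out $M^X = M^A$ via the ``last meta-rotation'' property) that the paper leaves implicit.
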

\begin{proof}
	First we prove that $\Pi'$ satifies transitivity.
	Let $A_1, A_2, A_3$ be meta-rotations such that $A_1 \prec A_2$ and $A_2 \prec A_3$.
	We may assume that $A_3 \not = A_t$.
	Then $A_1 \in \pre(A_2)$ and $A_2 \in \pre(A_3)$. 
	Since $A_1 \in \pre(A_2)$, $S_{M_0,M^{A_2}}$ is a superset of $A_1$.
	By ~\Cref{lem:latticeContainingMetaRotaion}, $M^{A_1} \prec M^{A_2}$. 
	Similarly, $M^{A_2} \prec M^{A_3}$. Therefore $M^{A_1} \prec M^{A_3}$, and hence $A_1 \in \pre(A_3)$.
	
	Next we prove that $\Pi'$ satisfies anti-symmetry. Assume that there exist meta-rotations $A_1, A_2$ such that
	$A_1 \prec A_2$ and $A_2 \prec A_1$. Clearly $A_1, A_2 \not = A_t$. Since $A_1 \prec A_2$, $A_1 \in \pre(A_2)$. Therefore, $S_{M_0,M^{A_2}}$ is a superset of $A_1$. It follows that $M^{A_1} \prec M^{A_2}$. Applying a similar argument we get $M^{A_2} \prec M^{A_1}$. Now, we get a contradiction,
	since $A_1$ and $A_2$ are different meta-rotations.
\end{proof}

\begin{lemma}
	$\Pi'$ is a compression of $\Pi$.
\end{lemma}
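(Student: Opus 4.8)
The plan is to verify the single condition that remains before $\Pi'$ qualifies as a compression of $\Pi$ in the sense of \Cref{def:compression}. By \Cref{lem:partition} the family $\mathcal{S}$ is a partition of $\Pi$, so its members are legitimate meta-rotations, and the preceding lemma shows that $\Pi'$ is a genuine poset on them. What is left is to check that $\Pi'$ extends the precedence relation $Q$ that $\Pi$ induces on $\mathcal{S}$: for any two distinct meta-rotations $A$ and $B$ with $x \in A$, $y \in B$, and $x \prec_\Pi y$, I must exhibit $A \prec_{\Pi'} B$. Since $\Pi'$ has already been shown to be a partial order, establishing $Q \subseteq \Pi'$ is exactly what is needed.

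First I would dispose of the boundary meta-rotations. If $B = A_t$, then $A \prec_{\Pi'} A_t$ holds by the explicit clause of the construction making every meta-rotation precede $A_t$. For the minimum, I would observe that because $M_{0'}$ dominates all of $\mathcal{L}'$, we have $A_s = S_{M_0,M_{0'}} \subseteq S_{M_0,M}$ for every $M \in \mathcal{L}'$, so $A_s$ sits weakly below every meta-rotation occurring on any $\mathcal{L}'$-path; this is the relation $A_s \prec_{\Pi'} B$ recorded right after \Cref{def:compression}. Symmetrically, since the rotations of $A_s$ are eliminated first and those of $A_t$ last, no induced relation points into $A_s$ or out of $A_t$, so neither can appear on the wrong side of a relation in $Q$.

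For the generic case ($A \neq A_s$, $B \neq A_t$) the key step is to pass through $M^B$, the worker-optimal element of the sublattice $\mathcal{L}^B$ of \Cref{lem:latticeContainingMetaRotaion}. Since $y \in B \subseteq S_{M_0,M^B}$ and the set $S_{M_0,M^B}$ of rotations eliminated from $M_0$ to $M^B$ is a closed (downward-closed) subset of $\Pi$, the predecessor $x$ of $y$ also lies in $S_{M_0,M^B}$. The point I would then press is that $S_{M_0,M^B}$ is a \emph{union of meta-rotations}: routing a path through $M_{0'}$ writes it as $A_s$ together with the meta-rotations occurring on an $\mathcal{L}'$-path from $M_{0'}$ to $M^B$. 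Because $\mathcal{S}$ partitions $\Pi$ (\Cref{lem:partition}), membership of the single element $x$ forces the entire meta-rotation $A$ into $S_{M_0,M^B}$, whence $M^B \in \mathcal{L}^A$.

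Finally I would convert this containment into a strict ordering. Exactly as in \Cref{lem:preMetaPoset}, $B$ must be the \emph{last} meta-rotation on any $\mathcal{L}'$-path from $M_{0'}$ to $M^B$, for otherwise some matching of $\mathcal{L}^B$ would strictly precede $M^B$. Since $A \subseteq S_{M_0,M^B}$ and $A \neq A_s$, the meta-rotation $A$ occurs on that same path and therefore strictly before the terminal meta-rotation $B$; hence $A \in \pre(B)$ and $A \prec_{\Pi'} B$ by construction. This establishes $Q \subseteq \Pi'$ and completes the proof that $\Pi'$ is a compression. I expect the main obstacle to be the third step, namely justifying that $S_{M_0,M^B}$ decomposes cleanly into meta-rotations and that $B$ sits in the terminal position, since this is precisely what upgrades the set-membership $x \in S_{M_0,M^B}$ to the precedence $A \in \pre(B)$, and it is here that \Cref{lem:partition} and the last-position argument of \Cref{lem:preMetaPoset} are indispensable.
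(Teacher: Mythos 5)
Your proof is correct and follows essentially the same route as the paper's: both arguments reduce the claim to showing $A \in \pre(B)$ by tracking an $\mathcal{L}'$-path ending at $M^B$ and invoking the last-position argument of \Cref{lem:preMetaPoset}. Your intermediate steps (downward-closedness of $S_{M_0,M^B}$, the partition argument via \Cref{lem:partition}, and the explicit treatment of the boundary meta-rotations $A_s$ and $A_t$) simply fill in details that the paper's shorter proof leaves implicit.
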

\begin{proof}
	Let $x,y$ be rotations in $\Pi$ such that $x \prec y$. 
	Let $X$ be the meta-rotation containing $x$ and $Y$ be the meta-rotation containing $y$.
	It suffices to show that $X \in \pre(Y)$.
	Let $p$ be an $\mathcal{L}'$-path from $M_0$ to $M^Y$.
	Since $x \prec y$, $x$ must appear before $y$ in $p$.
	Hence, $X$ also appears before $Y$ in $p$.
	By ~\Cref{lem:preMetaPoset}, $X \in \pre(Y)$ as desired.
\end{proof}

Finally, the next two lemmas prove that $L(\Pi') = \mathcal{L}'$.

\begin{lemma}
	\label{lem:LPsubsetLbar}
	Any matching in $L(\Pi')$ must be in $\mathcal{L}'$.
\end{lemma}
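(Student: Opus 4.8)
The plan is to show that every matching generated by a proper closed set of $\Pi'$ coincides with $S_{M_0,M'}$ for some $M'\in\mathcal{L}'$, and hence lies in $\mathcal{L}'$. Fix a proper closed set $I$ of $\Pi'$ and let $M$ be the matching it generates; by the analysis in Section~\ref{sec.forward}, $M$ is also generated by the closed subset $\rot(I)$ of $\Pi$. Since $I$ is closed and contains $A_s$ but not $A_t$, and since $\rot$ carries unions of closed sets to unions, $\rot(I)$ is a union of whole meta-rotations of $\mathcal{S}$ that includes all of $A_s$ and is disjoint from $A_t$. The first thing I would record is that for every $N\in\mathcal{L}'$ the set $S_{M_0,N}$ is itself a union of meta-rotations: writing an $\mathcal{L}'$-path $M_{0'}=N_0\prec\cdots\prec N_k=N$ and applying Lemma~\ref{lem:seqElimination}, one gets $S_{M_0,N}=A_s\cup\bigcup_i S_{N_i,N_{i+1}}$, a disjoint union of elements of $\mathcal{S}$ by Lemma~\ref{lem:partition}.

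Next I would single out a canonical candidate inside $\mathcal{L}'$. Let $\mathcal{N}=\{N\in\mathcal{L}' : S_{M_0,N}\subseteq\rot(I)\}$. This set is nonempty because $S_{M_0,M_{0'}}=A_s\subseteq\rot(I)$, and it is closed under the join of $\mathcal{L}'$, since $S_{M_0,N_1\vee N_2}=S_{M_0,N_1}\cup S_{M_0,N_2}$. Hence $\mathcal{N}$ has a unique maximum $M'=\bigvee\mathcal{N}$ with $S_{M_0,M'}\subseteq\rot(I)$, and $S_{M_0,M'}$ is again a union of meta-rotations. It then suffices to prove $S_{M_0,M'}=\rot(I)$, for in that case $\rot(I)$ is precisely the closed set generating $M'$, so $M=M'\in\mathcal{L}'$.

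The heart of the argument is ruling out strict containment. Suppose $\rot(I)\supsetneq S_{M_0,M'}$. Because both sides are unions of the disjoint meta-rotations in $\mathcal{S}$, there is a meta-rotation $A\subseteq\rot(I)$ with $A\cap S_{M_0,M'}=\emptyset$. This $A$ is neither $A_s$ (which lies in $S_{M_0,M'}$) nor $A_t$ (which is disjoint from $\rot(I)$), so $A$ is an interior meta-rotation $S_{M_i,M_j}$, and by construction the precedence relations of $\Pi'$ attach all of $\pre(A)$ below it. Since $I$ is closed and $A\in I$, we get $\pre(A)\subseteq I$, hence $\rot(\pre(A))\subseteq\rot(I)$. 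By Lemma~\ref{lem:preMetaPoset} the meta-rotation $A$ appears last on an $\mathcal{L}'$-path from $M_{0'}$ to $M^A$, so $S_{M_0,M^A}=A_s\cup\rot(\pre(A))\cup A$, and each of these three pieces lies in $\rot(I)$. Thus $M^A\in\mathcal{N}$, whence $M^A\preceq M'$ and $A\subseteq S_{M_0,M^A}\subseteq S_{M_0,M'}$, contradicting $A\cap S_{M_0,M'}=\emptyset$. Therefore $S_{M_0,M'}=\rot(I)$ and $M\in\mathcal{L}'$.

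I expect the main obstacle to be the bookkeeping that forces meta-rotations to behave as indivisible blocks: specifically, verifying that $S_{M_0,N}$ is exactly a union of elements of $\mathcal{S}$ for every $N\in\mathcal{L}'$, and that $A$ being last on the path to $M^A$ yields the clean identity $S_{M_0,M^A}=A_s\cup\rot(\pre(A))\cup A$. These are the points where Lemma~\ref{lem:partition}, Lemma~\ref{lem:preMetaPoset}, and the join-additivity of $S_{M_0,\cdot}$ must all be combined correctly; once they are in place, the extremal choice of $M'=\bigvee\mathcal{N}$ makes the contradiction essentially automatic.
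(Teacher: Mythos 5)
Your proof is correct, but it is organized differently from the paper's. The paper proceeds by induction on the proper closed set $I$ of $\Pi'$: it peels off a maximal meta-rotation $A \in I$, invokes the inductive hypothesis on $J = I \setminus \{A\}$ to get $M_J \in \mathcal{L}'$, and then writes the generated matching as the join $M_I = M_J \vee M^A$, which lies in $\mathcal{L}'$ by the sublattice property. You replace the induction by a single extremal argument: you take $M' = \bigvee\{N \in \mathcal{L}' : S_{M_0,N} \subseteq \rot(I)\}$ and rule out $S_{M_0,M'} \subsetneq \rot(I)$ by showing that any missing meta-rotation $A$ would force $M^A \in \mathcal{N}$, a contradiction. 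The two arguments use exactly the same toolbox---the partition property (\Cref{lem:partition}), the matchings $M^A$, \Cref{lem:preMetaPoset}, join-additivity of $S_{M_0,\cdot}$, and closure of $\mathcal{L}'$ under $\vee$---and in effect both identify $M_I$ with the join of the $M^A$ over interior $A \in I$; yours just does it in one shot rather than one meta-rotation at a time. A side benefit of your write-up is that it makes explicit two facts the paper leaves implicit in the terse steps ``$S_{M_0,M_I} \supset A$, therefore $M^A \prec M_I$'': namely that closedness of $I$ in $\Pi'$ forces $\pre(A) \subseteq I$, and that $S_{M_0,M^A} = A_s \cup \rot(\pre(A)) \cup A$, which is precisely why containment of $A$ alone suffices.
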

\begin{proof}
	For any proper closed subset $I$ in $\Pi'$, let $M_I$ be the matching generated by eliminating meta-rotations in $I$.	 
	Let $J$ be another proper closed subset in $\Pi'$ such that $J = I \setminus \{A\}$, 
	where $A$ is a maximal meta-rotation in $I$. 
	Then $M_J$ is a matching in $\mathcal{L}'$ by induction. 
	Since $I$ contains $A$, $S_{M_0,M_I} \supset A$. Therefore, $M^A \prec M_I$. 
	It follows that $M_I = M_J \vee M^A \in \mathcal{L}'$.
\end{proof}

\begin{lemma} Any matching in $\mathcal{L}'$ must be in $L(\Pi')$.
\end{lemma}
\begin{proof}
	Suppose there exists a matching $M$ in $\mathcal{L}'$ such that $M \not \in L(\Pi')$. 
	Then it must be the case that $S_{M_0,M}$ cannot be partitioned into meta-rotations which
	form a closed subset of $\Pi$. Now there are two cases.
	
	First, suppose that $S_{M_0,M}$ can be partitioned into meta-rotations, but they do not form a closed subset of $\Pi'$.
	Let $A$ be a meta-rotation such that $S_{M_0,M} \supset A$, and there exists $B \prec A$ such that $S_{M_0,M} \not \supset B$.
	By Lemma~\ref{lem:latticeContainingMetaRotaion}, $M \succ M^A$ and hence $S_{M_0,M}$ is a superset
	of all meta-rotations in $\pre(A)$, giving is a contradiction. 
	
	Next, suppose that $S_{M_0,M}$ cannot be partitioned into meta-rotations in $\Pi'$. Since the set of meta-rotations partitions $\Pi$, there exists a meta-rotation $X$ such that $Y = X \cap S_{M_0,M}$ is a non-empty subset of $X$.
	Let $J$ be the set of meta-rotations preceding $X$ in $\Pi$. 
	
	$(M_J \vee M) \wedge M^X$ is the matching generated by 
	meta-rotations in $J \cup Y$.
	Obviously, $J$ is a closed subset in $\Pi'$. Therefore, $M_J \in L(\Pi')$.
	By Lemma~\ref{lem:LPsubsetLbar}, $M_J \in \mathcal{L}'$.
	Since $M,M^X \in \mathcal{L}'$, $(M_J \vee M) \wedge M^X \in \mathcal{L}'$ as well. 
	The set of rotations contained on a path from $M_J$ to $(M_J \vee M) \wedge M^X$ in $H(\mathcal{L})$ is exactly $Y$.
	Therefore, $Y$ can not be a subset of any meta-rotation, contradicting the fact that $Y = X \cap S_{M_0,M}$ is a non-empty subset of $X$.
\end{proof}


\section{Algorithms when One Side Changes Preferences}
\label{app:one_side}

In this setting, we consider instances where the workers' preferences remain unchanged while the firms' preferences differ, i.e., instances that are $(0,n)$. The case of $(n,0)$ is equivalent by symmetry. 




The \Cref{thm:sublattice} shows that the join and meet of stable matchings in $\mathcal{M}_A \cap \mathcal{M}_B$ are the same under the ordering of $\mathcal{L}_A$ and $\mathcal{L}_B$. This implies that workers agree on which matchings in the intersection they prefer. The key driver of this result is that workers have the \emph{same preferences} in both instances, allowing many arguments used in single-instance stable matching lattices to carry over. Additionally, the set $\mathcal{M}_A \cap \mathcal{M}_B$ forms the same sublattice in both $\mathcal{L}_A$ and $\mathcal{L}_B$ in the sense that the partial orders coincide; that is, $(\mathcal{M}_A \cap \mathcal{M}_B, \preceq_A) = (\mathcal{M}_A \cap \mathcal{M}_B, \preceq_B)$. This implies the existence of worker-optimal and firm-optimal stable matchings in $\mathcal{M}_A \cap \mathcal{M}_B$, and suggests that a variant of the Deferred Acceptance algorithm could succeed in this setting.

This motivates the notion of a *compound instance* that encodes agent preferences across both input instances. We present a variant of the Deferred Acceptance algorithm (\Cref{alg:daalgorithm_firm_one_side}) on this new instance to find such matchings. See \Cref{fig:compoundinstance} for the construction.

\begin{definition}
\label{def:compound_instance}
Let $A$ and $B$ be $(0,n)$. Define the compound instance $X$ as follows:
\begin{enumerate}
    \item For each worker $w \in \mathcal{W}$, the preference list in $X$ is the same as in $A$ and $B$.
    \item For each firm $f \in \mathcal{F}$, $f$ prefers worker $w_i$ to $w_j$ in $X$ if and only if $f$ prefers $w_i$ to $w_j$ in both $A$ and $B$:
    \[
    \forall f \in \mathcal{F}, \forall w_i, w_j \in \mathcal{W}, \quad w_i >^X_f w_j \iff (w_i >^A_f w_j \text{ and } w_i >^B_f w_j).
    \]
\end{enumerate}
\end{definition}

In the compound instance $X$, workers have totally ordered preference lists, while firms may have partial orders. If a firm $f$ does not strictly prefer $w_i$ to $w_j$ in $X$, we say $f$ is indifferent between them, denoted $w_i \sim^X_f w_j$.

Given a matching $M$, a worker-firm pair $(w, f) \not\in M$ is said to be a strongly blocking pair with respect to $M$ if $w$ prefers $f$ to their assigned partner in $M$ and $f$ either strictly prefers $w$ or is indifferent between $w$ and its partner in $M$. A matching with no strongly blocking pairs is said to be \emph{strongly stable} under $X$. Let $\mathcal{M}_X$ denote the set of matchings that are strongly stable under $X$. See \cite{IRVING-Indifferences,Fleiner} for more information on Strongly stable matchings.

The following lemma shows that compound instances are meaningful in our context:

\begin{lemma} \cite{Fleiner}
\label{lem:strongly_stable}
If $X$ is the compound instance of $A$ and $B$, then a matching $M$ is strongly stable under $X$ if and only if it is stable under both $A$ and $B$. That is, $\mathcal{M}_X = \mathcal{M}_A \cap \mathcal{M}_B$.
\end{lemma}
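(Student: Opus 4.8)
The plan is to prove the two set inclusions separately, in each case reasoning about blocking pairs and exploiting the defining property of the compound instance: for any firm $f$ and workers $w,w'$, the strict preference $w >^X_f w'$ holds exactly when $A$ and $B$ \emph{agree} that $w$ beats $w'$, while the indifference $w \sim^X_f w'$ holds exactly when $A$ and $B$ \emph{disagree}. Because the preference lists of $A$ and $B$ are strict total orders, for each such pair exactly one of these two situations occurs. Throughout I would also use that in the $(0,n)$ setting every worker has the same list in $A$, $B$, and $X$, so the phrase ``$w$ prefers $f$ to $M(w)$'' is unambiguous across all three.

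For the inclusion $\mathcal{M}_A \cap \mathcal{M}_B \subseteq \mathcal{M}_X$, I would take $M$ stable under both $A$ and $B$ and suppose, for contradiction, that $(w,f)$ is a strongly blocking pair of $M$ in $X$. Then $f >_w M(w)$, and either $w >^X_f M(f)$ or $w \sim^X_f M(f)$. In the first case the definition of $X$ gives $w >^A_f M(f)$, so $(w,f)$ already blocks $M$ in $A$, a contradiction. In the second (indifference) case, $A$ and $B$ disagree on $w$ versus $M(f)$; choosing the instance $I \in \{A,B\}$ in which $w >^I_f M(f)$ shows that $(w,f)$ blocks $M$ in $I$, again a contradiction.

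For the reverse inclusion $\mathcal{M}_X \subseteq \mathcal{M}_A \cap \mathcal{M}_B$, I would take $M$ strongly stable under $X$ and suppose it fails to be stable in, say, $A$ (the case of $B$ being identical by symmetry). A blocking pair $(w,f)$ of $M$ in $A$ gives $f >_w M(w)$ and $w >^A_f M(f)$. Comparing $w$ and $M(f)$ in $B$: if $w >^B_f M(f)$ then $w >^X_f M(f)$; otherwise $M(f) >^B_f w$, so $A$ and $B$ disagree and hence $w \sim^X_f M(f)$. Either way the firm-side condition for a strong block is satisfied, so $(w,f)$ strongly blocks $M$ in $X$, contradicting strong stability.

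The only delicate point — which I expect to be the crux rather than a genuine obstacle — is handling the indifference case correctly: one must observe that $w \sim^X_f M(f)$ forces $A$ and $B$ to disagree on $w$ versus $M(f)$, and then select the specific instance that witnesses the block. This is precisely where the strictness of the preference lists of $A$ and $B$ is used, since it guarantees that indifference in $X$ can arise only from a conflict between the two instances and never from a tie within either one.
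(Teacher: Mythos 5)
Your proof is correct and follows essentially the same route as the paper's: both directions are argued via blocking pairs, using the observation that strict preference in $X$ means $A$ and $B$ agree while indifference in $X$ means they disagree, so that a strongly blocking pair in $X$ yields a blocking pair in at least one of $A,B$ and conversely. The paper states the same argument more tersely; your case analysis on the indifference situation is just a more explicit write-up of the identical idea.
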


\begin{proof}[Proof of \Cref{lem:strongly_stable}]
Let $(w, f)$ be a strongly blocking pair for matching $M$ under $X$. Since $w$ has identical preferences in $A$ and $B$, they must strictly prefer $f$ to $M(w)$ in both instances. Firm $f$ must either strictly prefer $w$ to its partner in $M$ or be indifferent between them. In either case, $f$ strictly prefers $w$ in at least one of $A$ or $B$, making $(w, f)$ a blocking pair under that instance. Thus, $M$ is not stable under both.

Conversely, if $(w, f)$ is a blocking pair under $A$ or $B$, then it satisfies the conditions for being strongly blocking under $X$. Therefore, $\mathcal{M}_X = \mathcal{M}_A \cap \mathcal{M}_B$.
\end{proof}
\begin{algorithm}[ht]
	\begin{wbox}
		\textsc{CompoundInstance}($A, B$): \\
		\textbf{Input:} Stable matching instances $A$ and $B$ with the same preferences on the workers' side. \\
		\textbf{Output:} Instance $X$.
		\begin{enumerate}
			\item $\forall w \in W$, $w$'s preferences in $X$ are the same as in $A$ and $B$.
			\item $\forall f \in F$, $\forall w_i, w_j \in W$, $w_i >_f^X w_j$ if and only if $w_i >_f^A w_j$ and $w_i >_f^B w_j$.
			\item Return $X$.
		\end{enumerate}
	\end{wbox}
	\caption{Subroutine for constructing a compound instance.}
	\label{fig:compoundinstance} 
\end{algorithm} 

\subsection{Worker and firm optimal matchings}
\label{subsection.alg_one_sided_errors}

\begin{algorithm}[ht]
	\begin{wbox}
		\textsc{WorkerOptimalForCompoundInstance}($X$): \\
		\textbf{Input:} Compound stable matching instance $X$ \\
		\textbf{Output:} Perfect matching $M$ or $\boxtimes$  \\
        Each worker has a list of all firms, initially uncrossed, in decreasing order of priority.
		\begin{enumerate}
			\item Until all workers receive an acceptance or some worker is rejected by all firms, do:
			\begin{enumerate}
				\item $\forall w \in W$: $w$ proposes to their best uncrossed firm.
				\item $\forall f \in F$: $f$ tentatively accepts their \textbf{\textit{best-ever}} proposal, i.e., the proposal that they strictly prefer to every other proposal ever received, and rejects the rest.
				\item $\forall w \in W$: if $w$ is rejected by a firm $f$, they cross $f$ off their list. 
			\end{enumerate}  
			\item If all workers get an acceptance, output the perfect matching ($M$); else output $\boxtimes$.
		\end{enumerate}
	\end{wbox}
	\caption{\cite{IRVING-Indifferences}Finding the worker-optimal stable matching using compound instance.}
	\label{alg:daalgorithm_worker_one_side} 
\end{algorithm} 

\begin{algorithm}[ht]
	\begin{wbox}
		\textsc{FirmOptimalForCompoundInstance}($X$): \\
		\textbf{Input:} Compound stable matching instance $X$ \\
		\textbf{Output:} Perfect matching $M$ or $\boxtimes$ \\
        Each firm has a DAG over all workers, initially uncrossed, representing its partial order preferences.
		\begin{enumerate}
			\item Until all firms receive an acceptance or some firm is rejected by all workers, do:
			\begin{enumerate}
				\item $\forall f \in F$: $f$ proposes to the \textit{\textbf{best uncrossed workers}}, i.e., all uncrossed workers $w$ such that there is no uncrossed $w'$ with $w' >^X_f w$.
				\item $\forall w \in W$: $w$ tentatively accepts the best proposal and rejects the rest.
				\item $\forall f \in F$: if $f$ is rejected by a worker $w$, they cross $w$ off their DAG.
			\end{enumerate}
			\item If all firms receive an acceptance, output the perfect matching ($M$); else output $\boxtimes$.
		\end{enumerate}
	\end{wbox}
	\caption{\cite{IRVING-Indifferences}Finding the firm-optimal stable matching using compound instance.}
	\label{alg:daalgorithm_firm_one_side} 
\end{algorithm} 

The original Deferred Acceptance Algorithm(\Cref{alg:gs_algorithm}) works in iterations. In each iteration, three steps occur: (a) the proposing side (say, workers) proposes to their best firm that hasn’t yet rejected them; (b) each firm tentatively accepts their best proposal of the round and rejects all others; (c) all workers cross off the firms that rejected them. The loop continues until a perfect matching is found and output. The key idea is that once a rejection occurs, that worker–firm pair cannot be part of any stable matching. 

We modify this algorithm to find the worker- and firm-optimal stable matchings in the intersection for the $(0, n)$ setting, ensuring that strong blocking pairs are prevented.

\begin{theorem}
\label{thm:n_0_algo_works}
\Cref{alg:daalgorithm_worker_one_side} and \Cref{alg:daalgorithm_firm_one_side} find the worker- and firm-optimal matchings respectively, if they exist.
\end{theorem}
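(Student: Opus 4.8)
The plan is to adapt the classical Gale--Shapley optimality proof to the strong-stability semantics of the compound instance $X$ (\Cref{def:compound_instance}), and then transfer the conclusion back to $\mathcal{M}_A \cap \mathcal{M}_B$ via \Cref{lem:strongly_stable}. I focus on the worker-proposing \Cref{alg:daalgorithm_worker_one_side}; the firm-proposing \Cref{alg:daalgorithm_firm_one_side} is handled symmetrically, the only structural difference being that firms carry a partial order (a DAG) rather than a strict list, so that ``best uncrossed workers'' must be read as the set of \emph{maximal} uncrossed workers and a firm may propose to several workers simultaneously.

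First I would establish the central \emph{rejection invariant}: if firm $f$ rejects worker $w$ at any point during the execution, then $(w,f)$ lies in no strongly stable matching of $X$, and hence, by \Cref{lem:strongly_stable}, in no matching of $\mathcal{M}_A \cap \mathcal{M}_B$. The argument is by induction on the round in which the rejection occurs. A rejection happens only because $f$ has received a proposal from some $w'$ with $w' >^X_f w$, i.e.\ a strictly preferred proposal; crucially, the \textbf{best-ever} acceptance rule never discards a worker merely for being tied ($\sim^X_f$) with the held proposal, which is exactly the feature needed so that no strongly stable pair is destroyed. Using the inductive hypothesis---that every firm which $w'$ prefers to $f$ has already rejected $w'$ and is therefore infeasible for $w'$---I would show that any strongly stable matching $M$ containing $(w,f)$ would admit $(w',f)$ as a strong blocking pair, since $w'$ strictly prefers $f$ to $M(w')$ and $f$ weakly prefers $w'$ to $w=M(f)$, a contradiction.

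With the invariant in hand the remaining steps are routine. Termination is clear since each round either halts or crosses at least one firm off some worker's finite list. If the algorithm outputs a perfect matching $M$, I would verify \emph{strong stability}: were $(w,f)$ a strong blocking pair, $w$ would have proposed to the strictly preferred $f$ and been rejected, so by the best-ever rule $f$'s held partner only improved thereafter, giving $M(f) >^X_f w$ and contradicting that $f$ weakly prefers $w$; thus $M \in \mathcal{M}_A \cap \mathcal{M}_B$ by \Cref{lem:strongly_stable}. \emph{Worker-optimality} follows from the rejection invariant: each worker is rejected only by firms infeasible for it, so on termination it is matched to the top firm on its strict list that is feasible in some strongly stable matching, which is precisely the worker-optimal element (the matching is simultaneously achievable since it is perfect). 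Finally, for \emph{existence detection}, the output $\boxtimes$ occurs iff some worker is crossed off all firms, which by the invariant means that worker has no feasible partner and hence $\mathcal{M}_A \cap \mathcal{M}_B = \varnothing$; conversely, a nonempty intersection guarantees that every worker retains a feasible firm, so the algorithm returns a perfect matching.

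The main obstacle is the rejection invariant in the presence of firm indifference. Unlike the strict case of \cite{GaleS}, a firm may be tied among several equally ranked proposals, and the argument must show that the best-ever rule resolves such ties without prematurely eliminating a feasible pair; in the firm-optimal direction the same difficulty reappears as the need to propose to the entire maximal antichain of a firm's DAG and to argue that crossing off a worker preserves feasibility. This is exactly the delicate point addressed by Irving's strongly stable matching procedure \cite{IRVING-Indifferences}, on whose analysis the correctness of both algorithms ultimately rests.
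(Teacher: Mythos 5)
Your overall architecture---a rejection invariant proved by induction on rounds, followed by stability, optimality, termination, and $\boxtimes$-correctness, all phrased through the compound instance $X$ and transferred via \Cref{lem:strongly_stable}---is exactly the paper's proof (\Cref{lem:n_0_worker_bp,lem:n_0_worker_bp2,lem:n_0_worker_bp3,lem:n_0_worker_bp4} and their firm-side counterparts). However, your key step rests on a false description of the algorithm. You assert that ``a rejection happens only because $f$ has received a proposal from some $w'$ with $w' >^X_f w$'' and that the best-ever rule ``never discards a worker merely for being tied.'' Both claims are backwards. Under \Cref{alg:daalgorithm_worker_one_side}, a firm tentatively accepts a proposal only if it \emph{strictly} prefers it to every proposal ever received; hence when two proposers are tied or incomparable under $\sim^X_f$, the firm rejects \emph{both} of them (the paper notes explicitly that a firm may reject all proposals in a round). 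This tie-induced rejection is not a defect the rule avoids---it is precisely the mechanism that enforces strong stability: if $f$ instead held one of two tied proposers $w$, the other proposer $w'$ would strictly prefer $f$ to its eventual partner while $f$ is indifferent between $w'$ and $w$, so $(w',f)$ would strongly block the output.

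Because of this mischaracterization, the induction you describe does not cover all rejections: those caused by ties or incomparability fall outside your stated premise, so the invariant ``rejected pairs are infeasible'' is never established for them, and everything downstream (worker-optimality, correctness of outputting $\boxtimes$) inherits the gap. The repair is exactly the paper's case analysis in \Cref{lem:n_0_worker_bp}: when $f$ rejects $w$, there is a proposer $w'$ with $w' >^X_f w$ \emph{or} $w' \sim^X_f w$, and in either case $(w',f)$ strongly blocks any matching containing $(w,f)$, because a strong blocking pair requires only weak preference ($>^X_f$ or $\sim^X_f$) on the firm's side---a fact your final sentence (``$f$ weakly prefers $w'$'') implicitly uses even though your premise contradicts it. The analogous correction is needed on the firm-proposing side (\Cref{lem:n_0_firm_bp}), where proposing to the entire maximal antichain of the DAG and the workers' strict-preference rejections play the corresponding role.
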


The worker-optimal Algorithm in \Cref{alg:daalgorithm_worker_one_side} (simplified algorithm of \cite{IRVING-Indifferences} and equivalent to algorithms in \cite{NP-Two-stable, genPreferences, GMRV-nearby-instances}) differs from the classic DA algorithm as follows: firms accept a proposal only if it is strictly better than every proposal they have ever received (including from previous rounds). This avoids strong blocking pairs involving firm indifference. Note that a firm may reject all proposals in a round if it is indifferent among the proposing workers, which is necessary to maintain strong stability.

In the firm-optimal Algorithm in \Cref{alg:daalgorithm_firm_one_side} (equivalent to Algorithm SUPER in \cite{IRVING-Indifferences}), firms propose using a DAG of partial preferences. Each firm proposes to all uncrossed workers $w$ such that no uncrossed worker $w'$ exists with $w' >^X_f w$. Workers (with total preferences) tentatively accept their best proposal and reject the rest. If a worker rejects a firm, then by strict preferences, they cannot form a blocking pair later. If multiple workers tentatively accept a firm’s proposals, some may later reject the firm, and the algorithm either continues or terminates with no perfect matching.

Note: when $A = B$, then $X$ has total preferences on both sides, and these algorithms reduce to the standard DA algorithm.

\begin{remark} (\cite{Fleiner})
Strongly stable matchings in an instance $Y$ with partial order preferences on one side form a lattice. Since each partial order has finitely many linear extensions, $\mathcal{M}_Y$ can be written as the intersection of finitely many lattices. However, as the number of linear extensions may be exponential in $n$, this does not yield an efficient method for computing the Birkhoff partial order.
\end{remark}

\subsection{Proof of Correctness of Algorithm~\ref{alg:daalgorithm_worker_one_side}}
\label{sec:daalgorithm_worker_one_side}

We have instances $A$ and $B$ which are $(0,n)$, meaning that workers have complete order preferences and firms have partial order preferences in $X$. In \Cref{alg:daalgorithm_worker_one_side}, workers propose, and unlike in \Cref{alg:gs_algorithm}, firms tentatively accept only the worker whom they strictly prefer to every other proposer they have received—in the current round and all previous rounds.

\begin{lemma}
\label{lem:n_0_worker_bp}
If firm $f$ rejects worker $w$ in any round of \Cref{alg:daalgorithm_worker_one_side}, then $w$ and $f$ can never be matched in any matching that is strongly stable under $X$.
\end{lemma}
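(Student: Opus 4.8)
The plan is to prove this in the classical ``no strongly stable pair is ever rejected'' style, by induction on the rounds of \Cref{alg:daalgorithm_worker_one_side}. Call a rejection of $w$ by $f$ \emph{safe} if $(w,f)$ lies in no matching that is strongly stable under $X$; the lemma is exactly the claim that every rejection is safe. I would argue by contradiction: suppose some rejection is unsafe, and let round $t$ be the \emph{earliest} round in which an unsafe rejection occurs, say $f$ rejects $w$ while some strongly stable matching $M$ satisfies $(w,f)\in M$. By minimality of $t$, every rejection made before round $t$ is safe, and this is the induction hypothesis I will exploit.

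Next I would locate a witness worker. Since $f$ rejects $w$ in round $t$ under the ``best-ever'' rule, $w$ is not the unique strict maximum among all proposals $f$ has ever received, so there exists a worker $w'\neq w$ who proposed to $f$ (in round $t$ or earlier) with $w'\ge^X_f w$. Because each worker proposes in decreasing order of preference and crosses off every firm that rejects it, $w'$ could only have reached $f$ after being rejected, in rounds strictly before $t$, by every firm it strictly prefers to $f$. By the induction hypothesis each of those earlier rejections is safe, so $M(w')$ cannot be any firm that $w'$ prefers to $f$; hence $f\ge^X_{w'}M(w')$.

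Finally I would combine the two inequalities into a strongly blocking pair. Worker preferences in $X$ are strict, so $f\ge^X_{w'}M(w')$ means either $f>^X_{w'}M(w')$ or $f=M(w')$; the latter would force $M(f)=w'\neq w$, contradicting $(w,f)\in M$, so in fact $w'$ strictly prefers $f$ to $M(w')$. Together with $w'\ge^X_f w=M(f)$, the pair $(w',f)$ matches precisely the definition of a strongly blocking pair of $M$ under $X$ (the worker strictly prefers, the firm weakly prefers), contradicting the strong stability of $M$. This contradiction rules out any unsafe rejection and establishes the lemma.

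The main obstacle is the bookkeeping between the firm-side indifference (the partial order on $\mathcal{F}$ inherited from the disagreements between $A$ and $B$) and the strict worker-side order. The ``best-ever'' acceptance rule only yields a witness $w'$ with a \emph{weak} firm preference $w'\ge^X_f w$, so I must lean crucially on the strictness of worker preferences, together with $w'\neq w$, to upgrade $w'$'s weak preference $f\ge^X_{w'}M(w')$ to the strict preference a strongly blocking pair requires. Executing this weak/strict upgrade correctly, and verifying that all of $w'$'s earlier rejections indeed fall in rounds before $t$ so that the induction hypothesis applies to them, is where the argument needs the most care.
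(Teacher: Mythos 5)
Your proof is correct and follows essentially the same route as the paper's: induction on the round of the rejection (your minimal-counterexample formulation is equivalent), locating a witness $w'$ who proposed to $f$ and is weakly preferred by $f$ to $w$, invoking the induction hypothesis on $w'$'s earlier rejections to conclude $f >^X_{w'} M(w')$, and exhibiting $(w',f)$ as a strongly blocking pair. If anything, your handling of the weak-to-strict upgrade (ruling out $f = M(w')$ via $(w,f)\in M$ and $w'\neq w$) is spelled out more carefully than in the paper's version.
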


\begin{proof}[Proof of \Cref{lem:n_0_worker_bp}]
Assume there is a rejection (otherwise, the result is trivial). Let firm $f$ reject worker $w$. We will show that any matching $M$ such that $(w,f) \in M$ has a strong blocking pair. We proceed by induction on the round number in which the rejection occurs.

Assume the rejection happens in the first round. Then $f$ only rejects $w$ if it receives a proposal from another worker $w'$ whom it prefers or is indifferent to. Since $w'$ proposed to $f$ in the first round, $w'$ must prefer $f$ to $M(w')$. Hence, $(w', f)$ is a strong blocking pair to $M$.

Now assume the rejection occurs in round $k > 1$. If $f$ rejects $w$, then there exists a worker $w'$ who has proposed to $f$ in this or a previous round and such that $w' >^X_f w$ or $w' \sim^X_f w$. If $w'$ was rejected by $f$ in an earlier round, then by the induction hypothesis, $w'$ cannot be matched to any firm it proposed to before $f$, including $f$, in any stable matching. Hence, $f >_{w'} M(w')$, and so $(w', f)$ is a strong blocking pair.
\end{proof}

\begin{lemma}
\label{lem:n_0_worker_bp2}
If $X$ admits a strongly stable matching, then \Cref{alg:daalgorithm_worker_one_side} finds one.
\end{lemma}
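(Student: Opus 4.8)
The plan is to argue by contradiction, following the same template used for the two-room worker-proposing algorithm in \Cref{lem:daalgorithm_worker_both_sides}. Suppose $X$ admits a strongly stable matching $M^\star$ but \Cref{alg:daalgorithm_worker_one_side} nonetheless outputs $\boxtimes$. By inspection of the algorithm, this can happen only when some worker $w$ is rejected by every firm. First I would invoke \Cref{lem:strongly_stable}, together with the fact that stable matchings on equal-sized agent sets with complete preference lists are perfect, so that $M^\star \in \mathcal{M}_A \cap \mathcal{M}_B$ is a perfect matching and in particular assigns $w$ some partner $M^\star(w) = f$.

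Next I would apply \Cref{lem:n_0_worker_bp}: since $f$ is among the firms that rejected $w$ during the run, the pair $(w,f)$ cannot appear in any matching that is strongly stable under $X$. As this holds for every firm that rejected $w$ — that is, for all firms — the worker $w$ has no admissible partner in any strongly stable matching. This contradicts the existence of $M^\star$, which matches $w$ to $f$. Hence the algorithm cannot output $\boxtimes$ and must terminate with a perfect matching $M$.

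It remains to verify that the perfect matching $M$ returned is itself strongly stable, so that the algorithm indeed ``finds one.'' The key observation I would isolate is a monotonicity property of the best-ever acceptance rule: because a firm accepts a proposal only when it \emph{strictly} prefers it to every proposal it has ever received across all rounds, the worker $M(f)$ that $f$ holds at termination is strictly preferred by $f$ to every other worker that ever proposed to it. Using this, suppose toward a contradiction that $(w,f)$ is a strong blocking pair of $M$. Then $f >_w M(w)$, and since workers propose down their complete preference lists and $w$ ends held by $M(w)$, worker $w$ must have proposed to $f$ and been rejected; as $w \neq M(f)$, the monotonicity property gives $M(f) >^X_f w$. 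But a strong blocking pair requires $f$ to weakly prefer $w$ to its partner, i.e.\ $w >^X_f M(f)$ or $w \sim^X_f M(f)$, which is incompatible with $M(f) >^X_f w$ — a contradiction.

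The non-failure part is an immediate consequence of \Cref{lem:n_0_worker_bp} and poses no difficulty. The step I expect to require the most care is the strong-stability argument: one must handle firm \emph{indifference} (incomparability in the partial order of $X$) correctly, and in particular justify that the best-ever rule forces $M(f)$ to be \emph{strictly} better than every other historical proposer — not merely weakly — since it is precisely the strictness that rules out a strongly blocking pair arising from a tie. This is exactly the place where the distinction between \Cref{alg:daalgorithm_worker_one_side} and the classical Deferred Acceptance algorithm becomes essential.
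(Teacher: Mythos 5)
Your proof is correct and takes essentially the same route as the paper: the non-failure argument is exactly the paper's proof, deriving a contradiction from \Cref{lem:n_0_worker_bp} when some worker is rejected by all firms. The extra strong-stability verification you include is not needed for this lemma as the paper scopes it --- the paper proves it separately as \Cref{lem:n_0_worker_bp3}, using the same best-ever/strictness argument you give --- so your proposal simply folds that subsequent lemma's content in.
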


\begin{proof}[Proof of \Cref{lem:n_0_worker_bp2}]
Suppose there exists a strongly stable matching, but the algorithm terminates by outputting $\boxtimes$. This only happens when a worker is rejected by all firms. However, by \Cref{lem:n_0_worker_bp}, this means that the worker has no feasible partner in any strongly stable matching, which is a contradiction.
\end{proof}

\begin{lemma}
\label{lem:n_0_worker_bp3}
If \Cref{alg:daalgorithm_worker_one_side} returns a matching, it must be strongly stable under $X$.
\end{lemma}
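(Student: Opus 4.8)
The plan is to argue by contradiction, assuming that \Cref{alg:daalgorithm_worker_one_side} terminates with a perfect matching $M$ that admits a strongly blocking pair $(w,f)$, and then to contradict the acceptance rule of the algorithm. By the definition of a strongly blocking pair, we would have $(w,f)\notin M$, worker $w$ strictly preferring $f$ to $M(w)$, and firm $f$ weakly preferring $w$ to $M(f)$, i.e. $w >^X_f M(f)$ or $w \sim^X_f M(f)$. Writing $w \geq^X_f M(f)$ for this weak preference, and noting $w \neq M(f)$, it suffices to derive $M(f) >^X_f w$, which is incompatible with $w \geq^X_f M(f)$.

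First I would establish that $w$ proposed to $f$ at some point of the execution. Since every worker has a complete, strict preference list in $X$ and always proposes to its best uncrossed firm, the firms $w$ proposes to form a strictly decreasing sequence in $w$'s order, a firm being crossed off only after a rejection. As $w$ ends matched to $M(w)$ while $f >^X_w M(w)$, firm $f$ lies strictly above $M(w)$ in $w$'s list, so $w$ must have proposed to (and been rejected by) $f$ before settling on $M(w)$; hence $f$ received a proposal from $w$. The core step is then the invariant that the worker held by $f$ at termination is strictly preferred by $f$ to \emph{every} worker who ever proposed to $f$. The algorithm halts only when a round produces no rejection, so at that point $f$ is holding $M(f)$; by the acceptance rule a firm holds a proposal only when it strictly prefers that proposal to every proposal ever received. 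Applying this at the final round yields $M(f) >^X_f u$ for every other ever-proposer $u$, and in particular $M(f) >^X_f w$, the desired contradiction. Therefore $M$ has no strongly blocking pair and is strongly stable under $X$.

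The main obstacle I anticipate is justifying this invariant in the presence of ties, since the acceptance rule can force $f$ to reject even its currently held worker once a second, equally ranked proposal arrives (the behaviour noted after the algorithm, that a firm may reject all proposers in a round). The delicate point is that this is exactly what secures the invariant: whenever $f$ holds a worker $x$, that $x$ is strictly better than all proposals seen so far, and the moment an incomparable or tied top proposal appears, $x$ is released, so no held worker is ever merely \emph{weakly} best. Consequently, at termination the held worker $M(f)$ is the unique strict maximum under $>^X_f$ over all workers that ever proposed to $f$, which is precisely what the argument needs. I would formalize this as a one-line invariant carried across iterations and check that each of the three steps of the loop (propose, tentatively accept the strict best-ever, cross off after rejection) preserves it; this is the only place where the indifference semantics of $X$ enter, and it is where I expect the proof to require the most care.
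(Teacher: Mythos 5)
Your proof is correct and follows essentially the same route as the paper's: assume a strongly blocking pair $(w,f)$, observe that $w$ must have proposed to $f$ before settling on $M(w)$, and then use the ``best-ever'' acceptance rule to conclude $M(f) >^X_f w$, contradicting the weak preference $w \geq^X_f M(f)$. Your explicit invariant (that a firm's tentatively held worker is strictly preferred to every proposal ever received, even across ties) is exactly what the paper's terser phrase ``$f$ must be matched to someone it strictly prefers at the end'' implicitly relies on, so you have simply made the same argument more rigorous.
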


\begin{proof}[Proof of \Cref{lem:n_0_worker_bp3}]
Assume not. Let $(w, f)$ be a blocking pair for the matching returned by the algorithm. Then $w$ must have proposed to $f$ during the execution and been rejected. Hence, $f$ must be matched to someone it strictly prefers at the end of the algorithm, implying that $(w, f)$ is not a strong blocking pair—a contradiction.
\end{proof}

\begin{lemma}
\label{lem:n_0_worker_bp4}
If \Cref{alg:daalgorithm_worker_one_side} returns a matching, it must be the worker-optimal strongly stable matching.
\end{lemma}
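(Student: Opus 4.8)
The plan is to mirror the classical worker-optimality argument for Deferred Acceptance, using \Cref{lem:n_0_worker_bp} as the substitute for the standard ``a rejected pair is in no stable matching'' lemma. First I would invoke \Cref{lem:n_0_worker_bp3} so that the returned matching $M$ is itself strongly stable under $X$; in particular the set of strongly stable matchings is nonempty, and by \cite{Fleiner} it forms a lattice under the worker-dominance order $\preceq$, so the worker-optimal (dominant) element is well defined. It then suffices to show that $M \preceq M'$ for every strongly stable matching $M'$, i.e.\ that each worker weakly prefers its partner in $M$ to its partner in $M'$.

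The heart of the argument is a proof by contradiction. Suppose some worker $w$ strictly prefers $M'(w)$ to $M(w)$ for some strongly stable $M'$. In \Cref{alg:daalgorithm_worker_one_side} each worker proposes to its most preferred firm that has not yet crossed it off, and a worker abandons a firm only after being rejected by it. Since $w$'s final partner $M(w)$ is ranked strictly below $M'(w)$ on $w$'s (complete, instance-independent) preference list, $w$ must have proposed to $M'(w)$ at some earlier round and subsequently been rejected by $M'(w)$. Applying \Cref{lem:n_0_worker_bp} to this rejection, the pair $(w, M'(w))$ cannot appear in any strongly stable matching under $X$. This directly contradicts $(w, M'(w)) \in M'$ together with the strong stability of $M'$. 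Hence no such $w$ exists, giving $M \preceq M'$.

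Since $M'$ was an arbitrary strongly stable matching, $M$ dominates all of them in the worker order, and therefore $M$ is the (unique) worker-optimal strongly stable matching, which by \Cref{lem:strongly_stable} equals the worker-optimal matching in $\mathcal{M}_A \cap \mathcal{M}_B$.

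The main obstacle is making precise the claim that $w$ \emph{must have been rejected} by the higher-ranked firm $M'(w)$ under the modified acceptance rule, in which firms accept only their \emph{best-ever} proposal rather than their best proposal of the current round. The point to verify carefully is that this change affects only \emph{which} worker a firm retains, not the invariant that a proposing worker stops pursuing a firm exclusively upon rejection; consequently the monotone ``proposals descend the preference list, and descending past a firm requires a rejection'' structure is preserved, which is all the argument needs. A secondary point worth stating explicitly is the well-definedness of the worker-optimal element, which I would attribute to the lattice structure of strongly stable matchings from \cite{Fleiner} rather than reprove here.
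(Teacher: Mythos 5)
Your proof is correct and follows essentially the same route as the paper's: assume some worker does better in another strongly stable matching, note that the algorithm's proposal order forces that worker to have proposed to and been rejected by the better firm, and invoke \Cref{lem:n_0_worker_bp} to contradict strong stability. Your write-up is merely more explicit (comparing against an arbitrary strongly stable matching rather than the worker-optimal one, and flagging that the best-ever acceptance rule preserves the descend-only-after-rejection invariant), which strengthens but does not change the paper's argument.
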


\begin{proof}[Proof of \Cref{lem:n_0_worker_bp4}]
If not, then some worker $w$ has a better partner in the worker-optimal matching. But in that case, $w$ must have proposed to that partner and been rejected during the algorithm, leading to a contradiction.
\end{proof}

\subsection{Proof of Correctness of Algorithm~\ref{alg:daalgorithm_firm_one_side}}
\label{sec:daalgorithm_firm_one_side}

We again have instances $A$ and $B$ which are $(0,n)$, meaning that workers have complete order preferences and firms have partial order preferences in $X$. In \Cref{alg:daalgorithm_firm_one_side}, firms propose, and unlike in \Cref{alg:gs_algorithm}, they may propose to multiple workers in the same iteration.

\begin{lemma}
\label{lem:n_0_firm_bp}
If worker $w$ rejects firm $f$ in any round of \Cref{alg:daalgorithm_firm_one_side}, then $w$ and $f$ can never be matched in any matching stable under both instances $A$ and $B$.
\end{lemma}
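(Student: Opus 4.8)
The plan is to mirror the induction used for the worker-proposing direction in \Cref{lem:n_0_worker_bp}, adapting it to the firm-proposing algorithm and to the partial-order (indifference) structure of the compound instance $X$. By \Cref{lem:strongly_stable} it suffices to show that if $w$ rejects $f$, then $(w,f)$ lies in no matching that is strongly stable under $X$. I would argue by contradiction: assume some strongly stable $M$ has $(w,f)\in M$, and exhibit a strong blocking pair for $M$. The induction is on the round number $k$ in which the rejection of $f$ by $w$ takes place.

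The key structural observation is that, because workers have \emph{total} orders in $X$, a worker $w$ rejects $f$ only when it is simultaneously holding a proposal from some firm $f'$ with $f' >_w f$ (strictly). I would first record this fact and then analyze the proposer $f'$. At the round $j \le k$ in which $f'$ proposes to $w$, the worker $w$ is a maximal uncrossed worker in $f'$'s DAG; hence every worker $w''$ that $f'$ \emph{strictly} prefers to $w$ has already been crossed off $f'$'s list, i.e. $w''$ rejected $f'$ at a round strictly earlier than $k$. For the base case $k=1$ there are no such $w''$ (nothing has been crossed off yet), so $w$ is genuinely maximal; for $k>1$ the inductive hypothesis applies to each earlier rejection of $f'$ by $w''$, yielding $M(f') \neq w''$.

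From this I conclude that $M(f')$ is not a worker that $f'$ strictly prefers to $w$: either $w >^X_{f'} M(f')$ or $w \sim^X_{f'} M(f')$. Since also $f' >_w f = M(w)$ and $M(f') \neq w$ (as $M(w)=f \neq f'$), the pair $(w,f')$ satisfies exactly the definition of a strongly blocking pair under $X$: the worker strictly prefers $f'$ to its partner while the firm weakly prefers (strictly, or is indifferent to) $w$ over its partner. This contradicts strong stability of $M$, completing the induction and, via \Cref{lem:strongly_stable}, the lemma.

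I expect the main obstacle to be the careful bookkeeping of the partial order on the firm side: one must verify that ``maximal uncrossed worker'' translates precisely into the weak (strict-or-indifferent) preference that a strong blocking pair requires, and that every worker ranked strictly above $w$ by $f'$ is eliminated for $f'$ early enough for the inductive hypothesis to apply. Handling indifference is exactly where this proof departs from the classical strict-preference Deferred Acceptance argument, and it is the reason the algorithm has firms propose only to their maximal uncrossed workers rather than merely to improvements.
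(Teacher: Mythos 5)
Your proof is correct and takes essentially the same route as the paper's: induction on the round number, using the rule that firms propose only to their maximal uncrossed workers, together with \Cref{lem:strongly_stable}, to exhibit the strong blocking pair $(w,f')$ for the firm $f'$ whose proposal caused the rejection. The only difference is organizational — the paper splits into cases according to whether $\mu(f')$ is weakly below or strictly above $w$ in $f'$'s order and invokes the induction hypothesis only in the second case, whereas you apply the induction hypothesis uniformly to every worker strictly above $w$ to rule that case out up front; the substance is identical.
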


\begin{proof}[Proof of \Cref{lem:n_0_firm_bp}]
Suppose, for contradiction, that $w$ rejects firm $f$ in some round, yet they are matched in a stable matching $\mu$.

We proceed by induction on the round number. Suppose $w$ rejects $f$ in the first round. Then there is another firm $f_1$ that $w$ strictly prefers to $f$. Since firms propose to their most preferred uncrossed workers, and this is the first round, there is no worker $w'$ such that $f_1$ prefers to $w$; i.e., $w >_{f_1} w'$ or $w \sim_{f_1} w'$. In matching $\mu$, where $w$ is matched to $f$, this implies that $(w, f_1)$ is a strong blocking pair.

Suppose $w$ rejects $f$ in the $k$th round. Then there is a firm $f_k$ whom $w$ strictly prefers to $f$. One of the following must hold:
\begin{enumerate}
    \item $w >_{f_k} \mu(f_k)$ or $w \sim_{f_k} \mu(f_k)$: Then $(w, f_k)$ is a strong blocking pair.
    \item $\mu(f_k) >_{f_k} w$: Then $f_k$ must have proposed to $\mu(f_k)$ in a previous round and been rejected. By the induction hypothesis, this leads to a different strong blocking pair $(w', f')$ for $\mu$, implying that $\mu$ is not stable.
\end{enumerate}
\end{proof}

\begin{lemma}
\label{lem:n_0_firm_bp2}
\Cref{alg:daalgorithm_firm_one_side} terminates.
\end{lemma}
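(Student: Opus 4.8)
The plan is to exhibit two monotone, bounded integer quantities and argue that at least one of them strictly increases on every pass of the while loop that does not trigger the stopping condition. Concretely, I would track $c$, the number of (firm, worker) pairs that have been crossed off so far in Step~1(c), and $h$, the number of workers currently holding (tentatively accepting) some firm. Both are non-decreasing over the execution: crossing a worker off a firm's DAG is permanent, so $c$ never decreases, and in a firm-proposing process a worker who has once accepted a proposal always continues to hold \emph{some} firm thereafter (it only ever swaps to a strictly better proposer, and its current holder re-proposes each round), so $h$ never decreases. Clearly $0 \le c \le n^2$ and $0 \le h \le n$.

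The crux is to show that in every non-terminal iteration, either $c$ or $h$ strictly increases. I would argue by contradiction: suppose some iteration produces no rejection (so $c$ is unchanged) and matches no previously unheld worker (so $h$ is unchanged), yet the stopping condition still fails. Since the loop did not stop, there is a firm $f$ that is not held by any worker and has not yet crossed off every worker; hence in Step~1(a) this $f$ proposes to at least one of its best uncrossed workers, say $w$. Because workers have strict total orders in the compound instance $X$ (\Cref{def:compound_instance}), exactly one of three mutually exclusive cases holds for $w$: $w$ is unheld, so $w$ accepts $f$ and $h$ increases; or $w$ holds a firm strictly worse than $f$, so that firm is rejected and $c$ increases; or $w$ holds a firm strictly better than $f$, so $f$ is rejected and $c$ increases. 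Each case contradicts the assumption, so no such iteration exists.

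Given this, termination is immediate: the quantity $c + h$ is a non-decreasing integer bounded above by $n^2 + n$ that strictly increases on every non-terminal pass of the loop. Hence at most $n^2 + n$ non-terminal iterations occur, after which the stopping condition in Step~1 must be satisfied and the algorithm halts, outputting either a perfect matching or $\boxtimes$.

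The main obstacle I anticipate is the bookkeeping forced by the partial-order (DAG) preferences on the firm side, which let a firm propose to several incomparable top workers in one round and be simultaneously accepted by one and rejected by another; the argument must therefore never rely on the firm side forming a one-to-one matching. Tracking $h$ on the \emph{worker} side (where preferences are strict, so each worker holds at most one firm) sidesteps this, and the three-case split is valid precisely because workers are never indifferent in $X$. The one fact requiring a careful check is the monotonicity of $h$ — that a held worker never reverts to being unheld — which follows from the standard deferred-acceptance invariant on the proposee side.
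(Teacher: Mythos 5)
Your proof is correct, but it takes a somewhat different route from the paper's. The paper argues by counting proposals globally in each round: fewer than $n$ proposals means some firm has exhausted its DAG (so the stopping condition holds), exactly $n$ proposals to $n$ distinct workers means every firm is accepted and the algorithm halts with a perfect matching, and more than $n$ proposals forces some worker to receive two proposals and hence issue a rejection; thus every non-terminal round produces a rejection, and the $O(n^2)$ bound on permanent cross-offs finishes the argument. You instead use the two-part potential $c+h$: rather than proving that a rejection occurs in every non-terminal round, you allow rejection-free rounds and charge them to the monotone count $h$ of held workers. Both arguments rest on the same underlying facts---cross-offs are permanent and bounded by $n^2$, and a held worker's firm re-proposes each round (it stays maximal in its shrinking uncrossed set), so holding persists---but your local case analysis, on the recipient of a proposal from an unheld firm, is arguably tighter: the paper's trichotomy silently omits the case of exactly $n$ proposals sent to fewer than $n$ distinct workers (which also forces a rejection), whereas your three cases are exhaustive because worker preferences in the compound instance $X$ are strict. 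The price is the extra invariant that $h$ never decreases, which you correctly flag and justify; both routes yield the same $O(n^2)$ bound on the number of iterations.
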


\begin{proof}[Proof of \Cref{lem:n_0_firm_bp2}]
For any round of the algorithm, one of the following holds:
\begin{enumerate}
    \item Fewer than $n$ proposals are made in round $k$: Then some firm $f$ has no uncrossed workers left to propose to in its list (DAG). The algorithm terminates before this round and outputs $\boxtimes$.
    \item Exactly $n$ proposals are made to $n$ distinct workers: Then all workers receive a single proposal, and either a rejection occurs or the algorithm terminates at the end of the round with a perfect matching.
    \item More than $n$ proposals are made: Then at least one worker $w$ receives multiple proposals, rejects at least one, and at least one firm crosses $w$ off its list.
\end{enumerate}
Thus, in every round, there is either a rejection or the algorithm terminates. Since there are at most $O(n^2)$ rejections, the algorithm terminates in $\poly(n)$ rounds.
\end{proof}

\begin{lemma}
\label{lem:n_0_firm_bp3}
If $X$ admits a strongly stable matching, then \Cref{alg:daalgorithm_firm_one_side} finds one.
\end{lemma}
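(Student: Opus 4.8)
The plan is to mirror the completeness argument given for the worker-optimal procedure in \Cref{lem:n_0_worker_bp2}, now invoking the firm-side feasibility lemma. By \Cref{lem:n_0_firm_bp2}, \Cref{alg:daalgorithm_firm_one_side} always terminates, so it outputs either a perfect matching or the failure symbol $\boxtimes$. It therefore suffices to rule out the possibility that the algorithm reports $\boxtimes$ whenever $X$ admits a strongly stable matching.

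First I would set up a proof by contradiction: assume $X$ has a strongly stable matching $M$, yet the algorithm outputs $\boxtimes$. By the termination condition of \Cref{alg:daalgorithm_firm_one_side}, the symbol $\boxtimes$ is produced only when some firm $f$ is rejected by every worker, i.e., every worker has been crossed off $f$'s DAG. Each such crossing-off was a rejection issued by some worker $w$ during the execution, so I would apply \Cref{lem:n_0_firm_bp} to every worker $w$ that rejected $f$: the lemma guarantees that $(w,f)$ cannot appear in any matching that is strongly stable under $X$. Quantifying over all workers, this shows that $f$ has no feasible partner in any strongly stable matching.

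This immediately contradicts the existence of $M$: since $M$ is a perfect matching, $f$ is matched in $M$ to some worker $M(f)$, and the pair $(M(f),f)\in M$ is realized in a strongly stable matching, contradicting the conclusion that $f$ has no feasible partner. Hence the algorithm cannot output $\boxtimes$ when a strongly stable matching exists, so it must terminate with a perfect matching. Strong stability of this output then follows from an argument identical in form to the worker-optimal case: any strongly blocking pair $(w,f)$ would require $w$ to have rejected $f$ during the run, forcing $f$ to end up matched to a worker it weakly prefers and contradicting the blocking condition.

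The main obstacle I anticipate is bookkeeping around the fact that, unlike in the classical Deferred Acceptance algorithm, firms here propose to several workers simultaneously and a worker may reject a firm that was never tentatively matched to it. Care is needed to confirm that the trigger for $\boxtimes$---a firm whose DAG has been emptied---indeed corresponds to a firm that has been rejected by every worker, and that each of those crossings-off is a genuine rejection to which \Cref{lem:n_0_firm_bp} applies. Once this correspondence between an emptied DAG and universal rejection is pinned down, the feasibility lemma closes the argument exactly as in \Cref{lem:n_0_worker_bp2}.
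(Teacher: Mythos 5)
Your proof is correct and follows essentially the same route as the paper's: assume a strongly stable matching exists yet the algorithm outputs $\boxtimes$, note that $\boxtimes$ arises only when some firm has been rejected by all workers, and invoke \Cref{lem:n_0_firm_bp} to conclude that firm has no feasible partner in any strongly stable matching, contradicting perfection of the assumed matching. Your closing paragraph on strong stability of the output is redundant here, since that is the content of the separate \Cref{lem:n_0_firm_bp4}.
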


\begin{proof}[Proof of \Cref{lem:n_0_firm_bp3}]
Suppose there exists a strongly stable matching, but the algorithm terminates by outputting $\boxtimes$. This only happens when a firm is rejected by all workers. However, by \Cref{lem:n_0_firm_bp}, this implies that the firm has no feasible partner in any strongly stable matching, which is a contradiction.
\end{proof}

\begin{lemma}
\label{lem:n_0_firm_bp4}
If \Cref{alg:daalgorithm_firm_one_side} returns a matching, it must be strongly stable under $X$.
\end{lemma}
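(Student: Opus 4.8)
The plan is to argue by contradiction: assume the returned perfect matching $M$ admits a strongly blocking pair $(w,f) \notin M$, and derive a contradiction from the mechanics of \Cref{alg:daalgorithm_firm_one_side}. By definition of a strongly blocking pair, (i) $f >_w M(w)$ and (ii) $w >_f^X M(f)$ or $w \sim_f^X M(f)$; that is, the worker strictly prefers $f$, while the firm weakly prefers $w$. The single invariant I will rely on is monotonicity on the worker side: because workers have \emph{strict} (total) preferences in $X$ and each worker always holds the best proposal received so far, a worker's tentatively held firm can only improve across rounds. Hence the partner $M(w)$ returned at termination satisfies $M(w) \succeq_w g$ for every firm $g$ that $w$ ever held.

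\textbf{Key step.} First I would show that $f$ must have proposed to $w$ at some round. Since $f$ ends up matched to $M(f)$, there is a round $r$ at which $f$ proposes to $M(f)$, and at that round $M(f)$ is a maximal uncrossed worker in $f$'s DAG. I split on condition (ii). If $w >_f^X M(f)$ strictly, then $w$ cannot be uncrossed at round $r$ (otherwise $M(f)$ would not be maximal), so $w$ was crossed off earlier; but a worker is crossed off $f$'s DAG only upon rejecting $f$, so $f$ proposed to $w$ in an earlier round. If instead $w \sim_f^X M(f)$ and $w$ is still uncrossed at round $r$, then $w$ is \emph{also} a maximal uncrossed worker (it is tied with $M(f)$, and nothing uncrossed lies strictly above $M(f)$), so $f$ proposes to $w$ at round $r$ as well; and if $w$ is already crossed off, then as before $f$ proposed to $w$ earlier. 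In every sub-case, $f$ proposes to $w$ at some round $r' \le r$.

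\textbf{Closing the argument.} At round $r'$, worker $w$ receives $f$'s proposal and holds its best current proposal $g \succeq_w f$. By the monotonicity invariant, $M(w) \succeq_w g \succeq_w f$. Since $(w,f) \notin M$ we have $M(w) \neq f$, so in fact $M(w) >_w f$, directly contradicting (i). This rules out every strongly blocking pair, so $M$ is strongly stable under $X$. (One may alternatively route the final contradiction through \Cref{lem:n_0_firm_bp}, since $w$ crossing off $f$ forbids the pair in any strongly stable matching, but the direct monotonicity argument is self-contained.)

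\textbf{Main obstacle.} The delicate point is the indifference case $w \sim_f^X M(f)$: unlike the strict-preference setting, I must verify that firm $f$ genuinely proposes to $w$ even though $M(f)$ is only \emph{weakly} preferred to $w$, which hinges on the precise meaning of ``best uncrossed workers'' (the maximal elements of the firm's partial order) in Step~1(a) of \Cref{alg:daalgorithm_firm_one_side}. Establishing that a tie keeps $w$ among the maximal uncrossed workers, together with the worker-side monotonicity, is the crux; the strict case is then immediate.
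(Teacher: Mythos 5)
Your overall strategy is the same as the paper's: show that $f$ must have proposed to $w$ at some point, then use worker-side monotonicity (workers' held partners only improve, and worker preferences are strict) to contradict $f >_w M(w)$. The paper states the key claim ("$f$ must have proposed to $w$ and been rejected") without justification; you attempt to justify it, and you correctly identify the indifference sub-case as the crux. Unfortunately, your resolution of that sub-case contains a genuine error. You argue that if $w \sim^X_f M(f)$ and $w$ is uncrossed at the round $r$ in which $f$ proposes to $M(f)$, then $w$ is \emph{also} maximal among $f$'s uncrossed workers, "since $w$ is tied with $M(f)$ and nothing uncrossed lies strictly above $M(f)$." This inference is valid for weak orders (where ties are transitive and tied elements have identical up-sets), but the compound instance $X$ gives each firm a general \emph{partial} order: indifference is incomparability and is not transitive. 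There can be an uncrossed worker $w_1$ with $w_1 >^X_f w$ while $w_1 \sim^X_f M(f)$; then $M(f)$ is maximal but $w$ is not, and $f$ does not propose to $w$ at round $r$. Concretely, take $f$'s lists to be $y >^A_f w_1 >^A_f w$ and $w_1 >^B_f w >^B_f y$, so that in $X$ we have $w_1 >^X_f w$ with $y$ incomparable to both. In the first round $f$ proposes to $\{w_1, y\}$ only; if $w_1$ rejects $f$ in favor of a better proposal and $y$ accepts, the pair $(w,f)$ may remain weakly preferred by $f$ (since $w \sim^X_f y$) yet never receive a proposal in that round. So the key claim cannot be established by examining only the round in which $f$ proposes to $M(f)$.

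The claim has to be derived instead from the \emph{termination state} of \Cref{alg:daalgorithm_firm_one_side}. When the loop exits with a perfect matching, every firm is held by exactly one worker and no rejections are pending, which forces each firm's set of maximal uncrossed workers to be the singleton $\{M(f)\}$; since in a finite poset every element lies below some maximal element, every worker still uncrossed in $f$'s DAG satisfies $M(f) >^X_f w$ or equals $M(f)$. A strongly blocking $w$ satisfies neither, so $w$ must have been crossed off, i.e., $f$ proposed to $w$ and was rejected — and then your monotonicity argument (or \Cref{lem:n_0_firm_bp}) closes the proof exactly as you wrote. Note this is not a pedantic repair: if one reads the loop as halting the moment every firm holds an acceptance, regardless of rejections issued in that same round, the scenario above can actually occur at termination and the output can admit a strongly blocking pair; the lemma is rescued only by the quiescence of the final round. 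Your strict-preference sub-case and your closing monotonicity step are both correct.
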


\begin{proof}[Proof of \Cref{lem:n_0_firm_bp4}]
Assume not. Let $(w, f)$ be a blocking pair to the returned matching. Then $f$ must have proposed to $w$ during the algorithm and been rejected. Thus, $w$ must be matched to a more preferred firm at the end of the algorithm, implying $(w, f)$ is not a strong blocking pair—a contradiction.
\end{proof}

\begin{lemma}
\label{lem:n_0_firm_bp5}
If \Cref{alg:daalgorithm_firm_one_side} returns a matching, it must be the firm-optimal strongly stable matching.
\end{lemma}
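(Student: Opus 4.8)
The plan is to argue by contradiction, following the template of the worker-optimal case (\Cref{lem:n_0_worker_bp4}) but taking care of the partial orders that firms carry in the compound instance. Let $M$ be the matching returned by the algorithm; by \Cref{lem:n_0_firm_bp4} it is already known to be strongly stable under $X$, hence a robust stable matching by \Cref{lem:strongly_stable}, and by \cite{Fleiner} the firm-optimal strongly stable matching exists. To show $M$ is that firm-optimal matching, it suffices to prove that for every strongly stable matching $M'$ and every firm $f$, firm $f$ does not strictly prefer $M'(f)$ to $M(f)$ in $X$ — equivalently, that every firm weakly prefers its $M$-partner to its partner in any other strongly stable matching, which characterizes the firm-optimal element of the lattice.

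First I would suppose, for contradiction, that there is a strongly stable matching $M'$ and a firm $f$ with $w^* = M'(f) >^X_f M(f)$. The first step is to locate the round in which $f$ proposed to its final partner $M(f)$: since $f$ is matched to $M(f)$ at termination and a firm is only ever matched to a worker it proposed to, such a round exists, and at that round $M(f)$ was a \emph{maximal uncrossed} worker in $f$'s DAG, by the proposal rule in Step 1(a) of \Cref{alg:daalgorithm_firm_one_side}. The second, and central, step is a monotonicity argument: because $w^* >^X_f M(f)$ strictly, $w^*$ could not have been uncrossed at that round, for otherwise $M(f)$ would not be maximal in $f$'s DAG. Since a worker, once crossed off, is never reinstated, $w^*$ must have rejected $f$ in that round or earlier. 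Invoking \Cref{lem:n_0_firm_bp}, the pair $(w^*, f)$ can then appear in no strongly stable matching, contradicting $(w^*, f) \in M'$. As $f$ and $M'$ were arbitrary, $M$ is firm-optimal.

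The delicate point — and the place I expect to be most careful — is precisely the interaction between \emph{strict} preference and the maximal-uncrossed-worker proposal rule under partial orders: I must rule out the possibility that $w^*$ and $M(f)$ are incomparable in $f$'s DAG, in which case $M(f)$ could be maximal even with $w^*$ still uncrossed, and the argument would stall. This is exactly where the strictness $w^* >^X_f M(f)$ (rather than mere incomparability $w^* \sim^X_f M(f)$) is essential, since it forces $w^*$ to dominate $M(f)$ in the DAG and therefore genuinely block $M(f)$ from maximality while uncrossed. Once this is pinned down, together with the fact that rejections persist across rounds, the contradiction closes and firm-optimality of $M$ follows.
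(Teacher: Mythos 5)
Your proof is correct and follows essentially the same route as the paper's: the strictly better partner $w^*$ of firm $f$ must have rejected $f$ during the run (you justify this via the maximal-uncrossed proposal rule, a step the paper leaves implicit in ``must have proposed to that partner and been rejected''), and \Cref{lem:n_0_firm_bp} then excludes $(w^*,f)$ from every strongly stable matching, giving the contradiction. The only differences are expository: you set up the contradiction against an arbitrary strongly stable matching rather than the firm-optimal one, and you make explicit the strictness-versus-incomparability point that the paper's terse argument glosses over.
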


\begin{proof}[Proof of \Cref{lem:n_0_firm_bp5}]
If not, then some firm $f$ has a better partner in the firm-optimal matching. But then, that firm must have proposed to that partner and been rejected during the algorithm, which leads to a contradiction.
\end{proof}

Note that all the above proofs hold for any compound instance $X$ that has partial order preferences on one side and complete order preferences on the other. Thus, the setting naturally generalizes beyond just two instances, as observed in \Cref{thm:n_1_multiple}. Also note that both algorithms behave exactly like the original Deferred Acceptance algorithm when $X$ has complete order preferences on both sides; that is, when instances $A$ and $B$ are the same.

\end{document}